\let\subparagraph\paragraph
\crefname{figure}{Figure}{Figure}
\setlist{nosep}
\DeclareMathAlphabet{\mathpzc}{OT1}{pzc}{m}{it}
\newcommand{\Paragraph}[1]{\paragraph{\bf #1}}  
\newcommand{\SubParagraph}[1]{\paragraph{\it #1}}  
\newcommand{\W}[1]{\operatorname{W[#1]}}
\newcommand{\bad}[1]{\mkern 1.5mu\overline{\mkern-1.5mu#1\mkern-1.5mu}\mkern 1.5mu}
\newcommand{\RF}[1]{\mathsf{RF}_{#1}}
\newcommand{\schedules}{\mathsf{schedules}}
\newcommand{\pre}[2]{\mathsf{pre}_{#1}(#2)}
\newcommand{\mutations}{\mathsf{mutations}}
\newcommand{\linearize}{\operatorname{Witness}}
\newcommand{\VTSOrm}{\operatorname{VTSO-rf}}
\newcommand{\VPSOrm}{\operatorname{VPSO-rf}}
\newcommand{\VSCrm}{\operatorname{VSC-rf}}
\newcommand{\AlgoTSO}{\operatorname{VerifyTSO}}
\newcommand{\AlgoPSO}{\operatorname{VerifyPSO}}
\newcommand{\NaiveAlgoTSO}{\operatorname{NaiveVerifyTSO}}
\newcommand{\NaiveAlgoPSO}{\operatorname{NaiveVerifyPSO}}
\newcommand{\Sequence}{\kappa}
\newcommand{\SequenceVar}{\mu}
\newcommand{\FenceMap}{\operatorname{FMap}}
\newcommand{\SFenceMap}{\operatorname{SFMap}}
\newcommand{\NEBMap}{\operatorname{NEBMap}}
\newcommand{\Threads}{\operatorname{Threads}}
\newcommand{\DPOR}{\operatorname{DPOR}}
\mathchardef\mhyphen="2D 
\newcommand{\Worklist}{\mathcal{S}}
\newcommand{\DoneSet}{\mathsf{Done}}
\newcommand{\ov}{\overline}
\newcommand{\Source}{\mathsf{Source}}
\newcommand{\ReadsFrom}{\mathsf{rfsc}}
\newcommand{\DCTSOPSOM}{\operatorname{RF-SMC}} 
\newcommand{\TSO}{\operatorname{TSO}}
\newcommand{\PSO}{\operatorname{PSO}}
\newcommand{\SC}{\operatorname{SC}}
\newcommand{\MemoryModel}{\mathcal{M}}
\newcommand{\NegativeMarked}{\mathsf{mrk}}
\newcommand{\Seq}{\tau}
\newcommand{\LocalTrace}{\rho}
\newcommand{\Trace}{\sigma}
\newcommand{\SysReads}{\mathcal{R}}
\newcommand{\SysWrites}{\mathcal{W}}
\newcommand{\SysWritesB}{\mathcal{W}^B}
\newcommand{\SysWritesM}{\mathcal{W}^M}
\newcommand{\SysFences}{\mathcal{F}}
\newcommand{\LocalEvents}[1]{\mathcal{L}(#1)}
\newcommand{\NumVariables}{d}
\newcommand{\Read}{\operatorname{r}}
\newcommand{\Fence}{\operatorname{fnc}}
\newcommand{\SFence}{\operatorname{storefnc}}
\newcommand{\TO}{\mathsf{PO}}
\newcommand{\Write}{w}
\newcommand{\WriteB}{\operatorname{wB}}
\newcommand{\WriteM}{\operatorname{wM}}
\newcommand{\Wpair}[1]{\mathbf{#1}}
\newcommand{\RMW}{\operatorname{rmw}}
\newcommand{\CAS}{\operatorname{cas}}
\newcommand{\Event}{e}
\newcommand{\Unordered}[3]{#1\parallel_{#2} #3}
\newcommand{\Ordered}[3]{#1 \centernot \parallel_{\hspace{-1.4mm}#2}\hspace{0.8mm} #3}
\newcommand{\Refines}{\sqsubseteq}
\newcommand{\Confl}[2]{#1 \Join #2}
\newcommand{\Enabled}{\mathsf{enabled}}
\newcommand{\Concat}{\circ}
\newcommand{\Events}[1]{\SysEvents(#1)}
\newcommand{\Reads}[1]{\SysReads(#1)}
\newcommand{\Writes}[1]{\SysWrites(#1)}
\newcommand{\WritesB}[1]{\SysWritesB(#1)}
\newcommand{\WritesM}[1]{\SysWritesM(#1)}
\newcommand{\SpuriousWritesM}[1]{\mathcal{SW}^M(#1)}
\newcommand{\PendingWritesM}[1]{\mathcal{PW}^M(#1)}
\newcommand{\Project}{|}
\newcommand{\Domain}{\mathsf{dom}}
\newcommand{\Image}{\mathsf{img}}
\newcommand{\True}{\mathsf{True}}
\newcommand{\False}{\mathsf{False}}
\newcommand{\System}{\mathscr{P}}
\newcommand{\ObsE}{\sim_{\Observation}}
\newcommand{\Process}{\mathsf{thr}}
\newcommand{\Proc}[1]{\Process(#1)}
\newcommand{\Globals}{\mathcal{G}}
\newcommand{\Aquire}{\mathsf{acquire}}
\newcommand{\Release}{\mathsf{release}}
\newcommand{\Location}[1]{\mathsf{var}(#1)}
\newcommand{\SysEvents}{\mathcal{E}}
\newcommand{\TraceSpaceMax}{\mathcal{T}}
\newcommand{\NP}{NP}
\newcommand{\Observation}{\mathsf{RF}}
\newcommand{\Upd}{\mathsf{Upd}}
\newcommand{\myblue}{blue!80!black}
\newcommand{\myred}{red!80!black}
\g@addto@macro\bfseries{\boldmath}
\def \darkred {black!20!red}
\newtheorem{remark}{Remark}
\setlist{leftmargin=4mm}
\begin{document}


\title{The Reads-From Equivalence for the TSO and PSO Memory Models}

\author{Truc Lam Bui}
\authornote{Work done while the author was an intern at IST Austria.}
\affiliation{
  \institution{Comenius University}            
  \streetaddress{Mlynsk\'{a} dolina}
  \city{Bratislava}
  \postcode{842 48} 
  \country{Slovakia}                    
}
\email{bujtuclam@gmail.com}          

\author{Krishnendu Chatterjee}
\affiliation{
  \institution{IST Austria}            
  \streetaddress{Am Campus 1}
  \city{Klosterneuburg}
  \postcode{3400}
  \country{Austria}                    
}
\email{krishnendu.chatterjee@ist.ac.at}          

\author{Tushar Gautam}
\authornotemark[1]    
\affiliation{
  \institution{IIT Bombay}            
  \streetaddress{Main Gate Rd, IIT Area, Powai}
  \city{Mumbai}
  \postcode{400076}
  \country{India}                    
}
\email{tushargautam.gautam@gmail.com}          

\author{Andreas Pavlogiannis}
\affiliation{
  \institution{Aarhus University}            
  \streetaddress{Nordre Ringgade 1}
  \city{Aarhus}
  \postcode{8000}
  \country{Denmark}                    
}
\email{pavlogiannis@cs.au.dk}          

\author{Viktor Toman}
\affiliation{
  \institution{IST Austria}            
  \streetaddress{Am Campus 1}
  \city{Klosterneuburg}
  \postcode{3400}
  \country{Austria}                    
}
\email{viktor.toman@ist.ac.at}          

\begin{abstract}
The verification of concurrent programs remains an open challenge due to the non-determinism in inter-process communication.
One recurring algorithmic problem in this challenge is the consistency verification of concurrent executions.
In particular, consistency verification under a reads-from map allows to compute the \emph{reads-from (RF) equivalence} between concurrent traces, with direct applications to areas such as Stateless Model Checking (SMC).
Importantly, the RF equivalence was recently shown to be coarser than the standard Mazurkiewicz equivalence, leading to impressive scalability improvements for SMC under $\SC$ (sequential consistency).
However, for the \emph{relaxed memory} models of $\TSO$ and $\PSO$ (total/partial store order), the algorithmic problem of deciding the RF equivalence, as well as its impact on SMC, has been elusive.


In this work we solve the algorithmic problem of consistency verification for the $\TSO$ and $\PSO$ memory models given a reads-from map, denoted $\VTSOrm$ and $\VPSOrm$, respectively.
For an execution of $n$ events over $k$ threads and $\NumVariables$ variables,
we establish novel bounds that scale as $n^{k+1}$ for $\TSO$ and as $n^{k+1}\cdot \min(n^{k^2}, 2^{k\cdot \NumVariables})$ for $\PSO$.
Moreover, based on our solution to these problems, we develop an SMC algorithm under $\TSO$ and $\PSO$ that uses the RF equivalence.
The algorithm is \emph{exploration-optimal}, in the sense that it is guaranteed to explore each class of the RF partitioning exactly once, and spends polynomial time per class when $k$ is bounded.
Finally, we implement all our algorithms in the SMC tool Nidhugg, and perform a large number of experiments over benchmarks from existing literature.
Our experimental results show that our algorithms for $\VTSOrm$ and $\VPSOrm$ provide significant scalability improvements over standard alternatives.
Moreover, when used for SMC, the RF partitioning is often much coarser than the standard Shasha--Snir partitioning for $\TSO/\PSO$, which yields a significant speedup in the model checking task.
\end{abstract}

\begin{CCSXML}
<ccs2012>
<concept>
<concept_id>10003752.10003790.10011192</concept_id>
<concept_desc>Theory of computation~Verification by model checking</concept_desc>
<concept_significance>500</concept_significance>
</concept>
<concept>
<concept_id>10011007.10011074.10011099.10011692</concept_id>
<concept_desc>Software and its engineering~Formal software verification</concept_desc>
<concept_significance>500</concept_significance>
</concept>
</ccs2012>
\end{CCSXML}

\ccsdesc[500]{Theory of computation~Verification by model checking}
\ccsdesc[500]{Software and its engineering~Formal software verification}

\keywords{concurrency, relaxed memory models, execution-consistency verification, stateless model checking} 

\maketitle

\section{INTRODUCTION}\label{sec:intro}

The formal analysis of concurrent programs is a key problem in program analysis and verification.
Scheduling non-determinism makes programs both hard to write correctly, and to analyze formally,
as both the programmer and the model checker need to account for all possible communication patterns among threads.
This non-determinism incurs an exponential blow-up in the state space of the program,
which in turn yields a significant computational cost on the verification task.

Traditional verification has focused on concurrent programs adhering to sequential consistency~\cite{Lamport79}.
Programs operating under relaxed memory semantics exhibit additional behavior compared to sequential consistency.
This makes it exceptionally hard to reason about correctness, as, besides scheduling subtleties, the formal reasoning needs to account for buffer/caching mechanisms.
Two of the most standard operational relaxed memory models in the literature are \emph{Total Store Order }($\TSO$)  and \emph{Partial Store Order }($\PSO$)~\cite{Adve96,SPARCInternational94,Sewell09,Sewell10,Alglave17,Alglave10}.

On the operational level, both models introduce subtle mechanisms via which write operations become visible to the shared memory
and thus to the whole system.
Under $\TSO$, every thread is equipped with its own buffer.
Every write to a shared variable is pushed into the buffer, and thus remains hidden from the other threads.
The buffer is flushed non-deterministically to the shared memory, at which point the writes become visible to the other threads.
The semantics under $\PSO$ are even more involved, as now every thread has one buffer \emph{per shared variable},
and non-determinism now governs not only when a thread flushes its buffers, but also which buffers are flushed.

\begin{figure}
\centering
\small
\begin{subfigure}[t]{0.07\textwidth}
\begin{align*}
\text{Th}&\text{read}_{1}\\
\hline\\[-1em]
1.~& \textcolor{\myblue}{\Write(x)}\\  
\\[-1.9em]
2.~& \textcolor{\myred}{\Read(y)}\\  
\end{align*}
\end{subfigure}
\quad
\begin{subfigure}[t]{0.07\textwidth}
\begin{align*}
\text{Th}&\text{read}_{2}\\
\hline\\[-1em]
1.~& \textcolor{\myred}{\Write'(y)}\\
\\[-1.9em]
2.~& \textcolor{\myblue}{\Read'(x)}\\  
\end{align*}
\end{subfigure}
\qquad\qquad\qquad\qquad
\begin{subfigure}[t]{0.07\textwidth}
\begin{align*}
\text{Th}&\text{read}_{1}\\
\hline\\[-1em]
1.~& \textcolor{\myblue}{\Write(x)}\\
\\[-1.9em]
2.~& \textcolor{\myred}{\Write'(y)}\\
\end{align*}
\end{subfigure}
\quad
\begin{subfigure}[t]{0.07\textwidth}
\begin{align*}
\text{Th}&\text{read}_{2}\\
\hline\\[-1em]
1.~& \textcolor{\myred}{\Read(y)}\\
\\[-1.9em]
2.~& \textcolor{\myblue}{\Read'(x)}\\
\end{align*}
\end{subfigure}
\vspace{-3mm}
\caption{A TSO example (left) and a PSO example (right).}
\label{fig:motivating}
\end{figure}

To illustrate the intricacies under $\TSO$ and $\PSO$, consider the examples in \cref{fig:motivating}.
On the left, under $\SC$, in every execution at least one of
$\Read(y)$ and $\Read'(x)$
will observe the corresponding
$\Write'(y)$ and $\Write(x)$.
Under $\TSO$, however, the write events may become visible on the
shared memory only after the read events have executed,
and hence both write events go unobserved.
Executions under $\PSO$ are even more involved, see \cref{fig:motivating} right.
Under either $\SC$ or $\TSO$, if
$\Read(y)$ observes $\Write'(y)$,
then 
$\Read'(x)$ must observe $\Write(x)$,
as $\Write(x)$ becomes visible on the shared memory before $\Write'(y)$.
Under $\PSO$, however, there is a single local buffer for each variable.
Hence the order in which
$\Write(x)$ and $\Write'(y)$
become visible 
in the shared memory can be reversed, allowing
$\Read(y)$ to observe $\Write'(y)$
while 
$\Read'(x)$ does not observe $\Write(x)$.

The great challenge in verification under relaxed memory is to systematically,
yet efficiently, explore all such extra behaviors of the system,
i.e., account for the additional non-determinism that comes from the buffers.
In this work we tackle this challenge for two verification tasks under $\TSO$ and $\PSO$,
namely,
(A)~for verifying the consistency of executions, and
(B)~for stateless model checking.

\Paragraph{A.~Verifying execution consistency with a reads-from function.}
One of the most basic problems for a given memory model is the verification of the consistency of program executions
with respect to the given model~\cite{ChiniS20}.
The input is a set of thread executions, where each execution performs operations accessing the shared memory.
The task is to verify whether the thread executions can be interleaved to a concurrent execution, which has the property that every read observes a specific value written by some write~\cite{Gibbons97}.
The problem is of foundational importance to concurrency, and has been studied heavily under $\SC$ 
\cite{Chen09,Cain02,Hu12}.

The input is often enhanced with a \emph{reads-from (RF) map}, which further specifies for each read access the write access that the former should observe.
Under sequential consistency, the corresponding problem $\VSCrm$ was shown to be $\NP$-hard in the landmark work of~\citet{Gibbons97}, while it was recently shown $\W{1}$-hard~\cite{Mathur20}.
The problem lies at the heart of many verification tasks in concurrency, such as 
dynamic analyses~\cite{Smaragdakis12,Kini17,Pavlogiannis20,Mathur20,Roemer20,Mathur21}, 
linearizability and transactional consistency~\cite{Herlihy90,BiswasE19},
as well as SMC~\cite{Abdulla19,Chalupa17,Kokologiannakis19}.


\SubParagraph{Executions under relaxed memory.}
The natural extension of verifying execution consistency with an RF map is from $\SC$ to relaxed memory models
such as $\TSO$ and $\PSO$,
we denote the respective problems by $\VTSOrm$ and $\VPSOrm$. 
Given the importance of $\VSCrm$ for $\SC$, and the success in establishing both upper and lower bounds,
the complexity of $\VTSOrm$ and $\VPSOrm$ is a very natural question and of equal importance. 
The verification problem is known to be $\NP$-hard for most memory models~\cite{Furbach15},
including $\TSO$ and $\PSO$, however, no other bounds are known.
Some heuristics have been developed for $\VTSOrm$~\cite{Manovit06,ZennouBEE19}, while other works study $\TSO$ executions that are also sequentially consistent~\cite{Bouajjani11,Bouajjani13}.




\Paragraph{B.~Stateless Model Checking.}
The most standard solution to the space-explosion problem is \emph{stateless model checking}~\cite{G96}.
Stateless model-checking methods typically explore traces rather than states of the analyzed program.
The depth-first nature of the exploration enables it to be both systematic and memory-efficient,
by storing only a few traces at any given time.
Stateless model-checking techniques have been employed successfully in several well-established
tools, e.g., VeriSoft~\cite{Godefroid97,Godefroid05} and {\sc CHESS}~\cite{Musuvathi07b}.

As there are exponentially many interleavings, a trace-based exploration
typically has to explore exponentially many traces, which is intractable in practice.
One standard approach is the partitioning of the trace space into equivalence classes,
and then attempting to explore every class via a single representative trace.
The most successful adoption of this technique is in \emph{dynamic partial order reduction (DPOR)} techniques~\cite{Clarke99,G96,Peled93,Flanagan05}.
The great advantage of DPOR is that it handles indirect memory accesses precisely without introducing spurious interleavings.
The foundation underpinning DPOR is the famous Mazurkiewicz equivalence, which constructs equivalence classes based on the order in which traces execute conflicting memory access events.
This idea has led to a rich body of work, with improvements using symbolic techniques~\cite{Kahlon09}, context-sensitivity~\cite{Elvira17}, unfoldings~\cite{Sousa15}, effective lock handling~\cite{Kokologiannakis19b}, and others~\cite{Aronis18,Elvira18,Chatterjee19}.
The work of~\citet{Abdulla14} developed an SMC algorithm that is exploration-optimal for the Mazurkiewicz equivalence, in the sense that it explores each class of the underlying partitioning exactly once.
Finally, techniques based on SAT/SMT solvers have been used to construct
even coarser partitionings~\cite{Demsky15,HUANG15,Huang017}.

\SubParagraph{The reads-from equivalence for SMC.}
A new direction of SMC techniques has been recently developed using the \emph{reads-from (RF)} equivalence to partition the trace space.
The key principle is to classify traces as equivalent based on whether read accesses observe the same write accesses.
The idea was initially explored for acyclic communication topologies~\cite{Chalupa17}, and has been recently extended to all topologies~\cite{Abdulla19}.
As the RF partitioning is guaranteed to be (even exponentially) coarser than the Mazurkiewicz partitioning, SMC based on RF has shown remarkable scalability potential~\cite{Abdulla19,AbdullaAJN18,Kokologiannakis19,Kokologiannakis20}.
The key technical component for SMC using RF is the verification of execution consistency, as presented in the previous section.
The success of SMC using RF under $\SC$ has thus rested upon new efficient methods for the problem $\VSCrm$.

\SubParagraph{SMC under relaxed memory.}
The SMC literature has taken up the challenge of model checking concurrent programs under relaxed memory.
Extensions to SMC for $\TSO/\PSO$ have been considered
by~\citet{Zhang15} using shadow threads to model memory buffers,
as well as by~\citet{Abdulla2015} using chronological traces to represent the Shasha--Snir notion of trace under relaxed memory~\cite{Shasha88}.
Chronological/Shasha--Snir traces are the generalization of Mazurkiewicz traces to $\TSO/\PSO$.
Further extensions have also been made to other memory models, namely
by~\citet{AbdullaAJN18} for the release-acquire fragment of C++11, 
\citet{Kokologiannakis17,Kokologiannakis19} 
for the RC11 model~\cite{LahavVKHD17}, and
\citet{Kokologiannakis20}
for the IMM model~\cite{PodkopaevLV19},
but notably none for $\TSO$ and $\PSO$ using the RF equivalence.
Given the advantages of the RF equivalence for SMC under
$\SC$~\cite{Abdulla19},
release-acquire~\cite{AbdullaAJN18},
RC11~\cite{Kokologiannakis19} and
IMM~\cite{Kokologiannakis20},
a very natural standing question is whether RF can be used for effective SMC under $\TSO$ and $\PSO$.
Here we tackle this challenge.

\subsection{Our Contributions}

Here we outline the main results of our work.
We refer to \cref{sec:summary} for a formal presentation.

\Paragraph{A. Verifying execution consistency for $\TSO$ and $\PSO$.}
Our first set of results and the main contribution of this paper is on the problems $\VTSOrm$ and $\VPSOrm$ for verifying $\TSO$- and $\PSO$-consistent executions, respectively.
Consider an input to the corresponding problem that consists of $k$ threads and $n$ operations, where each thread executes write and read operations, as well as \emph{fence} operations that flush each thread-local buffer to the main memory.
Our results are as follows.
\begin{enumerate}[noitemsep,topsep=0pt,partopsep=0px]
\item\label{item:resA1} We present an algorithm that solves $\VTSOrm$ in $O(k\cdot n^{k+1})$ time.
The case of $\VSCrm$ is solvable in $O(k\cdot n^{k})$ time~\cite{Abdulla19,BiswasE19,Mathur20}.
Although for $\TSO$ 
there are $k$ additional buffers,
our result shows that the complexity is only minorly impacted by an additional factor $n$, as opposed to $n^k$.
\item\label{item:resA2} We present an algorithm that solves $\VPSOrm$ in $O(k\cdot n^{k+1}\cdot \min(n^{k\cdot (k-1)}, 2^{k\cdot \NumVariables}))$ time, where $\NumVariables$ is the number of variables.
Note that even though there are $k\cdot \NumVariables$ buffers,
one of our two bounds is independent of $\NumVariables$ and thus yields polynomial time when the number of threads is bounded.
Moreover, our bound collapses to $O(k\cdot n^{k+1})$ when there are no fences, and hence this case is no more difficult that $\VTSOrm$.
\end{enumerate}


\Paragraph{B. Stateless model checking for $\TSO$ and $\PSO$ using the reads-from equivalence (RF).}
Our second contribution is an algorithm $\DCTSOPSOM$ for SMC under $\TSO$ and $\PSO$ using the RF equivalence.
The algorithm is based on the reads-from algorithm for 
$\SC$~\cite{Abdulla19} and uses our solutions to $\VTSOrm$ and $\VPSOrm$ for visiting each class of the respective partitioning.
Moreover, $\DCTSOPSOM$ is \emph{exploration-optimal}, in the sense that it explores only maximal traces and further it is guaranteed to explore each class of the RF partitioning exactly once.
For the complexity statements, let $k$ be the total number of threads and $n$ be the number of events of the longest trace.
The time spent by $\DCTSOPSOM$ per class of the RF partitioning is
\begin{enumerate}[noitemsep,topsep=0pt,partopsep=0px]
\item $n^{O(k)}$ time, for the case of $\TSO$, and
\item $n^{O(k^2)}$ time, for the case of $\PSO$.
\end{enumerate}
Note that the time complexity per class is polynomial in $n$ when $k$ is bounded.

\Paragraph{C. Implementation and experiments.}
We have implemented $\DCTSOPSOM$ in the stateless model checker Nidhugg~\cite{Abdulla2015},
and performed an evaluation on an extensive set of benchmarks from the recent literature. 
Our results show that our algorithms for $\VTSOrm$ and $\VPSOrm$ provide significant scalability improvements over standard alternatives, often by orders of magnitude.
Moreover, when used for SMC, the RF partitioning is often much coarser than the standard Shasha--Snir partitioning for $\TSO/\PSO$, which yields a significant speedup in the model checking task.

\section{PRELIMINARIES}\label{sec:prel}

\smallskip\noindent{\bf General notation.}
Given a natural number $i\geq 1$, we let $[i]$ be the set $\{ 1,2,\dots, i \}$.
Given a map $f\colon X\to Y$, we let $\Domain(f)=X$ and $\Image(f)=Y$ denote the domain and image of $f$, respectively.
We represent maps $f$ as sets of tuples $\{ (x, f(x))\}_x$.
Given two maps $f_1, f_2$ over the same domain $X$, we write $f_1=f_2$ if  for every $x\in X$
we have $f_1(x)=f_2(x)$.
Given a set $X'\subset X$, we denote by $f\Project X'$ the restriction of $f$ to $X'$.
A binary relation $\sim$ on a set $X$ is an {\em equivalence} iff $\sim$ is reflexive, symmetric and transitive.
We denote by $X/\sim$ the \emph{quotient} (i.e., the set of all equivalence classes) of $X$ under $\sim$.

\subsection{Concurrent Model under $\TSO$/$\PSO$}\label{subsec:model}

Here we describe the computational model of concurrent programs with shared memory under the Total Store Order ($\TSO$) and Partial Store Order ($\PSO$) memory models.
We follow a standard exposition, similarly to~\citet{Abdulla2015,Huang16}.
We first describe $\TSO$ and then extend our description to $\PSO$.

\Paragraph{Concurrent program with Total Store Order.}
We consider a concurrent program $\System=\{ \Process_i \}_{i=1}^k$ of $k$ threads.
The threads communicate over a shared memory 
$\Globals$ of global variables.
Each thread additionally owns a \emph{store buffer}, which is a FIFO queue for storing updates of variables to the shared memory.
Threads execute \emph{events} of the following types.
\begin{enumerate}[noitemsep,topsep=0pt,partopsep=0px]
\item A \emph{buffer-write event} $\WriteB$ enqueues into the local store buffer an update
that wants to write a value $v$ to a global variable $x\in \Globals$.
\item A \emph{read event} $\Read$ reads the value $v$ of a global variable $x\in \Globals$.
The value $v$ is the value of the most recent local buffer-write event, if one still exists in the buffer, otherwise $v$ is the value of $x$ in the shared memory.
\end{enumerate}
Additionally, whenever a store buffer of some thread is nonempty, the respective thread can execute the following. 
\begin{enumerate}[noitemsep,topsep=0pt,partopsep=0px]
\setcounter{enumi}{2}
\item A \emph{memory-write event} $\WriteM$ that dequeues the oldest update from the local buffer and performs the corresponding write-update on the shared memory.
\end{enumerate}
Threads can also flush their local buffers into the memory using fences.
\begin{enumerate}[noitemsep,topsep=0pt,partopsep=0px]
\setcounter{enumi}{3}
\item A \emph{fence event} $\Fence$ blocks the corresponding thread until its store buffer is empty.
\end{enumerate}
Finally, threads can execute local events that are not modeled explicitly, as usual.
We refer to all non-memory-write events as \emph{thread events}.
Following the typical setting of stateless model
checking~\cite{Flanagan05,Abdulla14,Abdulla2015,Chalupa17},
each thread of the program $\System$ is deterministic, and further $\System$ is bounded,
meaning that all executions of $\System$ are finite and the number of events of $\System$'s
longest execution is a parameter of the input.

Given an event $\Event$, we denote by $\Proc{\Event}$ its thread and by $\Location{\Event}$ its global variable.
We denote by $\SysEvents$ the set of all events,
by $\SysReads$ the set of read events, by $\SysWritesB$ the set of buffer-write events,
by $\SysWritesM$ the set of memory-write events, and
by $\SysFences$ the set of fence events.
Given a buffer-write event $\WriteB \in \SysWritesB$ and its corresponding memory-write $\WriteM \in \SysWritesM$, we let
$\Wpair{\Write} = (\WriteB,\WriteM)$ be the two-phase write event, and we denote
$\Proc{\Wpair{\Write}} = \Proc{\WriteB} = \Proc{\WriteM}$ and $\Location{\Wpair{\Write}} = \Location{\WriteB} = \Location{\WriteM}$.
We denote by $\SysWrites$ the set of all such two-phase write events.
Given two events $\Event_1, \Event_2\in \SysReads\cup \SysWritesM$, we say that they \emph{conflict}, denoted $\Confl{\Event_1}{\Event_2}$,
if they access the same global variable and at least one of them is a memory-write event.

\Paragraph{Proper event sets.}
Given a set of events $X\subseteq \SysEvents$,
we write $\Reads{X}=X\cap \SysReads$ for the set of read events of $X$, and similarly
$\WritesB{X}=X\cap \SysWritesB$ and $\WritesM{X}=X\cap \SysWritesM$ for the buffer-write and memory-write events of $X$, respectively.
We also denote by $\LocalEvents{X}=X\setminus\WritesM{X}$ the
thread events (i.e., the non-memory-write events) of $X$.
We write $\Writes{X} = (X \times X)\cap \SysWrites$ for the set of two-phase write events in $X$.
We call $X$ \emph{proper} if $\WriteB \in X$ iff $\WriteM \in X$ for each $(\WriteB,\WriteM) \in \SysWrites$.
Finally, given a set of events $X\subseteq \SysEvents$ and a thread $\Process$,
we denote by $X_{\Process}$ and $X_{\neq\Process}$ the events of $\Process$,
and the events of all other threads in $X$, respectively.

\Paragraph{Sequences and Traces.}
Given a sequence of events $\Seq=\Event_1,\dots,\Event_j$, we denote by
$\Events{\Seq}$ the set of events that appear in $\Seq$.
We further denote $\Reads{\Seq} = \Reads{\Events{\Seq}}$,
$\WritesB{\Seq} = \WritesB{\Events{\Seq}}$,
$\WritesM{\Seq} = \WritesM{\Events{\Seq}}$, and
$\Writes{\Seq} = \Writes{\Events{\Seq}}$.
Finally we denote by $\epsilon$ an empty sequence.

Given a sequence $\Seq$ and two events
$\Event_1, \Event_2 \in \Events{\Seq}$, we write
$\Event_1 <_\Seq \Event_2$ when $\Event_1$ appears before $\Event_2$
in $\Seq$, and $\Event_1 \leq_\Seq \Event_2$ to denote that
$\Event_1 <_\Seq \Event_2$ or $\Event_1 = \Event_2$.
Given a sequence $\Seq$ and a set of events $A$, we denote by
$\Seq \Project A$ the \emph{projection} of $\Seq$ on $A$, which is
the unique sub-sequence of $\Seq$ that contains all events of
$A \cap \Events{\Seq}$, and only those.
Given a sequence $\Seq$ and an event $\Event \in \Events{\Seq}$,
we denote by $\pre{\Seq}{\Event}$ the prefix up until and including
$\Event$, formally
$\Seq \Project \{ \Event' \in \Events{\Seq}\, |\, \Event' \leq_\Seq \Event\}$.
Given two sequences $\Seq_1$ and $\Seq_2$, we denote by
$\Seq_1 \Concat \Seq_2$ the sequence that results in appending
$\Seq_2$ after $\Seq_1$.

A (concrete, concurrent) \emph{trace} is a sequence of events $\Trace$ that
corresponds to a concrete valid execution of $\System$ under standard
semantics~\cite{Shasha88}. We let $\Enabled(\Trace)$ be the set of enabled
events after $\Trace$ is executed, and call $\Trace$ \emph{maximal} if
$\Enabled(\Trace)=\emptyset$.
A concrete \emph{local trace} $\LocalTrace$ is a sequence of thread events of the same thread.

\Paragraph{Reads-from functions.}
Given a proper event set $X\subseteq \SysEvents$, a \emph{reads-from function} 
over $X$ is a function that maps each read event of $X$ to some two-phase
write event of $X$ accessing the same global variable.
Formally, $\Observation\colon\Reads{X}\to \Writes{X}$, where
$\Location{\Read} = \Location{\Observation(\Read)}$
for all $\Read \in \Reads{X}$.
Given a buffer-write event $\WriteB$ (resp. a memory-write event $\WriteM$),
we write $\Observation(\Read)= (\WriteB, \_)$ (resp. $\Observation(\Read)= (\_, \WriteM)$)
to denote that $\Observation(\Read)$ is a two-phase write for which $\WriteB$ (resp. $\WriteM$)
is the corresponding buffer-write (resp. memory-write) event.

Given a sequence of events $\Seq$
where the set $\Events{\Seq}$ is proper, we define the
\emph{reads-from function} of $\Seq$, denoted
$\RF{\Seq}\colon\Reads{\Seq}\to \Writes{\Seq}$, as follows.
Given a read event $\Read \in \Reads{\Seq}$, consider the set $\Upd$
of enqueued conflicting updates in the same thread that have not yet
been dequeued, i.e.,
$\Upd = \{ (\WriteB,\WriteM) \in (\Writes{\Seq})_{\Proc{\Read}}\ |\ \Confl{\WriteM}{\Read},\; \WriteB <_\Seq \Read <_\Seq \WriteM\}$.
Then, $\RF{\Seq}(\Read) = (\WriteB',\WriteM')$,
where one of the two cases happens:
\begin{itemize}[noitemsep,topsep=0pt,partopsep=0px]
\item $\Upd \neq \emptyset$,
and $(\WriteB',\WriteM') \in \Upd$ is the latest in
$\Seq$, i.e., for each $(\WriteB'',\WriteM'') \in \Upd$ we have
$\WriteB'' \leq_\Seq \WriteB'$.
\item $\Upd = \emptyset$, and
$\WriteM' \in \WritesM{\Seq},\, \Confl{\WriteM'}{\Read},\, \WriteM' <_\Seq \Read$
is the latest memory-write (of any thread) conflicting with $\Read$ and
occurring before $\Read$ in $\Seq$, i.e.,
for each $\WriteM'' \in \WritesM{\Seq}$ such that
$\Confl{\WriteM''}{\Read}$ and $\WriteM'' <_\Seq \Read$,
we have $\WriteM'' \leq_\Seq \WriteM'$.
\end{itemize}
Notice how relaxed memory comes into play in the above definition,
as $\RF{\Seq}(\Read)$ does not record which of the two above
cases actually happened.

\Paragraph{Partial Store Order and Sequential Consistency.}
The memory model of Partial Store Order ($\PSO$) is more relaxed than $\TSO$. 
On the operational level, each thread is equipped with a store buffer for each global variable, 
rather than a single buffer for all global variables. Then, at any point during execution,
a thread can non-deterministically dequeue and perform the oldest update from any of its nonempty store buffers.
The notions of events, traces and reads-from functions remain the same for $\PSO$ as defined for $\TSO$. 
The Sequential Consistency ($\SC$) memory model can be simply thought 
of as a model where each thread flushes its buffer immediately
after a write event, e.g., by using a fence.

\Paragraph{Concurrent program semantics.}
The semantics of $\System$ are defined by means of a transition system over a state space of global states.
A global state consists of (i) a memory function that maps every global variable to a value,
(ii) a local state for each thread, which contains the values of the local variables of the thread, and
(iii) a local state for each store buffer, which captures the contents of the queue.
We consider the standard setting with the $\TSO/\PSO$ memory model, and refer to~\citet{Abdulla2015} for formal details.
As usual in stateless model checking, we focus on concurrent programs with
acyclic state spaces.

\Paragraph{Reads-from trace partitioning.}
Given a concurrent program $\System$ and a memory model
$\MemoryModel\in \{\SC, \TSO, \PSO\}$, we denote by
$\TraceSpaceMax_{\MemoryModel}$
the set of maximal traces of the 
program
$\System$ under the respective memory model.
We call two traces $\Trace_1$ and $\Trace_2$ \emph{reads-from equivalent}
if $\Events{\Trace_1} = \Events{\Trace_2}$ and
$\RF{\Trace_1} = \RF{\Trace_2}$. The corresponding
\emph{reads-from equivalence} $\ObsE$ partitions the trace space into
equivalence classes $\TraceSpaceMax_{\MemoryModel}/\ObsE$ and
we call this the \emph{reads-from partitioning} (or RF partitioning).
Traces in the same class of the RF partitioning visit the same
set of local states in each thread, and thus the RF partitioning
is a sound partitioning for local state
reachability~\cite{Abdulla19,Chalupa17,Kokologiannakis19}.

\subsection{Partial Orders}\label{subsec:partialorders}

Here we present relevant notation around partial orders.

\Paragraph{Partial orders.}
Given a set of events $X\subseteq \SysEvents$, a \emph{(strict) partial order} $P$ over $X$ is an irreflexive,
antisymmetric and transitive relation over $X$ (i.e., $<_{P}\,\subseteq X\times X$).
Given two events $\Event_1,\Event_2\in X$, we write $\Event_1\leq_P \Event_2$ to denote that $\Event_1<_P\Event_2$ or $\Event_1=\Event_2$.
Two distinct events $\Event_1,\Event_2\in X$ are \emph{unordered} by $P$, denoted $\Unordered{\Event_1}{P}{\Event_2}$,
if neither $\Event_1<_{P}\Event_2$ nor $\Event_2<_{P} \Event_1$, and \emph{ordered} (denoted $\Ordered{\Event_1}{P}{\Event_2}$) otherwise.
%
Given a set $Y\subseteq X$, we denote by $P\Project Y$ the \emph{projection} of $P$ on the set $Y$,
where for every pair of events $\Event_1, \Event_2\in Y$,
we have that $\Event_1<_{P\Project Y} \Event_2$ iff $\Event_1<_{P} \Event_2$.
Given two partial orders $P$ and $Q$ over a common set $X$, we say that $Q$ \emph{refines} $P$, denoted by $Q\Refines P$, if
for every pair of events $\Event_1, \Event_2\in X$, if $\Event_1<_{P}\Event_2$ then $\Event_1<_{Q}\Event_2$.
A \emph{linearization} of $P$ is a total order that refines $P$.

\Paragraph{Lower  sets.}
Given a pair $(X,P)$, where $X$ is a set of events and $P$ is a partial order over $X$,
a \emph{lower set} of $(X,P)$ is a set $Y\subseteq X$ such that
for every event $\Event_1\in Y$ and event $\Event_2\in X$ such that $\Event_2\leq_{P} \Event_1$, we have $\Event_2\in Y$.

\Paragraph{The program order $\TO$.}
The \emph{program order} $\TO$ of $\System$ is a partial order
$<_{\TO}\subseteq \SysEvents\times \SysEvents$ that defines
a fixed order between some pairs of events of the same thread.
Given any (concrete) trace $\Trace$ and thread $\Process$,
the buffer-writes, reads, and fences of $\Process$ that appear in $\Trace$
are fully ordered in $\TO$ the same way as they are ordered in $\Trace$.
Further, for each thread $\Process$, the program order $\TO$ satisfies the
following conditions:
\begin{itemize}[noitemsep,topsep=0pt,partopsep=0px]
\item $\WriteB<_\TO \WriteM$ for each $(\WriteB,\WriteM) \in \SysWrites_\Process$.
\item $\WriteB<_\TO \Fence$ iff $\WriteM<_\TO \Fence$ for each
$(\WriteB,\WriteM) \in \SysWrites_\Process$ and fence event $\Fence \in \SysFences_\Process$.
\item $\WriteB_1<_\TO \WriteB_2$ iff $\WriteM_1<_\TO \WriteM_2$ for each
$(\WriteB_i,\WriteM_i) \in \SysWrites_\Process, i \in \{1,2\}$.
In PSO, this condition is enforced only when
$\Location{(\WriteB_1,\WriteM_1)} = \Location{(\WriteB_2,\WriteM_2)}$.
\end{itemize}

A sequence $\Seq$ is \emph{well-formed} if it respects the program
order, i.e., $\Seq\Refines \TO\Project \Events{\Seq}$. 
Naturally, every trace $\Trace$ is well-formed, as
it corresponds to a concrete valid program execution.

\section{SUMMARY OF RESULTS}\label{sec:summary}

Here we present formally the main results of this paper. 
In later sections we present the details, algorithms and examples. 
Due to space restrictions, proofs appear in the appendix.

\Paragraph{A. Verifying execution consistency for $\TSO$ and $\PSO$.}
Our first set of results and the main contribution of this paper is on the problems $\VTSOrm$ and $\VPSOrm$ for verifying $\TSO$- and $\PSO$-consistent executions, respectively.
The corresponding problem $\VSCrm$ for Sequential Consistency ($\SC$) was recently shown to be in polynomial time for a constant number of threads~\cite{Abdulla19,BiswasE19}.
The solution for $\SC$ is obtained by essentially enumerating all the $n^k$ possible lower sets of the program order $(X,\TO)$, where $k$ is the number of threads,
and hence yields a polynomial when $k=O(1)$.
For $\TSO$, the number of possible lower sets is $n^{2\cdot k}$, since there are $k$ threads and $k$ buffers
(one for each thread). 
For $\PSO$, the number of possible lower sets is $n^{k\cdot (\NumVariables+1)}$, where $\NumVariables$ is the number of variables, since there are $k$ threads and
$k\cdot \NumVariables$ buffers
($\NumVariables$ buffers for each thread). 
Hence, following an approach similar to \citet{Abdulla19,BiswasE19} would yield
a running time of a polynomial with degree $2\cdot k$ for $\TSO$,
and with degree $k\cdot (\NumVariables+1)$ for $\PSO$
(thus the solution for $\PSO$ is not polynomial-time even when the number of threads is bounded).
In this work we show that both problems can be solved significantly faster.


\begin{restatable}{theorem}{themvtso}\label{them:vtso}
$\VTSOrm$ for $n$ events and $k$ threads is solvable in $O(k\cdot n^{k+1})$ time.
\end{restatable}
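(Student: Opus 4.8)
The plan is to decide realizability of the given reads-from function $\Observation$ by an incremental construction of a witness trace, organized as a reachability computation in a graph of \emph{configurations}, in the spirit of the known solution for $\SC$~\cite{Abdulla19,BiswasE19}. A configuration records, for each of the $k$ threads, how far that thread has advanced in the program order $\TO$ among its non-$\WriteM$ events ($\WriteB$, $\Read$, $\Fence$, \dots) --- this is the source of the $n^k$ factor --- together with one additional scalar bounded by $n$ that summarizes the commit status of the $k$ store buffers; the latter is the source of the extra factor $n$. The crucial point, which makes the exploration deterministic rather than branching, is that once a configuration and $\Observation$ are fixed, the contents of every store buffer, the value visible in shared memory for each variable, and hence the unique two-phase write that each already-executed read must observe are all pinned down; so there is nothing to guess beyond the configuration itself.

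First I would isolate the right notion of configuration. The naive choice --- for each thread, \emph{both} its position among the non-$\WriteM$ events \emph{and} the number of its buffer-writes already committed by a $\WriteM$ --- has $\Theta(n^{2k})$ values, exactly the bound obtained by directly enumerating lower sets of $(X,\TO)$ as remarked in \cref{sec:summary}. To collapse the two per-thread coordinates into $k+1$ coordinates overall, I would fix a \emph{canonical flushing discipline}: each thread commits the oldest pending entry of its store buffer as late as the constraints allow, where ``the constraints'' are (i) fence events, which force the buffer empty, and (ii) the coherence and reads-from requirements extracted from $\Observation$ (a committed $\WriteM$ must precede every read that $\Observation$ assigns to a later conflicting write, must be ordered consistently with the modification order forced by $\Observation$, etc.). The key lemma is an exchange argument: if \emph{any} TSO-trace realizes $\Observation$, then the canonical one does too, because sliding a $\WriteM$ event across an adjacent non-conflicting event preserves both TSO-validity and the reads-from function. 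Under this discipline the $k$ per-thread commit counts become a function of the $k$ thread positions together with a single global scalar (for concreteness, the total number of $\WriteM$ events executed so far), which gives the $O(n^{k+1})$ bound on the number of configurations.

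With configurations in hand the rest is routine: the initial configuration is empty; a configuration is accepting when every thread is at the end of its program order and every buffer is empty; and there is an edge from $c$ to $c'$ whenever $c'$ results from letting one thread take one step --- either executing its next non-$\WriteM$ event (after performing the $\WriteM$ events forced by the canonical discipline) or performing one $\WriteM$ --- provided the step respects $\TO$ and is consistent with $\Observation$ (in particular, a read's observed write, which is determined by $c$, must equal $\Observation$ of that read). Each configuration has $O(k)$ successors, so the graph has $O(k\cdot n^{k+1})$ edges, and $\Observation$ is TSO-realizable iff the accepting configuration is reachable from the initial one; a graph search decides this in $O(k\cdot n^{k+1})$ time and recovers an explicit witness trace when one exists.

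The main obstacle is the canonical-flushing lemma: one must show that restricting to the canonical discipline loses no generality, and in particular handle the subtlety noted right after the definition of $\RF{\Trace}$ --- namely that $\Observation$ does not record whether a read took its value from its own buffer or from shared memory. The construction must therefore be shown to admit, for every realizable $\Observation$, a consistent simultaneous resolution of ``self-read versus memory-read'' for all reads, and the canonical discipline is precisely what makes this resolution forced rather than another source of branching. A secondary technical point is maintaining coherence (the modification order on conflicting $\WriteM$ events) incrementally along the construction, so that the consistency check on each edge stays local and $O(k)$-time.
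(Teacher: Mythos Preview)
Your high-level plan---reduce the $n^{2k}$ naive state space to $n^{k+1}$ via a canonical discipline that collapses one of the two per-thread coordinates---matches the paper's, but you have the direction backwards, and this matters for $\TSO$.

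The paper's $\AlgoTSO$ enumerates lower sets of the \emph{memory-write} events (there are at most $n^k$ of these, since under $\TSO$ each thread's memory-writes are totally ordered by $\TO$) and, for each such lower set, \emph{greedily maximizes} the thread events. The key lemma is a monotonicity property: once the set $\WritesM{\Trace}$ is fixed, the thread events that are $\TSO$-executable form an upward-closed set that can be reached in any order, so there is a unique maximal set of thread events, computable by the simple while-loop in \cref{algo:tso_local_extend_loop}. The extra factor $n$ then comes from the $O(n)$ cost of this greedy extension per memory-write lower set, not from an extra coordinate in the configuration.

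Your proposal does the dual: fix the thread-event positions and canonicalize the memory-writes via ``flush as late as possible''. This is precisely the minimal-memory-write strategy that the paper uses for $\PSO$ (\cref{algo:verifypso}) and that the paper explicitly argues does \emph{not} transfer to $\TSO$; see the discussion at the end of \cref{subsec:verifyingpso}. The obstruction is the FIFO buffer: flushing $\WriteM_2(y)$ to satisfy a read on $y$ forces the earlier $\WriteM_1(x)$ out as well, and whether this incidental flush of $\WriteM_1(x)$ happens before or after some other thread's $\WriteM_3(x)$ determines which later reads on $x$ can be satisfied. Concretely, take thread~1 with $\WriteB_1(x),\WriteB_2(y)$, thread~2 with $\WriteB_3(x)$, and three reader threads requiring $\Observation(\Read_1)=\Wpair{\Write}_2$, $\Observation(\Read_2)=\Wpair{\Write}_3$, $\Observation(\Read_3)=\Wpair{\Write}_3$. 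A witness exists (flush $\WriteM_1,\WriteM_2$, execute $\Read_1$, then flush $\WriteM_3$, execute $\Read_2,\Read_3$), but if your discipline processes $\Read_2$ before $\Read_1$ it flushes $\WriteM_3$ first, and the subsequent forced flush of $\WriteM_1$ (when handling $\Read_1$) overwrites $x$, blocking $\Read_3$. So two interleavings that reach the \emph{same} tuple of thread positions produce different per-thread commit counts under your discipline; the claim that those counts are a function of the $k$ positions plus one global scalar fails. A single total-$\WriteM$ count cannot distinguish ``thread~1 flushed twice, thread~2 zero times'' from ``each flushed once'', and these lead to different memory states.

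The fix is to flip the roles: let the configuration be the $k$-tuple of memory-write counts, and prove that the reachable thread events are then uniquely maximized (this is the content of \cref{lem:verifyingtso_correctness}). Your exchange argument is useful there, but for the opposite direction: it shows that executing an additional \emph{thread} event never disables another thread event, which is what makes the greedy maximization sound.
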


\begin{restatable}{theorem}{themvpso}\label{them:vpso}
$\VPSOrm$ for $n$ events, $k$ threads and $d$ variables is solvable in $O(k\cdot n^{k+1}\cdot  \min(n^{k\cdot (k-1)}, 2^{k\cdot \NumVariables}))$.
Moreover, if there are no fences, the problem is solvable in $O(k\cdot n^{k+1})$ time.
\end{restatable}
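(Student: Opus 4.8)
The plan is to lift the memoized witness-search framework underlying \cref{them:vtso} from $\TSO$ to $\PSO$, changing only what a search \emph{configuration} must remember. Recall that \cref{them:vtso} is proved by looking for a linearization of the input event set that is a valid $\TSO$ trace realizing the given reads-from map $\Observation$: one grows a candidate prefix one event at a time, at each step appending an enabled event that is \emph{consistent with} $\Observation$ (a read must end up observing exactly $\Observation(\Read)$, and an appended memory-write must not destroy a value still needed by a not-yet-appended read that reads from an already-appended write), and one memoizes on a succinct signature of the prefix, which keeps the total running time at $O(k\cdot n^{k+1})$. For $\PSO$ I would keep this whole skeleton --- including the case split, already present for $\TSO$, that a read may be realized either from its own thread's buffer or from the shared memory --- and only re-derive the signature.

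\emph{No-fence case.} Under $\PSO$ a thread $\Process$ owns $\NumVariables$ buffers, one per variable, and by the third program-order condition (which in $\PSO$ is restricted to same-location writes) its memory-writes are ordered by $\TO$ only within each variable. The key point is that without fences these $\NumVariables$ buffers are mutually \emph{independent}: no event ever forces any buffer to be drained, so, exactly as in the $\TSO$ argument, memory-writes may be postponed to a canonical latest position consistent with $\Observation$, and doing so per variable creates no coupling across variables or threads. Hence a configuration need record only the per-thread frontiers over thread events (a $k$-tuple, since the reads, buffer-writes and fences of one thread are totally ordered by $\TO$) together with the same single memory-progress parameter used for $\TSO$; the content of each buffer $(\Process,x)$ is then a function of that data, and the running time stays $O(k\cdot n^{k+1})$. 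What needs care here is completeness of the greedy ``flush as late as $\Observation$ allows'' policy under $\PSO$, which I would get by the usual commutation / trace-rewriting argument, since adjacent independent memory-writes and unrelated thread events commute.

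\emph{General case.} A fence of $\Process$ blocks until \emph{all} $\NumVariables$ of $\Process$'s buffers are empty, which couples the per-variable flushing of $\Process$ and breaks ``postpone everything''. A configuration must then additionally record how far each buffer has been drained relative to the rest of the prefix, and I would bound the overhead in two complementary ways to get the two terms of the $\min$. First, it suffices to store, for each thread $\Process$ and each of the other $k-1$ threads $\Process'$, one position in $[n]$ recording how far $\Process'$'s memory-writes have been flushed at the moment $\Process$'s current buffer frontier is reached; this multiplies the running time by $n^{k\cdot(k-1)}$, and one argues it is exactly the information needed to test both whether a fence is enabled and $\Observation$-consistency of later reads. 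Alternatively, one stores a single bit per thread--variable buffer --- e.g.\ whether that buffer currently contributes a non-empty obstruction to an enabled fence of its thread --- multiplying the running time by $2^{k\cdot\NumVariables}$, which is preferable when $\NumVariables$ is small. Taking the better of the two gives $O(k\cdot n^{k+1}\cdot\min(n^{k(k-1)},2^{k\cdot\NumVariables}))$, and the case of zero fences recovers the bound of the previous paragraph.

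The main obstacle is precisely the fence case: I must isolate the \emph{minimal} information a configuration must carry so that (a) enabledness of a fence and (b) $\Observation$-consistency of every later read are both decidable from it, and then show that restricting the search to the greedy flushing discipline and to this memoized signature loses no realizing trace. This is the only place where the argument genuinely departs from \cref{them:vtso}; the two encodings above and their correctness proofs carry the weight of \cref{them:vpso}, while the no-fence case and the remaining bookkeeping are routine once the $\TSO$ machinery is in hand.
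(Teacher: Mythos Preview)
Your high-level plan---enumerate thread-event frontiers and, in the presence of fences, augment the configuration by a compact piece of state of size $n^{k(k-1)}$ or $2^{k\NumVariables}$---matches the paper, and your $2^{k\NumVariables}$ encoding (one bit per thread--variable buffer) is essentially the paper's non-empty-buffer map. But two points need correction.

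First, you say the PSO algorithm ``keeps the whole skeleton'' of TSO and that memory-writes are postponed ``exactly as in the TSO argument.'' This is backwards. The paper's $\AlgoTSO$ enumerates lower sets of \emph{memory-writes} and is greedy/maximal on thread events; $\AlgoPSO$ does the opposite, enumerating lower sets of \emph{thread events} and being lazy/minimal on memory-writes. The paper explicitly argues (end of Section~\ref{subsec:verifyingpso}) that neither strategy transfers: lazy memory-writes fail under TSO because a single buffer mixes variables, so flushing one write drags along earlier writes on other variables that may still need to be observed; and enumerating memory-write frontiers under PSO would cost $n^{k\NumVariables}$. Relatedly, there is no ``single memory-progress parameter'' in the TSO proof---the extra factor $n$ in $k\cdot n^{k+1}$ is per-configuration processing time, not an additional state component.

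Second, and this is the real gap, your proposed $n^{k(k-1)}$ encoding---``how far $\Process'$'s memory-writes have been flushed''---is not well-defined under PSO: memory-writes of $\Process'$ on different variables are $\TO$-incomparable, so their flush progress is a $\NumVariables$-tuple, not a single index, and you are back to $n^{k\NumVariables}$. The paper's device that makes $n^{k(k-1)}$ work is the \emph{fence map} $\FenceMap_\Trace\colon\Threads\times\Threads\to[n]$, where $\FenceMap_\Trace(\Process,\Process')$ is the index of the latest \emph{read} of $\Process'$ (reads in one thread \emph{are} totally ordered) that must execute before $\Process$'s next fence---namely a read whose reads-from write currently holds a variable on which $\Process$ has a pending buffered write. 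Correctness then hinges on two monotonicity lemmas for fence maps (under appending an executable memory-write, and under appending a common thread event to two prefixes satisfying the invariant), plus the observation that spurious memory-writes leave the fence map unchanged; these lemmas are precisely the ``main obstacle'' you flagged but did not supply, and your flush-progress description does not lead to them.
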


\Paragraph{Novelty.}
For $\TSO$, \cref{them:vtso} yields an improvement of order $n^{k-1}$ compared to the naive $n^{2\cdot k}$ bound.
For $\PSO$, perhaps surprisingly, the first upper-bound of~\cref{them:vpso} does not depend on the number of variables.
Moreover, when there are no fences, the cost for $\PSO$ is the same as for $\TSO$ (with or without fences).

\Paragraph{B. Stateless Model Checking for $\TSO$ and $\PSO$.}
Our second result concerns stateless model checking (SMC)
under $\TSO$ and $\PSO$. 
We introduce an SMC algorithm $\DCTSOPSOM$ that explores the RF partitioning in the $\TSO$ and $\PSO$ settings, as stated in the following theorem.


\begin{restatable}{theorem}{themdctsopsomaximal}\label{them:dctsopso_maximal}
Consider a concurrent program $\System$ with $k$ threads and $d$ variables, under a memory model $\MemoryModel\in \{\TSO,\PSO\}$
with trace space $\TraceSpaceMax_{\MemoryModel}$ and $n$ being the number of events of the longest trace in $\TraceSpaceMax_{\MemoryModel}$.
$\DCTSOPSOM$ is a sound, complete and exploration-optimal algorithm
for local state reachability in $\System$, i.e., it explores only maximal traces and visits
each class of the RF partitioning exactly once. The time complexity is
$O\left(\alpha \cdot \left|\TraceSpaceMax_{\MemoryModel}/\sim_{\Observation}\right|\right)$, where
\begin{enumerate}[noitemsep,topsep=0pt,partopsep=0px]
\item $\alpha=n^{O(k)}$ under $\MemoryModel=\TSO$, and
\item $\alpha=n^{O(k^2)}$ under $\MemoryModel=\PSO$.
\end{enumerate}
\end{restatable}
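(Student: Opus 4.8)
The plan is to establish the three qualitative guarantees — soundness, completeness, and exploration-optimality — together with the per-class time bound, by viewing $\DCTSOPSOM$ as a recursive exploration over a tree of \emph{annotated partial executions}, following the template of the reads-from SMC algorithm for $\SC$~\cite{Abdulla19}, but with the consistency oracle replaced by our procedures for $\VTSOrm$ and $\VPSOrm$ from \cref{them:vtso,them:vpso}. Each node of the recursion tree carries a read event whose source is being decided, a \emph{positive annotation} fixing the reads-from choices committed on the prefix built so far, and a \emph{negative annotation} recording the source choices for the pivot read already explored in sibling subtrees, which must therefore be avoided. A leaf corresponds to a maximal trace of $\System$ under $\MemoryModel$, and the value it reports is the reads-from map of that trace. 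The argument is organized as one proof per claim, parameterized uniformly over $\MemoryModel\in\{\TSO,\PSO\}$, with the model entering only through which consistency oracle is called.

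First I would prove \emph{completeness}: for every maximal trace $\Trace\in\TraceSpaceMax_{\MemoryModel}$ there is a leaf of the recursion tree whose reads-from map equals $\RF{\Trace}$. This is an induction on the number of reads: at each recursive call, among the reads not yet annotated one pivot is selected, and every write that $\Trace$ makes this read observe is a candidate that survives the consistency test, because by \cref{them:vtso,them:vpso} the oracle accepts a partial reads-from map precisely when it is realizable under $\MemoryModel$, and the restriction of $\RF{\Trace}$ is such a map. Hence the branch agreeing with $\Trace$ is never pruned, and the induction bottoms out at a leaf carrying $\RF{\Trace}$. \emph{Soundness} is the converse on a per-leaf basis: a leaf is emitted only after the oracle certifies the accumulated reads-from map realizable and returns a witness execution, which is then completed to a maximal trace of $\System$ under $\MemoryModel$ with exactly the prescribed reads-from map; since the sequence of local states visited along a trace in each thread is determined by its reads-from map (\cref{subsec:model}), the set of local states reported by $\DCTSOPSOM$ coincides with the set reachable in $\System$, and no non-maximal trace is ever output.

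Next I would prove \emph{exploration-optimality}, namely that distinct leaves carry distinct reads-from maps while every class of $\TraceSpaceMax_{\MemoryModel}/\ObsE$ is produced; together with completeness this yields a bijection between leaves and RF classes, so each class is visited exactly once. This is the crux and the main obstacle. The key invariant is that at every node the negative annotation blocks, within its subtree, precisely the source choices for the pivot read that were committed in earlier siblings, so the children of a node induce a \emph{partition} of the RF classes that extend the node's positive annotation: any two sibling subtrees are class-disjoint by construction of the negative annotation, and they are jointly exhaustive because the candidate set of the pivot read ranges over all writes that can legally supply its value (here the $\SC$-style closure is adapted so that it remains sound under the $\TSO/\PSO$ reads-from semantics, in which the oracle cannot distinguish buffer- from memory-sourced reads). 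An induction over the tree then lifts disjointness and exhaustiveness from siblings to all leaves.

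Finally, for the time bound I charge the work of the whole recursion to its leaves. The tree has depth at most $n$, since each level fixes the source of one more read, and branching at most $n$, one child per candidate write; thus the number of internal nodes encountered between two consecutive leaves is $O(n^2)$, and the dominant cost at each node is a single call to the consistency oracle, which by \cref{them:vtso} costs $n^{O(k)}$ for $\TSO$ and by \cref{them:vpso} costs $n^{O(k^2)}$ for $\PSO$ (using the $n^{k\cdot(k-1)}$ branch), plus polynomial overhead for maintaining the annotations and for computing each read's candidate set via a causal-past analysis that is polynomial once $k$ is fixed. Multiplying this per-leaf cost $\alpha$ by $\bigl|\TraceSpaceMax_{\MemoryModel}/\ObsE\bigr|$ gives the stated $O(\alpha\cdot|\TraceSpaceMax_{\MemoryModel}/\ObsE|)$ complexity with $\alpha=n^{O(k)}$ for $\TSO$ and $\alpha=n^{O(k^2)}$ for $\PSO$.
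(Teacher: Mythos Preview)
Your high-level decomposition—soundness via the oracle, completeness by induction, exploration-optimality via the recursion tree, and complexity by charging oracle work to RF classes—matches the paper's proof, which is itself a short sketch that delegates the core invariants to the $\SC$ algorithm of~\citet{Abdulla19}. Where your plan diverges is in the model of the recursion itself, and this mismatch makes several of your steps not apply to $\DCTSOPSOM$ as actually written in \cref{algo:dctsopsom}.

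You describe each recursive call as fixing the source of one pivot read, with a negative annotation recording forbidden \emph{source choices} for that read, and leaves being the only places a maximal trace appears. In $\DCTSOPSOM$ the structure is different: every call first \emph{maximally extends} its trace (\cref{line:dctsopsom_extendtrace}), so every call already reports a maximal trace; it then ranges over \emph{all} unmarked reads and all candidate writes, and the set $\NegativeMarked$ marks reads that must not be \emph{remutated} (not forbidden sources). Exploration-optimality in the paper is argued not through a per-read negative annotation but through the coalesced tree of the sequences $\widetilde{\Seq}$, where each branching node is a read and the children are indexed by distinct sources via the global $\schedules$ map keyed on $\pre{\widetilde{\Seq}}{\Read}$; your sibling-partition invariant is the right intuition but is enforced by a different mechanism than the one you name. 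For complexity, the paper simply charges $O(n^2\cdot\alpha)$ to each of the exactly $|\TraceSpaceMax_{\MemoryModel}/\ObsE|$ calls (at most $n$ reads times at most $n$ candidate writes, one oracle call each); your ``internal nodes between consecutive leaves is $O(n^2)$'' rests on the depth-$n$ one-read-per-level tree that does not reflect the actual recursion, and once you align with the real structure the per-call accounting is both simpler and what the paper does.
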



An algorithm with RF exploration-optimality in $\SC$ is presented by~\citet{Abdulla19}.
Our $\DCTSOPSOM$ algorithm generalizes the above approach to achieve RF exploration-optimality
in the relaxed memory models $\TSO$ and $\PSO$.
Further, the time complexity of $\DCTSOPSOM$ per class of RF partitioning
is equal between $\PSO$ and $\TSO$ for programs with no fence instructions.

$\DCTSOPSOM$ uses the verification algorithms developed in \cref{them:vtso} and \cref{them:vpso} as black-boxes to
decide whether any specific class of the RF partitioning is $\TSO$- or $\PSO$-consistent, respectively.
We remark that these theorems can potentially be used as black-boxes to other SMC algorithms that explore the RF partitioning (e.g.,~\citet{Chalupa17,Kokologiannakis19,Kokologiannakis20}).

\section{VERIFYING TSO AND PSO EXECUTIONS WITH A READS-FROM FUNCTION}\label{sec:verifyingtsopso}

In this section we tackle the verification problems $\VTSOrm$ and $\VPSOrm$.
In each case, the input is a pair $(X, \Observation)$, where $X$ is a proper set of events of $\System$,
and $\Observation\colon \Reads{X} \to \Writes{X}$ is a reads-from function.
The task is to decide whether there exists a 
trace $\Trace$ that is a linearization of $(X, \TO)$ with 
$\Observation_{\Trace}=\Observation$,
where $\Observation_{\Trace}$ is wrt $\TSO/\PSO$ memory semantics.
In case such $\Trace$ exists, we say that $(X, \Observation)$ is \emph{realizable}
and $\Trace$ is its \emph{witness} trace.
We first define some relevant notation, and then
establish upper bounds for $\VTSOrm$ and $\VPSOrm$, i.e.,
\cref{them:vtso} and \cref{them:vpso}.

\Paragraph{Held variables.}
Given a trace $\Trace$ and a memory-write $\WriteM \in \WritesM{\Trace}$ present in the trace,
we say that $\WriteM$ \emph{holds} variable $x=\Location{\WriteM}$ in $\Trace$ if the following hold.
\begin{enumerate}[noitemsep,topsep=0pt,partopsep=0px]
\item $\WriteM$ is the last memory-write event of $\Trace$ on variable $x$.
\item There exists a read event $\Read\in X\setminus\Events{\Trace}$ such that $\Observation(\Read)=(\_, \WriteM)$. 
\end{enumerate}
We similarly say that the thread $\Proc{\WriteM}$ \emph{holds} $x$ in $\Trace$.
Finally, a variable $x$ is \emph{held} in $\Trace$ if it is held by some thread in $\Trace$.
Intuitively, $\WriteM$ holds $x$ until 
all reads that need to read-from $\WriteM$ get executed.

\Paragraph{Witness prefixes.}
Throughout this section, we use the notion of witness prefixes.
Formally, a \emph{witness prefix} is a trace $\Trace$ that can be extended to a trace $\Trace^*$ that realizes $(X, \Observation)$, under the respective memory model.
Our algorithms for $\VTSOrm$ and $\VPSOrm$ operate by constructing
traces $\Trace$ such that if $(X,\Observation)$ is realizable, then
$\Trace$ is a witness prefix that can be extended with the remaining
events and finally realize $(X,\Observation)$.

Throughout, we assume wlog that whenever $\Observation(\Read)=(\WriteB,\WriteM)$ with $\Proc{\Read}=\Proc{\WriteB}$,
then $\WriteB$ is the last buffer-write on $\Location{\WriteB}$ before $\Read$ in their respective thread.
Clearly, if this condition does not hold, then the corresponding pair $(X,\Observation)$ is not realizable
in TSO nor PSO.

\subsection{Verifying TSO Executions}\label{subsec:verifyingtso}

In this section we establish \cref{them:vtso}, i.e.,
we present an algorithm $\AlgoTSO$ that solves $\VTSOrm$ in $O(k\cdot n^{k+1})$ time.
The algorithm relies crucially on the notion of $\TSO$-executable events, defined below.
Throughout this section we consider fixed an instance $(X,\Observation)$ of $\VTSOrm$,
and all traces $\Trace$ considered in this section are such that $\Events{\Trace}\subseteq X$.

\Paragraph{$\TSO$-executable events.}
Consider a trace $\Trace$.
An event $\Event\in X\setminus \Events{\Trace}$ is \emph{$\TSO$-executable} (or executable for short)
in $\Trace$ if $\Events{\Trace}\cup \{ \Event \}$ is a lower set of $(X,\TO)$ and the following conditions hold.

\begin{enumerate}[noitemsep,topsep=0pt,partopsep=0px]
\item \emph{If $\Event$ is a read event $\Read$}, let $\Observation(\Read)=(\WriteB, \WriteM)$.
If $\Proc{\Read}\neq \Proc{\WriteM}$, then $\WriteM\in \Trace$.
\item \emph{If $\Event$ is a memory-write event $\WriteM$} then the following hold.
\begin{enumerate}[noitemsep,topsep=0pt,partopsep=0px]
\item\label{item:tso_execw1} Variable $\Location{\WriteM}$ is not held in $\Trace$.
\item\label{item:tso_execw2}
Let $\Read\in \Reads{X}$ be an arbitrary read with $\Observation(\Read)=(\WriteB, \WriteM)$
and $\Proc{\Read}\neq \Proc{\WriteM}$. For each two-phase write $(\WriteB', \WriteM')$
with $\Location{\Read} = \Location{\WriteB'}$ and $\WriteB'<_{\TO}\Read$, we have $\WriteM'\in \Trace$.
\end{enumerate}
\end{enumerate}

\begin{figure*}[!h]
\begin{subfigure}[b]{0.42\textwidth}
\centering
\begin{tikzpicture}[thick,
pre/.style={<-,shorten >= 2pt, shorten <=2pt, very thick},
post/.style={->,shorten >= 2pt, shorten <=2pt,  very thick},
seqtrace/.style={->, line width=2},
aux_seqtrace/.style={->, line width=1, draw=gray},
und/.style={very thick, draw=gray},
event/.style={rectangle, minimum height=3.5mm, draw=black, fill=white, minimum width=8mm,   line width=1pt, inner sep=2, font={\footnotesize}},
aux_event/.style={event, draw=gray},
virt/.style={circle,draw=black!50,fill=black!20, opacity=0},
bad/.style={preaction={fill, white}, pattern color=red!40, pattern=north east lines},
good/.style={preaction={fill, white}, pattern color=green!60, pattern=north west lines},
isLabel/.style={rectangle, fill opacity=0.5, fill=white, text opacity=1}
]

\newcommand{\xstep}{1.7}
\newcommand{\ystep}{0.6}
\newcommand{\yaux}{0.17727}

\fill[gray!20]
(-0.5*\xstep, 0.32*\ystep) to
(-0.5*\xstep, -0.5*\ystep) to
(1*\xstep, -1.5*\ystep - \yaux) to
(1.75*\xstep, -1.5*\ystep - \yaux) to
(2.25*\xstep, -2.5*\ystep) to
(3*\xstep, -2.5*\ystep) to
(3*\xstep, 0.32*\ystep) to cycle;

\node[]       (t1m_0)   at (0*\xstep, 0*\ystep) {\small$\LocalTrace_1$};
\node[]       (t1m_end) at (0*\xstep, -5*\ystep) {};

\node[]       (t2m_0)   at (0.5*\xstep, 0*\ystep) {\small$\LocalTrace_2$};
\node[]       (t2m_end) at (0.5*\xstep, -5*\ystep) {};

\node[]       (t3a_0)   at (1.5*\xstep, 0*\ystep) {\small$\LocalTrace_3'$};
\node[]       (t3a_end) at (1.5*\xstep, -5*\ystep) {};

\node[]       (t3m_0)   at (2.5*\xstep, 0*\ystep) {\small$\LocalTrace_3$};
\node[]       (t3m_end) at (2.5*\xstep, -5*\ystep) {};

\draw[seqtrace] (t1m_0) to (t1m_end);
\draw[seqtrace] (t2m_0) to (t2m_end);
\draw[aux_seqtrace] (t3a_0) to (t3a_end);
\draw[seqtrace] (t3m_0) to (t3m_end);

\node[event, good]  (t1m_r1)   at (0*\xstep, -2*\ystep) {$\Read_1(x)$};
\node[event, bad]  (t1m_r2)   at (0*\xstep, -3*\ystep) {$\Read_2(x)$};
\node[event, bad]  (t2m_r)   at (0.5*\xstep, -4*\ystep) {$\Read_3(y)$};
\node[aux_event]  (t3a_w1)  at (1.5*\xstep, -1*\ystep - \yaux) {$\WriteM_1(x)$};
\node[aux_event]  (t3a_w2)  at (1.5*\xstep, -2*\ystep - \yaux) {$\WriteM_2(y)$};
\node[event]  (t3m_w1)  at (2.5*\xstep, -1*\ystep) {$\WriteB_1(x)$};
\node[event]  (t3m_w2)  at (2.5*\xstep, -2*\ystep) {$\WriteB_2(y)$};
\node[event, good]  (t3m_r3)  at (2.5*\xstep, -3*\ystep) {$\Read_4(y)$};

\draw[post, draw=gray]   (t3m_w1) to (t3a_w1);
\draw[post, draw=gray]   (t3m_w2) to (t3a_w2);
\draw[post, \darkred, dashed]   (t3a_w1) to (t1m_r1);
\draw[post, \darkred, dashed]   (t3a_w1) to (t1m_r2);
\draw[post, \darkred, dashed]   (t3a_w2) -- (t2m_r) node [midway, xshift=7pt, yshift=-7pt, rotate=35, isLabel] {\small read by};
\draw[post, \darkred, dashed]   (t3m_w2) to[out=0, in=0, distance=1cm] (t3m_r3);

\end{tikzpicture}
\vspace{-2mm}
\caption{
The reads $\Read_1$ and $\Read_4$ are TSO-executable.
The read $\Read_2$ is not TSO-executable,
because $\Events{\Trace} \cup \{ \Read_2 \}$
is not a lower set;
neither is the read $\Read_3$, because
$\Observation(\Read_3) = (\_, \WriteM_2)$ has not been executed yet.
}
\label{subfig:TSO_exec_reads}
\end{subfigure}
\hspace{0.5cm}
\begin{subfigure}[b]{0.53\textwidth}
\centering
\begin{tikzpicture}[thick,
pre/.style={<-,shorten >= 2pt, shorten <=2pt, very thick},
post/.style={->,shorten >= 2pt, shorten <=2pt,  very thick},
seqtrace/.style={->, line width=2},
aux_seqtrace/.style={->, line width=1, draw=gray},
und/.style={very thick, draw=gray},
event/.style={rectangle, minimum height=3.5mm, draw=black, fill=white, minimum width=8mm,   line width=1pt, inner sep=2, font={\footnotesize}},
aux_event/.style={event, draw=gray},
virt/.style={circle,draw=black!50,fill=black!20, opacity=0},
bad/.style={preaction={fill, white}, pattern color=red!40, pattern=north east lines},
good/.style={preaction={fill, white}, pattern color=green!60, pattern=north west lines},
isLabel/.style={rectangle, fill opacity=0.5, fill=white, text opacity=1}
]

\newcommand{\xstep}{1.7}
\newcommand{\ystep}{0.57}
\newcommand{\yaux}{0.15}

\fill[gray!20]
(-0.5*\xstep, 0.32*\ystep) to
(-0.5*\xstep, -2.5*\ystep) to
(0.5*\xstep, -2.5*\ystep) to
(1*\xstep, -1.5*\ystep) to
(1.5*\xstep, -1.5*\ystep) to
(1.75*\xstep, -1*\ystep) to
(2.35*\xstep, -1*\ystep) to
(2.45*\xstep, -2.55*\ystep) to
(2.9*\xstep, -2.55*\ystep) to
(3.0*\xstep, -3.5*\ystep) to
(3.5*\xstep, -3.5*\ystep) to
(3.5*\xstep, 0.32*\ystep) to cycle;

\node[] (t1a_0)   at (0*\xstep, 0*\ystep) {\small$\LocalTrace_1'$};
\node[] (t1a_end) at (0*\xstep, -6*\ystep) {};

\node[] (t2a_0)   at (0.6*\xstep, -3*\ystep) {\small$\LocalTrace_2'$};
\node[] (t2a_end) at (0.6*\xstep, -6*\ystep) {};

\node[] (t2m_0)   at (1.2*\xstep, 0*\ystep) {\small$\LocalTrace_2$};
\node[] (t2m_end) at (1.2*\xstep, -6*\ystep) {};

\node[] (t3a_0)   at (2*\xstep, 0*\ystep) {\small$\LocalTrace_3'$};
\node[] (t3a_end) at (2*\xstep, -6*\ystep) {};

\node[] (t4a_0)   at (2.55*\xstep, 0*\ystep) {\small$\LocalTrace_4'$};
\node[] (t4a_end) at (2.55*\xstep, -6*\ystep) {};

\node[] (t5a_0)   at (3.1*\xstep, 0*\ystep) {\small$\LocalTrace_5'$};
\node[] (t5a_end) at (3.1*\xstep, -6*\ystep) {};

\draw[aux_seqtrace] (t1a_0) to (t1a_end);
\draw[aux_seqtrace] (t2a_0) to (t2a_end);
\draw[seqtrace] (t2m_0) to (t2m_end);
\draw[aux_seqtrace] (t3a_0) to (t3a_end);
\draw[aux_seqtrace] (t4a_0) to (t4a_end);
\draw[aux_seqtrace] (t5a_0) to (t5a_end);

\node[aux_event] (t1a_1) at (0*\xstep, -1*\ystep) {$\WriteM_1(x)$};
\node[aux_event] (t1a_2) at (0*\xstep, -2*\ystep) {$\WriteM_2(y)$};

\node[aux_event, bad] (t2a_1) at (0.6*\xstep, -4.2*\ystep) {$\WriteM_3(z)$};

\node[event] (t2m_1) at (1.2*\xstep, -1*\ystep) {$\Read_1(x)$};
\node[event] (t2m_2) at (1.2*\xstep, -2*\ystep) {$\Read_2(y)$};
\node[event] (t2m_3) at (1.2*\xstep, -3*\ystep) {$\WriteB_3(z)$};
\node[event] (t2m_4) at (1.2*\xstep, -5*\ystep) {$\Read_3(z)$};

\node[aux_event, good] (t3a_1) at (2*\xstep, -1.5*\ystep) {$\WriteM_4(x)$};
\node[aux_event, bad] (t4a_1) at (2.55*\xstep, -3*\ystep) {$\WriteM_5(y)$};
\node[aux_event, bad] (t5a_1) at (3.1*\xstep, -4*\ystep) {$\WriteM_6(z)$};

\draw[post, draw=gray]   (t2m_3) to (t2a_1);
\draw[post, \darkred, dashed] (t1a_1) to (t2m_1);
\draw[post, \darkred, dashed] (t1a_2) to (t2m_2);
\draw[post, \darkred, dashed] (t5a_1) to[out=-150, in=0] (t2m_4);

\draw[post, color=black!20!blue, dotted] (t2m_2) to (t4a_1);
\draw[post, color=black!20!blue, dotted] (t2a_1) -- (t5a_1) node [midway, above, isLabel] {\small blocks};

\end{tikzpicture}
\vspace{-2mm}
\caption{
The memory-write $\WriteM_4$ is TSO-executable. The other
memory-writes are not;
$\Events{\Trace} \cup \{ \WriteM_3 \}$ is not a lower set,
for $\WriteM_5$ resp. $\WriteM_6$, the blue dotted arrows show
the events that they have to wait for, because of
\cref{item:tso_execw1} resp. \cref{item:tso_execw2}
(some buffer-writes are not displayed here for brevity).
}
\label{subfig:TSO_exec_mwrites}
\end{subfigure}

\caption{
$\TSO$-executability. The already executed events (i.e., $\Events{\Trace}$)
are in the gray zone, the remaining events are outside the gray zone.
The buffer threads are gray and thin, the main threads are black and thick.
}
\label{fig:TSO_exec}
\end{figure*}

Intuitively, the conditions of executable events ensure that executing
an event does not immediately create an invalid witness prefix.
The lower-set condition ensures that the program order $\TO$ is
respected. This is a sufficient condition for a buffer-write
or a fence (in particular, for a fence this implies that the
respective buffer is currently empty). The extra condition for a read
ensures that its reads-from constraint is satisfied.
The extra conditions for a memory-write prevent it from
causing some reads-from constraint to become unsatisfiable.

\cref{fig:TSO_exec} illustrates the notion of $\TSO$-executability
on several examples.
Observe that if $\Trace$ is a valid trace, extending $\Trace$ with an executable event (i.e., $\Trace \Concat \Event$)
also yields a valid trace that is well-formed,
as, by definition, $\Events{\Trace}\cup \{ \Event \}$ is a lower set of $(X,\TO)$.

\Paragraph{Algorithm $\AlgoTSO$.}
We are now ready to describe our algorithm $\AlgoTSO$ for the problem $\VTSOrm$.
At a high level, the algorithm enumerates all lower sets of $(\WritesM{X},\TO)$ by constructing a trace $\Trace$ with $\WritesM{\Trace}=Y$ for every lower set $Y$ of $(\WritesM{X},\TO)$.
The crux of the algorithm is to maintain the following.
Each constructed trace $\Trace$ is
\emph{maximal} in the set of thread events,
among all witness prefixes with the same set of memory-writes.
That is, for every witness prefix $\Trace'$ with $\WritesM{\Trace'}=\WritesM{\Trace}$, we have that $\LocalEvents{\Trace}\supseteq \LocalEvents{\Trace'}$.
Thus, the algorithm will only explore $n^{k}$ traces, as opposed to $n^{2\cdot k}$ from a naive enumeration of all lower sets of $(X,\TO)$.

The formal description of $\AlgoTSO$ is in \cref{algo:verifytso}.
The algorithm maintains a worklist $\Worklist$ of prefixes and a set
$\DoneSet$ of already-explored lower sets of $(\WritesM{X},\TO)$. 
In each iteration, the \cref{algo:tso_local_extend_loop} loop
makes the prefix maximal in the thread events, then
\cref{algo:tso_test_done} checks if we are done,
otherwise the loop in~\cref{algo:tso_mem_execute} enumerates the
executable memory-writes to extend the prefix with.

\vspace{1mm}
\begin{algorithm}
\small
\SetInd{0.4em}{0.4em}
\DontPrintSemicolon
\caption{$\AlgoTSO$}\label{algo:verifytso}
\KwIn{
An event set $X$ and a reads-from function $\Observation\colon \Reads{X}\to \Writes{X}$
}
\KwOut{
A witness $\Trace$ that realizes $(X, \Observation)$ if $(X, \Observation)$ is realizable under $\TSO$, else $\bot$
}
\BlankLine
$\Worklist\gets \{\epsilon\}$; $\DoneSet\gets \{\emptyset\}$\\
\While{$\Worklist\neq \emptyset$}{\label{algo:tso_main_while}
Extract a trace $\Trace$ from $\Worklist$\label{algo:tso_extract_worklist}\\
\While
{$\exists$ thread event $\Event$ $\TSO$-executable in $\Trace$}{\label{algo:tso_local_extend_loop}
$\Trace\gets \Trace \Concat \Event$\tcp*[f]{Execute the thread event $\Event$}\label{algo:tso_local_extend}\\
}
\lIf(\tcp*[f]{Witness found}){$\Events{\Trace}=X$}{\label{algo:tso_test_done}
\Return{$\Trace$}
}
\ForEach{memory-write $\WriteM$ that is $\TSO$-executable in $\Trace$}{\label{algo:tso_mem_execute}
$\Trace_{\WriteM}\gets \Trace\Concat \WriteM$\tcp*[f]{Execute $\WriteM$}\label{algo:tso_mem_extend}\\
\uIf
{$\not\exists\Trace'\in \DoneSet$ s.t. $\WritesM{\Trace_{\WriteM}}=\WritesM{\Trace'}$}{\label{algo:tso_if_new}
Insert $\Trace_{\WriteM}$ in $\Worklist$ and in $\DoneSet$\tcp*[f]{Continue from $\Trace_{\WriteM}$}\label{algo:tso_insert_worklist}\\
}
}
}
\Return{$\bot$}
\end{algorithm}

We now provide the insights behind the correctness of $\AlgoTSO$.
The correctness proof has two components: (i) soundness and (ii) completeness, which we present below.

\Paragraph{Soundness.}
The soundness follows directly from the definition of $\TSO$-executable events.
In particular, when the algorithm extends
a trace $\Trace$ with a read $\Read$, where $\Observation(\Read)=(\WriteB, \WriteM)$,
the following hold.

\begin{enumerate}[noitemsep,topsep=0pt,partopsep=0px]
\item If $\Proc{\Read}\neq \Proc{\WriteB}$, then $\WriteM\in \Trace$, since $\Read$ became executable.
Moreover, when $\WriteM$ appeared in $\Trace$, the variable $x=\Location{\WriteM}$ became held by $\WriteM$, and remained held at least until the current step where $\Read$ is executed.
Hence, no other memory-write $\WriteM'$ with $\Location{\WriteM'}=x$ could have become executable in the meantime, to violate the observation of $\Read$.
Moreover, $\Read$ cannot read-from a local buffer write $\WriteB'$ with $\Location{\WriteB'}=x$, as by definition, when $\WriteM$ became executable, all buffer-writes on $x$ that are local to $\Read$ and precede $\Read$ must have been flushed to the main memory
(i.e., $\WriteM'$ must have also appeared in the trace).
\item If $\Proc{\Read}=\Proc{\WriteB}$, then either $\WriteM$ has not appeared already in $\Trace$, in which case $\Read$ reads-from $\WriteB$ from its local buffer, or $\WriteM$ has appeared in the trace and held its variable until $\Read$ is executed, as in the previous item.
\end{enumerate}

\Paragraph{Completeness.} 
Let $\Trace'$ be an arbitrary witness prefix, 
$\AlgoTSO$ constructs a trace $\Trace$ 
such that
$\WritesM{\Trace} = \WritesM{\Trace'}$ and
$\LocalEvents{\Trace}\supseteq \LocalEvents{\Trace'}$.
This is because
$\AlgoTSO$ constructs for every lower set $Y$ of $(\WritesM{X}, \TO)$
a single representative trace $\Trace$ with 
$\WritesM{\Trace} = Y$.
The key 
is to make $\Trace$ \emph{maximal} on the thread events, i.e.,
$\LocalEvents{\Trace}\supseteq \LocalEvents{\Trace'}$
for any witness prefix $\Trace'$ with $\WritesM{\Trace'}=\WritesM{\Trace}$,
and thus 
any memory-write $\WriteM$ that is executable in $\Trace'$ is also executable in $\Trace$.

We now present the above insight in detail.
Indeed, if $\WriteM$ is not executable in $\Trace$, one of the following holds.
Let $\Location{\WriteM}=x$.
\begin{enumerate}[noitemsep,topsep=0pt,partopsep=0px]
\item $x$ is already held in $\Trace$. But since $\WritesM{\Trace'}=\WritesM{\Trace}$
and any read of $\Trace'$ also appears in $\Trace$, the variable $x$ is also held in $\Trace'$, thus $\WriteM$ is not executable in $\Trace'$ either.
\item There is a later read $\Read\not \in \Trace$ that must read-from $\WriteM$,
but $\Read$ is preceded by a local write $(\WriteB', \WriteM')$
(i.e., $\WriteB'<_{\TO}\Read$)
also on $x$, for which $\WriteM' \not\in \Trace$.
Since $\LocalEvents{\Trace}\supseteq\LocalEvents{\Trace'}$, we have $\Read\not \in \Trace'$, and as $\WritesM{\Trace'}=\WritesM{\Trace}$, also $\WriteM'\not \in \Trace'$.
Thus $\WriteM$ is also not executable in $\Trace'$.
\end{enumerate}
The final insight is on how the algorithm maintains the maximality invariant as it extends $\Trace$ with new events.
This holds because read events become executable as soon as their corresponding remote observation $\WriteM$ appears in the trace, and hence all such reads are executable for a given lower set of $(\WritesM{X},\TO)$.
All other thread events are executable without any further conditions.
\cref{fig:TSO_max_cmp} illustrates the intuition behind the maximality invariant.
The following lemma states the formal correctness,
which together with the complexity argument gives us \cref{them:vtso}.

\begin{figure}[h]
\begin{minipage}[c]{0.645\textwidth}
\raggedright
\begin{tikzpicture}[thick,
pre/.style={<-,shorten >= 2pt, shorten <=2pt, very thick},
post/.style={->,shorten >= 2pt, shorten <=2pt,  very thick},
seqtrace/.style={->, line width=2},
aux_seqtrace/.style={->, line width=1, draw=gray},
und/.style={very thick, draw=gray},
event/.style={rectangle, minimum height=3.5mm, draw=black, fill=white, minimum width=8mm,   line width=1pt, inner sep=2, font={\footnotesize}},
aux_event/.style={event, draw=gray},
virt/.style={circle,draw=black!50,fill=black!20, opacity=0},
bad/.style={preaction={fill, white}, pattern color=red!40, pattern=north east lines},
good/.style={preaction={fill, white}, pattern color=green!60, pattern=north west lines},
neutral/.style={preaction={fill, white}, pattern color=yellow, pattern=crosshatch}
]

\newcommand{\xstep}{0.85}
\newcommand{\ystep}{0.7}
\newcommand{\yaux}{0.2}

\fill[gray!10]
(-0.5*\xstep, 0.5*\ystep) to
(-0.5*\xstep, -4.5*\ystep) to
(9.5*\xstep, -4.5*\ystep) to
(9.5*\xstep, 0.5*\ystep) to cycle;

\fill[gray!20]
(-0.5*\xstep, 0.5*\ystep) to
(-0.5*\xstep, -2.5*\ystep) to
(9.5*\xstep, -2.5*\ystep) to
(9.5*\xstep, 0.5*\ystep) to cycle;

\node[] (t1a_0)   at (1*\xstep, 0*\ystep) {\small$\LocalTrace_1'$};
\node[] (t1a_end) at (1*\xstep, -7.4*\ystep) {};

\node[] (t1m_0)   at (0*\xstep, 0*\ystep) {\small$\LocalTrace_1$};
\node[] (t1m_end) at (0*\xstep, -7.4*\ystep) {};

\node[] (t2a_0)   at (2*\xstep, 0*\ystep) {\small$\LocalTrace_2'$};
\node[] (t2a_end) at (2*\xstep, -7.4*\ystep) {};

\node[] (t2m_0)   at (3*\xstep, 0*\ystep) {\small$\LocalTrace_2$};
\node[] (t2m_end) at (3*\xstep, -7.4*\ystep) {};

\node[] (t3a_0)   at (4*\xstep, 0*\ystep) {\small$\LocalTrace_3'$};
\node[] (t3a_end) at (4*\xstep, -7.4*\ystep) {};

\node[] (t3m_0)   at (5*\xstep, 0*\ystep) {\small$\LocalTrace_3$};
\node[] (t3m_end) at (5*\xstep, -7.4*\ystep) {};

\node[] (t4a_0)   at (6*\xstep, 0*\ystep) {\small$\LocalTrace_4'$};
\node[] (t4a_end) at (6*\xstep, -7.4*\ystep) {};

\node[] (t4m_0)   at (7*\xstep, 0*\ystep) {\small$\LocalTrace_4$};
\node[] (t4m_end) at (7*\xstep, -7.4*\ystep) {};

\node[] (t5a_0)   at (8*\xstep, 0*\ystep) {\small$\LocalTrace_5'$};
\node[] (t5a_end) at (8*\xstep, -7.4*\ystep) {};

\node[] (t5m_0)   at (9*\xstep, 0*\ystep) {\small$\LocalTrace_5$};
\node[] (t5m_end) at (9*\xstep, -7.4*\ystep) {};

\draw[aux_seqtrace] (t1a_0) to (t1a_end);
\draw[aux_seqtrace] (t2a_0) to (t2a_end);
\draw[aux_seqtrace] (t3a_0) to (t3a_end);
\draw[aux_seqtrace] (t4a_0) to (t4a_end);
\draw[aux_seqtrace] (t5a_0) to (t5a_end);

\draw[seqtrace] (t1m_0) to (t1m_end);
\draw[seqtrace] (t2m_0) to (t2m_end);
\draw[seqtrace] (t3m_0) to (t3m_end);
\draw[seqtrace] (t4m_0) to (t4m_end);
\draw[seqtrace] (t5m_0) to (t5m_end);

\node[aux_event]  (t1a_1) at (1*\xstep, -2*\ystep) {$\WriteM_1(y)$};
\node[event]      (t1m_1) at (0*\xstep, -1*\ystep) {$\WriteB_1(y)$};
\draw[post, draw=gray]   (t1m_1) to (t1a_1);

\node[aux_event, neutral] (t2a_1) at (2*\xstep, -5*\ystep) {$\WriteM_2(z)$};
\node[event] (t2m_1) at (3*\xstep, -3*\ystep) {$\Read_1(y)$};
\node[event] (t2m_2) at (3*\xstep, -4*\ystep) {$\WriteB_2(z)$};
\node[event] (t2m_3) at (3*\xstep, -6.4*\ystep) {$\Read_2(z)$};
\draw[post, draw=gray]   (t2m_2) to (t2a_1);

\node[aux_event, good] (t3a_1) at (4*\xstep, -5*\ystep) {$\WriteM_3(x)$};
\node[event]                    (t3m_1) at (5*\xstep, -1*\ystep) {$\WriteB_3(x)$};
\draw[post, draw=gray]   (t3m_1) to (t3a_1);

\node[aux_event, neutral] (t4a_1) at (6*\xstep, -5*\ystep) {$\WriteM_4(y)$};
\node[event]                  (t4m_1) at (7*\xstep, -1*\ystep) {$\WriteB_4(y)$};
\draw[post, draw=gray]   (t4m_1) to (t4a_1);

\node[aux_event, bad] (t5a_1) at (8*\xstep, -5.7*\ystep) {$\WriteM_5(z)$};
\node[event]                  (t5m_1) at (9*\xstep, -1*\ystep) {$\WriteB_5(z)$};
\draw[post, draw=gray]   (t5m_1) to (t5a_1);

\draw[post, \darkred, dashed] (t1a_1) to (t2m_1);
\draw[post, \darkred, dashed] (t5a_1) to[out=-165, in=0] (t2m_3);

\draw[post, color=black!20!blue, dotted] (t2m_1) to[out=0, in=120] (t4a_1);
\draw[post, color=black!20!blue, dotted] (t2a_1) to[out=-15, in=180] (t5a_1);

\end{tikzpicture}
\end{minipage}
\begin{minipage}[c]{0.33\textwidth}
\caption{
$\AlgoTSO$ maximality invariant.
The gray zone shows the events of some witness prefix $\Trace'$;
the lighter gray shows the events of the corresponding trace $\Trace$, constructed by the algorithm, which is maximal on thread events.
Yellow writes ($\WriteM_2$ and $\WriteM_4$) are those that are $\TSO$-executable in $\Trace$ but not in $\Trace'$.
Green writes ($\WriteM_3$) and red writes ($\WriteM_5$) are $\TSO$-executable and non $\TSO$-executable, respectively.
}
\label{fig:TSO_max_cmp}
\end{minipage}
\end{figure}

\begin{restatable}{lemma}{lemverifyingtsocorrectness}\label{lem:verifyingtso_correctness}
$(X, \Observation)$ is realizable under $\TSO$ iff $\AlgoTSO$ returns a trace $\Trace\neq \epsilon$.
\end{restatable}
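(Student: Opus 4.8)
The plan is to prove the two directions separately, following the soundness/completeness structure already sketched in the text. For the ``only if'' direction (completeness), I would assume $(X,\Observation)$ is realizable and argue that $\AlgoTSO$ returns a non-empty trace. The key is to establish the \emph{maximality invariant}: whenever $\AlgoTSO$ extracts a trace $\Trace$ from the worklist and runs the inner loop of \cref{algo:tso_local_extend_loop} to saturation, the resulting $\Trace$ satisfies $\LocalEvents{\Trace}\supseteq\LocalEvents{\Trace'}$ for \emph{every} witness prefix $\Trace'$ with $\WritesM{\Trace'}=\WritesM{\Trace}$. I would prove this by induction on the number of memory-write events, using the observation that the thread events (buffer-writes, fences, and reads) executable for a fixed lower set $Y$ of $(\WritesM{X},\TO)$ are canonically determined: a read becomes $\TSO$-executable exactly once its remote memory-write observation lies in $Y$ (or it reads locally), and all other thread events are executable as soon as their $\TO$-predecessors are present. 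Hence the saturated set of thread events depends only on $Y=\WritesM{\Trace}$, giving the invariant. From the invariant it follows, as spelled out in the ``Completeness'' paragraph via the two-case analysis on why a memory-write fails to be executable, that any memory-write executable in some witness prefix $\Trace'$ is executable in the algorithm's representative $\Trace$; so by induction on $|\WritesM{\Trace^*}|$ where $\Trace^*$ realizes $(X,\Observation)$, the algorithm reaches a trace with $\WritesM{\Trace}=\WritesM{\Trace^*}$ and $\LocalEvents{\Trace}\supseteq\LocalEvents{\Trace^*}=X\setminus\WritesM{\Trace^*}$, i.e.\ $\Events{\Trace}=X$, triggering the return at \cref{algo:tso_test_done}.

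For the ``if'' direction (soundness), I would assume $\AlgoTSO$ returns a trace $\Trace\neq\epsilon$ and show $\Trace$ is a witness, i.e.\ $\Trace$ is a linearization of $(X,\TO)$ with $\Events{\Trace}=X$ and $\RF{\Trace}=\Observation$. That $\Trace$ is a valid well-formed trace with $\Events{\Trace}=X$ is immediate: every extension step adds a $\TSO$-executable event, which by definition keeps $\Events{\Trace}\cup\{\Event\}$ a lower set of $(X,\TO)$, and the return fires only when $\Events{\Trace}=X$. The substantive part is $\RF{\Trace}=\Observation$: fix a read $\Read$ with $\Observation(\Read)=(\WriteB,\WriteM)$ and show $\RF{\Trace}(\Read)$ equals this two-phase write. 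I would split on whether $\Proc{\Read}=\Proc{\WriteB}$, mirroring the two-item ``Soundness'' argument: in the local case either $\WriteM$ has not yet been dequeued when $\Read$ executes (so $\Read$ reads from $\WriteB$ in its buffer, which by the wlog assumption is the latest local buffer-write on that variable before $\Read$), or $\WriteM$ has been dequeued and, by condition~\cref{item:tso_execw1}, held its variable $x=\Location{\WriteM}$ continuously until $\Read$ ran — so no intervening conflicting memory-write could have executed — while condition~\cref{item:tso_execw2} guarantees all local buffer-writes on $x$ preceding $\Read$ in $\TO$ were already flushed, ruling out a later local read-from; in the remote case, $\Read$'s own $\TSO$-executability forced $\WriteM\in\Trace$, and again held-ness plus~\cref{item:tso_execw2} pin down $\RF{\Trace}(\Read)=(\WriteB,\WriteM)$.

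The main obstacle is the completeness direction, specifically making the maximality-invariant induction fully rigorous: one must argue carefully that saturating the thread-event loop is \emph{confluent} — the set of thread events reached does not depend on the order in which executable thread events are chosen — and that this saturated set coincides across all witness prefixes sharing the same memory-write lower set. The delicate point is that executing a buffer-write $\WriteB$ may itself unlock a later read in the same thread (via $\TO$) but can never unlock a read's \emph{remote} observation, which is governed solely by $\WritesM{\Trace}$; and executing reads never unlocks anything that executing the available buffer-writes/fences would not already unlock. I would isolate this as a small confluence sublemma before assembling the induction, and I expect the rest of the argument, including the soundness direction and the $O(k\cdot n^{k+1})$ complexity bound (at most $n^k$ lower sets of $(\WritesM{X},\TO)$, each processed in $O(k\cdot n)$ amortized time with an appropriate representation of $\DoneSet$), to be routine.
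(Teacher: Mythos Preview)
Your proposal is correct and follows essentially the same approach as the paper: soundness by case analysis on local versus remote reads using the held-variable condition~\cref{item:tso_execw1} and the flushed-local-writes condition~\cref{item:tso_execw2}, and completeness by induction on the number of memory-writes of a prefix of a fixed witness $\Trace^*$, maintaining the invariant that the algorithm's representative trace has the same memory-write set and a superset of the thread events. The only notable presentational difference is that you propose isolating an explicit confluence sublemma for the thread-event saturation loop, whereas the paper dispatches this step with ``by a straightforward induction, all the events of $\Sequence$ not already present in $\Trace'$ become eventually $\TSO$-executable''; your version is cleaner but not substantively different, and the paper's handling of the $\DoneSet$ deduplication (the case where the test in \cref{algo:tso_if_new} fails) is exactly what your canonical-saturation claim buys you.
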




\subsection{Verifying PSO Executions}\label{subsec:verifyingpso}

In this section we show \cref{them:vpso}, i.e.,
we present an algorithm $\AlgoPSO$ that solves $\VPSOrm$ in $O(k\cdot n^{k+1}\cdot \min(n^{k\cdot (k-1)}, 2^{k\cdot \NumVariables}))$ time, while the bound becomes $O(k\cdot n^{k+1})$ when there are no fences.
Similarly to the case of $\TSO$, the algorithm relies on the notion of $\PSO$-executable events, defined below.
We first introduce some relevant notation that makes our presentation simpler.

\Paragraph{Spurious and pending writes.}
Consider a trace $\Trace$ with $\Events{\Trace}\subseteq X$.
A memory-write $\WriteM\in \WritesM{X}$ is called \emph{spurious} in $\Trace$ if the following conditions hold.
\begin{enumerate}[noitemsep,topsep=0pt,partopsep=0px]
\item There is no read  $\Read\in \Reads{X}\setminus\Trace$ with $\Observation(\Read)= (\_, \WriteM)$
\\(informally, no remaining read wants to read-from $\WriteM$).
\item If $\WriteM\in \Trace$, then for every read $\Read\in \Trace$
with $\Observation_{\Trace}(\Read)= (\_, \WriteM)$
we have $\Read <_{\Trace} \WriteM$
\\(informally, reads in $\Trace$ that read-from this write read it from the local buffer).
\end{enumerate}
Note that if $\WriteM$ is a spurious memory-write in $\Trace$ then $\WriteM$ is spurious in all extensions of $\Trace$.
We denote by  $\SpuriousWritesM{\Trace}$ the set of memory-writes of $\Trace$ that are spurious in $\Trace$.
A memory-write $\WriteM$ is \emph{pending} in $\Trace$ if $\WriteB \in \Trace$ and $\WriteM\not \in \Trace$, where $\WriteB$ is the corresponding buffer-write of $\WriteM$.
We denote by $\PendingWritesM{\Trace, \Process}$ the set of all pending memory-writes $\WriteM$ in $\Trace$ with $\Proc{\WriteM}=\Process$.
See \cref{fig:PSO_sp} for an intuitive illustration of spurious and pending memory-writes.

\begin{figure}[h]
\vspace{2mm}
\begin{subfigure}[b]{0.52\textwidth}
\centering
\begin{tikzpicture}[thick,
pre/.style={<-,shorten >= 2pt, shorten <=2pt, very thick},
post/.style={->,shorten >= 2pt, shorten <=2pt,  very thick},
seqtrace/.style={->, line width=2},
aux_seqtrace/.style={->, line width=1, draw=gray},
und/.style={very thick, draw=gray},
event/.style={rectangle, minimum height=3.5mm, draw=black, fill=white, minimum width=8mm, line width=1pt, inner sep=2, font={\footnotesize}},
aux_event/.style={event, draw=gray},
virt/.style={circle,draw=black!50,fill=black!20, opacity=0}]

\newcommand{\xs}{1.4}
\newcommand{\ys}{0.52}

\fill[gray!10]
(-0.5*\xs, 0.5*\ys) to
(-0.5*\xs, -4.5*\ys) to
(1.5*\xs, -4.5*\ys) to
(1.5*\xs, 0.5*\ys) to cycle;

\fill[gray!20]
(-0.5*\xs, 0.5*\ys) to
(-0.5*\xs, -3.5*\ys) to
(1.5*\xs, -3.5*\ys) to
(1.5*\xs, 0.5*\ys) to cycle;

\fill[gray!30]
(-0.5*\xs, 0.5*\ys) to
(-0.5*\xs, -2.5*\ys) to
(1.5*\xs, -2.5*\ys) to
(1.5*\xs, 0.5*\ys) to cycle;

\fill[gray!40]
(-0.5*\xs, 0.5*\ys) to
(-0.5*\xs, -1.5*\ys) to
(1.5*\xs, -1.5*\ys) to
(1.5*\xs, 0.5*\ys) to cycle;


\node[] (spurious_label)  at (2.2*\xs, 0*\ys) {\footnotesize$\SpuriousWritesM{\Trace}$};
\node[] (pending_label)   at (3.7*\xs, 0*\ys) {\footnotesize$\PendingWritesM{\Trace, \Process_1}$};

\foreach \j in {0.5, -0.5, -1.5, -2.5, -3.5, -4.5}
\draw[line width=0.5] (1.5*\xs, \j*\ys) to (4.5*\xs, \j*\ys);
\foreach \i in {1.5, 2.9, 4.5}
\draw[line width=0.5] (\i*\xs, 0.5*\ys) to (\i*\xs, -4.5*\ys);

\node[] at (2.2*\xs, -1*\ys) {\footnotesize$\emptyset$}; 
\node[] at (2.2*\xs, -2*\ys) {\footnotesize$\emptyset$};
\node[] at (2.2*\xs, -3*\ys) {\footnotesize$\WriteM_1(x)$};
\node[] at (2.2*\xs, -4*\ys) {\footnotesize$\WriteM_1(x)$};

\node[] at (3.7*\xs, -1*\ys) {\footnotesize$\emptyset$}; 
\node[] at (3.7*\xs, -2*\ys) {\footnotesize$\WriteM_1(x)$};
\node[] at (3.7*\xs, -3*\ys) {\footnotesize$\WriteM_1(x)$};
\node[] at (3.7*\xs, -4*\ys) {\footnotesize$\emptyset$};

\node[] (t1m_0)   at (0*\xs, 0*\ys) {\small$\LocalTrace_1$};
\node[] (t1m_end) at (0*\xs, -5.2*\ys) {};

\node[] (t1a_0)   at (1*\xs, 0*\ys) {\small$\LocalTrace_1'(x)$};
\node[] (t1a_end) at (1*\xs, -5.2*\ys) {};

\draw[seqtrace] (t1m_0) to (t1m_end);
\draw[aux_seqtrace] (t1a_0) to (t1a_end);

\node[event] (t1m_1) at (0*\xs, -2*\ys) {$\WriteB_1(x)$};
\node[event] (t1m_2) at (0*\xs, -3*\ys) {$\Read_1(x)$};
\node[aux_event] (t1a_1) at (1*\xs, -4*\ys) {$\WriteM_1(x)$};

\draw[post, draw=gray] (t1m_1) to (t1a_1);
\draw[post, \darkred, dashed] (t1m_1.east) to[out=0, in=0, distance=0.5cm] (t1m_2.east);

\end{tikzpicture}
\vspace{-2mm}
\caption{
Linearization where $\WriteM_1$ is spurious.
The table shows the spurious and pending writes after each step.
}
\label{subfig:PSO_sp1}
\end{subfigure}
\qquad
\begin{subfigure}[b]{0.4\textwidth}
\centering
\begin{tikzpicture}[thick,
pre/.style={<-,shorten >= 2pt, shorten <=2pt, very thick},
post/.style={->,shorten >= 2pt, shorten <=2pt,  very thick},
seqtrace/.style={->, line width=2},
aux_seqtrace/.style={->, line width=1, draw=gray},
und/.style={very thick, draw=gray},
event/.style={rectangle, minimum height=3.5mm, draw=black, fill=white, minimum width=8mm, line width=1pt, inner sep=2, font={\footnotesize}},
aux_event/.style={event, draw=gray},
virt/.style={circle,draw=black!50,fill=black!20, opacity=0},
bad/.style={preaction={fill, white}, pattern color=red!40, pattern=north east lines},
good/.style={preaction={fill, white}, pattern color=green!60, pattern=north west lines},
]

\newcommand{\xs}{1.6}
\newcommand{\ys}{0.52}

\node[] (t1m_0)   at (0*\xs, 0*\ys) {\small$\LocalTrace_1$};
\node[] (t1m_end) at (0*\xs, -5.2*\ys) {};

\node[] (t1a_0)   at (1*\xs, 0*\ys) {\small$\LocalTrace_1'(x)$};
\node[] (t1a_end) at (1*\xs, -5.2*\ys) {};

\draw[seqtrace] (t1m_0) to (t1m_end);
\draw[aux_seqtrace] (t1a_0) to (t1a_end);

\node[event] (t1m_1) at (0*\xs, -2*\ys) {$\WriteB_1(x)$};
\node[event] (t1m_2) at (0*\xs, -4*\ys) {$\Read_1(x)$};
\node[aux_event] (t1a_1) at (1*\xs, -3*\ys) {$\WriteM_1(x)$};

\draw[post, draw=gray] (t1m_1) to (t1a_1);
\draw[post, \darkred, dashed] (t1a_1) to (t1m_2);

\end{tikzpicture}
\vspace{-2mm}
\caption{Linearization where $\WriteM_1$ is \emph{not} spurious;
here $\Observation_{\Trace}(\Read_1)= (\_, \WriteM_1)$ and $\WriteM_1 <_{\Trace} \Read_1$.}
\label{subfig:PSO_sp2}
\end{subfigure}

\caption{Illustration of spurious and pending writes.}
\label{fig:PSO_sp}
\end{figure}

\Paragraph{$\PSO$-executable events.}
Similarly to the case of $\VTSOrm$, we define the notion of $\PSO$-executable events (executable for short).
An event $\Event\in X\setminus \Events{\Trace}$ is \emph{$\PSO$-executable} in $\Trace$ if the following
conditions hold.
\begin{enumerate}[noitemsep,topsep=0pt,partopsep=0px]
\item \emph{If $\Event$ is a buffer-write or a memory-write}, then the same conditions apply as for $\TSO$-executable.
\item \emph{If $\Event$ is a fence $\Fence$}, then every pending memory-write 
from $\Proc{\Fence}$ is $\PSO$-executable in $\Trace$,
\\and these memory-writes together with $\Fence$ and $\Events{\Trace}$ form a lower set of $(X,\TO)$.
\item \emph{If $\Event$ is a read $\Read$}, let $\Observation(\Read)=(\WriteB, \WriteM)$.
We have $\WriteB\in \Trace$, and the following conditions.
\begin{enumerate}
\item if $\Proc{\Read} = \Proc{\WriteB}$, then $\Events{\Trace} \cup \{\Read\}$ is a lower set of $(X,\TO)$.
\item if $\Proc{\Read} \neq \Proc{\WriteB}$, then $\Events{\Trace} \cup \{\WriteM, \Read\}$ is a lower set of $(X,\TO)$
\\and further either $\WriteM\in \Trace$ or $\WriteM$ is $\PSO$-executable in $\Trace$.
\end{enumerate}
\end{enumerate}

\cref{fig:PSO_execs} illustrates several examples of $\PSO$-(un)executable events.
Similarly to the case of $\TSO$, the $\PSO$-executable conditions
ensure that we do not execute events creating an invalid witness prefix.
The executability conditions for $\PSO$ are different (e.g.,
there are extra conditions for a fence), since our approach for
$\VPSOrm$ fundamentally differs from the approach for $\VTSOrm$.

\begin{figure*}[!h]
\centering
\begin{tikzpicture}[thick,
pre/.style={<-,shorten >= 2pt, shorten <=2pt, very thick},
post/.style={->,shorten >= 2pt, shorten <=2pt,  very thick},
seqtrace/.style={->, line width=2},
aux_seqtrace/.style={->, line width=1, draw=gray},
und/.style={very thick, draw=gray},
event/.style={rectangle, minimum height=3.5mm, draw=black, fill=white, minimum width=8mm,   line width=1pt, inner sep=2, font={\footnotesize}},
aux_event/.style={event, draw=gray},
virt/.style={circle,draw=black!50,fill=black!20, opacity=0},
bad/.style={preaction={fill, white}, pattern color=red!40, pattern=north east lines},
good/.style={preaction={fill, white}, pattern color=green!60, pattern=north west lines},
isLabel/.style={rectangle, fill opacity=0.5, fill=white, text opacity=1}
]

\newcommand{\xs}{1.7}
\newcommand{\ys}{0.63}
\newcommand{\yaux}{0.2}

\fill[gray!20]
(-0.3*\xs, 0.5*\ys) to
(-0.3*\xs, -2.5*\ys) to
(7.3*\xs, -2.5*\ys) to
(7.3*\xs, 0.5*\ys) to cycle;

\node[] (t1m_0)   at (2*\xs, 0*\ys) {\small$\LocalTrace_2$};
\node[] (t1m_end) at (2*\xs, -5*\ys) {};

\node[] (t1ax_0)   at (1*\xs, 0*\ys) {\small$\LocalTrace_2'(x)$};
\node[] (t1ax_end) at (1*\xs, -5*\ys) {};

\node[] (t1ay_0)   at (3*\xs, 0*\ys) {\small$\LocalTrace_2'(y)$};
\node[] (t1ay_end) at (3*\xs, -5*\ys) {};

\node[] (t2m_0)   at (0*\xs, 0*\ys) {\small$\LocalTrace_1$};
\node[] (t2m_end) at (0*\xs, -5*\ys) {};

\node[] (t3m_0)   at (4*\xs, 0*\ys) {\small$\LocalTrace_3$};
\node[] (t3m_end) at (4*\xs, -5*\ys) {};

\node[] (t4m_0)   at (5*\xs, 0*\ys) {\small$\LocalTrace_4$};
\node[] (t4m_end) at (5*\xs, -5*\ys) {};

\node[] (t4ay_0)   at (6*\xs, 0*\ys) {\small$\LocalTrace_4'(y)$};
\node[] (t4ay_end) at (6*\xs, -5*\ys) {};

\node[] (t5m_0)   at (7*\xs, 0*\ys) {\small$\LocalTrace_5$};
\node[] (t5m_end) at (7*\xs, -5*\ys) {};

\draw[seqtrace] (t1m_0) to (t1m_end);
\draw[seqtrace] (t2m_0) to (t2m_end);
\draw[seqtrace] (t3m_0) to (t3m_end);
\draw[seqtrace] (t4m_0) to (t4m_end);
\draw[seqtrace] (t5m_0) to (t5m_end);

\draw[aux_seqtrace] (t1ax_0) to (t1ax_end);
\draw[aux_seqtrace] (t1ay_0) to (t1ay_end);
\draw[aux_seqtrace] (t4ay_0) to (t4ay_end);

\node[event] (t1m_1) at (2*\xs, -1*\ys) {$\WriteB_1(y)$};
\node[event] (t1m_2) at (2*\xs, -2*\ys) {$\WriteB_2(x)$};
\node[event, good] (t1m_3) at (2*\xs, -4*\ys) {$\Fence_1$};
\node[aux_event, good] (t1ax_1) at (1*\xs, -3*\ys) {$\WriteM_2(x)$};
\node[aux_event] (t1ay_1) at (3*\xs, -1.5*\ys) {$\WriteM_1(y)$};

\node[event, good] (t2m_1) at (0*\xs, -3*\ys) {$\Read_1(x)$};
\node[event, good] (t3m_1) at (4*\xs, -3*\ys) {$\Read_2(y)$};

\node[event] (t4m_1) at (5*\xs, -1*\ys) {$\WriteB_3(y)$};
\node[event, bad] (t4m_2) at (5*\xs, -4*\ys) {$\Fence_2$};
\node[aux_event, bad] (t4ay_1) at (6*\xs, -3*\ys) {$\WriteM_3(y)$};

\node[event, bad] (t5m_1) at (7*\xs, -4*\ys) {$\Read_3(y)$};

\draw[post, gray] (t1m_1) to (t1ay_1);
\draw[post, gray] (t1m_2) to (t1ax_1);
\draw[post, gray] (t4m_1) to (t4ay_1);

\draw[post, \darkred, dashed] (t1ax_1) to (t2m_1);
\draw[post, \darkred, dashed] (t1ay_1) to (t3m_1);
\draw[post, \darkred, dashed] (t4ay_1) to (t5m_1);

\draw[post, gray, densely dotted] (t1ax_1) to (t1m_3);
\draw[post, gray, densely dotted] (t1ay_1) -- (t1m_3) node [midway, rotate=46.5, yshift=-9pt] {\small flushed by};
\draw[post, gray, densely dotted] (t4ay_1) to (t4m_2);

\draw[post, color=black!20!blue, dotted] (t3m_1) to (t4ay_1);

\end{tikzpicture}
\vspace{-1mm}
\caption{
$\PSO$-executability.
The green events are $\PSO$-executable; the red events are not.
The memory-write $\WriteM_2(x)$ is executable, and thus so are $\Read_1(x)$ and $\Fence_1$.
The memory-write $\WriteM_3(y)$ is not executable, as the variable $y$ is held by $\WriteM_1(y)$ until $\Read_2(y)$ is executed.
Consequently, $\Fence_2$ and $\Read_3(y)$ are not executable.
}
\label{fig:PSO_execs}
\end{figure*}

\Paragraph{Fence maps.}
We define a \emph{fence map} as a function
$\FenceMap_{\Trace}\colon \Threads\times \Threads \to [n]$ as follows.
First, $\FenceMap_{\Trace}(\Process,\Process)=0$ for all $\Process\in \Threads$.
In addition, if $\Process$ does not have a fence unexecuted in $\Trace$ (i.e., a fence
$\Fence\in (X_{\Process}\setminus \Events{\Trace})$),
then $\FenceMap_{\Trace}(\Process,\Process')=0$ for all $\Process'\in \Threads$.
Otherwise, consider the set of all reads $A_{\Process, \Process'}$ such that every $\Read\in A_{\Process, \Process'}$ with $\Observation(\Read)=(\WriteB, \WriteM)$ satisfies the following conditions.
\begin{enumerate}[noitemsep,topsep=0pt,partopsep=0px]
\item $\Proc{\Read}=\Process'$ and $\Read\not \in \Trace$.
\item $\Proc{\WriteB}\not\in \{\Process, \Process'\}$, and $\Location{\Read}$ is held by $\WriteM$ in $\Trace$, 
and there is a pending memory write $\WriteM'$ in $\Trace$ with $\Proc{\WriteM'}=\Process$ and $\Location{\WriteM'}=\Location{\Read}$. 
\end{enumerate}
If $A_{\Process, \Process'}=\emptyset$ then we let $\FenceMap_{\Trace}(\Process, \Process')=0$, otherwise $\FenceMap_{\Trace}(\Process,\Process')$ is the largest index of a read in $A_{\Process, \Process'}$.
Given two traces $\Trace_1, \Trace_2$, 
$\FenceMap_{\Trace_1}\leq \FenceMap_{\Trace_2}$ 
denotes that
$\FenceMap_{\Trace_1}(\Process,\Process')\leq \FenceMap_{\Trace_2}(\Process,\Process')$ for all $\Process, \Process'\in [k]$.

The intuition behind fence maps is as follows.
Given a trace $\Trace$, the index $\,\FenceMap_{\Trace}(\Process, \Process')\,$ points to the \emph{latest} (wrt $\TO$) read $\Read$ of $\Process'$
that must be executed in any extension of $\Trace$
before $\Process$ can execute its next fence.
This occurs because the following hold in $\Trace$.
\begin{enumerate}[noitemsep,topsep=0pt,partopsep=0px]
\item The variable $\Location{\Read}$ is held by the memory-write $\WriteM \in \Trace$ with $\Observation(\Read)= (\_, \WriteM)$.
\item Thread $\Process$ has executed some buffer-write $\WriteB'\in \Trace$ with $\Location{\WriteB'}=\Location{\Read}=\Location{\WriteM}$, but the corresponding memory-write $\WriteM'$ has not yet been executed in $\Trace$.
Hence, $\Process$ cannot flush its buffers in any extension of $\Trace$ that does not contain $\Read$
(as $\WriteM'$ will not become executable until $\Read$ gets executed).
\end{enumerate}

The following lemmas state two key monotonicity properties of fence maps. 

\vspace{-1mm}
\begin{restatable}{lemma}{lemfencemapmonotonicityone}\label{lem:fencemap_monotonicity1}
Consider two witness prefixes $\Trace_1,\Trace_2$ such that $\Trace_2=\Trace_1\Concat \WriteM$ for some memory-write $\WriteM$ executable in $\Trace_1$.
We have $\FenceMap_{\Trace_1}\leq \FenceMap_{\Trace_2}$.
Moreover, if $\WriteM$ is a spurious memory-write in $\Trace_1$, then $\FenceMap_{\Trace_1}=\FenceMap_{\Trace_2}$.
\end{restatable}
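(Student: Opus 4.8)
The plan is to prove the inequality entrywise. Fix threads $\Process,\Process'\in[k]$; the case $\Process=\Process'$ is trivial since both entries are $0$. Since $\Trace_2=\Trace_1\Concat\WriteM$ extends $\Trace_1$ only by the memory-write $\WriteM$, the two traces have the same executed \emph{thread} events, so $\Process$ has an unexecuted fence in $\Trace_1$ iff it does in $\Trace_2$; if it has none, both fence-map entries are $0$ and we are done. Otherwise, by definition $\FenceMap_{\Trace_i}(\Process,\Process')$ is the largest index of a read in the set $A_{\Process,\Process'}$ computed with respect to $\Trace_i$ (and $0$ if that set is empty); write $A^1$ and $A^2$ for these two sets. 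It therefore suffices to show $A^1\subseteq A^2$ in general, and $A^2\subseteq A^1$ when $\WriteM$ is spurious in $\Trace_1$.

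\textbf{Monotonicity ($A^1\subseteq A^2$).}
Take $\Read\in A^1$ and let $\Observation(\Read)=(\WriteB_\Read,\WriteM_\Read)$. The key observation is that, because $\Read\in A^1$, the variable $\Location{\Read}=\Location{\WriteM_\Read}$ is held in $\Trace_1$; since $\WriteM$ is $\TSO$-executable in $\Trace_1$, \cref{item:tso_execw1} forces $\Location{\WriteM}\neq\Location{\Read}$. Consequently: $\WriteM_\Read$ is still the last memory-write on its variable in $\Trace_2$ (as $\WriteM$ writes a different variable), and $\Read$ is a read, so $\Read\neq\WriteM$ and $\Read\notin\Events{\Trace_2}$, whence $\Read$ still witnesses the second clause of ``holds''; thus $\WriteM_\Read$ holds $\Location{\Read}$ in $\Trace_2$. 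Moreover any pending memory-write $\WriteM'$ of $\Process$ on $\Location{\Read}$ witnessing condition~2 for $\Trace_1$ satisfies $\WriteM'\neq\WriteM$ (different variable), and its buffer-write $\WriteB'$ is not $\WriteM$, so $\WriteM'$ is still pending in $\Trace_2$. The remaining requirements ($\Proc{\Read}=\Process'$, $\Read\notin\Events{\Trace}$, $\Proc{\WriteB_\Read}\notin\{\Process,\Process'\}$) are untouched by appending $\WriteM$. Hence $\Read\in A^2$, and $\FenceMap_{\Trace_1}\leq\FenceMap_{\Trace_2}$.

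\textbf{Equality for spurious $\WriteM$ ($A^2\subseteq A^1$).}
Assume $\WriteM$ is spurious in $\Trace_1$ and take $\Read\in A^2$, $\Observation(\Read)=(\WriteB_\Read,\WriteM_\Read)$. Since $\Read\notin\Events{\Trace_2}$ and $\Events{\Trace_1}\subseteq\Events{\Trace_2}$, we get $\Read\notin\Events{\Trace_1}$; if $\WriteM_\Read=\WriteM$ this would contradict the first spuriousness condition (no unexecuted read of $X$ reads-from $\WriteM$ in $\Trace_1$), so $\WriteM_\Read\neq\WriteM$. As ``holds'' requires $\WriteM_\Read\in\WritesM{\Trace_2}$ and $\Trace_2=\Trace_1\Concat\WriteM$, it follows that $\WriteM_\Read\in\Events{\Trace_1}$; being the last memory-write on its variable in $\Trace_2$, it is also last in the prefix $\Trace_1$, and $\Read$ still witnesses the second ``holds'' clause in $\Trace_1$, so $\WriteM_\Read$ holds $\Location{\Read}$ in $\Trace_1$. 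The pending witness $\WriteM'$ has $\WriteM'\notin\Events{\Trace_2}$ (hence $\notin\Events{\Trace_1}$) and $\WriteB'\in\Events{\Trace_2}$ with $\WriteB'\neq\WriteM$, hence $\WriteB'\in\Events{\Trace_1}$, so $\WriteM'$ is pending in $\Trace_1$. The other conditions transfer as before, so $\Read\in A^1$. Combining the two inclusions yields $A^1=A^2$, hence $\FenceMap_{\Trace_1}=\FenceMap_{\Trace_2}$.

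\textbf{Main obstacle.}
The only delicate point is unfolding the definitions of ``held'', ``pending'' and ``spurious'' and checking that appending a memory-write on an \emph{unheld} variable (guaranteed by \cref{item:tso_execw1}) leaves intact every predicate of the form ``$\WriteM_\Read$ is the last write on its variable'' and ``$\Read$ still wants to read-from $\WriteM_\Read$'', and that spuriousness is exactly what rules out the single bad case $\WriteM_\Read=\WriteM$ in the reverse inclusion; the rest is routine prefix bookkeeping.
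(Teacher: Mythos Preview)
Your proof is correct and follows the same approach as the paper: reduce the comparison of $\FenceMap_{\Trace_1}$ and $\FenceMap_{\Trace_2}$ to the inclusion $A^1\subseteq A^2$ (resp.\ $A^1=A^2$ for spurious $\WriteM$), using that executability of $\WriteM$ forces $\Location{\WriteM}$ to be unheld in $\Trace_1$. The paper's proof is a two-line sketch that simply asserts these inclusions ``follow directly from the definition of fence maps''; you have unpacked the bookkeeping in full, which is fine. One cosmetic remark: since this lemma lives in the $\PSO$ section, it would read more cleanly to refer to the $\PSO$-executability condition for memory-writes (which the paper declares identical to the $\TSO$ one) rather than citing \cref{item:tso_execw1} directly.
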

\vspace{-2mm}

\begin{restatable}{lemma}{lemfencemapmonotonicitytwo}\label{lem:fencemap_monotonicity2}
Consider two witness prefixes $\Trace_1, \Trace_2$ such that
(i)~$\LocalEvents{\Trace_1}=\LocalEvents{\Trace_2}$,
(ii)~$\FenceMap_{\Trace_1}\leq \FenceMap_{\Trace_2}$, and
(iii)~$\WritesM{\Trace_1}\setminus \SpuriousWritesM{\Trace_1}\subseteq \WritesM{\Trace_2}$.
Let $\Event\in \LocalEvents{X}$ be a thread event that is
executable in $\Trace_i$ for each $i\in[2]$,
and let $\Trace'_i=\Trace_i\Concat\Event$, for each $i\in[2]$.
Then $\FenceMap_{\Trace'_1}\leq \FenceMap_{\Trace'_2}$.
\end{restatable}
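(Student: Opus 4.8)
The plan is to fix threads $\Process,\Process'$; the diagonal case is trivial since both sides are $0$, so assume $\Process\neq\Process'$ and set $j=\FenceMap_{\Trace'_1}(\Process,\Process')$, assuming $j>0$ (otherwise there is nothing to show). Then $\Process$ has a fence unexecuted in $\Trace'_1$, and $j$ is the index of some read $\Read\in A_{\Process,\Process'}$ (computed for $\Trace'_1$) with $\Observation(\Read)=(\WriteB,\WriteM)$: thus $\Proc{\Read}=\Process'$, $\Read\notin\Events{\Trace'_1}$, $\Proc{\WriteB}\notin\{\Process,\Process'\}$, $\WriteM$ holds $\Location{\Read}$ in $\Trace'_1$, and some pending memory-write $\WriteM'\in\PendingWritesM{\Trace'_1,\Process}$ has $\Location{\WriteM'}=\Location{\Read}$. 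The goal is to exhibit a read in $A_{\Process,\Process'}$ for $\Trace'_2$ of index at least $j$. First I would collect the facts that hold independently of the type of $\Event$: since $\Event$ is a thread event, $\WritesM{\Trace'_i}=\WritesM{\Trace_i}$ (memory-writes in the same order), and $\LocalEvents{\Trace'_1}=\LocalEvents{\Trace_1}\cup\{\Event\}=\LocalEvents{\Trace_2}\cup\{\Event\}=\LocalEvents{\Trace'_2}$ by hypothesis (i). Hence $\Process$ still has a fence unexecuted in $\Trace'_2$, $\Read\notin\Events{\Trace'_2}$ (both are thread events missing from the relevant local-event sets), and $\Proc{\WriteB}\notin\{\Process,\Process'\}$ is unchanged.

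The second step is to show $\WriteM$ still holds $\Location{\WriteM}$ in $\Trace'_2$. The remaining reader $\Read''\notin\Events{\Trace'_1}$ of $\WriteM$ is a thread event, so $\Read''\notin\Events{\Trace'_2}$; this same $\Read''$ certifies $\WriteM\notin\SpuriousWritesM{\Trace_1}$, so from $\WriteM\in\WritesM{\Trace'_1}=\WritesM{\Trace_1}$ and hypothesis (iii) we get $\WriteM\in\WritesM{\Trace_2}=\WritesM{\Trace'_2}$. For ``$\WriteM$ is still the last memory-write on $\Location{\WriteM}$ in $\Trace'_2$'' I would argue by contradiction: otherwise let $\WriteM_2$ be the first memory-write on $\Location{\WriteM}$ occurring after $\WriteM$ in $\Trace_2$; in the prefix of $\Trace_2$ just before $\WriteM_2$, the write $\WriteM$ is the last memory-write on its variable and $\Read''$ is still unexecuted, so $\WriteM$ holds $\Location{\WriteM}$ there, contradicting that a witness prefix never executes a memory-write on a held variable (\cref{item:tso_execw1}). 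Together with the first step, this already gives $\Read\in A_{\Process,\Process'}$ for $\Trace'_2$ except possibly for the pending-write requirement on $\WriteM'$.

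The final step handles the pending write by splitting on whether $\Event=\WriteB'$ (the buffer-write of $\WriteM'$). If $\Event\neq\WriteB'$ then $\WriteB'\in\Events{\Trace_1}\subseteq\Events{\Trace_2}$ and $\WriteM'\notin\Events{\Trace_1}$; if moreover $\WriteM'\notin\WritesM{\Trace_2}$, then $\WriteM'$ is pending in $\Trace'_2$ and $\Read$ itself witnesses $\FenceMap_{\Trace'_2}(\Process,\Process')\geq j$. If $\Event=\WriteB'$, then $\WriteB'\in\Events{\Trace'_2}$ and $\WriteM'\notin\Events{\Trace_2}$ (else well-formedness of $\Trace_2$ forces $\WriteB'=\Event$ into $\Trace_2$), so again $\WriteM'$ is pending in $\Trace'_2$ and $\Read$ witnesses the bound. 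The remaining --- and hardest --- case is $\Event\neq\WriteB'$ together with $\WriteM'\in\WritesM{\Trace_2}$, where $\Read$ may cease to witness anything for $\Trace'_2$. Here I would check that all conditions for $\Read\in A_{\Process,\Process'}$ already hold for $\Trace_1$ (in particular $\WriteM'$ is pending in $\Trace_1$), so $\FenceMap_{\Trace_1}(\Process,\Process')\geq j$; hypothesis (ii) then yields $\FenceMap_{\Trace_2}(\Process,\Process')\geq j$, realized by some $\bad{\Read}\in A_{\Process,\Process'}$ for $\Trace_2$ of index $\geq j$. I would transfer $\bad{\Read}$ to $\Trace'_2=\Trace_2\Concat\Event$: its held-variable and pending-write conditions carry over using $\WritesM{\Trace_2}=\WritesM{\Trace'_2}$, with $\bad{\Read}$ itself serving as the remaining reader of its read-from write; the one thing to exclude is $\Event=\bad{\Read}$, which I rule out because a read executable in $\Trace_2$ drags all its $\TO$-predecessors into $\Events{\Trace_2}$, and since $\Read,\bad{\Read}$ both lie in $\Process'$ with $j$ the index of $\Read$ and the index of $\bad{\Read}$ at least $j$, either $\Read<_\TO\bad{\Read}$ (so $\Read\in\Events{\Trace_2}$, impossible as $\Read\notin\Events{\Trace'_2}\supseteq\Events{\Trace_2}$) or $\Read=\bad{\Read}=\Event\in\Events{\Trace'_1}$ (impossible since $\Read\notin\Events{\Trace'_1}$). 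Thus $\bad{\Read}\in A_{\Process,\Process'}$ for $\Trace'_2$, giving $\FenceMap_{\Trace'_2}(\Process,\Process')\geq j$. I expect this last case to be the main obstacle: it is the only place where hypothesis (ii) is actually used, and it forces the argument to move to a possibly different witness read and argue that appending $\Event$ cannot destroy it.
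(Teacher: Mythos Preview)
Your argument is correct, but it takes a genuinely different route from the paper's.

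The paper proves the lemma by a case split on the \emph{type} of $\Event$. For a fence, fence maps are (essentially) unchanged, so hypothesis~(ii) carries over directly. For a read, appending $\Event$ can only \emph{decrease} fence-map entries, and the paper argues that whenever an entry of $\FenceMap_{\Trace'_2}$ drops strictly below that of $\FenceMap_{\Trace_2}$, the corresponding entry of $\FenceMap_{\Trace'_1}$ drops all the way to~$0$ as well (because the executable read is the $\TO$-minimal unexecuted event of its thread). For a buffer-write, appending $\Event$ can only \emph{increase} fence-map entries; the paper then uses hypothesis~(iii) to show that the non-spurious memory-write holding the variable in $\Trace_1$ also holds it in $\Trace_2$, so any strict increase in $\FenceMap_{\Trace'_1}$ is matched in $\FenceMap_{\Trace'_2}$. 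Thus each case exploits a one-sided monotonicity, and hypotheses~(ii) and~(iii) are each used in exactly one case.

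Your approach is entry-wise and type-agnostic: you fix $(\Process,\Process')$, unpack the witness $\Read$ for $\FenceMap_{\Trace'_1}(\Process,\Process')=j$, and try to transport it (or a surrogate $\bad{\Read}$ obtained via hypothesis~(ii)) to $\Trace'_2$. This is more elementary in that it avoids the event-type taxonomy, and it makes explicit the one nontrivial obstruction (your ``hardest case'', where the pending write $\WriteM'$ has already been flushed in $\Trace_2$). The price is that your single hardest case effectively re-derives, in compressed form, what the paper gets for free from its buffer-write and read analyses. One small slip: you write ``$\WriteB'\in\Events{\Trace_1}\subseteq\Events{\Trace_2}$'', but $\Events{\Trace_1}\subseteq\Events{\Trace_2}$ is not a hypothesis (only $\LocalEvents{\Trace_1}=\LocalEvents{\Trace_2}$ is); since $\WriteB'$ is a thread event, the conclusion $\WriteB'\in\Events{\Trace_2}$ is still correct via $\LocalEvents{\Trace_1}=\LocalEvents{\Trace_2}$. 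Also, your appeal to \cref{item:tso_execw1} is really an appeal to a property of \emph{witness prefixes} (a memory-write on a held variable would break the reads-from constraint in any extension), which you should state as such rather than cite the $\TSO$-executability condition.
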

\vspace{-1mm}
%
%
%

Note that there exist in total at most $n^{k \cdot k}$ different fence maps.
Further, the following lemma gives a bound on the number of different
fence maps among witness prefixes that contain the same thread events.

\begin{restatable}{lemma}{lemfencemap}\label{lem:fencemap_size}
Let $\NumVariables$ be the number of variables.
There exist at most $2^{k\cdot \NumVariables}$ distinct witness prefixes $\Trace_1, \Trace_2$ such that
$\LocalEvents{\Trace_1}=\LocalEvents{\Trace_2}$ and $\FenceMap_{\Trace_1}\neq \FenceMap_{\Trace_2}$.
\end{restatable}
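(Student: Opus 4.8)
The plan is to fix the set $L=\LocalEvents{\Trace}$ of thread events and show that $\FenceMap_{\Trace}$ is determined by $L$ together with a Boolean ``signature'' of $\Trace$ ranging over at most $2^{k\cdot\NumVariables}$ values; since $L$ is then fixed, this bounds the number of distinct fence maps, hence of witness prefixes up to their fence map, by $2^{k\cdot\NumVariables}$. I would begin with a normal form: since $\Events{\Trace}$ is a lower set of $(X,\TO)$ (a prefix of a well-formed trace), for every thread $\Process$ and variable $x$ the buffer-writes of $\Process$ on $x$ in $\Trace$ are exactly those in $L$; writing them in $\TO$-order as $\WriteB^{(1)},\dots,\WriteB^{(m)}$ with $m=m_{\Process,x}$ determined by $L$ and their memory-write partners as $\WriteM^{(1)},\dots,\WriteM^{(m)}$, the memory-writes of $\Process$ on $x$ in $\Trace$ form a prefix $\WriteM^{(1)},\dots,\WriteM^{(j)}$ of this list, and I write $j=j_{\Process,x}(\Trace)\le m$. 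The technical core is the monotonicity claim: if $\Trace$ is a witness prefix, $\Read\in\Reads{X}\setminus\Events{\Trace}$ has $\Observation(\Read)=(\WriteB,\WriteM)$ with $\Proc{\WriteB}\neq\Proc{\Read}$, and $\WriteM\in\Events{\Trace}$, then $\WriteM$ is the last memory-write on $\Location{\WriteM}$ in $\Trace$. I would prove it via a realizing extension $\Trace^{*}$ of $\Trace$: since $\Proc{\WriteB}\neq\Proc{\Read}$, $\RF{\Trace^{*}}(\Read)=(\WriteB,\WriteM)$ cannot come from the $\Upd\neq\emptyset$ branch of the definition of $\RF{\cdot}$, so $\WriteM$ is the latest memory-write conflicting with $\Read$ before $\Read$ in $\Trace^{*}$; as every event of $\Trace$ precedes $\Read$ in $\Trace^{*}$, $\WriteM$ is also the last memory-write on $\Location{\WriteM}$ in $\Trace$. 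Setting $i_{0}(\Process,x)$ to be the least $i\le m_{\Process,x}$ with $\Observation(\Read)=(\_,\WriteM^{(i)})$ for some $\Read\in\Reads{X}\setminus L$ having $\Proc{\Read}\neq\Process$ (undefined if there is none), the claim forces $j_{\Process,x}(\Trace)\le i_{0}(\Process,x)$ for every witness prefix with $\LocalEvents{\Trace}=L$, since a memory-write strictly past $\WriteM^{(i_{0})}$ in $\Trace$ would contradict the claim applied to the remote read witnessing $i_{0}$.

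Next I would unfold $\FenceMap_{\Trace}(\Process,\Process')$ and check that each ingredient depends only on $L$ and, for each pair $(\Process,x)$, on the bit $b_{\Process,x}(\Trace)=[\,j_{\Process,x}(\Trace)<m_{\Process,x}\,]$ (``$\Process$ has a pending memory-write on $x$'') and --- when $i_{0}(\Process,x)$ is defined --- the bit $\beta_{\Process,x}(\Trace)=[\,j_{\Process,x}(\Trace)=i_{0}(\Process,x)\,]$: whether $\Process$ has an unexecuted fence is read off $L$; for $\Read\in A_{\Process,\Process'}$ with $\Observation(\Read)=(\WriteB,\WriteM)$, the clause ``$\Location{\Read}$ is held by $\WriteM$ in $\Trace$'' is equivalent --- using $\Read\notin L$, $\Proc{\WriteB}\neq\Proc{\Read}$, the monotonicity claim and the normal form --- to ``$\WriteM$ is the $i_{0}(\Proc{\WriteB},\Location{\Read})$-th memory-write of $\Proc{\WriteB}$ on $\Location{\Read}$ (a condition fixed by $L,\Observation$) and $\beta_{\Proc{\WriteB},\Location{\Read}}(\Trace)=1$'', while the pending-write clause of condition~2 is exactly $b_{\Process,\Location{\Read}}(\Trace)=1$. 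Hence $A_{\Process,\Process'}(\Trace)$, and therefore $\FenceMap_{\Trace}$, is a function of $L$ and of this family of bits. To finish, I would observe that for each of the $k\cdot\NumVariables$ pairs $(\Process,x)$ the constraint $j_{\Process,x}(\Trace)\le i_{0}(\Process,x)$ restricts the relevant bit-data to at most two values: if $i_{0}(\Process,x)$ is undefined (or $m_{\Process,x}=0$) only $b_{\Process,x}\in\{0,1\}$ matters; if $i_{0}(\Process,x)<m_{\Process,x}$ then $b_{\Process,x}=1$ is forced and only $\beta_{\Process,x}$ is free; and if $i_{0}(\Process,x)=m_{\Process,x}$ then $\beta_{\Process,x}=\neg b_{\Process,x}$, a single free bit. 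Multiplying over the $k\cdot\NumVariables$ pairs gives at most $2^{k\cdot\NumVariables}$ signatures, hence at most $2^{k\cdot\NumVariables}$ distinct fence maps.

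The delicate step will be the reduction above: one must see that the ``held'' clause in the fence-map definition, which on its face involves a particular memory-write and the global order of $\Trace$, collapses to the single bit $\beta_{\Process,x}(\Trace)$, and this is precisely where the monotonicity claim --- together with the fact that the reads quantified in $A_{\Process,\Process'}$ all lie outside $L$, hence outside every $\Trace$ with $\LocalEvents{\Trace}=L$ --- does the work. The monotonicity claim itself, although short, is the other spot where real work happens; a minor point to discharge is that within a thread the buffer-writes on a fixed variable are totally $\TO$-ordered, which holds because the existence of any witness prefix certifies that $(X,\Observation)$ arises from an execution.
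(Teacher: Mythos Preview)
Your proposal is correct and follows the same high-level strategy as the paper: fix the thread-event set $L$, define a $2^{k\cdot\NumVariables}$-valued invariant of $\Trace$, and show that together with $L$ it determines $\FenceMap_{\Trace}$. The concrete realization differs, though. The paper introduces a single bit per pair, the \emph{non-empty-buffer map} $\NEBMap_{\Trace}(\Process,v)=\True$ iff the buffer of $\Process$ on $v$ is non-empty \emph{and} $\Process$ does not hold $v$; it then argues by contradiction that $\LocalEvents{\Trace_1}=\LocalEvents{\Trace_2}$ and $\NEBMap_{\Trace_1}=\NEBMap_{\Trace_2}$ force $\FenceMap_{\Trace_1}=\FenceMap_{\Trace_2}$. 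You instead track two bits $b_{\Process,x}$ (pending write) and $\beta_{\Process,x}$ ($j=i_0$), unfold the definition of $A_{\Process,\Process'}$ constructively, and then show that the witness-prefix constraint $j\le i_0$ collapses the two bits to one per pair. Your monotonicity claim --- that if a remote unexecuted read's observed memory-write is already in a witness prefix then it must be the last memory-write on its variable --- is precisely the fact the paper uses tacitly (in the step ``$\WriteM\notin\Trace_1$, as otherwise $\FenceMap_{\Trace_1}(\Process_1,\Process_2)\ge m$'' and in the final contradiction); isolating it as a lemma is a clean addition. The paper's signature is slicker to state, while yours makes the dependence of each clause of $A_{\Process,\Process'}$ on the signature fully explicit.
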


\Paragraph{Algorithm $\AlgoPSO$.}
We are now ready to describe our algorithm $\AlgoPSO$ for the problem $\VPSOrm$.
In high level, the algorithm enumerates all lower sets of
$(\LocalEvents{X}, \TO)$, i.e., the lower sets of the thread events.
The crux of the algorithm is to guarantee that for every witness-prefix $\Trace'$, the algorithm constructs a trace $\Trace$ such that
(i)~$\LocalEvents{\Trace}= \LocalEvents{\Trace'}$,
(ii)~$\WritesM{\Trace}\setminus \SpuriousWritesM{\Trace}\subseteq \WritesM{\Trace'}$, and
(iii)~$\FenceMap_{\Trace}\leq \FenceMap_{\Trace'}$.
To achieve this,
for a given lower set $Y$ of $(\LocalEvents{X}, \TO)$,
the algorithm examines at most as many traces $\Trace$ with $\LocalEvents{\Trace}=Y$
as the number of different fence maps of witness prefixes with the same set of thread events.
Hence, the algorithm examines significantly fewer traces than the $n^{k\cdot (\NumVariables+1)}$ lower sets of $(X,\TO)$.


\cref{algo:verifypso} presents a formal description of $\AlgoPSO$.
The algorithm maintains a worklist $\Worklist$ of prefixes, and a set
$\DoneSet$ of explored pairs ``(thread events, fence map)''.
Consider an iteration of the main loop in \cref{algo:pso_main_while}.
First in the loop of \cref{algo:pso_while_spurious} all spurious executable
memory-writes are executed. Then \cref{algo:pso_test_done} checks
whether the witness is complete. In case it is not complete, the loop
in \cref{algo:pso_ifextend} enumerates the possibilities to extend
with a thread event.
Crucially, the condition in \cref{algo:pso_if_new} ensures that there
are no duplicates with the same pair ``(thread events, fence map)''.

\vspace{1mm}
\begin{algorithm}
\small
\SetInd{0.4em}{0.4em}
\DontPrintSemicolon
\caption{$\AlgoPSO$}\label{algo:verifypso}
\KwIn{
An event set $X$ and a reads-from function $\Observation\colon \Reads{X}\to \Writes{X}$
}
\KwOut{
A witness $\Trace$ that realizes $(X, \Observation)$ if $(X, \Observation)$ is realizable under $\PSO$, else $\Trace=\bot$
}
\BlankLine
$\Worklist\gets \{\epsilon\}$; $\DoneSet\gets \{\emptyset\}$\\
\While{$\Worklist\neq \emptyset$}{\label{algo:pso_main_while}
Extract a trace $\Trace$ from $\Worklist$\label{algo:pso_extract_worklist}\\
\While
{$\exists$ spurious $\WriteM$ $\PSO$-executable in $\Trace$}{\label{algo:pso_while_spurious}
$\Trace\gets \Trace\Concat \WriteM$\tcp*[f]{Flush spurious memory-write $\WriteM$}\\
}
\lIf(\tcp*[f]{Witness found}){$\Events{\Trace}=X$}{\label{algo:pso_test_done}
\Return{$\Trace$}
}
\ForEach{thread event $\Event$ $\PSO$-executable in $\Trace$}{\label{algo:pso_ifextend}
Let $\Trace_{\Event}\gets \Trace$\\
\uIf{$\Event$ is a read event with $\Observation(\Read)=(\WriteB,\WriteM)$}{\label{algo:pso_is_read}
\uIf
{$\Proc{\Read}\neq \Proc{\WriteB}$ and $\WriteM\not \in \Trace_{\Event}$}{\label{algo:pso_read_needs_its_mw}
$\Trace_{\Event}\gets \Trace_{\Event} \Concat\WriteM$\tcp*[f]{Execute the reads-from of $\Event$}\label{algo:pso_execute_observation}\\
}
}
\ElseIf{$\Event$ is a fence event}{\label{algo:pso_is_fence}
Let $\SequenceVar\gets$ any linearization of $(\PendingWritesM{\Trace, \Proc{\Event}}, \TO)$\label{algo:pso_linearize_pending_writes}\\
$\Trace_{\Event} \gets \Trace_{\Event} \Concat \SequenceVar$\tcp*[f]{Execute pending memory writes}\label{algo:pso_execute_pending_writes}\\
}
$\Trace_{\Event}\gets\Trace_{\Event} \Concat \Event$\tcp*[f]{Finally, execute $\Event$}\label{algo:pso_execute_event}\\
\uIf{$\not \exists\Trace'\in \DoneSet$ s.t. $\LocalEvents{\Trace_{\Event}}=\LocalEvents{\Trace'}$ and $\FenceMap_{\Trace_{\Event}}=\FenceMap_{\Trace'}$}{\label{algo:pso_if_new}
Insert $\Trace_{\Event}$ in $\Worklist$ and in $\DoneSet$\tcp*[f]{Continue from $\Trace_{\Event}$}\label{algo:pso_insert_worklist}\\
}
}
}
\Return{$\bot$}
\end{algorithm}
\vspace{1mm}


\Paragraph{Soundness.}
The soundness 
of $\AlgoPSO$ follows directly from the definition
of $\PSO$-executable events, 
and is similar to the case of $\AlgoTSO$.


\Paragraph{Completeness.} 
For each witness prefix $\Trace'$, algorithm $\AlgoPSO$ generates
a trace $\Trace$ with (i)~$\LocalEvents{\Trace}= \LocalEvents{\Trace'}$,
(ii)~$\WritesM{\Trace}\setminus\SpuriousWritesM{\Trace}\subseteq \WritesM{\Trace'}$, and
(iii)~$\FenceMap_{\Trace}\leq \FenceMap_{\Trace'}$. This fact directly
implies completeness, and it is achieved by the following key invariant.
Consider that the algorithm has constructed a trace $\Trace$, and is attempting to extend $\Trace$ with a thread event $\Event$.
Further, let $\Trace'$ be an arbitrary witness prefix with
(i)~$\LocalEvents{\Trace}= \LocalEvents{\Trace'}$,
(ii)~$\WritesM{\Trace}\setminus\SpuriousWritesM{\Trace}\subseteq \WritesM{\Trace'}$, and
(iii)~$\FenceMap_{\Trace}\leq \FenceMap_{\Trace'}$.
If $\Trace'$ can be extended so that the next thread event is $\Event$,
then $\Event$ is also executable in $\Trace$,  
and
(by \cref{lem:fencemap_monotonicity1} and \cref{lem:fencemap_monotonicity2})
the extension of $\Trace$ with $\Event$ maintains the invariant.
In \cref{fig:PSO_algo} we provide an intuitive illustration
of the completeness idea.
%
%
%
%

\begin{figure*}[!h]
\begin{tikzpicture}[thick,
pre/.style={<-,shorten >= 2pt, shorten <=2pt, very thick},
post/.style={->,shorten >= 2pt, shorten <=2pt,  very thick},
seqtrace/.style={->, line width=2},
aux_seqtrace/.style={->, line width=1, draw=gray},
und/.style={very thick, draw=gray},
event/.style={rectangle, minimum height=3.5mm, draw=black, fill=white, minimum width=8mm,   line width=1pt, inner sep=2, font={\footnotesize}},
aux_event/.style={event, draw=gray},
virt/.style={circle,draw=black!50,fill=black!20, opacity=0},
bad/.style={preaction={fill, white}, pattern color=red!40, pattern=north east lines},
good/.style={preaction={fill, white}, pattern color=green!60, pattern=north west lines},
neutral/.style={preaction={fill, white}, pattern color=yellow, pattern=crosshatch},
isLabel/.style={rectangle, fill opacity=0.75, fill=white, text opacity=1}
]

\pgfdeclarelayer{bg}
\pgfsetlayers{bg,main}

\newcommand{\xs}{1.7}
\newcommand{\ys}{0.57}
\newcommand{\yaux}{0.2}

\begin{pgfonlayer}{bg}
\fill[gray!10]
(-4.5*\xs, 0.5*\ys) to
(-4.5*\xs, -3.5*\ys) to
(3.4*\xs, -3.5*\ys) to
(3.4*\xs, 0.5*\ys) to cycle;

\fill[gray!20]
(-4.5*\xs, 0.5*\ys) to
(-4.5*\xs, -2.5*\ys) to
(3.4*\xs, -2.5*\ys) to
(3.4*\xs, 0.5*\ys) to cycle;
\end{pgfonlayer}

\node[] (t1m_0)   at (0*\xs, 0*\ys) {\small$\LocalTrace_3$};
\node[] (t1m_end) at (0*\xs, -8*\ys) {};

\node[] (t1ax_0)   at (1*\xs, 0*\ys) {\small$\LocalTrace_3'(x)$};
\node[] (t1ax_end) at (1*\xs, -8*\ys) {};

\node[] (t1ay_0)   at (2*\xs, 0*\ys) {\small$\LocalTrace_3'(y)$};
\node[] (t1ay_end) at (2*\xs, -8*\ys) {};

\node[] (t2m_0)   at (3*\xs, 0*\ys) {\small$\LocalTrace_4$};
\node[] (t2m_end) at (3*\xs, -8*\ys) {};

\node[] (t3m_0)   at (-4*\xs, 0*\ys) {\small$\LocalTrace_1$};
\node[] (t3m_end) at (-4*\xs, -8*\ys) {};

\node[] (t3ax_0)   at (-3*\xs, 0*\ys) {\small$\LocalTrace_1'(x)$};
\node[] (t3ax_end) at (-3*\xs, -8*\ys) {};

\node[] (t3ay_0)   at (-2*\xs, 0*\ys) {\small$\LocalTrace_1'(y)$};
\node[] (t3ay_end) at (-2*\xs, -8*\ys) {};

\node[] (t4m_0)   at (-1*\xs, 0*\ys) {\small$\LocalTrace_2$};
\node[] (t4m_end) at (-1*\xs, -8*\ys) {};

\begin{pgfonlayer}{bg}
\draw[seqtrace] (t1m_0) to (t1m_end);
\draw[seqtrace] (t2m_0) to (t2m_end);
\draw[seqtrace] (t3m_0) to (t3m_end);
\draw[seqtrace] (t4m_0) to (t4m_end);

\draw[aux_seqtrace] (t1ax_0) to (t1ax_end);
\draw[aux_seqtrace] (t1ay_0) to (t1ay_end);
\draw[aux_seqtrace] (t3ax_0) to (t3ax_end);
\draw[aux_seqtrace] (t3ay_0) to (t3ay_end);
\end{pgfonlayer}

\node[event] (t1m_2) at (0*\xs, -1*\ys) {$\WriteB_3(y)$};
\node[event] (t1m_3) at (0*\xs, -2*\ys) {$\WriteB_4(x)$};
\node[event, neutral] (t1m_4) at (0*\xs, -7*\ys) {$\Fence_1$};

\node[aux_event, neutral]
(t1ax_1) at (1*\xs, -5.75*\ys) {$\WriteM_4(x)$};
\node[aux_event, neutral]
(t1ay_2) at (2*\xs, -5*\ys) {$\WriteM_3(y)$};

\node[event, neutral] (t2m_1) at (3*\xs, -7*\ys) {$\Read_3(y)$};

\node[event] (t3m_2) at (-4*\xs, -1*\ys) {$\WriteB_1(y)$};
\node[event] (t3m_3) at (-4*\xs, -2*\ys) {$\WriteB_2(x)$};
\node[aux_event, good] (t3ax_1) at (-3*\xs, -3*\ys) {$\WriteM_2(x)$};
\node[aux_event, good] (t3ay_2) at (-2*\xs, -3*\ys) {$\WriteM_1(y)$};

\node[event, good]
          (t4m_3) at (-1*\xs, -4*\ys) {$\Read_1(y)$};
\node[event, bad]
          (t4m_4) at (-1*\xs, -5*\ys) {$\Read_2(x)$};

\begin{pgfonlayer}{bg}
\draw[post, gray] (t1m_2) to (t1ay_2);
\draw[post, gray] (t1m_3) to (t1ax_1);
\draw[post, gray] (t3m_2) to (t3ay_2);
\draw[post, gray] (t3m_3) to (t3ax_1);

\draw[post, \darkred, dashed] (t1ay_2) to (t2m_1);
\draw[post, \darkred, dashed] (t3ax_1) to (t4m_4);
\draw[post, \darkred, dashed] (t3ay_2) to (t4m_3);

\draw[post, gray, densely dotted] (t1ax_1) to (t1m_4);
\draw[post, gray, densely dotted] (t1ay_2) to[out=-90, in=0] (t1m_4);

\draw[post, color=black!20!blue, dotted] (t4m_3) to (t1ay_2);
\draw[post, color=black!20!blue, dotted] (t4m_4) to (t1ax_1);
\end{pgfonlayer}

\draw[post, color=black!20!blue, dash dot]
(t4m_4) -- (t1m_4) node [midway, isLabel, rotate=-35, xshift=-15pt, yshift=-20pt] {\small $\FenceMap_{\Trace'}(\Process_3, \Process_2)$};

\end{tikzpicture}
\vspace{-1mm}
\caption{
$\AlgoPSO$ completeness idea. Consider
the witness prefix $\Trace'$ (lighter gray) and the corresponding trace
$\Trace$ constructed by the algorithm (darker gray).
The fence $\Fence_1$ is $\PSO$-executable in $\Trace$ but not in $\Trace'$,
since in the latter, $\Proc{\Fence_1}$ has non-empty buffers,
but the variables $x$ and $y$ are held by $\WriteM_1$ and $\WriteM_2$, respectively.
This is equivalent to waiting until after $\Read_1$ and $\Read_2$ have been executed.
Since executing $\Read_2$ implies having executed $\Read_1$,
the fence map $\FenceMap_{\Trace'}(\Process_3, \Process_2)$ compresses this information by only pointing to $\Read_2$.
}
\label{fig:PSO_algo}
\end{figure*}

We now prove the argument in detail for the above $\Trace$,
$\Trace'$ and thread event $\Event$.
Assume that $\Trace' \Concat \Sequence \Concat \Event$ is a witness
prefix as well, for a sequence of memory-writes $\Sequence$. Consider the following cases.
\begin{enumerate}[noitemsep,topsep=0pt,partopsep=0px]
\item If $\Event$ is a read event, let $\Wpair{\Write}=(\WriteB, \WriteM)=\Observation(\Event)$.
If it is a local write (i.e., $\Proc{\Wpair{\Write}} = \Proc{\Event}$),
necessarily $\WriteB \in \Trace' \Concat \Sequence$, and since the traces agree
on thread events, we have $\WriteB \in \Trace$; thus $\Event$ is executable in $\Trace$.
Otherwise, $\Wpair{\Write}$ is a remote write (i.e., $\Proc{\Wpair{\Write}} \neq \Proc{\Event}$).
Assume towards contradiction that
$\Event$ is not executable in $\Trace$; this can happen in two cases.

In the first case, the variable $x=\Location{\Event}$ is held by another
(non-spurious) memory-write $\WriteM'$ in $\Trace$.
Since $\WritesM{\Trace}\setminus\SpuriousWritesM{\Trace}\subseteq \WritesM{\Trace'}$,
and $\LocalEvents{\Trace}= \LocalEvents{\Trace'}$, the variable $x$ is
also held by $\WriteM'$ in $\Trace' \Concat \Sequence$. But then, both
$\WriteM$ and $\WriteM'$ hold $x$ in $\Trace' \Concat \Sequence$, a contradiction.

In the second case, there is a write $(\WriteB', \WriteM')$ 
with $\Location{\WriteM'} = \Location{\Event}$ and
$\WriteB'<_{\TO}\Event$ and
$\WriteM'\not \in \Trace$.
If $\WriteM' \not\in \Trace' \Concat \Sequence$, then $\Event$ would read-from $\WriteB'$ from the buffer
in $\Trace' \Concat \Sequence \Concat \Event$, contradicting $\Observation(\Event) = (\_, \WriteM)$.
Thus $\WriteM' \in \Trace' \Concat \Sequence$,
and further $\WriteM \in \Trace' \Concat \Sequence$ with $\WriteM' <_{\Trace' \Concat \Sequence} \WriteM$.
Since $\Trace' \Concat \Sequence \Concat \Event$ is a witness prefix
and $\WriteB'<_{\TO}\Event$, 
we have $\WriteB' \in \Trace'$.
From this and $\LocalEvents{\Trace} = \LocalEvents{\Trace'}$ we have that
$\WriteB' \in \Trace$ and 
$\WriteM'$ is pending in $\Trace$.
This together gives us that $\WriteM'$ is spurious in $\Trace$.
Consider the earliest memory-write pending in $\Trace$ on
the same buffer (i.e., $\Proc{\WriteM'}$ and $\Location{\WriteM'}$), denote it $\WriteM''$.
We have that $\WriteM'' \leq_{\TO} \WriteM'$ and $\WriteM''$ is spurious in $\Trace$.
Further, $\WriteM''$ is executable in $\Trace$.
But then it would
have been added to $\Trace$ in the while loop of \cref{algo:pso_while_spurious}, a contradiction.

\item Assume that $\Event$ is a fence event, and let
$\WriteM_1,\dots,\WriteM_j$ be the pending memory-writes of $\Proc{\Event}$ in $\Trace$.
Suppose towards contradiction that $\Event$ is not executable. Then one of the $\WriteM_i$ 
is not executable,
let $x=\Location{\WriteM_i}$. 
Similarly to the 
above, there can be two cases where this might happen.

The first case is when $\WriteM_i$ must be read-from by some read event $\Read\not \in \Trace$,
but $\Read$ is preceded by a local write $(\WriteB, \WriteM)$
(i.e., $\WriteB<_{\TO}\Read$)
on the same variable $x$ while $\WriteM\not \in \Trace$.
A similar analysis to the previous case shows that the earliest pending write
on $\Proc{\WriteM}$ for variable $x$ is spurious, and thus already
added to $\Trace$ due to the while loop in \cref{algo:pso_while_spurious}, a contradiction.

The second case is when the variable $x$ is held in $\Trace$.
Since $\FenceMap_{\Trace}\leq \FenceMap_{\Trace'}$, the variable $x$ is also held in $\Trace'$, and thus $\WriteM_i$ is not executable in $\Trace'$ either.
But then $\Trace' \Concat \Sequence \Concat \Event$ cannot be a witness prefix, a contradiction.
\end{enumerate}
The following lemma states the correctness of $\AlgoPSO$,
which together with the complexity argument establishes \cref{them:vpso}.

\begin{restatable}{lemma}{lemverifyingpsocorrectness}\label{lem:verifyingpso_correctness}
$(X, \Observation)$ is realizable under $\PSO$ iff $\AlgoPSO$ returns a trace $\Trace\neq \epsilon$.
\end{restatable}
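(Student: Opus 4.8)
The plan is to prove the two directions separately, which amounts to establishing the \emph{soundness} and \emph{completeness} of $\AlgoPSO$. For soundness ($\Leftarrow$), I would show by induction on the order in which traces enter $\Worklist$ that every such trace $\Trace$ is a valid, well-formed $\PSO$ trace with $\Events{\Trace}\subseteq X$ for which $\RF{\Trace}$ agrees with $\Observation$ on $\Reads{\Trace}$. Well-formedness is immediate since each appended event is $\PSO$-executable, so the executed thread events always form a lower set of $(X,\TO)$ and a fence is appended only together with the pending memory-writes it flushes. The reads-from agreement is the crux and mirrors the $\AlgoTSO$ soundness argument: when a read $\Read$ with $\Observation(\Read)=(\WriteB,\WriteM)$ is appended, executability forces $\WriteB\in\Trace$; if $\Proc{\Read}\neq\Proc{\WriteB}$ then $\WriteM\in\Trace$ and $\Location{\WriteM}$ stays held by $\WriteM$ from the step $\WriteM$ was appended until $\Read$ is appended, so no conflicting memory-write intervenes and $\RF{\Trace}(\Read)=(\_,\WriteM)$; if $\Proc{\Read}=\Proc{\WriteB}$ then the branch of \cref{algo:pso_read_needs_its_mw} is skipped, $\WriteM$ has not been flushed before $\Read$, and $\Read$ reads $\WriteB$ from its buffer. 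With the standing assumption on $\Observation$ this gives $\RF{\Trace}(\Read)=\Observation(\Read)$, so the trace returned at \cref{algo:pso_test_done} has $\Events{\Trace}=X$ and $\RF{\Trace}=\Observation$, witnessing realizability.

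For completeness ($\Rightarrow$), fix a witness $\Trace^*$ of $(X,\Observation)$. The heart is the invariant sketched before the lemma: for every prefix $\pi$ of $\Trace^*$ whose last event is a thread event (or $\pi=\epsilon$), $\AlgoPSO$ inserts into $\Worklist$ a trace $\Trace_\pi$ with (i)~$\LocalEvents{\Trace_\pi}=\LocalEvents{\pi}$, (ii)~$\WritesM{\Trace_\pi}\setminus\SpuriousWritesM{\Trace_\pi}\subseteq\WritesM{\pi}$, and (iii)~$\FenceMap_{\Trace_\pi}\leq\FenceMap_\pi$. I would prove this by induction on $|\LocalEvents{\pi}|$; the base case is trivial. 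For the step, write $\pi=\pi_0\Concat\Sequence\Concat\Event$ with $\Event$ the last thread event and $\Sequence$ a (possibly empty) block of memory-writes, and take $\Trace:=\Trace_{\pi_0}$ from the induction hypothesis. After the flush loop of \cref{algo:pso_while_spurious}, $\Trace$ still meets (i)--(iii) against $\pi_0$, since spuriousness is preserved under extension and, by \cref{lem:fencemap_monotonicity1}, flushing spurious writes leaves the fence map unchanged. The read/fence case analysis given just before the lemma then shows $\Event$ is $\PSO$-executable in $\Trace$: any obstruction in $\Trace$ (a held variable, or a missing local memory-write preceding a relevant read) would, through (i)--(iii), be mirrored in $\pi$ and contradict $\pi$ being a prefix of $\Trace^*$, the one exception being a missing local memory-write, which is instead identified as a spurious executable write already removed by \cref{algo:pso_while_spurious}. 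Finally the trace $\Trace_\Event$ built inside the loop of \cref{algo:pso_ifextend} meets (i)--(iii) against $\pi$: (i) and (ii) are direct, because the memory-writes the algorithm prepends to $\Event$ (the reads-from write of a remote read, or the pending writes flushed by a fence) necessarily occur in $\pi$ as well; and (iii) follows by chaining \cref{lem:fencemap_monotonicity1} over those prepended memory-writes with \cref{lem:fencemap_monotonicity2} over the appended $\Event$.

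One bookkeeping subtlety needs care: \cref{algo:pso_if_new} may discard $\Trace_\Event$ when some earlier trace $\Trace''$ with the same pair $(\LocalEvents{\cdot},\FenceMap_{\cdot})$ sits in $\DoneSet$. The invariant survives with $\Trace''$ in place of $\Trace_\Event$: (i) and (iii) hold by the equality of the pairs, and (ii) is re-derived from $\LocalEvents{\Trace''}=\LocalEvents{\pi}$ by observing that $\AlgoPSO$ flushes a memory-write only as a spurious write, or immediately before a remote read that reads it, or immediately before a fence of its own thread, so every non-spurious flushed write of $\Trace''$ is forced into $\pi$ by the presence of its triggering read or fence in $\LocalEvents{\pi}$. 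Applying the invariant to the maximal thread-event prefix $\pi$ of $\Trace^*$ (which has $\LocalEvents{\pi}=\LocalEvents{X}$), the algorithm holds a trace $\Trace_\pi$ with $\LocalEvents{\Trace_\pi}=\LocalEvents{X}$; when $\Trace_\pi$ is extracted, every event of $X\setminus\Events{\Trace_\pi}$ is a now-spurious executable memory-write, so \cref{algo:pso_while_spurious} flushes all of them and \cref{algo:pso_test_done} returns a nonempty trace. Termination holds since each worklist entry carries a distinct $(\LocalEvents{\cdot},\FenceMap_{\cdot})$ and there are finitely many; together with the bound $\min(n^{k\cdot(k-1)},2^{k\cdot\NumVariables})$ on the number of fence maps per thread-event set (the trivial count and \cref{lem:fencemap_size}) this also yields the running time of \cref{them:vpso}. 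I expect the executability case analysis and the fence-map accounting in the inductive step to be the main obstacle; \cref{lem:fencemap_monotonicity1,lem:fencemap_monotonicity2} are tailored precisely to modularize that accounting.
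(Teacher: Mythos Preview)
Your plan matches the paper's proof: the same soundness induction on traces entering $\Worklist$ and the same completeness invariant (i)--(iii) proven by induction on the number of thread events of a witness prefix, with \cref{lem:fencemap_monotonicity1,lem:fencemap_monotonicity2} used exactly where you indicate. Two small corrections to the soundness sketch are needed. In the remote-read case you must also rule out that $\Read$ reads a \emph{local} buffer entry; this follows from condition~\ref{item:tso_execw2} of $\WriteM$'s executability (all $\WriteB'<_{\TO}\Read$ on $\Location{\Read}$ have $\WriteM'\in\Trace$ already), which the paper invokes via a contradiction argument on $\Observation_{\Trace_\Event}(\Read)$. In the local-read case your claim ``$\WriteM$ has not been flushed before $\Read$'' is not always true: $\WriteM$ may have been flushed by an intervening fence of $\Proc{\Read}$ or in \cref{algo:pso_execute_observation} on behalf of some other remote reader; when that happens $\Read$ reads $\WriteM$ from memory and the held-variable argument of the remote case applies verbatim. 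Your explicit handling of the $\DoneSet$ collision---re-deriving invariant~(ii) for the earlier trace $\Trace''$ by observing that $\AlgoPSO$ flushes a memory-write only as spurious, or immediately before a remote read that observes it, or immediately before a fence of its own thread, so each non-spurious flushed write is forced into $\pi$ by a thread event already in $\LocalEvents{\pi}$---is correct and in fact makes explicit a step the paper's completeness proof leaves implicit.
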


We conclude this section with some insights on the
relationship between $\VTSOrm$ and $\VPSOrm$.

\Paragraph{Relation between $\TSO$ and $\PSO$ verification.}
In high level, $\TSO$ might be perceived as a special case of $\PSO$,
where every thread is equipped with one buffer ($\TSO$)
as opposed to one buffer per global variable ($\PSO$).
However, the communication patterns between $\TSO$ and $\PSO$ are drastically different.
As a result, our algorithm $\AlgoPSO$ is not applicable to $\TSO$,
and we do not see an extension of $\AlgoTSO$ for handling $\PSO$ efficiently.
In particular, the minimal strategy of $\AlgoPSO$ on memory-writes is
based on the following observation: for a read $\Read$ observing
a remote memory-write $\WriteM$, it always suffices to execute
$\WriteM$ exactly before executing $\Read$ (unless $\WriteM$ has already been executed).
This holds because the corresponding buffer contains memory-writes
\emph{only} on the same variable, and thus all such memory-writes that
precede $\WriteM$ cannot be read-from by any subsequent read.
This property does not hold for $\TSO$: as there is a single buffer,
$\WriteM$ might be executed as a result of flushing the buffer of
thread $\Proc{\WriteM}$ to make another memory-write $\WriteM'$ visible,
on a \emph{different} variable than $\Location{\WriteM}$, and thus
$\WriteM'$ might be observable by a subsequent read.
Hence the minimal strategy of $\AlgoPSO$ on memory-writes does not apply to $\TSO$.
On the other hand, the maximal strategy of $\AlgoTSO$ is not effective
for $\PSO$, as it requires enumerating all lower sets of
$(\WritesM{X}, \Observation)$, which are $n^{k\cdot \NumVariables}$ many in $\PSO$
(where $\NumVariables$ is the number of variables),
and thus this leads to worse bounds than the ones we achieve in \cref{them:vpso}.

\subsection{Closure for $\AlgoTSO$ and $\AlgoPSO$}\label{subsec:heuristics}

In this section we introduce \emph{closure}, a practical heuristic
to efficiently detect whether a given instance $(X,\Observation)$ of the
verification problem $\VTSOrm$ resp. $\VPSOrm$ is unrealizable.
Closure is sound, meaning that a realizable instance $(X,\Observation)$
is never declared unrealizable by closure. Further, closure is not complete,
which means there exist unrealizable instances $(X,\Observation)$
not detected as such by closure. Finally, closure can be computed in
time polynomial with respect to the number of events (i.e., size of $X$),
irrespective of the underlying number of threads and variables.

Given an instance $(X,\Observation)$, any solution of
$\VTSOrm$/$\VPSOrm(X,\Observation)$ respects $\TO\Project X$, i.e.,
the program order upon $X$. Closure constructs the weakest partial
order $P(X)$ that refines the program order
(i.e., $P\Refines \TO\Project X$) and further satisfies 
for each read $\Read \in \Reads{X}$ with $\Observation(\Read) = (\WriteB,\WriteM)$:
\begin{enumerate}[noitemsep,topsep=0pt,partopsep=0px]
\item\label{item:closure1} If $\Proc{\Read} \neq \Proc{\Observation(\Read)}$, then (i) $\WriteM <_P \Read$ and
(ii) $\bad{\WriteM} <_P \WriteM$ for any
$(\bad{\WriteB},\bad{\WriteM}) \in \Writes{X_{\Proc{\Read}}}$ such that $\Confl{\bad{\WriteM}}{\Read}$ and $\bad{\WriteB} <_{\TO} \Read$.
\item\label{item:closure2} For any $\bad{\WriteM} \in \WritesM{X_{\neq\Proc{\Read}}}$ such that $\Confl{\bad{\WriteM}}{\Read}$ and $\bad{\WriteM} \neq \WriteM$, $\bad{\WriteM} <_P \Read$ implies $\bad{\WriteM} <_P \WriteM$.
\item\label{item:closure3} For any $\bad{\WriteM} \in \WritesM{X_{\neq\Proc{\Read}}}$ such that $\Confl{\bad{\WriteM}}{\Read}$ and $\bad{\WriteM} \neq \WriteM$, $\WriteM <_P \bad{\WriteM}$ implies $\Read <_P \bad{\WriteM}$.
\end{enumerate}

If no above $P$ exists, 
the instance $\VTSOrm$/$\VPSOrm(X,\Observation)$ provably has no solution.
In case $P$ exists,
each solution $\Trace$ of $\VTSOrm$/$\VPSOrm(X,\Observation)$
provably respects $P$ (formally, $\Trace\Refines P$). 

\begin{figure}[h]

\begin{subfigure}[b]{0.31\textwidth}
\centering
\begin{tikzpicture}[thick,
pre/.style={<-,shorten >= 2pt, shorten <=2pt, very thick},
post/.style={->,shorten >= 2pt, shorten <=2pt,  very thick},
seqtrace/.style={->, line width=2},
aux_seqtrace/.style={->, line width=1, draw=gray},
und/.style={very thick, draw=gray},
event/.style={rectangle, minimum height=4.3mm, draw=black, fill=white, minimum width=8mm, line width=1pt, inner sep=2, font={\footnotesize}},
aux_event/.style={event, draw=gray},
virt/.style={circle,draw=black!50,fill=black!20, opacity=0},
bad/.style={preaction={fill, white}, pattern color=red!40, pattern=north east lines},
good/.style={preaction={fill, white}, pattern color=green!60, pattern=north west lines},
]

\newcommand{\xs}{0.95}
\newcommand{\ys}{0.65}

\node[] (t1m_0)   at (0*\xs, 0*\ys) {\small$\LocalTrace_1$};
\node[] (t1m_end) at (0*\xs, -5.2*\ys) {};

\node[] (t1a_0)   at (0.8*\xs, 0*\ys) {\small$\LocalTrace_1'$};
\node[] (t1a_end) at (0.8*\xs, -2.7*\ys) {};

\node[] (t2a_0)   at (2.2*\xs, 0*\ys) {\small$\LocalTrace_2'$};
\node[] (t2a_end) at (2.2*\xs, -3.3*\ys) {};

\node[] (t2m_0)   at (3*\xs, 0*\ys) {\small$\LocalTrace_2$};
\node[] (t2m_end) at (3*\xs, -5.2*\ys) {};

\draw[seqtrace] (t1m_0) to (t1m_end);
\draw[aux_seqtrace] (t1a_0) to (t1a_end);
\draw[aux_seqtrace] (t2a_0) to (t2a_end);
\draw[seqtrace] (t2m_0) to (t2m_end);

\node[event] (t1m_1) at (0*\xs, -1*\ys) {$\bad{\WriteB}$};
\node[event] (t1m_2) at (0*\xs, -4*\ys) {$\Read$};
\node[aux_event] (t1a_1) at (0.8*\xs, -2.4*\ys) {$\bad{\WriteM}$};

\node[event] (t2m_1) at (3*\xs, -1.6*\ys) {$\WriteB$};
\node[aux_event] (t2a_1) at (2.2*\xs, -3.0*\ys) {$\WriteM$};

\draw[post, draw=gray] (t1m_1) to (t1a_1);
\draw[post, draw=gray] (t2m_1) to (t2a_1);
\draw[post, \darkred, dashed] (t1a_1) to (t2a_1);
\draw[post, \darkred, dashed] (t2a_1) to (t1m_2);

\end{tikzpicture}
\vspace{-2mm}
\caption{
Rule \cref{item:closure1}.
Both new orderings are necessary, as a reversal of either
of them would
``hide'' $\WriteM$ from $\Read$,
making it impossible for $\Read$ to read-from
$(\WriteB,\WriteM)$.
}
\label{subfig:closure_rule1}
\end{subfigure}
\quad
\begin{subfigure}[b]{0.31\textwidth}
\centering
\begin{tikzpicture}[thick,
pre/.style={<-,shorten >= 2pt, shorten <=2pt, very thick},
post/.style={->,shorten >= 2pt, shorten <=2pt,  very thick},
seqtrace/.style={->, line width=2},
aux_seqtrace/.style={->, line width=1, draw=gray},
und/.style={very thick, draw=gray},
event/.style={rectangle, minimum height=4.3mm, draw=black, fill=white, minimum width=8mm, line width=1pt, inner sep=2, font={\footnotesize}},
aux_event/.style={event, draw=gray},
virt/.style={circle,draw=black!50,fill=black!20, opacity=0},
bad/.style={preaction={fill, white}, pattern color=red!40, pattern=north east lines},
good/.style={preaction={fill, white}, pattern color=green!60, pattern=north west lines},
]

\newcommand{\xs}{0.95}
\newcommand{\ys}{0.65}

\node[] (t1m_0)   at (0*\xs, 0*\ys) {\small$\LocalTrace_1$};
\node[] (t1m_end) at (0*\xs, -5.2*\ys) {};

\node[] (t1a_0)   at (0.8*\xs, 0*\ys) {\small$\LocalTrace_1'$};
\node[] (t1a_end) at (0.8*\xs, -2.7*\ys) {};

\node[] (t2a_0)   at (2.2*\xs, 0*\ys) {\small$\LocalTrace_2'$};
\node[] (t2a_end) at (2.2*\xs, -3.3*\ys) {};

\node[] (t2m_0)   at (3*\xs, 0*\ys) {\small$\LocalTrace_2$};
\node[] (t2m_end) at (3*\xs, -5.2*\ys) {};

\draw[seqtrace] (t1m_0) to (t1m_end);
\draw[aux_seqtrace] (t1a_0) to (t1a_end);
\draw[aux_seqtrace] (t2a_0) to (t2a_end);
\draw[seqtrace] (t2m_0) to (t2m_end);

\node[event] (t1m_1) at (0*\xs, -1*\ys) {$\WriteB$};
\node[event] (t1m_2) at (0*\xs, -4*\ys) {$\Read$};
\node[aux_event] (t1a_1) at (0.8*\xs, -2.4*\ys) {$\WriteM$};

\node[event] (t2m_1) at (3*\xs, -1.6*\ys) {$\bad{\WriteB}$};
\node[aux_event] (t2a_1) at (2.2*\xs, -3.0*\ys) {$\bad{\WriteM}$};

\draw[post, draw=gray] (t1m_1) to (t1a_1);
\draw[post, draw=gray] (t2m_1) to (t2a_1);
\draw[post, draw=gray] (t2a_1) to (t1m_2);

\draw[post, \darkred, dashed] (t2a_1) to (t1a_1);

\end{tikzpicture}
\vspace{-2mm}
\caption{
Rule \cref{item:closure2}.
The new ordering is necessary; its reversal would make
$\bad{\WriteM}$ appear between $(\WriteB,\WriteM)$ and $\Read$,
making it impossible for $\Read$ to read-from $(\WriteB,\WriteM)$.
}
\label{subfig:closure_rule2}
\end{subfigure}
\quad
\begin{subfigure}[b]{0.31\textwidth}
\centering
\begin{tikzpicture}[thick,
pre/.style={<-,shorten >= 2pt, shorten <=2pt, very thick},
post/.style={->,shorten >= 2pt, shorten <=2pt,  very thick},
seqtrace/.style={->, line width=2},
aux_seqtrace/.style={->, line width=1, draw=gray},
und/.style={very thick, draw=gray},
event/.style={rectangle, minimum height=4.3mm, draw=black, fill=white, minimum width=8mm, line width=1pt, inner sep=2, font={\footnotesize}},
aux_event/.style={event, draw=gray},
virt/.style={circle,draw=black!50,fill=black!20, opacity=0},
bad/.style={preaction={fill, white}, pattern color=red!40, pattern=north east lines},
good/.style={preaction={fill, white}, pattern color=green!60, pattern=north west lines},
]

\newcommand{\xs}{0.95}
\newcommand{\ys}{0.65}

\node[] (t1m_0)   at (0*\xs, 0*\ys) {\small$\LocalTrace_1$};
\node[] (t1m_end) at (0*\xs, -5.2*\ys) {};

\node[] (t1a_0)   at (0.8*\xs, 0*\ys) {\small$\LocalTrace_1'$};
\node[] (t1a_end) at (0.8*\xs, -2.7*\ys) {};

\node[] (t2a_0)   at (2.2*\xs, 0*\ys) {\small$\LocalTrace_2'$};
\node[] (t2a_end) at (2.2*\xs, -3.3*\ys) {};

\node[] (t2m_0)   at (3*\xs, 0*\ys) {\small$\LocalTrace_2$};
\node[] (t2m_end) at (3*\xs, -5.2*\ys) {};

\draw[seqtrace] (t1m_0) to (t1m_end);
\draw[aux_seqtrace] (t1a_0) to (t1a_end);
\draw[aux_seqtrace] (t2a_0) to (t2a_end);
\draw[seqtrace] (t2m_0) to (t2m_end);

\node[event] (t1m_1) at (0*\xs, -1*\ys) {$\WriteB$};
\node[event] (t1m_2) at (0*\xs, -4*\ys) {$\Read$};
\node[aux_event] (t1a_1) at (0.8*\xs, -2.4*\ys) {$\WriteM$};

\node[event] (t2m_1) at (3*\xs, -1.6*\ys) {$\bad{\WriteB}$};
\node[aux_event] (t2a_1) at (2.2*\xs, -3.0*\ys) {$\bad{\WriteM}$};

\draw[post, draw=gray] (t1m_1) to (t1a_1);
\draw[post, draw=gray] (t2m_1) to (t2a_1);
\draw[post, draw=gray] (t1a_1) to (t2a_1);
\draw[post, \darkred, dashed] (t1m_2) to (t2a_1);

\end{tikzpicture}
\vspace{-2mm}
\caption{
Rule \cref{item:closure3}.
The new ordering is necessary; its reversal would make
$\bad{\WriteM}$ appear between $(\WriteB,\WriteM)$ and $\Read$,
making it impossible for $\Read$ to read-from $(\WriteB,\WriteM)$.
}
\label{subfig:closure_rule3}
\end{subfigure}

\caption{
Illustration of the three closure rules. In each example,
the read $\Read$ has to read-from the write $(\WriteB, \WriteM)$, i.e.,
$\Observation(\Read) = (\WriteB,\WriteM)$.
All depicted events are on the same variable
(which is omitted for clarity).
The gray solid edges illustrate orderings already present in the
partial order, and the red dashed edges illustrate
the resulting new orderings
enforced by the specific rule.
}
\label{fig:closure}
\end{figure}

The intuition behind closure is as follows.
The construction starts with the program order $\TO\Project X$,
and then, utilizing the above rules
\cref{item:closure1}, \cref{item:closure2} and \cref{item:closure3},
it iteratively adds further event orderings such that every witness
execution provably has to follow the orderings.
Consequently, if the added orderings induce a cycle, this serves as
a proof that there exists no witness of the input instance
$(X,\Observation)$.
The rules
\cref{item:closure1}, \cref{item:closure2} and \cref{item:closure3}
can intuitively be though of as simple reasoning arguments why specific
orderings have to be present in each witness of $(X,\Observation)$,
and \cref{fig:closure} provides an illustration of the rules.

We leverage the guarantees of closure by computing it before
executing $\AlgoTSO$ resp. $\AlgoPSO$.
If no closure $P$ of $(X,\Observation)$ exists,
the algorithm $\AlgoTSO$ resp. $\AlgoPSO$ does not need to be executed
at all, as we already know that $(X,\Observation)$ is unrealizable.
Otherwise we obtain the closure $P$, we execute $\AlgoTSO$/$\AlgoPSO$
to search for a witness of $(X,\Observation)$,
and we restrict $\AlgoTSO$/$\AlgoPSO$ to only consider prefixes
$\Trace'$ respecting $P$ (formally, $\Trace'\Refines P\Project \Events{\Trace'}$),
since we know that
each solution of $\VTSOrm$/$\VPSOrm(X,\Observation)$ has to respect $P$.

The notion of closure, its beneficial properties, as well as construction
algorithms are well-studied for the $\SC$ memory model~\cite{Chalupa17,Abdulla19,Pavlogiannis20}.
Our conditions above extend this notion to $\TSO$ and $\PSO$.
Moreover, the closure we introduce here is \emph{complete} for concurrent programs with two threads,
i.e., if $P$ exists then there is a valid trace realizing $(X,\Observation)$ under the respective memory model.





\subsection{Verifying Executions with Atomic Primitives}\label{subsec:verifying_casrmw}

For clarity of presentation of the core algorithmic concepts,
we have thus far neglected more involved atomic operations,
namely atomic \emph{read-modify-write} (RMW) and atomic
\emph{compare-and-swap} (CAS).
We show how our approach handles verification
of $\TSO$ and $\PSO$ executions that also include
RMW and CAS operations here in a separate section.
Importantly, our treatment retains the complexity bounds established
in \cref{them:vtso} and \cref{them:vpso}.

\Paragraph{Atomic instructions.}
We consider the concurrent program under the $\TSO$ resp. $\PSO$
memory model, which can further atomically execute the following
types of instructions.
\begin{enumerate}[noitemsep,topsep=0pt,partopsep=0px]
\item A \emph{read-modify-write} instruction $\RMW$ executes atomically
the following sequence. It (i) reads, with respect to the $\TSO$ resp.
$\PSO$ semantics, the value $v$ of a global variable $x\in \Globals$,
then (ii) uses $v$ to compute a new value $v'$,
and finally (iii) writes the new value $v'$ to
the global variable $x$.
An example of a typical $\RMW$ computation is fetch-and-add
(resp. fetch-and-sub),
where $v' = v + c$ for some positive (resp. negative) constant $c$.
\item A \emph{compare-and-swap} instruction $\CAS$ executes atomically
the following sequence. It (i) reads, with respect to the $\TSO$ resp.
$\PSO$ semantics, the value $v$ of a global variable $x\in \Globals$,
(ii) compares it with a value $c$, and (iii) if $v = c$ then it
writes a new value $v'$ to the global variable $x$.
\end{enumerate}
Each instruction of the above two types blocks (i.e., it cannot get
executed) until the buffer of its
thread is empty (resp. all buffers of its thread are empty in $\PSO$).
Finally, the instruction specifies the
nature of its final write. This write is either enqueued into its
respective buffer (to be dequeued into shared memory at a later point),
or it gets immediately flushed into the shared memory.

\Paragraph{Atomic instructions modeling.}
In our approach we handle atomic RMW and CAS instructions
without introducing them as new event types. Instead, we model these
instructions as sequences of already considered events, i.e.,
reads, buffer-writes, memory-writes, and fences.
We annotate some events of an atomic instruction to constitute an
\emph{atomic block}, which intuitively
indicates that the event sequence of the atomic block cannot be
interleaved with other events, thus respecting the semantics of the
instruction.
\begin{enumerate}[noitemsep,topsep=0pt,partopsep=0px]
\item A \emph{read-modify-write} instruction $\RMW$ on a variable
$x$ is modeled as a sequence of four events: (i) a fence event,
(ii) a read of $x$, (iii) a buffer-write of $x$, and
(iv) a memory-write of $x$. The read and buffer-write events (ii)+(iii)
are annotated as constituting an atomic block; in case the write of
$\RMW$ is specified to proceed immediately to the shared memory,
the memory-write event (iv) is also part of the atomic block.
\item For a \emph{compare-and-swap} instruction $\CAS$ we consider
separately the following two cases. A \emph{successful} $\CAS$ (i.e.,
the write proceeds) is modeled the same way as a read-modify-write.
A \emph{failed} $\CAS$ (i.e., the write does not proceed) is modeled
simply as a fence followed by a read, with no atomic block.
\end{enumerate}

\Paragraph{Executable atomic blocks.}
Here we describe the $\TSO$- and $\PSO$-executability conditions
for an atomic block. No further additions for executability are
required, since no new event types are introduced to handle
RMW and CAS instructions.

Consider an instance $(X,\Observation)$ of $\VTSOrm$,
and a trace $\Trace$ with $\Events{\Trace}\subseteq X$.
An atomic block containing a sequence of events $\Event_1,...,\Event_j$
is $\TSO$-executable in $\Trace$ if:
\begin{enumerate}[noitemsep,topsep=0pt,partopsep=0px]
\item for each $1\leq i \leq j$ we have that
$\Event_i \in X\setminus \Events{\Trace}$, and
\item for each $1\leq i \leq j$ we have that
$\Event_i$ is $\TSO$-executable in $\Trace \Concat \Event_1 ... \Event_{i-1}$.
\end{enumerate}
Intuitively, an atomic block is $\TSO$-executable if it can be executed
as a sequence at once (i.e., without other events interleaved),
and the $\TSO$-executable conditions of each event
(i.e., a read or a buffer-write or a memory-write or a fence)
within the block are respected.

The $\PSO$-executable conditions are analogous.
Given an instance $(X,\Observation)$ of $\VPSOrm$ and a trace
$\Trace$ with $\Events{\Trace}\subseteq X$,
an atomic block of events $\Event_1,...,\Event_j$
is $\PSO$-executable in $\Trace$ if:
\begin{enumerate}[noitemsep,topsep=0pt,partopsep=0px]
\item for each $1\leq i \leq j$ we have that
$\Event_i \in X\setminus \Events{\Trace}$, and
\item for each $1\leq i \leq j$ we have that
$\Event_i$ is $\PSO$-executable in $\Trace \Concat \Event_1 ... \Event_{i-1}$.
\end{enumerate}

\Paragraph{Execution verification.}
Given the above executable conditions, the execution verification
algorithms $\AlgoTSO$ and $\AlgoPSO$ only require minor technical
modifications to verify executions including RMW and CAS instructions.

The core idea of the $\AlgoTSO$ resp. $\AlgoPSO$ modifications is to not
extend prefixes with single events that are part of some atomic block,
and instead extend the atomic blocks fully. This way, a lower set of
$(X, \TO)$ is considered only if for each atomic block, the block
is either fully present or fully not present in the lower set.

In $\AlgoTSO$ (\cref{algo:verifytso}), in
\cref{algo:tso_local_extend_loop} we further consider each
$\TSO$-executable atomic block $\Event_1,...,\Event_j$ not containing
any memory-write event, and then in \cref{algo:tso_local_extend}
we extend the prefix with the entire atomic block, i.e.,
$\Trace\gets \Trace \Concat \Event_1,...,\Event_j$.
Further, in \cref{algo:tso_mem_execute} we further consider each
$\TSO$-executable atomic block $\Event_1,...,\Event_j$ containing
a memory-write event, and in \cref{algo:tso_mem_extend}
we then extend the prefix with the whole atomic block, i.e.,
$\Trace\gets \Trace \Concat \Event_1,...,\Event_j$.

In $\AlgoPSO$ (\cref{algo:verifypso}), in the loop of
\cref{algo:pso_ifextend} we further consider each
$\PSO$-executable atomic block. Consider a fixed iteration of this
loop with an atomic block $\Event_1,...,\Event_j$.
The first event of the atomic block $\Event_1$ is a read,
thus the condition in \cref{algo:pso_is_read} is evaluated true with
$\Event_1$ and the control flow moves to
\cref{algo:pso_read_needs_its_mw}. Later, the condition in
\cref{algo:pso_is_fence} is evaluated false (since $\Event_1$ is a
read). Finally, in \cref{algo:pso_execute_event} the prefix is extended
with the whole atomic block, i.e.,
$\Trace_{\Event}\gets\Trace_{\Event} \Concat \Event_1,...,\Event_j$.

For $\AlgoTSO$ the argument of maintaining maximality in the set of
thread events applies also in the presence of RMW and CAS, and thus the
bound of \cref{them:vtso} is retained.
Similarly, for $\AlgoPSO$ the enumeration of fence maps and the
maximality in the spurious writes is preserved also with RMW and CAS,
and hence the bound of \cref{them:vpso} holds.

\Paragraph{Closure.}
When verifying executions with RMW and CAS instructions,
while the closure retains its guarantees as is, it can more effectively
detect unrealizable instances with additional rules.
Specifically, the closure $P$ of $(X,\Observation)$ satisfies
the rules \ref{item:closure1}--\ref{item:closure3} described in
\cref{subsec:heuristics}, and additionally,
given an event $\Event$ and an atomic block $\Event_1,...,\Event_j$,
$P$ satisfies the following.
\begin{enumerate}[noitemsep,topsep=0pt,partopsep=0px]
\setcounter{enumi}{3}
\item If $\Event_i <_P \Event$ for any $1\leq i \leq j$,
then $\Event_j <_P \Event$ (i.e., if some part of the block is
before $\Event$ then the entire block is before $\Event$).
\item If $\Event <_P \Event_i$ for any $1\leq i \leq j$,
then $\Event <_P \Event_1$ (i.e., if $\Event$ is before some part of
the block then $\Event$ is before the entire block).
\end{enumerate}







\section{READS-FROM SMC FOR TSO AND PSO}\label{sec:dpor}

In this section we present $\DCTSOPSOM$,
an 
exploration-optimal reads-from SMC algorithm
for $\TSO$ and $\PSO$. The algorithm $\DCTSOPSOM$ is
based on the reads-from algorithm for 
$\SC$~\cite{Abdulla19}, and adapted in this work to handle
the relaxed memory models $\TSO$ and $\PSO$.
The algorithm uses as subroutines $\AlgoTSO$ (resp. $\AlgoPSO$)
to decide whether any given class of the RF partitioning
is consistent under the $\TSO$ (resp. $\PSO$) semantics.


$\DCTSOPSOM$ is a recursive algorithm, each call of
$\DCTSOPSOM$ is argumented by a tuple $(\Seq, \Observation, \Trace, \NegativeMarked)$
where the following points hold:
\begin{itemize}[noitemsep,topsep=0pt,partopsep=0px]
\item $\Seq$ is a sequence of thread events. 
Let $X$ denote the 
set of events of $\Seq$ together with their memory-write counterparts,
formally $X = \Events{\Seq} \cup \{ \WriteM : \exists (\WriteB,\WriteM) \in \SysWrites \textrm{ such that }\WriteB \in \WritesB{\Seq}\}$.
\item $\Observation\colon\Reads{X}\to \Writes{X}$ is a
desired reads-from function.
\item $\Trace$ is a concrete valid trace that is a witness of
$(X, \Observation)$, i.e., $\Events{\Trace} = X$ and $\RF{\Trace} = \Observation$.
\item $\NegativeMarked \subseteq \Reads{\Seq}$ is a set of reads
that are \emph{marked} to be committed to the source they read-from in $\Trace$.
\end{itemize}
Further, a globally accessible set of schedule sets called $\schedules$
is maintained throughout the recursion.
The $\schedules$ set is initialized empty ($\schedules = \emptyset$) and
the initial call of the algorithm is argumented with empty sequences and
sets --- $\DCTSOPSOM(\epsilon, \emptyset, \epsilon, \emptyset)$.

\begin{algorithm}
\small
\SetInd{0.4em}{0.4em}
\DontPrintSemicolon
\caption{$\DCTSOPSOM(\Seq, \Observation, \Trace, \NegativeMarked)$}\label{algo:dctsopsom}
\KwIn{Sequence $\Seq$,
desired reads-from $\Observation$,
valid trace $\Trace$ such that $\RF{\Trace} = \Observation$,
marked reads $\NegativeMarked$.
}
\BlankLine
$\widetilde{\Trace} \gets \Trace \Concat \widehat{\Trace}$
where $\widehat{\Trace}$ is an arbitrary maximal extension of $\Trace$
\tcp*[f]{Maximally extend trace $\Trace$}\label{line:dctsopsom_extendtrace}\\
$\widetilde{\Seq} \gets \Seq \Concat \,\widehat{\Trace}\Project\LocalEvents{\widehat{\Trace}}$\tcp*[f]{Extend $\Seq$ with the thread-events subsequence of the extension $\widehat{\Trace}$}\label{line:dctsopsom_extendsequence}\\
\ForEach(\tcp*[f]{Reads of the extension $\widehat{\Trace}$})
{$\Read\in \Reads{\widehat{\Trace}}$}{\label{line:dctsopsom_forinitschedule}
  $\schedules(\pre{\widetilde{\Seq}}{\Read}) \gets \emptyset$\tcp*[f]{Initialize new schedule set}\label{line:dctsopsom_initschedule}\\
}\label{line:dctsopsom_forinitscheduleend}
\ForEach(\tcp*[f]{Unmarked reads})
{$\Read\in \Reads{\widetilde{\Seq}} \setminus \NegativeMarked$}{\label{line:dctsopsom_forreadstomutate}
  $P \gets \TO \Project \Events{\widetilde{\Trace}}$\tcp*[f]{Program order on all the events of $\widetilde{\Trace}$}\label{line:dctsopsom_to}\\
  \ForEach(\tcp*[f]{Different-thread-$\RF{\widetilde{\Trace}}$ reads except $\Read$})
  {$\Read'\in \Reads{\widetilde{\Seq}} \setminus \{\Read\}$ with $\Process(\Read') \neq \Process(\RF{\widetilde{\Trace}}(\Read'))$}{\label{line:dctsopsom_forrf}
    insert $\WriteM \to \Read'$ into $P$ where
    $\RF{\widetilde{\Trace}}(\Read') = \WriteM$
    \tcp*[f]{Add the reads-from ordering into $P$}\label{line:dctsopsom_rf}\\
  }
  $\mutations \gets \{ (\WriteB, \WriteM) \in \Writes{\widetilde{\Trace}} \;|\; \Confl{\Read}{\WriteM}\} \setminus \{\RF{\widetilde{\Trace}}(\Read)\}$
  \tcp*[f]{All different writes $\Read$ may read-from}\label{line:dctsopsom_getmut}\\
  \If(\tcp*[f]{If $\Read$ is not part of the extension then})
  {$\Read \not\in \Reads{\widehat{\Trace}}$}{\label{line:dctsopsom_ifext}
    $\mutations \gets \mutations \cap \Writes{\widehat{\Trace}}$
    \tcp*[f]{Only consider writes of the extension}\label{line:dctsopsom_extmut}\\
  }
  \ForEach(\tcp*[f]{Considered mutations})
  {$(\WriteB, \WriteM)\in \mutations$}{ \label{line:dctsopsom_mutwrite}
    $\mathsf{causesafter} \gets \{\Event \in \Events{\widetilde{\Seq}} \;|\; \Read <_{\widetilde{\Seq}} \Event \textrm{ and } \Event \leq_P \WriteB\}$
    \tcp*[f]{Causal past of $\WriteB$ after $\Read$ in $\widetilde{\Seq}$}\label{line:dctsopsom_causes}\\
    $\Seq' \gets \pre{\widetilde{\Seq}}{\Read} \Concat \,\widetilde{\Seq}\Project \mathsf{causesafter}$
    \tcp*[f]{$\Read$-prefix followed by $\mathsf{causesafter}$}\label{line:dctsopsom_seqp}\\
    $X' \gets \Events{\Seq'} \cup \{ \WriteM' : (\WriteB',\WriteM') \in \Writes{\widetilde{\Trace}} \textrm{ and }\WriteB' \in \WritesB{\Seq'}\}$
    \tcp*[f]{Event set for this mutation}\label{line:dctsopsom_eventsp}\\
    $\Observation' \gets \{ (\Read', \RF{\widetilde{\Trace}}(\Read')) :
    \Read' \in \Reads{\Seq'} \textrm{ and }\Read' \neq \Read \} \cup \{(\Read, (\WriteB, \WriteM))\}$
    \tcp*[f]{Reads-from for this mutation}\label{line:dctsopsom_rfp}\\
    \If(\tcp*[f]{If this is a new schedule})
    {$(\Seq', \Observation', \_, \_) \not\in \schedules(\pre{\widetilde{\Seq}}{\Read})$}{\label{line:dctsopsom_ifnotsch}
      $\Trace' \gets \linearize(X', \Observation')$
      \tcp*[f]{$\AlgoTSO$ (\cref{algo:verifytso}) or $\AlgoPSO$ (\cref{algo:verifypso})}\label{line:dctsopsom_lin}\\
      \If(\tcp*[f]{If the mutation is realizable})
      {$\Trace' \neq \bot$}{\label{line:dctsopsom_iffeasible}
        $\NegativeMarked' \gets (\NegativeMarked \cap \Reads{\Seq'}) \cup
        \Reads{\mathsf{causesafter}}$
        \tcp*[f]{Reads in $\mathsf{causesafter}$ get newly marked}\label{line:dctsopsom_commit}\\
        add $(\Seq', \Observation', \Trace', \NegativeMarked')$ to $\schedules(\pre{\widetilde{\Seq}}{\Read})$
        \tcp*[f]{Add the successful new schedule}\label{line:dctsopsom_add}\\
      }
    }
  }
}\label{line:dctsopsom_forreadstomutateend}
\ForEach(\tcp*[f]{Extension reads starting from the end})
{$\widehat{\Read} \in \Reads{\widehat{\Trace}}$ in the reverse order of $<_{\widehat{\Trace}}$}{\label{line:dctsopsom_forrecurread}
  \ForEach(\tcp*[f]{Collected schedules mutating $\widehat{\Read}$})
  {$(\Seq', \Observation', \Trace', \NegativeMarked') \in \schedules(\pre{\widetilde{\Seq}}{\widehat{\Read}})$}{\label{line:dctsopsom_forrecursch}
    $\DCTSOPSOM(\Seq', \Observation', \Trace', \NegativeMarked')$
    \tcp*[f]{Recursive call on the schedule}\label{line:dctsopsom_recur}\\
  }
  delete $\schedules(\pre{\widetilde{\Seq}}{\widehat{\Read}})$
  \tcp*[f]{This schedule set has been fully explored, hence it can be deleted}\label{line:dctsopsom_delsch}\\
}\label{line:dctsopsom_forrecurreadend}

\end{algorithm}

\cref{algo:dctsopsom} presents the pseudocode of $\DCTSOPSOM$.
In each call of $\DCTSOPSOM$, a number of possible changes (or \emph{mutations})
of the desired reads-from function $\Observation$ is proposed
in iterations of the loop in \cref{line:dctsopsom_forreadstomutate}.
Consider the read $\Read$ of a fixed iteration of the \cref{line:dctsopsom_forreadstomutate}
loop.
First, in Lines~\ref{line:dctsopsom_to}--\ref{line:dctsopsom_rf}
a partial order $P$ is constructed to capture the causal past of
write events. 
In Lines~\ref{line:dctsopsom_getmut}--\ref{line:dctsopsom_extmut}
the set of mutations for $\Read$ is computed. Then in each iteration of
the \cref{line:dctsopsom_mutwrite} loop a mutation is constructed
(Lines~\ref{line:dctsopsom_causes}--\ref{line:dctsopsom_rfp}). Here the
partial order $P$ is utilized in \cref{line:dctsopsom_causes}
to help determine the event set of the mutation.
The constructed mutation, if deemed novel
(checked in \cref{line:dctsopsom_ifnotsch}),
is probed whether it is realizable (in \cref{line:dctsopsom_lin}).
In case it is realizable, it gets
added into $\schedules$ in \cref{line:dctsopsom_add}.
After all the mutations are proposed, then in
Lines~\ref{line:dctsopsom_forrecurread}--\ref{line:dctsopsom_delsch}
a number of recursive calls of $\DCTSOPSOM$ is performed,
and the recursive $\DCTSOPSOM$ calls are argumented
by the specific $\schedules$ retrieved.

\cref{fig:ExpOpt} illustrates the run of $\DCTSOPSOM$ on
a simple concurrent program (the run is identical under both
$\TSO$ and $\PSO$). 
An initial trace (A) is obtained
where $\textcolor{\myred}{\Read_1(y)}$ reads-from the initial event and
$\textcolor{\myblue}{\Read_2(x)}$ reads-from $\textcolor{\myblue}{\Write_1(x)}$.
Here two mutations are probed and both are realizable.
In the first mutation (B), $\textcolor{\myred}{\Read_1(y)}$
is mutated to read-from $\textcolor{\myred}{\Write_2(y)}$ and $\textcolor{\myblue}{\Read_2(x)}$
is not retained (since it appears after $\textcolor{\myred}{\Read_1(y)}$ and it is not
in the causal past of $\textcolor{\myred}{\Write_2(y)}$). In the second
mutation (C), $\textcolor{\myblue}{\Read_2(x)}$ is mutated to read-from the initial event
and $\textcolor{\myred}{\Read_1(y)}$ is retained
(since it appears before $\textcolor{\myblue}{\Read_2(x)}$)
with initial event as its reads-from.
After both mutations are added to $\schedules$, recursive calls are performed
in the reverse order of reads appearing in the trace, thus starting with
(C). Here no mutations are probed since there are no events in the extension,
the algorithm backtracks to (A) and a recursive call to (B)
is performed. Here one mutation (D) is added, where
$\textcolor{\myblue}{\Read_2(x)}$ is mutated to read-from the initial event
and $\textcolor{\myred}{\Read_1(y)}$
is retained (it appears before $\textcolor{\myblue}{\Read_2(x)}$)
with $\textcolor{\myred}{\Write_2(y)}$ as its reads-from.
The call to (D) is performed and here no mutations
are probed (there are no events in the extension).
The algorithm backtracks 
and concludes,
exploring four RF partitioning classes in total.

\begin{figure}
  \begin{minipage}{0.7\textwidth}
    \begin{tikzpicture}[thick,
      pre/.style={<-,shorten >= 2pt, shorten <=2pt, very thick},
      post/.style={->,shorten >= 2pt, shorten <=2pt,  very thick},
      seqtrace/.style={->, line width=2},
      aux_seqtrace/.style={->, line width=1, draw=gray},
      und/.style={very thick, draw=gray},
      event/.style={rectangle, minimum height=3.5mm, draw=black, fill=white, minimum width=8mm,   line width=1pt, inner sep=2, font={\footnotesize}},
      aux_event/.style={event, draw=gray},
      virt/.style={circle,draw=black!50,fill=black!20, opacity=0, text opacity=1},
      nd_bg/.style={fill=gray!80, fill opacity=0.5, draw=none},
      read/.style={draw=black, fill=gray!20, font={\footnotesize}},
      default/.style={font={\footnotesize}}
      ]

      \pgfdeclarelayer{bg}
      \pgfsetlayers{bg,main}

      \newcommand{\sep}{||}

      \newcommand{\xs}{0.9}
      \newcommand{\ys}{0.8}

      \newcommand{\yA}{0.*\ys}
      \newcommand{\yB}{-1.7*\ys}
      \newcommand{\yC}{-3.0*\ys}
      \newcommand{\yD}{-4.4*\ys}

      \newcommand{\rootAx}{0*\xs}
      \newcommand{\rootBx}{2.5*\xs}
      \newcommand{\rootCx}{0*\xs}
      \newcommand{\rootDx}{2.5*\xs}

      \newcommand{\rAx}{\rootAx + 4*\xs}
      \newcommand{\rBx}{\rootAx + 1*\xs}

      \node[virt] (rootX) at (\rootAx + 7*\xs, \yA) {(A)};
      \node[default] (root0) at (\rootAx + 0*\xs, \yA) {$\mathrm{init}$ \sep};
      \node[default] (root1) at (\rootAx + 1*\xs, \yA) {$\WriteB_1$};
      \node[default] (root2) at (\rootAx + 2*\xs, \yA) {$\WriteM_1$};
      \node[read]
              (root3) at (\rootAx + 3*\xs, \yA) {$\Read_1$};
      \node[default] (root4) at (\rootAx + 4*\xs, \yA) {$\WriteB_2$};
      \node[default] (root5) at (\rootAx + 5*\xs, \yA) {$\WriteM_2$};
      \node[read]
              (root6) at (\rootAx + 6*\xs, \yA) {$\Read_2$};

      \node[default] (r2_init0) at (\rootBx + 0*\xs, \yB) {$\mathrm{init}$};
      \node[default] (r2_init1) at (\rootBx + 1*\xs, \yB) {$\WriteB_2$};
      \node[read]
              (r2_init2) at (\rootBx + 2*\xs, \yB) {$\Read_2$};
      \node[default] (r2_init3) at (\rootBx + 3*\xs, \yB) {$\WriteB_1$};
      \node[read]
              (r2_init4) at (\rootBx + 4*\xs, \yB) {$\Read_1$};
      \node[default] (r2_init5) at (\rootBx + 5*\xs, \yB) {$\WriteM_1$};
      \node[default] (r2_init6) at (\rootBx + 6*\xs, \yB) {$\WriteM_2$ \sep};
      \node[virt] (r2_init_X) at (\rootBx + 7*\xs, \yB) {(C)};

      \node[default] (r1_w2_0) at (\rootCx + 0*\xs, \yC) {$\mathrm{init}$};
      \node[default] (r1_w2_1) at (\rootCx + 1*\xs, \yC) {$\WriteB_2$};
      \node[default] (r1_w2_2) at (\rootCx + 2*\xs, \yC) {$\WriteM_2$};
      \node[default] (r1_w2_3) at (\rootCx + 3*\xs, \yC) {$\WriteB_1$};
      \node[read]
              (r1_w2_4) at (\rootCx + 4*\xs, \yC) {$\Read_1$};
      \node[default] (r1_w2_5) at (\rootCx + 5*\xs, \yC) {$\WriteM_1$ \sep};
      \node[read]
              (r1_w2_6) at (\rootCx + 6*\xs, \yC) {$\Read_2$};
      \node[virt] (r1_w2_X) at (\rootCx + 7*\xs, \yC) {(B)};

      \node[default] (r1_w2_r2_init0) at (\rootDx + 0*\xs, \yD) {$\mathrm{init}$};
      \node[default] (r1_w2_r2_init1) at (\rootDx + 1*\xs, \yD) {$\WriteB_2$};
      \node[read]
              (r1_w2_r2_init2) at (\rootDx + 2*\xs, \yD) {$\Read_2$};
      \node[default] (r1_w2_r2_init3) at (\rootDx + 3*\xs, \yD) {$\WriteM_2$};
      \node[default] (r1_w2_r2_init4) at (\rootDx + 4*\xs, \yD) {$\WriteB_1$};
      \node[read]
              (r1_w2_r2_init5) at (\rootDx + 5*\xs, \yD) {$\Read_1$};
      \node[default] (r1_w2_r2_init6) at (\rootDx + 6*\xs, \yD) {$\WriteM_1$ \sep};
      \node[virt] (r1_w2_r2_init_X) at (\rootDx + 7*\xs, \yD) {(D)};

      \begin{pgfonlayer}{bg}
        \draw[post] (root3) -- (r1_w2_0) node [midway, left] {$\Write_2$};
        \draw[post] (root6) -- (r2_init0) node [midway, right, xshift=9pt, yshift=3pt] {$\mathrm{init}$};
        \draw[post] (r1_w2_6) -- (r1_w2_r2_init0) node [midway, right, xshift=9pt, yshift=3pt] {$\mathrm{init}$};

        \draw[post, \darkred, dashed] (root0) to[out=45, in=145] (root3);
        \draw[post, \darkred, dashed] (root2) to[out=35, in=155] (root6);
        \draw[post, \darkred, dashed] (r2_init0) to[out=50, in=140] (r2_init2);
        \draw[post, \darkred, dashed] (r2_init0) to[out=50, in=150] (r2_init4);
        \draw[post, \darkred, dashed] (r1_w2_2) to[out=60, in=120] (r1_w2_4);
        \draw[post, \darkred, dashed] (r1_w2_5) to[out=60, in=120] (r1_w2_6);
        \draw[post, \darkred, dashed] (r1_w2_r2_init0) to[out=60, in=120] (r1_w2_r2_init2);
        \draw[post, \darkred, dashed] (r1_w2_r2_init3) to[out=60, in=120] (r1_w2_r2_init5);

        \draw[nd_bg] ($(root0) + (-0.5*\xs, 0.3*\ys)$) rectangle ($(root6) + (0.5*\xs, -0.3*\ys)$);
        \draw[nd_bg] ($(r2_init0) + (-0.5*\xs, 0.3*\ys)$) rectangle ($(r2_init6) + (0.5*\xs, -0.3*\ys)$);
        \draw[nd_bg] ($(r1_w2_0) + (-0.5*\xs, 0.3*\ys)$) rectangle ($(r1_w2_6) + (0.5*\xs, -0.3*\ys)$);
        \draw[nd_bg] ($(r1_w2_r2_init0) + (-0.5*\xs, 0.3*\ys)$) rectangle ($(r1_w2_r2_init6) + (0.5*\xs, -0.3*\ys)$);
      \end{pgfonlayer}

    \end{tikzpicture}
  \end{minipage}
  \begin{minipage}{0.11\textwidth}
  \small
  \vspace{-2mm}
    \begin{align*}
      \text{Th}&\text{read}~\Process_{1}\\
      \hline\\[-1em]
      1.~& \textcolor{\myblue}{\Write_1(x)}\\
      \\[-1.9em]
      2.~& \textcolor{\myred}{\Read_1(y)}
    \end{align*}
  \end{minipage}
  \quad
  \begin{minipage}{0.11\textwidth}
  \small
  \vspace{-2mm}
    \begin{align*}
      \text{Th}&\text{read}~\Process_{2}\\
      \hline\\[-1em]
      1.~& \textcolor{\myred}{\Write_2(y)}\\
      \\[-1.9em]
      2.~& \textcolor{\myblue}{\Read_2(x)}
    \end{align*}
  \end{minipage}
  \caption{
    $\DCTSOPSOM$ (\cref{algo:dctsopsom}). 
    The gray boxes
    represent individual calls to $\DCTSOPSOM$. The sequence of events 
    inside a gray box is the trace $\widetilde{\Trace}$; the part left of the $||$-separator
    is $\Trace$ (before extending), and to the right is $\widehat{\Trace}$
    (the extension). The red dashed arrows represent the reads-from function $\RF{\widetilde{\Trace}}$.
    Each black solid arrow represents a recursive call, where the arrow's
    outgoing tail and label describes the corresponding mutation.
  }
  \label{fig:ExpOpt}
\end{figure}

$\DCTSOPSOM$ is sound, complete and exploration-optimal, and
we formally state this in~\cref{them:dctsopso_maximal}.

\Paragraph{Extension from $\SC$ to $\TSO$ and $\PSO$.}
The fundamental challenge in extending the $\SC$ algorithm
of~\citet{Abdulla19} to $\TSO$ and $\PSO$ is verifying execution consistency
for $\TSO$ and $\PSO$, which we address in \cref{sec:verifyingtsopso}
(\cref{line:dctsopsom_lin}~of~\cref{algo:dctsopsom} calls our algorithms
$\AlgoTSO$ and $\AlgoPSO$).
The main remaining challenge is then to ensure
that the exploration optimality is preserved. To that end,
we have to exclude certain events (in particular, memory-write events)
from subsequences and event subsets that guide the exploration
of \cref{algo:dctsopsom}. Specifically, the sequences
$\Seq$, $\Seq'$, and $\widetilde{\Seq}$ invariantly contain only
the thread events, which is ensured in
\cref{line:dctsopsom_extendsequence},
\cref{line:dctsopsom_causes} and \cref{line:dctsopsom_seqp},
and then in \cref{line:dctsopsom_eventsp} the absent memory-writes are
reintroduced. No such distinction is required under $\SC$.

\begin{remark}[Handling locks and atomic primitives]\label{rmk:lockscasrmw}
{\normalfont
For clarity of presentation, so far we have neglected locks in our
model. However, lock events can be naturally handled by our approach
as follows. We consider each lock-release event $\Release$
as an atomic write event (i.e., its effects are not deferred by a buffer
but instead are instantly visible to each thread). Then, each
lock-acquire event $\Aquire$ is considered as a read event that accesses
the unique memory location.

In SMC, we enumerate the reads-from functions that also consider locks,
thus having constraints of the form $\Observation(\Aquire) = \Release$.
This treatment totally orders the critical sections of each lock,
which naturally solves all reads-from constraints of locks,
and further ensures that no thread acquires
an already acquired (and so-far unreleased) lock.
Therefore $\AlgoTSO$/$\AlgoPSO$ need not take additional care
for locks.
The approach to handle locks by~\citet{Abdulla19} directly carries over 
to our exploration algorithm $\DCTSOPSOM$.


The atomic operations read-modify-write (RMW) and compare-and-swap
(CAS) are modeled as in \cref{subsec:verifying_casrmw}, except for the
fact that the atomic blocks are not necessary for SMC.
Then $\DCTSOPSOM$ can handle programs with such
operations as described by~\citet{Abdulla19}.
In particular, the modification of $\DCTSOPSOM$ (\cref{algo:dctsopsom})
to handle RMW and CAS operations is as follows.

Consider an iteration of the loop in
\cref{line:dctsopsom_forreadstomutate} where $\Read$ is the read-part
of either a RMW or a successful CAS,
denoted $\Event$, and let
$(\WriteB'', \WriteM'') = \RF{\widetilde{\Trace}}(\Read)$.
Then, in \cref{line:dctsopsom_getmut}
we additionally consider as an extra mutation
each atomic instruction $\Event'$ satisfying:
%
\begin{enumerate}[noitemsep,topsep=0pt,partopsep=0px]
\item The read-part $\Read'$ of $\Event'$
reads-from the write-part $(\WriteB, \WriteM)$ of $\Event$ 
(i.e.,
$\RF{\widetilde{\Trace}}(\Read')  = (\WriteB, \WriteM)$), and
\item $\Event'$ is either a RMW, or it will be a successful
CAS when it reads-from $(\WriteB'', \WriteM'')$. In this case,
let $(\WriteB', \WriteM')$ denote the write-part of $\Event'$.
\end{enumerate}
When considering the above mutation in \cref{line:dctsopsom_mutwrite},
we set
$\Observation'(\Read') = (\WriteB'', \WriteM'')$
and $\Observation'(\Read) = (\WriteB', \WriteM')$
in \cref{line:dctsopsom_rfp}, which intuitively
aims to ``reverse'' $\Event$ and $\Event'$ in the trace.
}
\end{remark}



\section{EXPERIMENTS}\label{sec:exp}

In this section we report on an experimental evaluation
of the consistency verification algorithms
$\AlgoTSO$ and $\AlgoPSO$, as well as the reads-from SMC
algorithm $\DCTSOPSOM$.
We have implemented our algorithms as an extension in Nidhugg~\cite{Abdulla2015},
a state-of-the-art stateless model checker for multithreaded
C/C++ programs with {\sf pthreads} library,
operating on LLVM IR.

\Paragraph{Benchmarks.}
For our experimental evaluation of both the consistency verification
and SMC, we consider 109 benchmarks coming
from four different categories, namely:
(i)~SV-COMP benchmarks,
(ii)~benchmarks from related papers and works~\cite{Abdulla19,Abdulla2015,Huang16,Chatterjee19},
(iii)~mutual-exclusion algorithms, and
(iv)~dynamic-programming benchmarks of~\citet{Chatterjee19}.
Although the consistency and SMC algorithms can be extended to support
atomic compare-and-swap and read-modify-write primitives
(cf. \cref{rmk:lockscasrmw}),
our current implementation does not support these primitives.
Therefore, we used all benchmarks without such
primitives that we could obtain (e.g., we include every benchmark
of the relevant $\SC$ reads-from work~\cite{Abdulla19} except the one
benchmark with compare-and-swap).
Each benchmark comes with a scaling parameter, called the
\emph{unroll} bound, which controls the bound
on the number of iterations in all loops of the benchmark
(and in some cases it further controls the number of threads).


\subsection{Experiments on Execution Verification for TSO and PSO}\label{subsec:exp_verify}

In this section we perform an experimental evaluation of our execution verification
algorithms $\AlgoTSO$ and $\AlgoPSO$. For the purpose of comparison,
we have also implemented within Nidhugg the naive lower-set enumeration algorithm
of~\citet{Abdulla19,BiswasE19}, extended to $\TSO$ and $\PSO$.
Intuitively, this approach enumerates all lower sets of the program
order restricted to the input event set, which yields a better
complexity bound than enumerating write-coherence orders
(even with just one location).
The extensions to $\TSO$ and $\PSO$ are called
$\NaiveAlgoTSO$ and $\NaiveAlgoPSO$, respectively, and their worst-case
complexity is $n^{2\cdot k}$ and $n^{k\cdot (\NumVariables+1)}$,
respectively (as discussed in~\cref{sec:summary}).
Further, for each of the above verification algorithms, we consider
two variants, namely,
with and without the closure heuristic of~\cref{subsec:heuristics}.

\Paragraph{Setup.}
We evaluate the verification algorithms on execution consistency
instances induced during SMC of the benchmarks. For $\TSO$ we have
collected 9400 instances, 1600 of which are not realizable.
For $\PSO$ we have collected
9250 instances, 1400 of which are not realizable.
The collection process is described in detail in~\cref{subsec:app_verifysetup}.
For each instance, we run the verification algorithms subject to
a timeout of one minute, and we report the average time achieved
over 5 runs.

Below we present the results using logarithmically scaled 
plots, where the opaque and semi-transparent red lines represent
identity and an order-of-magnitude difference, respectively.

\Paragraph{Results -- algorithms with closure.}
Here we evaluate the verification algorithms that execute the
closure as the preceding step. The plots in
\cref{fig:withcl_all} present the results for
$\TSO$ and $\PSO$.

\begin{figure}[h]
\centering
\begin{subfigure}[b]{0.44\textwidth}
\centering
\includegraphics[height=5.4cm]{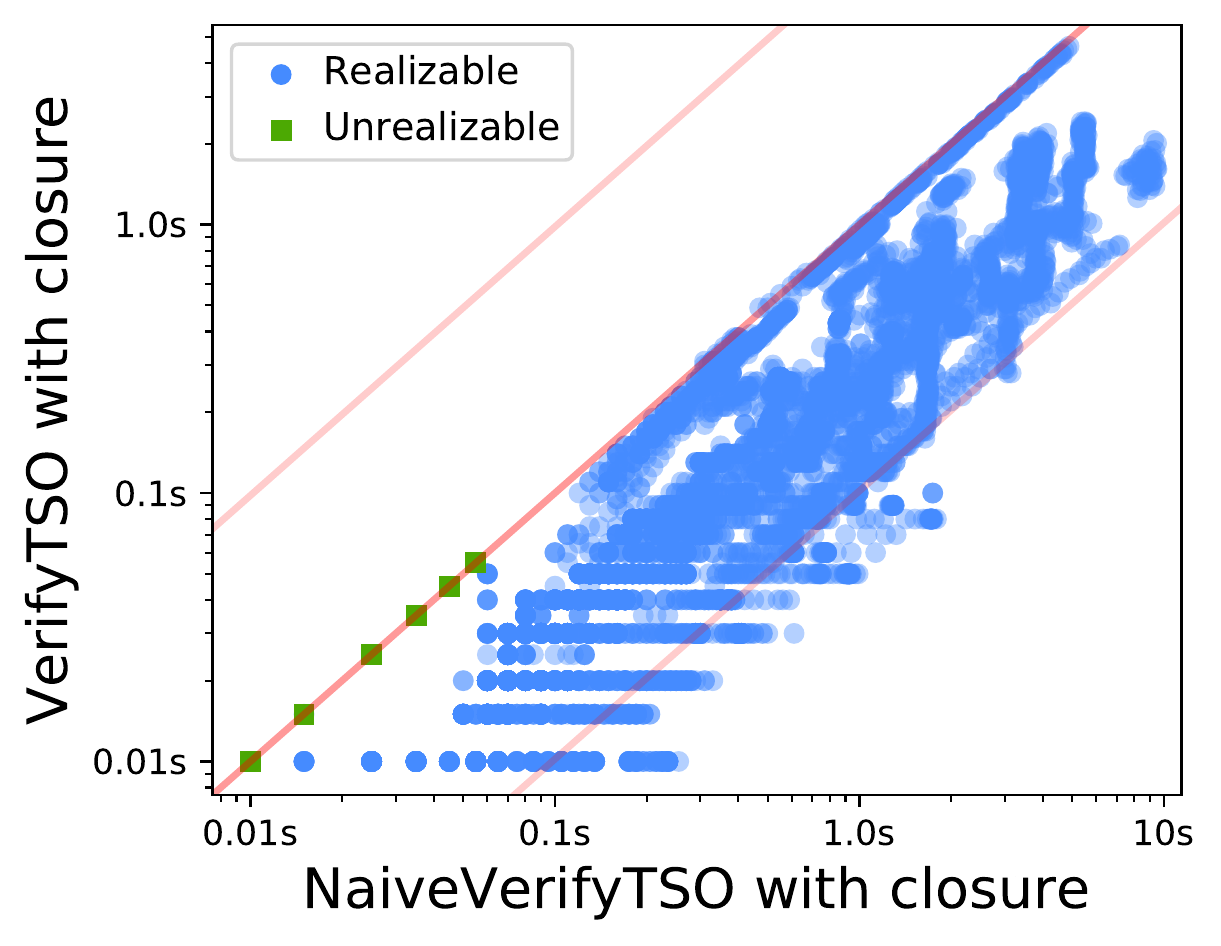}
\label{subfig:tso_withcl_all}
\end{subfigure}
\qquad
\begin{subfigure}[b]{0.44\textwidth}
\centering
\includegraphics[height=5.4cm]{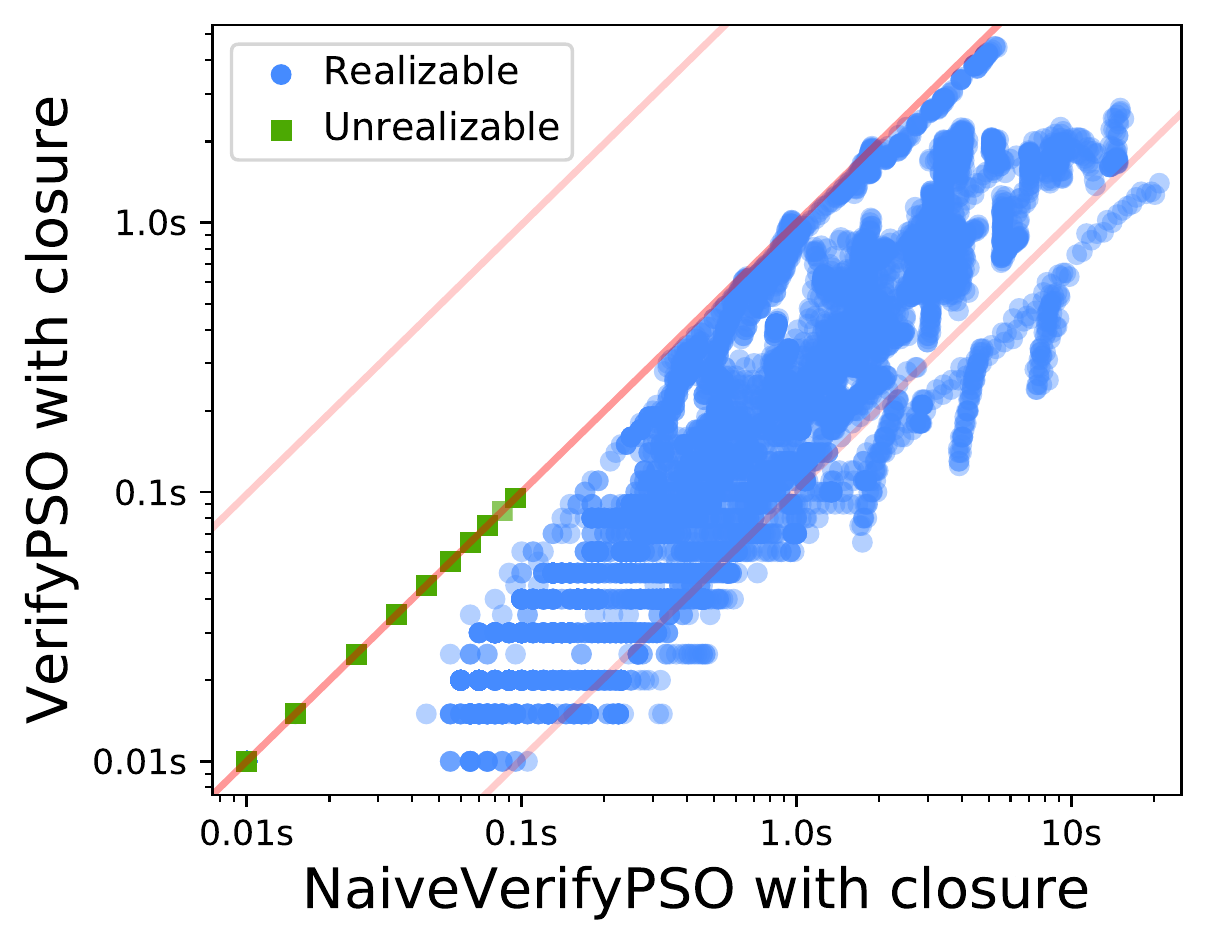}
\label{subfig:pso_withcl_all}
\end{subfigure}
\qquad
\vspace{-5mm}
\caption{
Consistency verification comparison on $\TSO$ (left) and $\PSO$ (right) when using closure.
}
\label{fig:withcl_all}
\end{figure}

In $\TSO$, our algorithm $\AlgoTSO$ is similar to or faster
than $\NaiveAlgoTSO$ on the realizable instances (blue dots),
and the improvement is mostly
within an order of magnitude. All unrealizable instances (green dots)
were detected as such by closure, and hence the closure-using
$\AlgoTSO$ and $\NaiveAlgoTSO$ coincide on these instances.

We make similar observations in $\PSO$, where $\AlgoPSO$ is similar or
superior to $\NaiveAlgoPSO$ for the realizable instances, and the
algorithms are indentical on the unrealizable instances, since these are
all detected as unrealizable by closure.

\begin{figure}[h]
\centering
\begin{subfigure}[b]{0.44\textwidth}
\centering
\includegraphics[height=5.4cm]{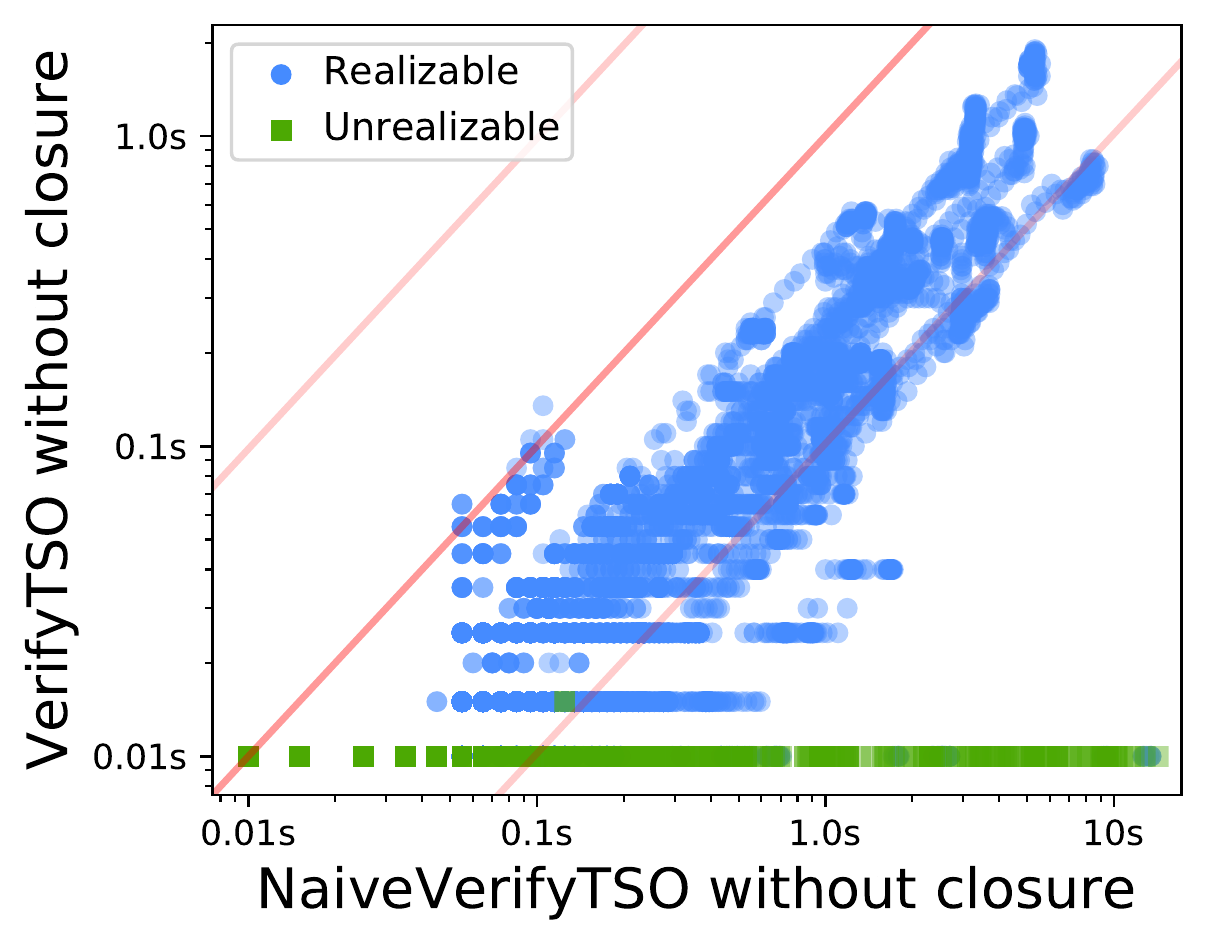}
\label{subfig:tso_nocl_all}
\end{subfigure}
\qquad
\begin{subfigure}[b]{0.44\textwidth}
\centering
\includegraphics[height=5.4cm]{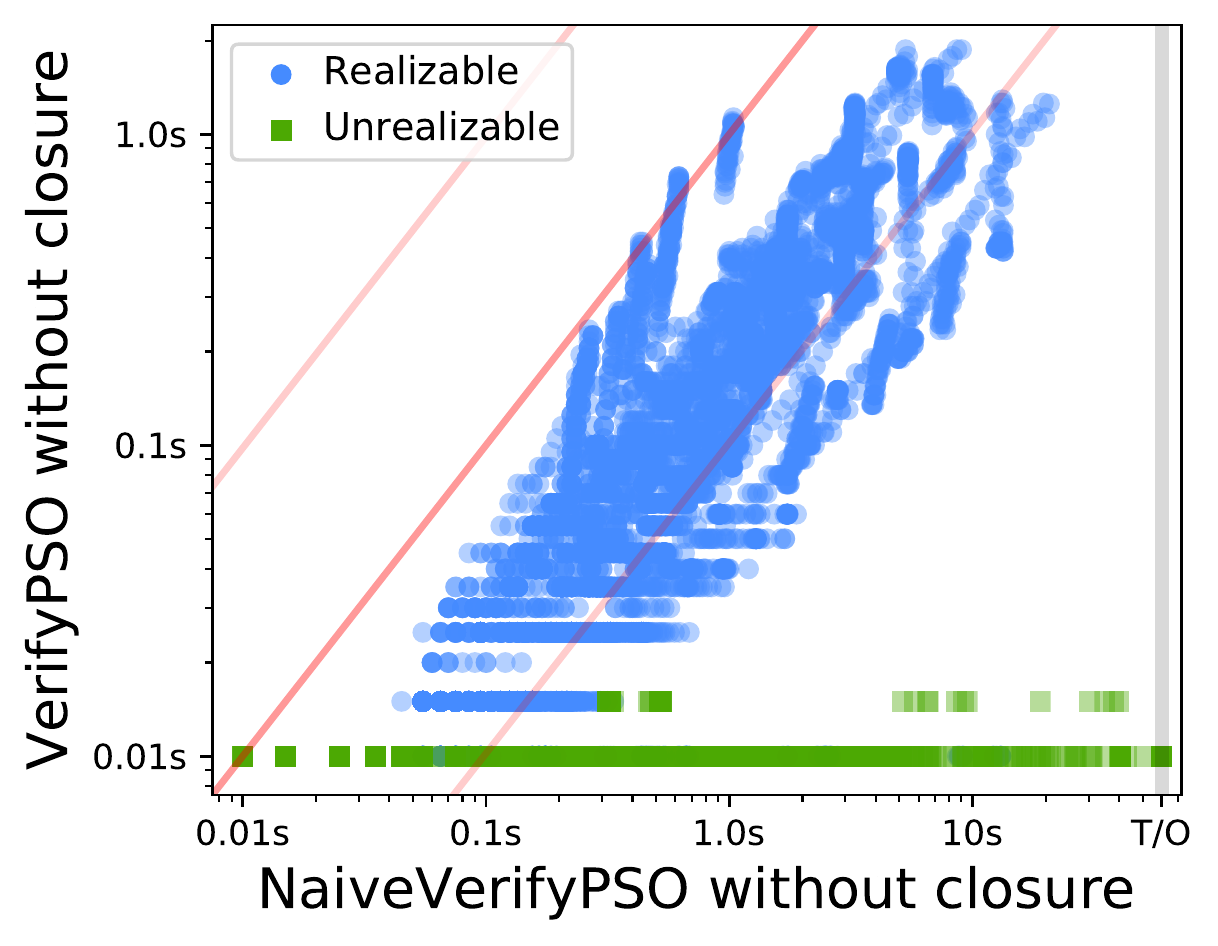}
\label{subfig:pso_nocl_all}
\end{subfigure}
\qquad
\vspace{-5mm}
\caption{
Consistency verification comparison on $\TSO$ (left) and $\PSO$ (right) without the closure.
}
\label{fig:nocl_all}
\end{figure}

\Paragraph{Results -- algorithms without closure.}
Here we evaluate the verification algorithms without the closure.
The plots in \cref{fig:nocl_all} present the results for
$\TSO$ and $\PSO$.

In $\TSO$, the algorithm $\AlgoTSO$ outperforms $\NaiveAlgoTSO$
on most of the realizable instances (blue dots). Further,
$\AlgoTSO$ significantly outperforms $\NaiveAlgoTSO$
on the unrealizable instances (green dots).
This is because without closure, a verification algorithm can declare
an instance unrealizable only after an exhaustive exploration of its
respective lower-set space. $\AlgoTSO$ explores a significantly smaller
space compared to $\NaiveAlgoTSO$, as outlined in~\cref{sec:summary}.

Similar observations as above hold in $\PSO$ for
the algorithms $\AlgoPSO$ and $\NaiveAlgoPSO$ without closure,
both for the realizable and the unrealizable instances.

\Paragraph{Results -- effect of closure.}
Here we comment on the effect of closure for the verification algorithms,
in~\cref{subsec:app_verifyfurther} we present the detailed analysis.
Recall that closure constructs a partial order that each witness has to
satisfy, and declares an instance unrealizable when it detects that the
partial order cannot be constructed for this instance
(we refer to~\cref{subsec:heuristics} for details).

For each verification algorithm, its version without closure is
faster on most instances that are realizable (i.e., a witness exists).
This means that the overhead of computing the closure 
typically outweighs the consecutive benefit of the verification being guided by the partial
order.

On the other hand, for each verification algorithm, its version with
closure is significantly faster on the unrealizable instances (i.e.,
no witness exists). This is because a verification algorithm has to
enumerate all its lower sets before declaring an instance unrealizable,
and this is much slower than the polynomial closure computation.

\Paragraph{Results -- verification with atomic operations.}
Here we present additional experiments to evaluate $\TSO$ verification
algorithms $\AlgoTSO$ and $\NaiveAlgoTSO$ on executions containing
atomic operations read-modify-write (RMW) and compare-and-swap (CAS).
To that end, we consider 1088 verification instances
(779 realizable and 309 not realizable) that arise during stateless
model checking of benchmarks containing RMW and CAS, namely:
\begin{itemize}[noitemsep,topsep=0pt,partopsep=0px]
\item synthetic benchmarks
$\sf{casrot}$~\cite{Abdulla19} and
$\sf{cinc}$~\cite{Kokologiannakis19},
\item data structure benchmarks
$\sf{barrier}$, $\sf{chase}$-$\sf{lev}$,
$\sf{ms}$-$\sf{queue}$ and $\sf{linuxrwlocks}$
\cite{NorrisD13,Kokologiannakis19}, and
\item Linux kernel benchmarks
$\sf{mcs\_spinlock}$ and $\sf{qspinlock}$
\cite{Kokologiannakis19}.
\end{itemize}

\begin{figure}[h]
\centering
\begin{subfigure}[b]{0.44\textwidth}
\centering
\includegraphics[height=5.4cm]{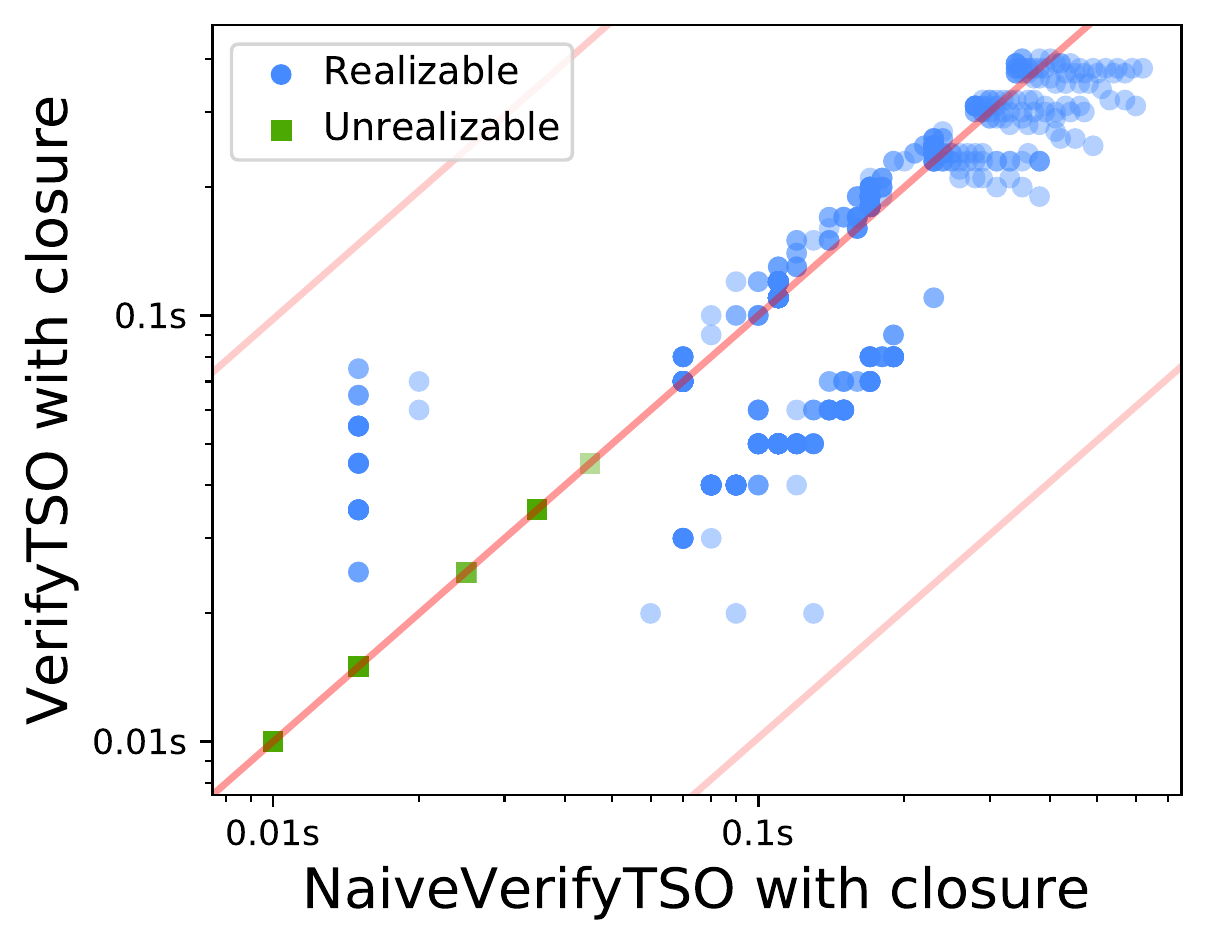}
\label{subfig:tso_cas_withcl}
\end{subfigure}
\qquad
\begin{subfigure}[b]{0.44\textwidth}
\centering
\includegraphics[height=5.4cm]{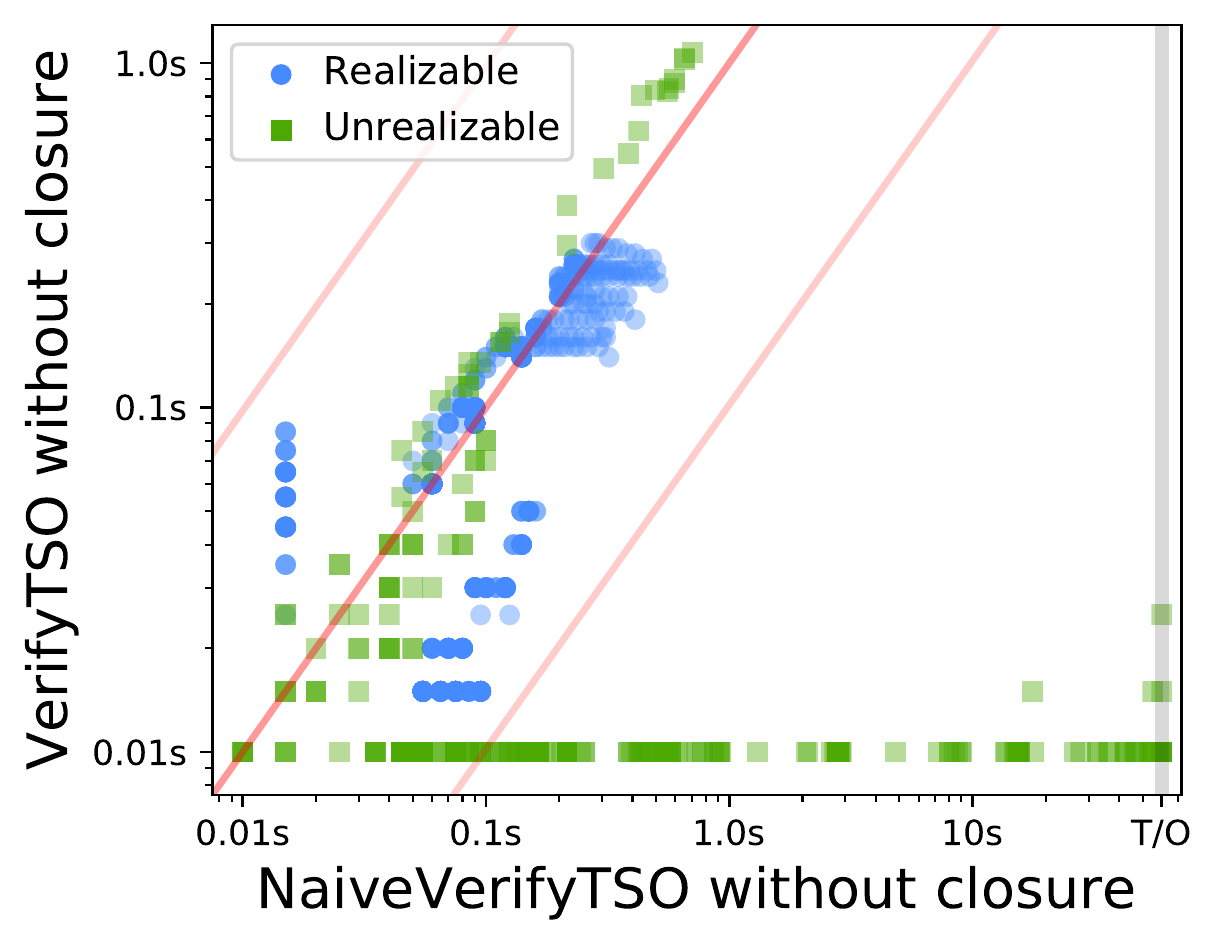}
\label{subfig:tso_cas_nocl}
\end{subfigure}
\qquad
\vspace{-5mm}
\caption{
Consistency verification comparison of $\AlgoTSO$ and $\NaiveAlgoTSO$
with closure (left) and without closure (right)
on verification instances that contain RMW and CAS instructions.
}
\label{fig:tso_cas_us_them}
\end{figure}

The results are presented in \cref{fig:tso_cas_us_them}.
The left plot depicts the results for $\AlgoTSO$ and $\NaiveAlgoTSO$ when
closure is used as a preceding step. Here the results are all within
an order-of-magnitude difference, and they are identical
for unrealizable instances, since all of them were detected as
unrealizable already by the closure.
The right plot depicts the results for $\AlgoTSO$ and $\NaiveAlgoTSO$
without using the closure. Here the difference for realizable instances
is also within an order of magnitude, but for some unrealizable
instances the algorithm $\AlgoTSO$ is significantly faster.
Generally, the observed improvement of our $\AlgoTSO$ as compared to
$\NaiveAlgoTSO$ is somewhat smaller in \cref{fig:tso_cas_us_them},
which could be due to the fact that executions with RMW and CAS
instructions typically have fewer concurrent writes
(indeed, in an execution where each write event is a part of a RMW/CAS
instruction, each conflicting pair of writes is inherently
ordered by the reads-from orderings together with $\TO$).
Finally, in~\cref{subsec:app_verifyfurther} the effect of closure
is evaluated for both verification algorithms $\AlgoTSO$ and
$\NaiveAlgoTSO$ on instances with RMW and CAS.

\subsection{Experiments on SMC for TSO and PSO}\label{subsec:exp_dpor}


In this section we focus on assessing the advantages of utilizing the reads-from 
equivalence for SMC in $\TSO$ and $\PSO$.
We have used $\DCTSOPSOM$ for stateless model checking
of 109 benchmarks
under each memory model $\MemoryModel\in\{\SC, \TSO, \PSO \}$,
where $\SC$ is handled in our implementation as $\TSO$ with
a fence after each thread event. 
\cref{subsec:app_expdetails} provides further details on our SMC setup. 

\Paragraph{Comparison.}
As a baseline for comparison,
we have also executed $\Source$-$\DPOR$~\cite{Abdulla14},
which is implemented in Nidhugg and explores the trace space using
the partitioning 
based on the Shasha--Snir equivalence.
In $\SC$, we have further executed $\ReadsFrom$, the
Nidhugg implementation
of the reads-from SMC algorithm for $\SC$ by~\citet{Abdulla19},
and the full comparison that includes $\ReadsFrom$ for $\SC$
is in~\cref{subsec:app_expfull}.
Both $\ReadsFrom$ and $\Source$ are well-optimized, and recently started using advanced data-structures for SMC~\cite{LangS20}.
The works of~\citet{Kokologiannakis19,Kokologiannakis20} provide a general interface for reads-from SMC
in relaxed memory models. However, they handle a given memory model assuming
that an auxiliary consistency verification algorithm for that memory
model is provided.
No such consistency algorithm for $\TSO$ or $\PSO$ is
presented by~\citet{Kokologiannakis19,Kokologiannakis20}, and, to our
knowledge, the tool implementations of~\citet{Kokologiannakis19,Kokologiannakis20}
also lack a consistency algorithm for both $\TSO$ and $\PSO$.
Thus these tools are not included in the evaluation.\footnote{
Another related work is MCR~\cite{Huang16},
however, the corresponding tool operates on Java programs and uses heavyweight SMT solvers that
require fine tuning, and thus is beyond the experimental scope of this work.}


\Paragraph{Evaluation objective.}
Our objective for the SMC evaluation is three-fold. 
First, we want to 
quantify how each memory model
$\MemoryModel\in\{\SC, \TSO, \PSO \}$ impacts the size of the RF partitioning.
Second, we are interested to see whether, as compared to the
baseline Shasha--Snir equivalence, the RF equivalence 
leads to coarser partitionings for $\TSO$ and $\PSO$,
as it does for $\SC$~\cite{Abdulla19}. 
Finally, we want to determine whether a coarser RF partitioning
leads to faster exploration.
\cref{them:dctsopso_maximal} 
states that $\DCTSOPSOM$ 
spends polynomial time per
partitioning class, 
and we aim to see whether
this is a small polynomial in practice.

\begin{figure}[h]
\raggedright
\begin{subfigure}[b]{0.443\textwidth}
\raggedright
\includegraphics[height=5.38cm]{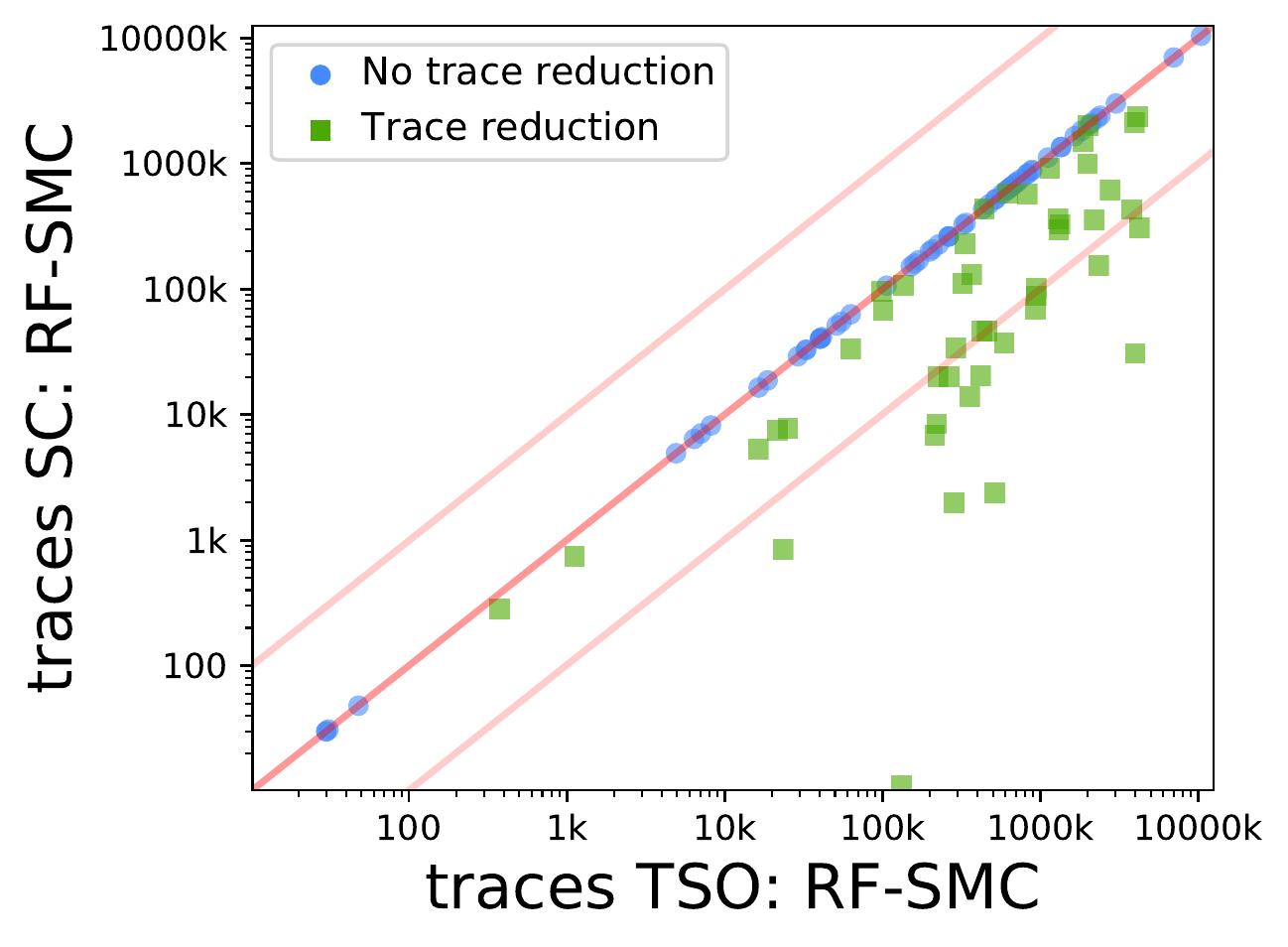}
\label{subfig:p_sctso_tr}
\end{subfigure}
\qquad
\begin{subfigure}[b]{0.44\textwidth}
\raggedright
\includegraphics[height=5.38cm]{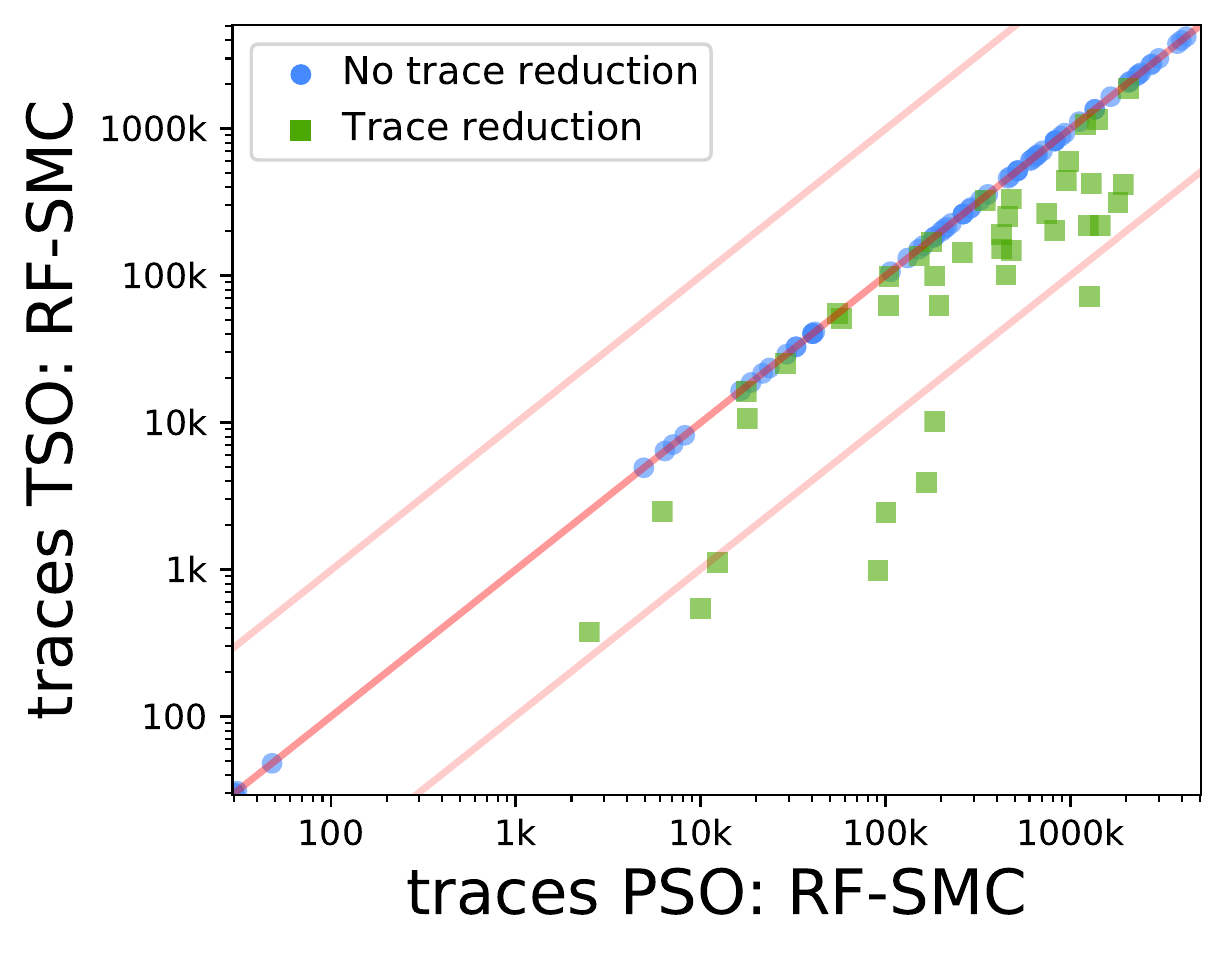}
\label{subfig:p_tsopso_tr}
\end{subfigure}
\qquad
\vspace{-5mm}
\caption{
Traces comparison as $\DCTSOPSOM$ moves from $\SC$ to $\TSO$ (left) and from $\TSO$ to $\PSO$ (right).
}
\label{fig:models_traces}
\vspace{2mm}
\end{figure}

\Paragraph{Results.}
We illustrate the obtained results with several scatter plots.
Each plot compares two algorithms executing under specified memory
models.
Then for each benchmark,
we consider the highest attempted unroll bound
where both the compared algorithms
finish
before the one-hour timeout.
Green dots indicate that a trace
reduction was achieved on the underlying benchmark by
the algorithm on the y-axis as compared to the algorithm on the x-axis.
Benchmarks with no trace reduction are represented by the blue dots.
All scatter plots are in log scale, the opaque and semi-transparent
red lines represent identity and an order-of-magnitude difference,
respectively.

The plots in \cref{fig:models_traces} illustrate how the size of the
RF partitioning explored by $\DCTSOPSOM$ changes
as we move to more relaxed memory models
($\SC$ to $\TSO$ to $\PSO$).
The plots in \cref{fig:exo_source_tsopso_traces} capture how the
size of the RF partitioning explored by $\DCTSOPSOM$ relates to
the size of the Shasha--Snir partitioning explored by $\Source$.
Finally, the plots in \cref{fig:exo_source_tsopso_times} demonstrate
the time comparison of $\DCTSOPSOM$ and $\Source$ when there is
some (green dots) or no (blue dots) RF-induced trace reduction.

Below we discuss the observations on the obtained results.
\cref{tab:smc_selected} captures detailed results on several
benchmarks that we refer to as examples in the discussion.

\begin{figure}[h]
\vspace{3mm}
\raggedright
\begin{subfigure}[b]{0.443\textwidth}
\raggedright
\includegraphics[height=5.38cm]{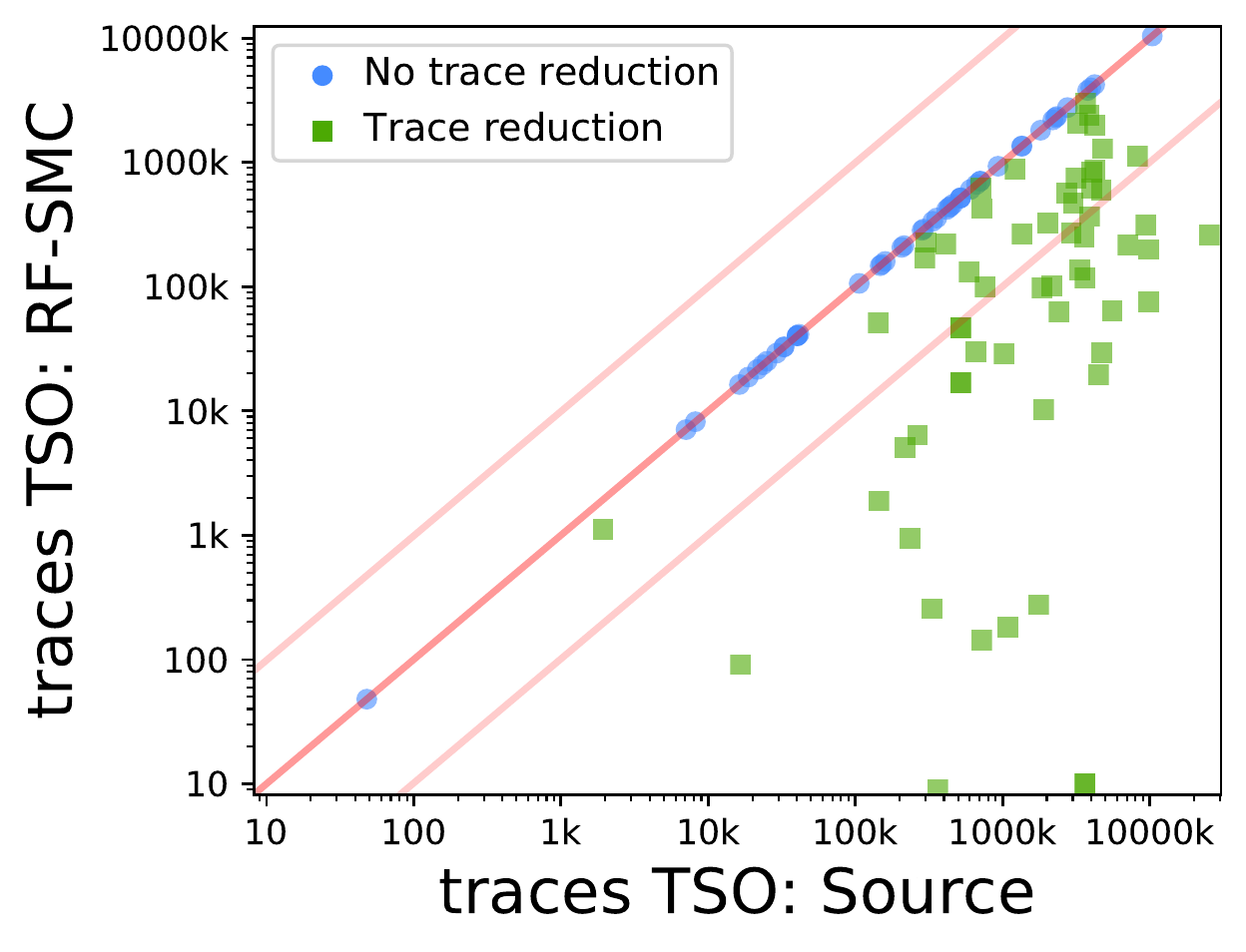}
\label{subfig:p_tso_exo_source_tr}
\end{subfigure}
\qquad
\begin{subfigure}[b]{0.44\textwidth}
\raggedright
\includegraphics[height=5.38cm]{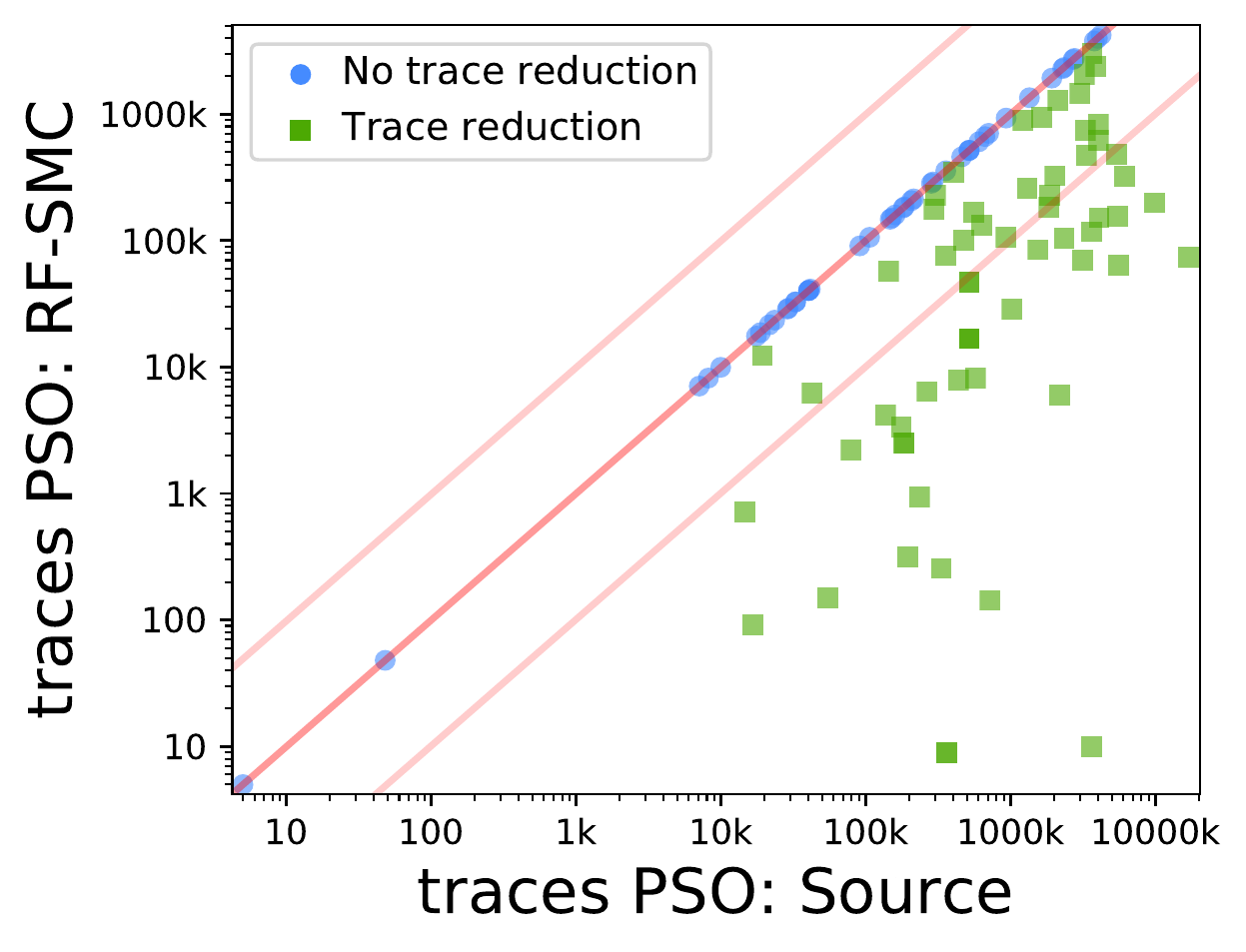}
\label{subfig:p_pso_exo_source_tr}
\end{subfigure}
\qquad
\vspace{-5mm}
\caption{
Traces comparison for $\DCTSOPSOM$ and $\Source$ on the $\TSO$ (left) and $\PSO$ (right) memory model.
}
\label{fig:exo_source_tsopso_traces}
\end{figure}

\begin{figure}[h]
\raggedright
\begin{subfigure}[b]{0.44\textwidth}
\raggedright
\includegraphics[height=5.4cm]{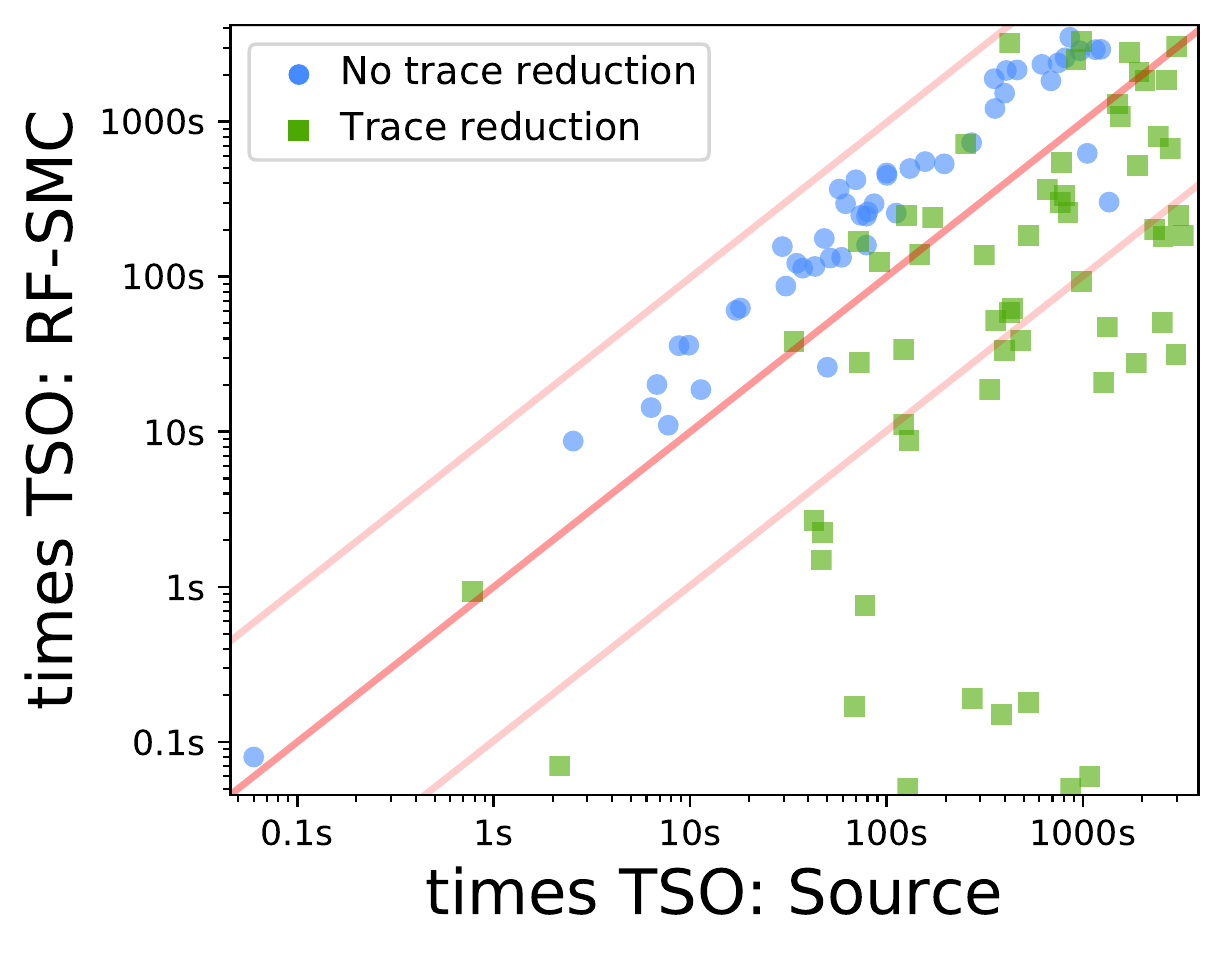}
\label{subfig:p_tso_exo_source_ti}
\end{subfigure}
\qquad
\begin{subfigure}[b]{0.44\textwidth}
\raggedright
\includegraphics[height=5.4cm]{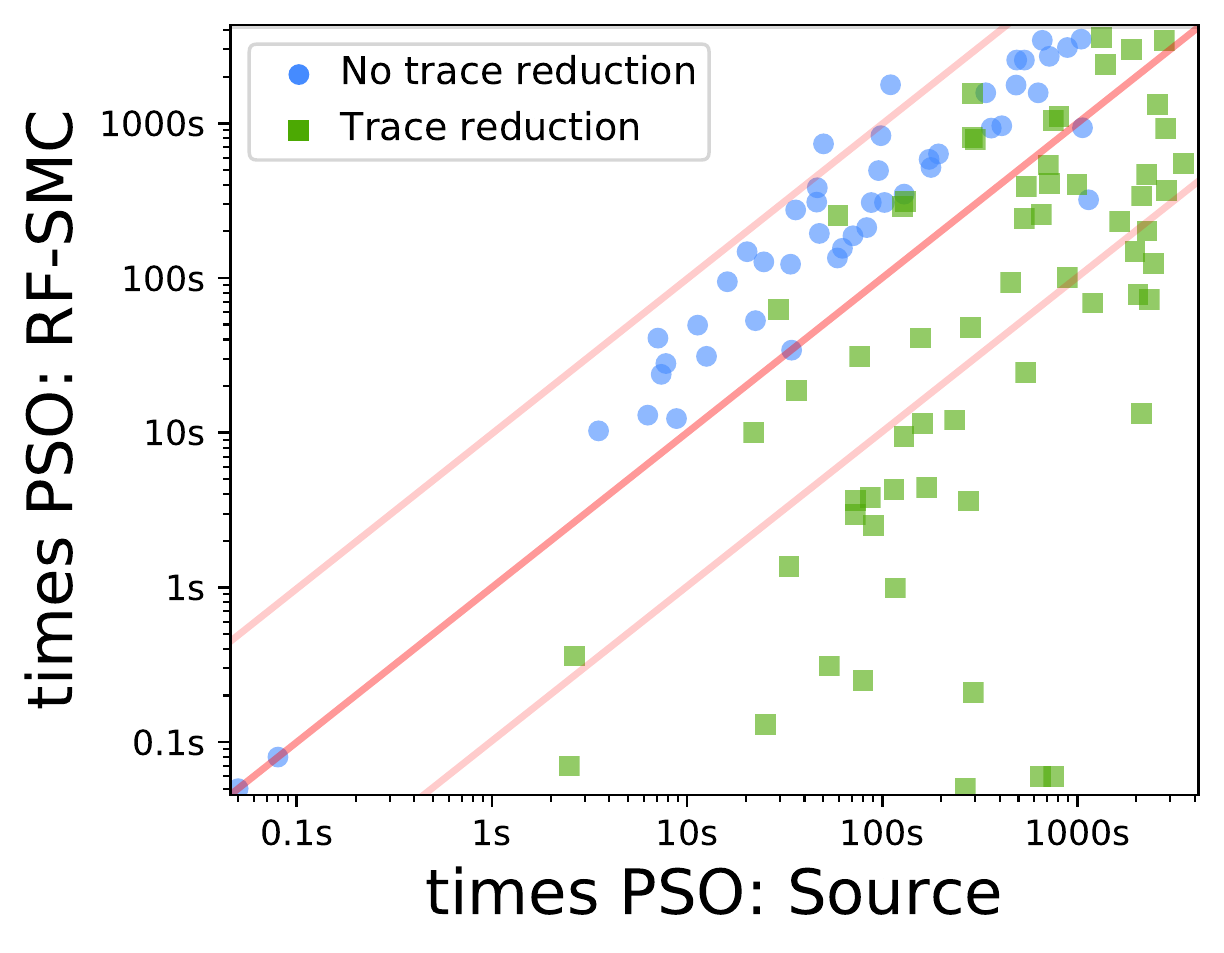}
\label{subfig:p_pso_exo_source_ti}
\end{subfigure}
\qquad
\vspace{-5mm}
\caption{
Times comparison for $\DCTSOPSOM$ and $\Source$ on the $\TSO$ (left) and $\PSO$ (right) memory model.
}
\label{fig:exo_source_tsopso_times}
\end{figure}

\Paragraph{Discussion.}
We notice that the analysed programs can often exhibit additional
behavior in relaxed memory settings. This causes an increase
in the size of the partitionings explored by SMC algorithms
(see {\tt 27\_Boop4} in \cref{tab:smc_selected} as an example).
\cref{fig:models_traces} illustrates the overall phenomenon for $\DCTSOPSOM$,
where the increase of the RF partitioning size (and hence the number
of traces explored) is sometimes beyond an order of magnitude
when moving from $\SC$ to $\TSO$, or from $\TSO$ to $\PSO$.

We observe that across all memory models,
the reads-from equivalence can offer significant reduction
in the trace partitioning as compared to Shasha--Snir equivalence.
This leads to fewer traces that need to be explored, see the
plots of \cref{fig:exo_source_tsopso_traces}.
As we move
towards more relaxed memory ($\SC$ to $\TSO$ to $\PSO$), the
reduction of RF partitioning often becomes more prominent
(see {\tt 27\_Boop4} in \cref{tab:smc_selected}). Interestingly,
in some cases
the size of the Shasha--Snir partitioning explored by $\Source$ increases
as we move to more relaxed settings, while the RF partitioning remains
unchanged (cf.~{\tt fillarray\_false} in \cref{tab:smc_selected}). 
All these observations signify advantages of RF for analysis of
the more complex program behavior that arises due to relaxed memory.

We now discuss how trace partitioning coarseness
affects execution time, observing the plots of
\cref{fig:exo_source_tsopso_times}.
We see that in cases where RF
partitioning is coarser (green dots), our RF algorithm
$\DCTSOPSOM$ often becomes significantly faster than the
Shasha--Snir-based $\Source$, allowing us to analyse programs scaled
several levels further (see {\tt eratosthenes} in \cref{tab:smc_selected}).
In cases where the sizes of the RF partitioning and the Shasha--Snir
partitioning coincide (blue dots), the well-engineered $\Source$ outperforms
our $\DCTSOPSOM$ implementation.
The time differences in these
cases are reasonably moderate, suggesting that the
polynomial overhead incurred to operate on the RF partitioning
is small in practice.

\cref{subsec:app_expfull} contains the complete results on
all 109 benchmarks, as well as further scatter plots, illustrating
(i) comparison of $\DCTSOPSOM$ with $\Source$ and $\ReadsFrom$ in $\SC$,
(ii) time comparison of $\DCTSOPSOM$ across memory models, and
(iii) the effect of using closure in the constency checking during SMC.

\begin{table}[t]
\footnotesize
\newcolumntype{?}{!{\vrule width 1.5pt}}
\setlength{\extrarowheight}{.08em}
\begin{tabular}{? l | c | c ? r r ? r r ? r r ? }
\specialrule{.15em}{0em}{0em}
\multicolumn{2}{?c|}{\multirow{2}{*}{\textbf{Benchmark}}} & \multirow{2}{*}{U} & \multicolumn{2}{c?}{\textbf{Seq. Consistency}} & \multicolumn{2}{c?}{\textbf{Total Store Order}} & \multicolumn{2}{c?}{\textbf{Partial Store Order}}\\
\cline{4-9}
\multicolumn{2}{?c|}{} & & $\DCTSOPSOM$ & $\Source$ & $\DCTSOPSOM$ & $\Source$ & $\DCTSOPSOM$ & $\Source$\\
\specialrule{.1em}{0em}{0em}
\multirow{4}{*}{\begin{tabular}{l}
\textbf{27\_Boop4}\\
threads: 4
\end{tabular}}
&
\multirow{2}{*}{Traces}
                                                                                                      &                                                1 &                                    \textbf{2902} &                                            21948 &                                    \textbf{3682} &                                            36588 &                                    \textbf{8233} &                                           572436 \\
                                                   &                                                  &                                                4 &                                  \textbf{197260} &                                          3873348 &                                  \textbf{313336} &                                          9412428 &                                 \textbf{1807408} &                                                - \\
\cline{2-9}
&
\multirow{2}{*}{Times}
                                                                                                      &                                                1 &                                   \textbf{1.22s} &                                            1.74s &                                   \textbf{1.46s} &                                            6.18s &                                   \textbf{4.40s} &                                             169s \\
                                                   &                                                  &                                                4 &                                    \textbf{124s} &                                             550s &                                    \textbf{182s} &                                            2556s &                                   \textbf{1593s} &                                                - \\
\specialrule{.1em}{0em}{0em}
\multirow{4}{*}{\begin{tabular}{l}
\textbf{eratosthenes}\\
threads: 2
\end{tabular}}
&
\multirow{2}{*}{Traces}
                                                                                                     &                                               17 &                                    \textbf{4667} &                                           100664 &                                   \textbf{29217} &                                          4719488 &                                  \textbf{253125} &                                                - \\
                                                   &                                                  &                                               21 &                                   \textbf{19991} &                                          1527736 &                                  \textbf{223929} &                                                - &                                                - &                                                - \\
\cline{2-9}
&
\multirow{2}{*}{Times}
                                                                                                     &                                               17 &                                   \textbf{6.70s} &                                              46s &                                     \textbf{32s} &                                            2978s &                                    \textbf{475s} &                                                - \\
                                                   &                                                  &                                               21 &                                     \textbf{41s} &                                             736s &                                    \textbf{342s} &                                                - &                                                - &                                                - \\
\specialrule{.1em}{0em}{0em}
\multirow{4}{*}{\begin{tabular}{l}
\textbf{fillarray\_false}\\
threads: 2
\end{tabular}}
&
\multirow{2}{*}{Traces}
                                                                                                     &                                                3 &                                   \textbf{14625} &                                            47892 &                                   \textbf{14625} &                                            59404 &                                   \textbf{14625} &                                            63088 \\
                                                   &                                                  &                                                4 &                                  \textbf{471821} &                                          2278732 &                                  \textbf{471821} &                                          3023380 &                                  \textbf{471821} &                                          3329934 \\
\cline{2-9}
&
\multirow{2}{*}{Times}
                                                                                                     &                                                3 &                                              12s &                                   \textbf{6.18s} &                                     \textbf{12s} &                                     \textbf{12s} &                                     \textbf{18s} &                                              39s \\
                                                   &                                                  &                                                4 &                                             553s &                                    \textbf{331s} &                                    \textbf{547s} &                                             778s &                                    \textbf{930s} &                                            2844s \\
\specialrule{.1em}{0em}{0em}
\end{tabular}
\caption{
SMC results on several benchmarks. \textbf{U} denotes the unroll bound.
The timeout of one hour is indicated by ``-''.
Bold-font entries indicate the smallest numbers for the respective memory model.
}
\label{tab:smc_selected}
\end{table}

\section{CONCLUSIONS}\label{sec:conclusions}

In this work we have solved the consistency verification problem under a reads-from map for the $\TSO$ and $\PSO$ relaxed memory models.
Our algorithms scale as $O(k\cdot n^{k+1})$ for $\TSO$, and as \mbox{$O(k\cdot n^{k+1}\cdot \min(n^{k\cdot (k-1)}, 2^{k\cdot \NumVariables}))$} for $\PSO$, for $n$ events, $k$ threads and $\NumVariables$ variables.
Thus, they both become polynomial-time for a bounded number of threads, similar to the case for $\SC$ that was established recently~\cite{Abdulla19,BiswasE19}.
In practice, our algorithms perform much better than the standard baseline methods, offering significant scalability improvements.
Encouraged by these scalability improvements, we have used these algorithms to develop, for the first time, SMC under $\TSO$ and $\PSO$ using the reads-from equivalence, as opposed to the standard Shasha--Snir equivalence.
Our experiments show that the underlying reads-from partitioning is often much coarser than the Shasha--Snir partitioning, which yields a significant speedup in the model checking task.

We remark that our consistency-verification algorithms have direct applications beyond SMC.
In particular, most predictive dynamic analyses solve a consistency-verification problem in order to infer whether an erroneous execution can be generated by a concurrent system (see, e.g., \citet{Smaragdakis12,Kini17,Mathur20}).
Hence, the results of this work allow to extend predictive analyses to $\TSO/\PSO$ in a scalable way that does not sacrifice precision.
We will pursue this direction in our future work.


\renewcommand{\refname}{REFERENCES}
\bibliography{bibliography}

\newpage
\appendix
\section{Details of {\cref{sec:verifyingtsopso}}}\label{sec:app_verifyingtsopso}

Here we proceed with proofs of our theorems and lemmas
regarding algorithms $\AlgoTSO$ and $\AlgoPSO$.
Then, we describe an extension of $\AlgoPSO$ to handle store-store fences.

\subsection{Proofs of {\cref{subsec:verifyingtso}}}\label{subsec:app_verifyingtso}

\smallskip
\lemverifyingtsocorrectness*
\begin{proof}
We argue separately about soundness and completeness.

\noindent{\em Soundness.}
We prove by induction that every trace $\Trace$ extracted from $\Worklist$ in \cref{algo:tso_extract_worklist} is a trace that realizes $(X\Project\Events{\Trace}, \Observation\Project\Events{\Trace})$ under $\TSO$.
The claim clearly holds for $\Trace=\epsilon$.
Now consider a trace $\Trace$ such that $\Trace\neq \emptyset$, hence $\Trace$ was inserted in $\Worklist$ in \cref{algo:tso_insert_worklist} while executing a previous iteration of the while-loop in \cref{algo:tso_main_while}.
Let $\Trace'$ be the trace that was extracted from $\Worklist$ in that iteration.
Observe that $\Trace'$ is extended with $\TSO$-executable events in \cref{algo:tso_local_extend} and \cref{algo:tso_mem_extend}, hence it is well-formed.
It remains to argue that for every new read  $\Read$ executed in  \cref{algo:tso_local_extend}, we have  $\Observation_{\Trace'}(\Read)=\Observation(\Read)$.
Assume towards contradiction otherwise, and let $\Read$ be the first read for which this equality fails.
For the remaining of the proof, we let $\Trace'$ be the trace in the iteration of \cref{algo:tso_local_extend} that executed $\Read$, i.e., $\Trace'$ ends in $\Read$.
Let $\Observation(\Read)=(\WriteB, \WriteM)$ and  $\Observation_{\Trace'}(\Read)=(\WriteB', \WriteM')$.
We distinguish the following cases.
\begin{enumerate}
\item If $\Read$ reads-from $\WriteB'$ in $\Trace'$,  then $\Proc{\Read}\neq \Proc{\WriteB}$, while also $\WriteM'\not \in \Events{\Trace}$.
Since $\Read$ became $\TSO$-executable, we have $\WriteM\in \Events{\Trace'}$, hence $\WriteM$ has already become $\TSO$-executable.
This violates \cref{item:tso_execw2} of the definition of $\TSO$-executable memory-writes for $\WriteM$, a contradiction.
\item If $\Read$ reads-from $\WriteM'$ in $\Trace'$, then $\WriteM\in \Events{\Trace'}$ and $\WriteM'$ was executed after $\WriteM$ was executed in $\Trace'$.
This violates \cref{item:tso_execw1} of the definition of $\TSO$-executable memory-writes for $\WriteM'$, a contradiction.
\end{enumerate}
It follows that $\Observation_{\Trace'}(\Read)=\Observation(\Read)$ for all reads $\Read\in \Reads{\Trace'}$, and hence $\Trace'$ realizes $(X\Project\Events{\Trace'}, \Observation\Project\Events{\Trace'})$ under $\TSO$.
The above soundness argument carries over to executions
containing RMW and CAS instructions, since
(i) such instructions are
modeled by events of already considered types
(c.f.~\cref{subsec:verifying_casrmw}), while respecting the
$\TSO$-executability requirements of these events
(as were defined in~\cref{subsec:verifyingtso}), and
(ii) in \cref{algo:tso_local_extend_loop} resp.
\cref{algo:tso_mem_execute} we only consider
$\TSO$-executable atomic blocks (described in detail
in \cref{subsec:verifying_casrmw}).

\noindent{\em Completeness.}
Consider any trace $\Trace^*$ that realizes $(X, \Observation)$.
We show by induction that for every prefix $\ov{\Trace}$ of $\Trace^*$,
the algorithm examines a trace $\Trace$ in \cref{algo:tso_extract_worklist} such that
(i)~$\WritesM{\ov{\Trace}}=\WritesM{\Trace}$, and
(ii)~$\LocalEvents{\ov{\Trace}}\subseteq \LocalEvents{\Trace}$.
The proof is by induction on the number of memory-writes of $\ov{\Trace}$.
Let $\ov{\Trace}=\ov{\Trace}'\Concat\Sequence \Concat\WriteM$, where $\Sequence$ is a sequence of thread events.
Assume by the induction hypothesis that the algorithm extracts a trace $\Trace'$ in \cref{algo:tso_extract_worklist} such that
(i)~$\WritesM{\ov{\Trace}'}=\WritesM{\Trace'}$, and
(ii)~$\LocalEvents{\ov{\Trace'}}\subseteq \LocalEvents{\Trace'}$.
(note that the statement clearly holds for the base case where $\ov{\Trace}'=\epsilon$).
By a straightforward induction, all the events of $\Sequence$
not already present in $\Trace'$ become eventually $\TSO$-executable in $\Trace'$, and thus appended in $\Trace'$,
as the algorithm executes the while-loop in \cref{algo:tso_local_extend_loop}.
Hence, at the end of this while-loop, we have
(i)~$\WritesM{\ov{\Trace}'}=\WritesM{\Trace'}$, and
(ii)~$\LocalEvents{\ov{\Trace}'}\cup \Events{\Sequence}\subseteq \LocalEvents{\Trace'}$.

It remains to argue that $\WriteM$ is $\TSO$-executable in $\Trace'$ at this point
(i.e., in \cref{algo:tso_mem_execute}).
Assume towards contradiction otherwise, hence one of the following hold.
\begin{enumerate}
\item There is a read $\Read\in \Reads{X}$ with $\Observation(\Read)=(\WriteB', \WriteM')$ and such that
(i)~$\Confl{\Read}{\WriteM}$,
(ii)~$\WriteM\neq \WriteM'$,
(iii)~$\WriteM'\in \Trace'$, and
(iv)~$\Read\not \in \Trace'$.
By the induction hypothesis, we have $\WritesM{\Trace'}=\WritesM{\ov{\Trace}'}$ and thus $\WriteM'\in \ov{\Trace}'$.
Moreover, we have $\Events{\ov{\Trace}'\Concat\Sequence}\subseteq \Events{\Trace'}$, and thus $\Read\not \in \ov{\Trace}'\Concat\Sequence$.
This violates the fact that $\ov{\Trace}$ is a witness prefix for $(X,\Observation)$.
\item There is a read $\Read\in \Reads{X}$ with $\Observation(\Read)=(\WriteB, \WriteM)$ and such that
there exists a two-phase write $(\WriteB', \WriteM')$  with
(i)~$\Confl{\Read}{\WriteB'}$,
(ii)~$\WriteB'<_{\TO}\Read$,
(iii)~$\WriteM'\not \in \Trace'$.
By the induction hypothesis, we have $\WritesM{\Trace'}=\WritesM{\ov{\Trace}'}$ and thus  $\WriteM'\not \in \ov{\Trace}'$.
Moreover, we have $\Events{\ov{\Trace}'\Concat\Sequence}\subseteq \Events{\Trace'}$, and thus $\Read\not \in \ov{\Trace}'\Concat\Sequence$.
This violates the fact that $\ov{\Trace}$ is a witness prefix for $(X,\Observation)$.
\end{enumerate}
Hence $\WriteM$ is $\TSO$-executable in $\Trace'$ in
\cref{algo:tso_mem_execute}, and thus the algorithm will construct the
trace $\Trace'_{\WriteM}=\Trace'\Concat \WriteM$ in \cref{algo:tso_mem_extend}.
If $\WritesM{\Trace'_{\WriteM}}\not \in \DoneSet$, the test in
\cref{algo:tso_if_new} succeeds, and the statement holds for $\Trace$
being $\Trace'_{\WriteM}$ extracted from $\Worklist$ in a later iteration.
Otherwise, the algorithm previously constructed a trace $\Trace''$
with $\WritesM{\Trace''}=\WritesM{\Trace'_{\WriteM}}$, and the statement
holds for $\Trace$ being $\Trace''$ extracted from $\Worklist$ in a later iteration.

When arguing about completeness in the presence of
RMW and CAS instructions, additional care needs to be taken, as follows.
The above induction argument applies, but it needs to
additionally consider a case with
$\ov{\Trace}=\ov{\Trace}'\Concat\Sequence \Concat
\Read \Concat \WriteB \Concat \WriteM$ and
$\Fence \in \Events{\ov{\Trace}'\Concat\Sequence}$,
where $\Fence$ together with $\Read \Concat \WriteB \Concat \WriteM$
represent an atomic RMW resp. CAS instruction with the write-part
designated to be immediately propagated to the shared memory. Let us
consider this case in what follows.

As above, we start with the induction hypothesis that
in \cref{algo:tso_extract_worklist} we have $\Trace'$ with
(i)~$\WritesM{\ov{\Trace}'}=\WritesM{\Trace'}$, and
(ii)~$\LocalEvents{\ov{\Trace'}}\subseteq \LocalEvents{\Trace'}$.
Further, by an argument similar to the above, we reach
\cref{algo:tso_mem_execute} where $\Trace'$ now satisfies
(i)~$\WritesM{\ov{\Trace}'}=\WritesM{\Trace'}$, and
(ii)~$\LocalEvents{\ov{\Trace}'}\cup \Events{\Sequence}\subseteq \LocalEvents{\Trace'}$.
At this point, we have $\Fence \in \Events{\ov{\Trace}'\Concat\Sequence}$
and $\Fence \in \LocalEvents{\Trace'}$.
Further, since in our approach we emplace $\Read$, $\WriteB$ and
$\WriteM$ in an atomic block, and we never allow execution of a
singular event that is part of some atomic block
(described in detail in~\cref{subsec:verifying_casrmw}),
we have that $\Events{\Trace'} \cap \{ \Read, \WriteB, \WriteM \} = \emptyset$.
As a result, since
there are no events between $\Fence$ and $\Read$ in the thread
of the atomic instruction, we have that the buffer of the thread
of the atomic instruction is empty in both
$\ov{\Trace}'\Concat\Sequence$ and $\Trace'$.
What remains to argue is that the atomic block
$\Read \Concat \WriteB \Concat \WriteM$
is $\TSO$-executable in $\Trace'$.
For this, we refer to the $\TSO$-executable conditions of atomic blocks
defined in~\cref{subsec:verifying_casrmw}. In turn, utilizing the
$\TSO$-executable conditions of (i) reads, (ii) buffer-writes, and
(iii) memory-writes, defined in~\cref{subsec:verifyingtso},
we show that
(i) $\Read$ is $\TSO$-executable in $\Trace'$,
(ii) $\WriteB$ is $\TSO$-executable in $\Trace' \Concat \Read$, and
(iii) $\WriteM$ is $\TSO$-executable in $\Trace' \Concat \Read \Concat \WriteB$.
This together with $\Events{\Trace'} \cap \{ \Read, \WriteB, \WriteM \} = \emptyset$
gives us that the atomic block
$\Read \Concat \WriteB \Concat \WriteM$
is $\TSO$-executable in $\Trace'$,
and thus in \cref{algo:tso_mem_extend} the algorithm will construct the
trace $\Trace''=\Trace'\Concat \Read \Concat \WriteB \Concat \WriteM$.

The desired completeness result follows.
\end{proof}

We conclude the section with the proof of \cref{them:vtso}.

\themvtso*

\smallskip
\begin{proof}
\cref{lem:verifyingtso_correctness} establishes the correctness, so here we focus on the complexity,
and the following argument applies also to executions containing
RMW and CAS instructions.

Since there are $k$ threads, there exist at most $n^k$ distinct traces $\Trace_1, \Trace_2$ with $\WritesM{\Trace_1}\neq \WritesM{\Trace_2}$.
Hence, the main loop in \cref{algo:tso_main_while} is executed at  most $ n^k$ times.
For each of the $\leq n^k$ traces $\Trace$ inserted in $\Worklist$ in \cref{algo:tso_insert_worklist}, there exist at most $k-1$ traces that are not inserted in $\Worklist$ because $\WritesM{\Trace}=\WritesM{\Trace'}$ (hence the test  in \cref{algo:tso_if_new} fails).
Hence, the algorithm handles $O(k\cdot n^{k})$ traces in total, while each trace is constructed in $O(n)$ time.
Thus, the complexity of $\AlgoTSO$ is $O(k\cdot n^{k+1})$.
The desired result follows.
\end{proof}

\subsection{Proofs of {\cref{subsec:verifyingpso}}}\label{subsec:app_verifyingpso}

\smallskip
\lemfencemapmonotonicityone*
\begin{proof}
Since $\WriteM$ is executable in $\Trace_1$, the variable $\Location{\Trace_1}$ is not held in $\Trace_1$.
It follows directly from the definition of fence maps that the read sets $A_{\Process, \Process'}$ can only increase in $\FenceMap_{\Trace_2}$ compared to $\FenceMap_{\Trace_1}$.
Hence,  $\FenceMap_{\Trace_1}(\Process_1, \Process_2)\leq \FenceMap_{\Trace_2}(\Process_1, \Process_2)$ for all $\Process_1, \Process_2$.
Moreover, if $\WriteM$ is spurious then the sets $A_{\Process, \Process'}$ are identical, thus $\FenceMap_{\Trace_1}(\Process_1, \Process_2)=\FenceMap_{\Trace_2}(\Process_1, \Process_2)$ for all $\Process_1, \Process_2$.

The desired result follows.
\end{proof}

\smallskip
\lemfencemapmonotonicitytwo*
\begin{proof}
We distinguish cases based on the type of $\Event$.
\begin{enumerate}
\item \emph{If $\Event$ is a fence $\Fence$}, the fence maps do not chance, hence the claim holds directly from the fact that $\FenceMap_{\Trace_1}\leq \FenceMap_{\Trace_2}$.

\item \emph{If $\Event$ is a read $\Read$}, observe that $\FenceMap_{\Trace'_i}\leq \FenceMap_{\Trace_i}$ for each $i\in [2]$.
Hence we must have $\FenceMap_{\Trace'_2}(\Process_1, \Process_2)<\FenceMap_{\Trace_2}(\Process_1, \Process_2)$,
for some thread $\Process_1\in \Threads$ and $\Process_2=\Proc{\Read}$.
Note that in fact $\FenceMap_{\Trace'_2}(\Process_1, \Process_2)=0$, which occurs because $\FenceMap_{\Trace_2}(\Process_1, \Process_2)$
is the index of $\Read$ in $\Process_2$.
Since $\FenceMap_{\Trace_1}\leq \FenceMap_{\Trace_2}$, we have either $\FenceMap_{\Trace_1}(\Process_1,\Process_2)=0$ or $\FenceMap_{\Trace_1}(\Process_1,\Process_2)=\FenceMap_{\Trace_2}(\Process_1, \Process_2)$.
In either case, we have $\FenceMap_{\Trace'_1}\leq \FenceMap_{\Trace_2}=0$, a contradiction.

\item \emph{If $\Event$ is a buffer-write $\WriteB$}, observe that $\FenceMap_{\Trace_i}\leq \FenceMap_{\Trace'_i}$ for each $i\in [2]$.
Hence we must have $\FenceMap_{\Trace_1}(\Process_1, \Process_2)<\FenceMap_{\Trace'_1}(\Process_1, \Process_2)$, where $\Process_1=\Proc{\WriteB}$ and $\Process_2$ is some other thread.
It follows that $v=\Location{\WriteB}$ is held in $\Trace_1$ by an active memory-write $\WriteM'$ (thus $\WriteM'$ is not spurious in $\Trace_1$),
and $\FenceMap_{\Trace'_1}(\Process_1, \Process_2)$ is the index of $\Process_2$ that contains a read $\Read$ with
$\Observation(\Read)= (\_, \WriteM')$.
Since $\WritesM{\Trace_1}\setminus\SpuriousWritesM{\Trace_1}\subseteq \WritesM{\Trace_2}$, we have $\WritesM'\in \Trace_2$
Since $\LocalEvents{\Trace_1}=\LocalEvents{\Trace_2}$, we have that $\WriteM'$ is an active memory-write in $\Trace_2$.
Hence $\FenceMap_{\Trace'_2}(\Process_1, \Process_2)\geq \FenceMap_{\Trace'_1}(\Process_1, \Process_2)$, a contradiction.
\end{enumerate}

The desired result follows.
\end{proof}

\smallskip
\lemfencemap*
\begin{proof}
Given a trace $\Trace$, we define the \emph{non-empty-buffer map} $\NEBMap_{\Trace}\colon\Threads\times\Globals\to \{\True, \False \}$,
such that $\NEBMap_{\Trace}(\Process,v)=\True$ iff
(i)~$\Process$ does not hold variable $v$, and
(ii)~the buffer of thread $\Process$ on variable $v$ is non-empty.
Clearly there exist at most $2^{k\cdot d}$ different non-empty-buffer maps.
We argue that for every two traces $\Trace_1, \Trace_2$, if  $\LocalEvents{\Trace_1}=\LocalEvents{\Trace_2}$ and $\NEBMap_{\Trace_1}=\NEBMap_{\Trace_2}$
then $\FenceMap_{\Trace_1}= \FenceMap_{\Trace_2}$, from which the $2^{k\cdot d}$ bound of the lemma follows.

Assume towards contradiction that $\FenceMap_{\Trace_1}\neq \FenceMap_{\Trace_2}$.
Hence, wlog, there exist two threads $\Process_1, \Process_2$ such that $\FenceMap_{\Trace_2}(\Process_1, \Process_2) > \FenceMap_{\Trace_1}(\Process_1, \Process_2)$.
Let $\FenceMap_{\Trace_2}(\Process_1, \Process_2)=m$, and consider the read $\Read$ of $\Process_2$ at index $m$.
Let $v=\Location{\Read}$ and  $\Observation(\Read)=(\WriteB, \WriteM)$ and $\Process_3=\Proc{\WriteB}$.
By the definition of fence maps, we have that $\Process_3$ holds variable $v$ in $\Trace_2$.
By the definition of non-empty-buffer maps, we have that $\NEBMap_{\Trace_2}(\Process_3,v)=\False$,
and since $\NEBMap_{\Trace_1}=\NEBMap_{\Trace_2}$, we also have $\NEBMap_{\Trace_1}(\Process_3,v)=\False$.
Since $\LocalEvents{\Trace_1}=\LocalEvents{\Trace_2}$, we have that $\WriteB\in \Trace_1$.
Moreover, we have $\WriteM\not \in\Trace_1$, as otherwise, since $\NEBMap_{\Trace_1}(\Process_1)=\NEBMap_{\Trace_1}(\Process_2)$,
we would have $\FenceMap_{\Trace_1}(\Process_1, \Process_2)\geq m$.
Hence, the buffer of thread $\Process_3$ on variable $v$ is non-empty in $\Trace_1$.
Since $\NEBMap_{\Trace_1}(\Process_3,v)=\False$, we have that $\Process_3$ holds $v$ in $\Trace_1$.
Thus, there is a read $\Read'\not \in \Trace_1$ with $\Observation(\Read')=(\WriteB', \WriteM')$, where $\WriteM'<_{\TO}\WriteM$.
Since $\LocalEvents{\Trace_1}=\LocalEvents{\Trace_2}$, we have that $\Read'\not \in \Trace_2$, which violates the observation of $\Read'$ in any extension of $\Trace_2$.

The desired result follows.
\end{proof}

\smallskip
\lemverifyingpsocorrectness*
\begin{proof}
We argue separately about soundness and completeness.

\noindent{\em Soundness.}
We prove by induction that every trace $\Trace$ extracted from $\Worklist$ in \cref{algo:pso_extract_worklist} is a trace that realizes $(X\Project\Events{\Trace}, \Observation\Project\Events{\Trace})$ under $\PSO$.
The claim clearly holds for $\Trace=\epsilon$.
Now consider a trace $\Trace$ such that $\Trace\neq \emptyset$, hence $\Trace$ was inserted in $\Worklist$ in \cref{algo:pso_insert_worklist} while executing a previous iteration of the while-loop in \cref{algo:pso_main_while}.
Let $\Trace'$ be the trace that was extracted from $\Worklist$ in that iteration, and consider the trace $\Trace_{\Event}$ constructed in \cref{algo:pso_execute_event}.
Since $\Trace_{\Event}$ is obtained by extending $\Trace'$ with $\PSO$-executable events, it follows that $\Trace_{\Event}$ is well-formed.
It remains to argue that $\Observation_{\Trace}\subseteq \Observation$.
If $\Event$ is not a read, then the claim holds by the induction hypothesis as $\Reads{\Trace_{\Event}}=\Reads{\Trace'}$.
Now assume that $\Event$ is a read with $\Observation_{\Trace_{\Event}}(\Event)=(\WriteB', \WriteM')$.
Let $\Observation(\Event)=(\WriteB, \WriteM)$, and assume towards contradiction that $\WriteB\neq \WriteB$.
We distinguish the following cases.
\begin{enumerate}
\item If $\Event$ reads-from $\WriteB'$ in $\Trace_{\Event}$, we have that $\Proc{\Read}\neq \Proc{\WriteB}$.
But then $\WriteM\in \Trace_{\Event}$, hence $\WriteM$ has become $\PSO$-executable, and thus $\WriteM'\in \Trace_{\Event}$.
Since $\Event$ is the last event of $\Trace_{\Event}$ this violates the fact that $\Event$ reads-from $\WriteB'$ in $\Trace_{\Event}$.
\item If $\Event$ reads-from $\WriteM'$ in $\Trace_{\Event}$, then $\WriteM\in \Trace'$, and $\WriteM'$ was executed after $\WriteM$ in $\Trace'$.
By the definition of $\PSO$-executable events, $\WriteM'$ could not have been $\PSO$-executable at that point, a contradiction.
\end{enumerate}
It follows that $\Observation_{\Trace_{\Event}}(\Event)=\Observation(\Event)$,
and this with the induction hypothesis gives us that
$\Observation_{\Trace_{\Event}}(\Read)=\Observation(\Read)$ for all reads $\Read\in \Reads{\Trace_{\Event}}$.
As a result, $\Trace_{\Event}$ realizes $(X\Project\Events{\Trace_{\Event}}, \Observation\Project\Events{\Trace_{\Event}})$ under $\PSO$.
The soundness argument carries over directly to executions
containing RMW and CAS instructions. Indeed,
since in \cref{algo:pso_ifextend} we only consider
atomic blocks that are $\PSO$-executable
(described in~\cref{subsec:verifying_casrmw}), consequently,
the $\PSO$-executable conditions of fences, reads, buffer-writes and
memory-writes (as defined in~\cref{subsec:verifyingpso}) used to model
RMW and CAS are preserved, which by the above argument implies
soundness.

\noindent{\em Completeness.}
Consider any trace $\Trace^*$ that realizes $(X, \Observation)$.
We show by induction that for every prefix $\ov{\Trace}$ of $\Trace^*$,
the algorithm examines a trace $\Trace$ in \cref{algo:pso_extract_worklist} such that
(i)~$\LocalEvents{\Trace}= \LocalEvents{\ov{\Trace}}$,
(ii)~$\WritesM{\Trace}\setminus\SpuriousWritesM{\Trace}\subseteq \WritesM{\ov{\Trace}}$, and
(iii)~$\FenceMap_{\Trace}\leq \FenceMap_{\ov{\Trace}}$.

The proof is by induction on the number of thread events of $\ov{\Trace}$.
The statement clearly holds when $\ov{\Trace}=\epsilon$ due to the initialization of $\Worklist$.
For the inductive step, let $\ov{\Trace}=\ov{\Trace}'\Concat \Sequence \Concat\Event$,
where $\Sequence$ is a sequence of memory-writes and $\Event$ is a thread event.
By the induction hypothesis, the algorithm extracts a trace $\Trace'$ in \cref{algo:pso_extract_worklist} such that
(i)~$\LocalEvents{\Trace'}= \LocalEvents{\ov{\Trace}'}$,
(ii)~$\WritesM{\Trace'}\setminus\SpuriousWritesM{\Trace'}\subseteq \WritesM{\ov{\Trace}'}$,
(iii)~$\FenceMap_{\Trace'}\leq \FenceMap_{\ov{\Trace}'} $.
Let $\ov{\Trace}_1=\ov{\Trace}'\Concat\Sequence$, and
$\Trace_1$  be the trace $\Trace'$ after the algorithm has extended $\Trace'$ with all  events in the while-loop of \cref{algo:pso_while_spurious}.
By \cref{lem:fencemap_monotonicity1}, we have $\FenceMap_{\ov{\Trace}'}\leq \FenceMap_{\ov{\Trace}_1}$.
Since all events appended to $\Trace'$ are spurious memory-writes in $\Trace'$, by \cref{lem:fencemap_monotonicity1}, we have $\FenceMap_{\Trace_1}= \FenceMap_{\Trace'}$ and thus $\FenceMap_{\Trace_1}\leq  \FenceMap_{\ov{\Trace}_1}$.
Moreover, since the while-loop only appends spurious memory-writes to $\Trace'$, we have
$\WritesM{\Trace_1}\setminus\SpuriousWritesM{\Trace_1}\subseteq \WritesM{\ov{\Trace}_1}$.
Finally, we trivially have $\LocalEvents{\Trace_1}=\LocalEvents{\ov{\Trace}_1}$.

We now argue that $\Event$ is $\PSO$-executable in $\Trace_1$ in \cref{algo:pso_ifextend}, and the statement holds for the new trace $\Trace_{\Event}$ constructed in \cref{algo:pso_execute_event}.
We distinguish cases based on the type of $\Event$.
\begin{enumerate}
\item\label{item:pso_compl_bw} \emph{If $\Event$ is a buffer-write}, then $\Events{\Trace_1}\cup \{\Event \}$ is a lower set of $(X,\Observation)$, hence
$\Event$ is $\PSO$-executable in $\Trace_1$.
Thus, we have $\LocalEvents{\ov{\Trace}}=\LocalEvents{\Trace_{\Event}}$.
Moreover, note that $\Trace_{\Event}=\Trace_1\Concat \Event$ and $\ov{\Trace}=\ov{\Trace}_1\Concat \Event$.
By \cref{lem:fencemap_monotonicity2} on $\Trace_1$ and $\ov{\Trace}_1$, we have $\FenceMap_{\Trace_{\Event}}\leq \FenceMap_{\ov{\Trace}}$.
Finally, we have $\WritesM{\Trace_{\Event}} = \WritesM{\Trace_1}$
and thus $\WritesM{\Trace_{\Event}} \setminus \SpuriousWritesM{\Trace_{\Event}} \subseteq \WritesM{\ov{\Trace}}$.
\item\label{item:pso_compl_read} \emph{If $\Event$ is a read}, let $\Observation(\Event)=(\WriteB, \WriteM)$ and $v=\Location{\Event}$.
We have $\WriteB\in \ov{\Trace}_1$ and thus $\WriteB\in \Trace_1$.
If $\Proc{\WriteB}=\Proc{\Event}$, then $\Event$ is $\PSO$-executable in $\Trace_1$.
Now consider that $\Proc{\WriteB}\neq \Proc{\Event}$, and
assume towards contradiction that $\Event$ is
not $\PSO$-executable in $\Trace_1$.
There are two cases where this can happen.

The first case is when the variable $v$ is held by another memory write in $\Trace_1$.
Since $\LocalEvents{\Trace_1}=\LocalEvents{\ov{\Trace}_1}$ and $\WritesM{\Trace_1}\setminus\SpuriousWritesM{\Trace_1}\subseteq \WritesM{\ov{\Trace}_1}$,
the variable $v$ is also held by another memory write in $\ov{\Trace}_1$,
and thus $\WriteM$ is neither in $\ov{\Trace}_1$ nor $\PSO$-executable in $\ov{\Trace}_1$.
Thus $\Event$ is not $\PSO$-executable in $\ov{\Trace}_1$ either, a contradiction.

The second case is when there exists a read $\Read\not \in \Trace_1$ such that $\Observation(\Read)= (\_, \WriteM)$, and
there exists a local write event $\Write'=(\WriteB', \WriteM')$ with $\Proc{\WriteB'}=\Proc{\Read}$ but $\WriteM'\not \in \Trace_1$.
Since $\ov{\Trace}_1$ is a witness prefix, we have $\WriteM'\in \ov{\Trace}_1$, hence $\WriteB'\in \ov{\Trace}_1$, and since $\LocalEvents{\Trace_1}=\LocalEvents{\ov{\Trace}_1}$, we also have $\WriteB'\in \Trace_1$.
Thus $\WriteM'$ is a pending memory write for the thread $\Process'=\Proc{\WriteM'}$.
Let $\WriteM''$ be the earliest (wrt $\TO$) pending memory-write of $\Process'$ for the variable $v$.
Thus $\WriteM''<_{\TO}\WriteM'$, and hence $\WriteM''\in \ov{\Trace}_1$.
Note that $\WriteM''$ is not read-from by any read not in $\Trace_1$, and hence $\WriteM''$ is spurious in $\Trace_1$.
But then, the while loop in \cref{algo:pso_while_spurious} must have added $\WriteM''$ in $\Trace_1$, a contradiction.

It follows that $\Event$ is $\PSO$-executable in $\Trace_1$, and thus $\LocalEvents{\ov{\Trace}}=\LocalEvents{\Trace_{\Event}}$.
Let $\Trace_2=\Trace_1$ if $\WriteM\in \Trace_1$, else $\Trace_2=\Trace_1\Concat \WriteM$.
Observe that if $\WriteM$ is $\PSO$-executable in $\Trace_1$,
all pending memory-writes $\WriteM'$ on variable $v$ of threads
other than $\Proc{\WriteB}$ are spurious in $\Trace'$,
and thus all such buffers are empty in $\Trace_1$.
It follows that $\FenceMap_{\Trace_2}\leq \FenceMap_{\Trace_1}$ and thus $\FenceMap_{\Trace_2}\leq \FenceMap_{\ov{\Trace}_1}$.
Moreover, trivially $\WritesM{\Trace_2}\setminus\SpuriousWritesM{\Trace_2}\subseteq \ov{\Trace_1}$.
Finally, executing $\Event$ in $\Trace_2$ and $\ov{\Trace}_1$, we obtain respectively $\Trace_{\Event}$ and $\ov{\Trace}$,
and by \cref{lem:fencemap_monotonicity2}, we have $\FenceMap_{\Trace_{\Event}}\leq \FenceMap_{\ov{\Trace}}$.
Moreover, clearly $\WritesM{\Trace_{\Event}}=\WritesM{\Trace_2}$ and thus $\WritesM{\Trace_{\Event}}\setminus\SpuriousWritesM{\Trace_{\Event}}\subseteq \ov{\Trace}$.

\item\label{item:pso_compl_fence} \emph{If $\Event$ is a fence}, let $\SequenceVar=\WriteM_1, \dots, \WriteM_j$ be the sequence of pending memory-writes constructed in \cref{algo:pso_linearize_pending_writes}.
By a similar analysis to the case where $\Event$ is a a read event, we have that $\SequenceVar$ contains at most one memory-write per variable, as all preceding ones (wrt $\TO$) must be spurious.
Assume towards contradiction that some pending memory write
$\WriteM_i$ is not $\PSO$-executable in $\Trace$, and let $v=\Location{\WriteM_i}$.
There are two cases to consider.
\begin{enumerate}
\item $\WriteM_i$ is not $\PSO$-executable because $v$ is held in $\Trace_1$.
Let $\WriteM$ be the memory-write that holds $v$ in $\Trace_1$, and $\Read$ be the corresponding read with
$\Observation(\Read)= (\_, \WriteM)$ and $\Read\not \in \Trace_1$.
Since $\LocalEvents{\Trace_1}=\LocalEvents{\ov{\Trace}_1}$, we have $\Read\not \in \ov{\Trace}_1$.
Let $\Process_1=\Proc{\Event}$, $\Process_2=\Proc{\Read}$, and $m$ be the index of $\Read$ in $\Process_2$.
We have $\FenceMap_{\Trace_1}(\Process_1, \Process_2) \geq m$, and since $\FenceMap_{\Trace_1}\leq \FenceMap_{\ov{\Trace}'}$,
we also have $\FenceMap_{\ov{\Trace}'}(\Process_1, \Process_2) \geq m$.
But then there is a pending memory-write $\WriteM'\in \ov{\Trace}'$ with $\Proc{\WriteM'}=\Process_1$ and $\WriteM'\not \in \ov{\Trace}_1$.
Hence $\Event$ is not $\PSO$-executable in $\ov{\Trace}_1$, a contradiction.
\item $\WriteM_i$ is not $\PSO$-executable because there exists a read $\Read\not \in \Trace_1$ such that $\Observation(\Read)= (\_, \WriteM_i)$, and
there exists a local write event $\Write=(\WriteB, \WriteM)$ with $\Proc{\WriteB}=\Proc{\Read}$ but $\WriteM\not \in \Trace_1$.
The analysis is similar to the case of $\Event$ being a read, which leads to a contradiction.
\end{enumerate}
Thus, we have that the fence $\Event$ is $\PSO$-executable in $\Trace_1$.
It is straightforward to see that $\WritesM{\Trace_{\Event}}\setminus\SpuriousWritesM{\Trace_{\Event}}\subseteq \ov{\Trace}$,
and thus it remains to argue that $\FenceMap_{\Trace_{\Event}}\leq \FenceMap_{\ov{\Trace}}$.
Let $\Trace_1^{j}=\Trace_1\Concat\WriteM_1,\dots, \WriteM_{j-1}$.
It suffices to argue that $\FenceMap_{\Trace_1^{j+1}}\leq \FenceMap_{\ov{\Trace}_1}$, as
$\Trace_{\Event}=\Trace_1^{j+1}\Concat \Event$ and $\ov{\Trace}=\ov{\Trace}_1\Concat \Event$, and the claim holds by \cref{lem:fencemap_monotonicity2} on   $\Trace_1^{j+1}$ and $\ov{\Trace}_1$.
The proof is by induction on $\Trace_1^{i}$.
The claim clearly holds for $i=1$, as then $\Trace_1^{1}=\Trace_1$ and we have $\FenceMap_{\Trace_1}\leq \FenceMap_{\ov{\Trace}_1}$.
Now consider that for some $i>1$, there exist two threads $\Process_1, \Process_2,\in \Threads$ such that $\FenceMap_{\Trace_1^{i}}>\FenceMap_{\Trace_1^{i-1}}(\Process_1, \Process_2)$.
Hence, variable $v=\Location{\WriteM^{i}}$ is held in $\Trace_1^{i}$ and $\WriteM^{i}$ is the respective active-memory-write, and
thread $\Process_2$ has a read $\Read$ in index $m=\FenceMap_{\Trace_1^{i}}(\Process_1, \Process_2)$ with $\Observation(\Read)= (\_, \WriteM^{i})$.
In addition, there exists a buffer-write $\WriteB\in \Trace_1$ such that $\Proc{\WriteB}=\Process_1$ and $\Location{\WriteB}=v$.
Since $\LocalEvents{\Trace_1^{i}}=\LocalEvents{\ov{\Trace}}$, we have that $\WriteB\in \ov{\Trace}$ and $\Read\not \in \ov{\Trace}$.
Moreover, since $\WriteM^i<_{\TO}\Event$, we have $\WriteM^{i}\in \ov{\Trace}$.
Hence $\WriteM^{i}$ is an active-memory-write in $\ov{\Trace}$ as well, and thus $\FenceMap_{\ov{\Trace}}(\Process_1, \Process_2)\geq \FenceMap_{\Trace_1^i}(\Process_1, \Process_2)$.
At the end of the induction, we have $\FenceMap_{\Trace_1^{j+1}}\leq \FenceMap_{\ov{\Trace}}$, as desired.
\end{enumerate}
This concludes the completeness argument for executions without
RMW and CAS instructions.

When executions contain RMW and CAS instructions, additional argument
has to be made for completeness, as follows.
We proceed with the same induction argument as above, but additionally
consider the inductive case where
$\ov{\Trace}=\ov{\Trace}'\Concat \Sequence \Concat\Event$ such
that $\Event$ is an atomic block corresponding to a RMW or a CAS
instruction. In this case, $\Event$ is a sequence of
(i) a read $\Read$, (ii) a buffer-write $\WriteB$, and
optionally (in case the write-part of $\Event$ is designated to
proceed directly into the shared memory) (iii) a memory-write $\WriteM$.
Finally, $\Event$ is preceded in its thread by a fence $\Fence$.

First, since $\ov{\Trace}$ is a witness prefix we have
$\Fence \in \LocalEvents{\ov{\Trace}'}$, and
from the induction hypothesis regarding $\Trace'$ such that
$\LocalEvents{\Trace'} = \LocalEvents{\ov{\Trace}'}$
we also have $\Fence \in \LocalEvents{\Trace'}$. Thus all buffers
of the thread of $\Event$ are empty in both $\ov{\Trace}'$
and $\Trace'$. Then the argument is followed identically to above
until \cref{algo:pso_ifextend}, where we have to show that the atomic
block $\Event$ is $\PSO$-executable in $\Trace_1$ in
\cref{algo:pso_ifextend}, and that consequently the induction statement
holds for the new trace
$\Trace_{\Event}$ constructed in \cref{algo:pso_execute_event}.

The crucial observation is that no event from the second event onward
in the atomic block $\Event$ is a read or a fence. This is important
as reads and fences may need additional events executed right before
them (see Lines
\ref{algo:pso_is_read}--\ref{algo:pso_execute_pending_writes}),
which would invalidate the atomicity of the atomic block.
Given this observation, we simply utilize the $\PSO$-executable
requirements to prove the following. First, using the argument of
\cref{item:pso_compl_read} above we show that $\Read$ is
$\PSO$-executable in $\Trace_1$, let $\Trace_{\Read}$ denote the trace
resulting after executing $\Read$. Second, using
\cref{item:pso_compl_bw} above we show that $\WriteB$ is $\PSO$-executable
in $\Trace_{\Read}$, and further that
(i)~$\LocalEvents{\Trace_{\Read} \Concat \WriteB}= \LocalEvents{\ov{\Trace}'\Concat \Sequence \Concat \Read \Concat \WriteB}$,
(ii)~$\WritesM{\Trace_{\Read} \Concat \WriteB}\setminus\SpuriousWritesM{\Trace_{\Read} \Concat \WriteB}
\subseteq \WritesM{\ov{\Trace}'\Concat \Sequence \Concat \Read \Concat \WriteB}$, and
(iii)~$\FenceMap_{\Trace_{\Read} \Concat \WriteB}
\leq \FenceMap_{\ov{\Trace}'\Concat \Sequence \Concat \Read \Concat \WriteB}$.
Finally, in the case where $\WriteM$ is part of the atomic block
$\Event$,
we have that $\WriteM$ is $\PSO$-executable in
$\Trace_{\Read} \Concat \WriteB$, resulting in the trace $\Trace_{\Event}$.
Further, since the induction statement held already for
$\Trace_{\Read} \Concat \WriteB$ with respect to
$\ov{\Trace}'\Concat \Sequence \Concat \Read \Concat \WriteB$
(see (i),(ii),(iii) above), we have
that the induction statement holds also for $\Trace_{\Event}$
with respect to $\ov{\Trace}$, which
concludes the argument.

The desired result follows.
\end{proof}

We can now proceed with the proof of \cref{them:vpso}.

\themvpso*

\begin{proof}
\cref{lem:verifyingpso_correctness} establishes the correctness, so here we focus on the complexity,
and the following argument applies also for executions with RMW and CAS instructions.
Since there are $k$ threads, there exist at most $n^k$ distinct traces $\Trace_1, \Trace_2$ with $\LocalEvents{\Trace_1}\neq \LocalEvents{\Trace_2}$.
Because of the test in \cref{algo:pso_if_new}, for any two traces $\Trace_1,\Trace_2$ inserted in the worklist with $\LocalEvents{\Trace_1}=\LocalEvents{\Trace_2}$, we have $\FenceMap_{\Trace_1}\neq \FenceMap_{\Trace_2}$.
If there are no fences, there is only one possible fence map, hence there are $n^k$ traces inserted in $\Worklist$.
If there are fences, the number of different fence maps with $\FenceMap_{\Trace_1}\neq \FenceMap_{\Trace_2}$ when $\LocalEvents{\Trace_1}=\LocalEvents{\Trace_2}$ is bounded by $2^{k\cdot \NumVariables}$ (by \cref{lem:fencemap_size}) and also by $n^{k\cdot (k-1) }$ (since there are at most that many difference fence maps).
Hence the number of traces inserted in the worklist is bounded by $n^{k}\cdot \min(n^{k \cdot (k-1)}, 2^{k\cdot \NumVariables})$.
Since there are $k$ threads, for every trace $\Trace_1$ inserted in the worklist, the algorithm examines at most $k-1$ other traces $\Trace_2$ that are not inserted in the worklist because
$\LocalEvents{\Trace_1}=\LocalEvents{\Trace_2}$ and $\FenceMap_{\Trace_1}=\FenceMap_{\Trace_2}$.
Hence the algorithm examines at most $k\cdot  n^{k}\cdot \min(n^{k\cdot (k-1)}, 2^{k\cdot \NumVariables})$ traces in total, while each such trace is handled in $O(n)$ time.
Hence the total running time is $O(k\cdot n^{k+1}\cdot \min(n^{k\cdot (k-1) }, 2^{k\cdot \NumVariables}))$.

Finally, note that if there are no fences present, we can completely drop the fence maps from the algorithm, which results in complexity $O(k\cdot n^{k+1})$.
The desired result follows.
\end{proof}

\subsection{Verifying $\PSO$ Executions with Store-store Fences}\label{subsec:app_psostorestore}

Here we describe our extension to handle $\VPSOrm$ in
the presence of store-store-fences.

A \emph{store-store fence} event $\SFence$ happening on a thread $\Process$
introduces further orderings into the program order $\TO$, namely
$\WriteM <_{\TO} \WriteM'$ for each
$(\WriteB, \WriteM), (\WriteB', \WriteM') \in \SysWritesM_{\Process}$
with $\WriteB <_{\TO} \SFence <_{\TO} \WriteB'$.

Store-store fences are considered only for the $\PSO$ memory model, as they
would have no effect in $\TSO$, since in $\TSO$ all memory-writes
within the same thread are already ordered. In fact, the $\TSO$ model can
be seen as $\PSO$ with a store-store fence inserted after every
buffer-write event.

We extend our notion of $\PSO$-executability to accommodate store-store-fences.
Given $(X,\TO)$ and $\Trace$ with $\Events{\Trace} \subseteq X$:
\begin{enumerate}
\item A store-store fence $\SFence \in X \setminus \Events{\Trace}$ is
$\PSO$-executable if $\Events{\Trace} \cup \{\SFence\}$ is a lower set
of $(X,\TO)$.
\item An additional condition for a memory-write
$\WriteM \in X \setminus \Events{\Trace}$ to be $\PSO$-executable,
is that every memory-write $\WriteM' \in X \setminus \Events{\Trace}$
with $\WriteM' <_{\TO} \WriteM$ is $\PSO$-executable.
\end{enumerate}

We consider a notion very similar to the fence maps introduced in
\cref{subsec:verifyingpso}, to efficiently represent the
$\PSO$-executability requirements introduced by store-store fences,
namely \emph{store-store fence maps}
$\SFenceMap_{\Trace}\colon \Threads\times \Threads \to [n]$.
While $\FenceMap_{\Trace}(\Process)$ efficiently captures
the requirements for executing a fence event of $\Process$,
$\SFenceMap_{\Trace}(\Process)$ captures efficiently, in the same
manner as $\FenceMap_{\Trace}(\Process)$ does, the following.
Consider the latest $\SFence \in \Events{\Trace}$ of thread $\Process$,
and consider that no memory-write of $\Process$ has been executed
in $\Trace$ after $\SFence$ yet.
Then, $\SFenceMap_{\Trace}(\Process)$ captures
the requirements for executing a memory-write of $\Process$.

We utilize the store-store fence maps to refine our identification of
duplicate witness-prefixes.
This then gives us a time-complexity bound of
$O(k\cdot n^{k+1}\cdot \min(n^{2\cdot k\cdot (k-1)}, 2^{k\cdot \NumVariables}))$.

\section{Details of~\cref{sec:dpor}}\label{sec:app_algoproperties}

In this section we provide the proof of \cref{them:dctsopso_maximal}
regarding $\DCTSOPSOM$.


\smallskip
\themdctsopsomaximal*
\begin{proof}
Let $\MemoryModel$ be the memory model from $\{\TSO, \PSO \}$.
We sketch the correctness (i.e., soundness and completeness),
exploration-optimality, and time complexity of $\DCTSOPSOM$.

\noindent{\em Soudness.}
The soundness trivially follows from soundness of $\AlgoTSO$
used in $\TSO$ and of $\AlgoPSO$ used in $\PSO$,
which are used as subroutines for verifying execution consistency.

\noindent{\em Completeness.}
The completeness of $\DCTSOPSOM$ rests upon the completeness
of its variant for $\SC$ introduced by~\citet{Abdulla19}. We now
argue that the modifications to accomodate $\TSO$ and $\PSO$
have no effect on completeness. First, consider in each recursive call
the sequences $\Seq$ (argument of the call) and
$\widetilde{\Seq}$ (\cref{line:dctsopsom_extendsequence}~of~\cref{algo:dctsopsom}).
The sequence $\Seq$ (resp. $\widetilde{\Seq}$) in each call
contains exactly the thread
events of the trace $\Trace$ (resp. $\widehat{\Trace}$) in that call.
Thus $\Seq$ (resp. $\widetilde{\Seq}$) contains exactly the events
of local traces of each thread in $\Trace$ (resp. $\widehat{\Trace}$).
This gives that the usage of $\widetilde{\Seq}$ to manipulate
$\schedules$ is equivalent to the $\SC$ case where there are
only thread events. Second, the proper event set formed
in~\cref{line:dctsopsom_eventsp}~of~\cref{algo:dctsopsom}
is uniquely determined, and mirrors the set of events $\Events{\Seq'}$
of the sequence $\Seq'$ created in \cref{line:dctsopsom_seqp}~of~\cref{algo:dctsopsom}.
The set of events $\Events{\Seq'}$ would be considered for the mutation
in the $\SC$ case, given that we consider buffer-writes of $\Events{\Seq'}$
as simply atomic write events that $\SC$ models.
Finally, the witness subroutine is handled by
$\AlgoTSO$ for $\TSO$ and $\AlgoPSO$ for $\PSO$, whose completeness
is established in \cref{lem:verifyingtso_correctness}
and \cref{lem:verifyingpso_correctness}.
Thus the completeness of $\DCTSOPSOM$ follows.

\noindent{\em Exploration-optimality.}
The exploration-optimality argument mirrors the one made
by~\citet{Abdulla19}, and can be simply established by considering
the sequence $\widetilde{\Seq}$
(\cref{line:dctsopsom_extendsequence}~of~\cref{algo:dctsopsom})
of each recursive call. The sequences
$\widetilde{\Seq}$ of all calls, coalesced together
with equal events merged, form a rooted tree.
Each node in the tree with multiple children is some read $\Read$.
Let us label each child branch by the source $\Read$ reads-from, in the
trace of the same call that owns the sequence introducing the child
branch. The source for $\Read$ is different in each branch,
and thus the same trace can never appear when following two
different branches of $\Read$. The exploration-optimality follows.

\noindent{\em Time complexity.}
From exploration-optimality we have that a run of $\DCTSOPSOM$
performs exactly $\left|\TraceSpaceMax_{\MemoryModel}/\sim_{\Observation}\right|$
calls. It remains to argue that each class of $\TraceSpaceMax_{\MemoryModel}/\sim_{\Observation}$ spends time
$O(\alpha)$ where
\begin{enumerate}
\item $\alpha=n^{k + O(1)}$ under $\MemoryModel=\TSO$, and
\item $\alpha=n^{k + O(1)} \cdot \min(n^{k\cdot (k-1) }, 2^{k\cdot \NumVariables})$
under $\MemoryModel=\PSO$.
\end{enumerate}
We split this argument to three parts.
\begin{enumerate}
\item Lines~\ref{line:dctsopsom_extendtrace}-\ref{line:dctsopsom_forinitscheduleend} spend $O(n)$ time per call.
\item One call of $\AlgoTSO$ resp. $\AlgoPSO$ spends $O(\alpha)$ time by~\cref{them:vtso} resp.~\cref{them:vpso}.
Thus Lines~\ref{line:dctsopsom_forreadstomutate}-\ref{line:dctsopsom_forreadstomutateend}
spend $O(n^2 \cdot \alpha)$ time per call.
\item The total number of mutations added into $\schedules$~(on~\cref{line:dctsopsom_add})
equals $\left|\TraceSpaceMax_{\MemoryModel}/\sim_{\Observation}\right| - 1$,
i.e., it equals the total number of calls minus the initial call.
However, we note that (i) each call adds only polynomialy many
new schedules, and (ii) a call to a new schedule is considered
work spent on the \emph{class corresponding to} the new schedule.
Thus Lines~\ref{line:dctsopsom_forrecurread}-\ref{line:dctsopsom_forrecurreadend} spend $O(1)$ amortized time per
recursive call, and $O(1)$ time is spent in this location per partitioning class.
\end{enumerate}
The complexity result follows.

\end{proof}

\section{Details of~\cref{sec:exp}}\label{sec:app_exp}

In this section we provide further details on our
consistency verification and SMC experiments.

\smallskip\noindent{\bf Technical details.}
For all our experiments we have used a Linux machine with Intel(R) Xeon(R)
CPU E5-1650 v3 @ 3.50GHz (12 CPUs) and 128GB of RAM.
We have run the Nidhugg version of 26. November 2020,
with Clang and LLVM version 8.

\subsection{Consistency Verification -- Experimental Setup Details}\label{subsec:app_verifysetup}

Here we describe in detail the collection of instances
for evaluation of the consistency verification algorithms.
We generate and collect the $\VTSOrm$ and $\VPSOrm$ instances
that appear during SMC of our benchmarks using the
reads-from SMC algorithm $\DCTSOPSOM$.

We supply the unroll bounds to the benchnmarks 
so that 
the created $\VTSOrm$/$\VPSOrm$ instances
are solvable in a time reasonable for experiments
(i.e., more than a tiny fraction of a second,
and within a minute). We run each benchmark with several such unroll
bounds. Further, as a filter of too small instances,
we only consider realizable instances where at least one verification
algorithm without closure took at least 0.05 seconds.

For each SMC run, to collect a diverse set of instances, we collect
every fifth realizable instance we encounter, and every fifth
unrealizable instance we encounter. In this way
we collect 50 realizable instances and 20
unrealizable instances.
For each collected instance, we run all verification algorithms
and closure/no-closure variants 
5 times, and average the results.
We run all verification algorithms subject to a timeout of
one minute.

%
%

\subsection{Consistency Verification -- Further Results}\label{subsec:app_verifyfurther}

Here we provide further analysis of the results obtained
for evaluation of the consistency verification algorithms
$\AlgoTSO$, $\AlgoPSO$, $\NaiveAlgoTSO$ and $\NaiveAlgoPSO$, as well as
the closure heuristic of~\cref{subsec:heuristics}.

\Paragraph{Detailed results -- effect of closure.}
Here we compare each veficiation algorithm against itself, where
one version uses closure and the other one does not.

\begin{figure}[h]
\centering
\begin{subfigure}[b]{0.44\textwidth}
\centering
\includegraphics[height=5.4cm]{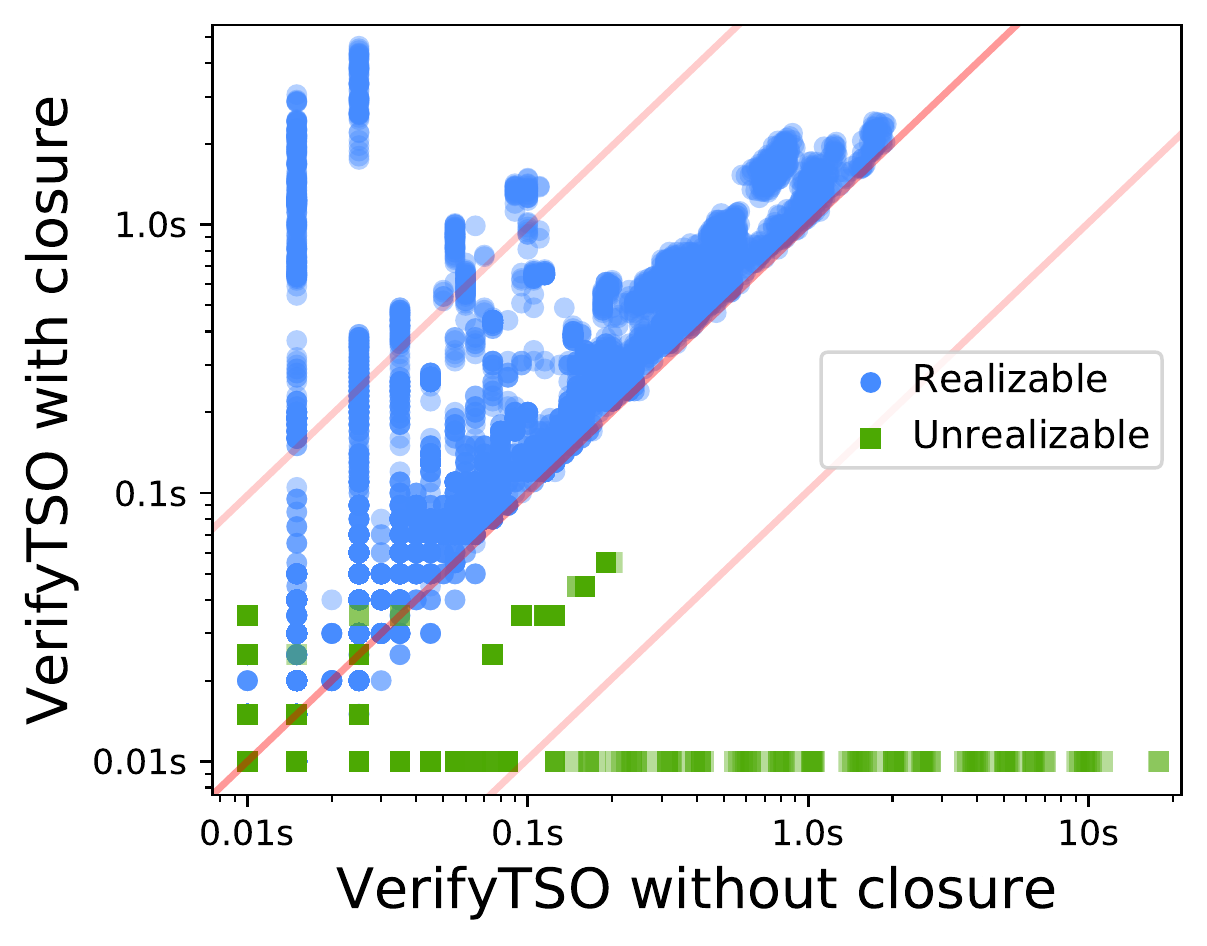}
\label{subfig:tso_us_clnocl_all}
\end{subfigure}
\qquad
\begin{subfigure}[b]{0.44\textwidth}
\centering
\includegraphics[height=5.4cm]{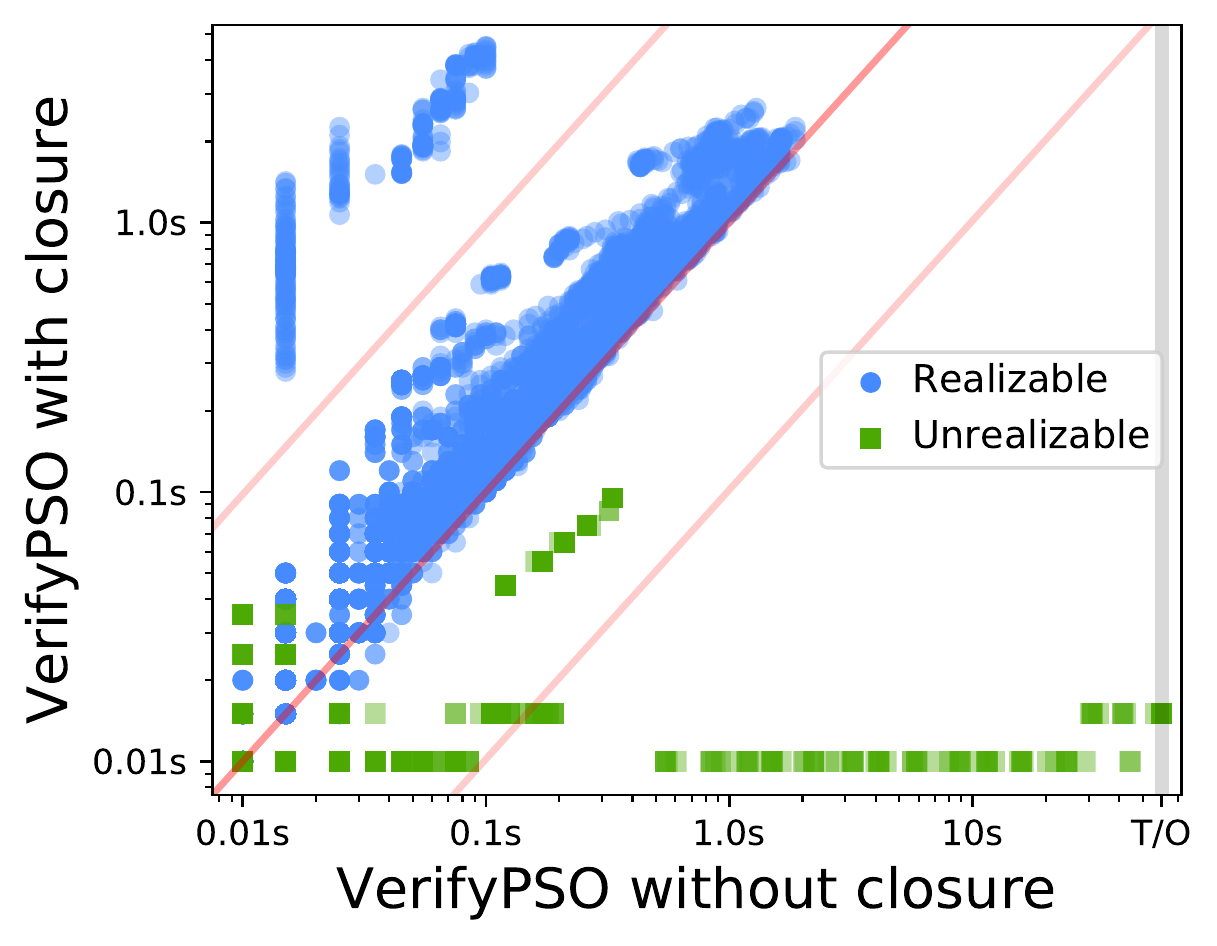}
\label{subfig:pso_us_clnocl_all}
\end{subfigure}
\qquad
\vspace{-5mm}
\caption{
Comparison of $\AlgoTSO$ (left) and $\AlgoPSO$ (right) with and without the closure.
}
\label{fig:us_clnocl_all}
\end{figure}

\cref{fig:us_clnocl_all} presents the comparison of
$\AlgoTSO$ (used for $\VTSOrm$) and $\AlgoPSO$ (used for $\VPSOrm$)
with and without closure. For both memory models, we see that for
instances that are realizable (blue dots), the version without closure
is superior, sometimes even beyond an order of magnitude. This suggests
that computing the closure-partial-order takes more time than is
subsequently saved by utilizing it during the witness search.
On the other hand, we observe that for the instances that are not
realizable (green dots), the version with closure is orders-of-magnitude
faster. This signifies that closure detects unrealizable instances
much faster than complete exploration of a consistency verification
algorithm.

\begin{figure}[h]
\centering
\begin{subfigure}[b]{0.44\textwidth}
\centering
\includegraphics[height=5.4cm]{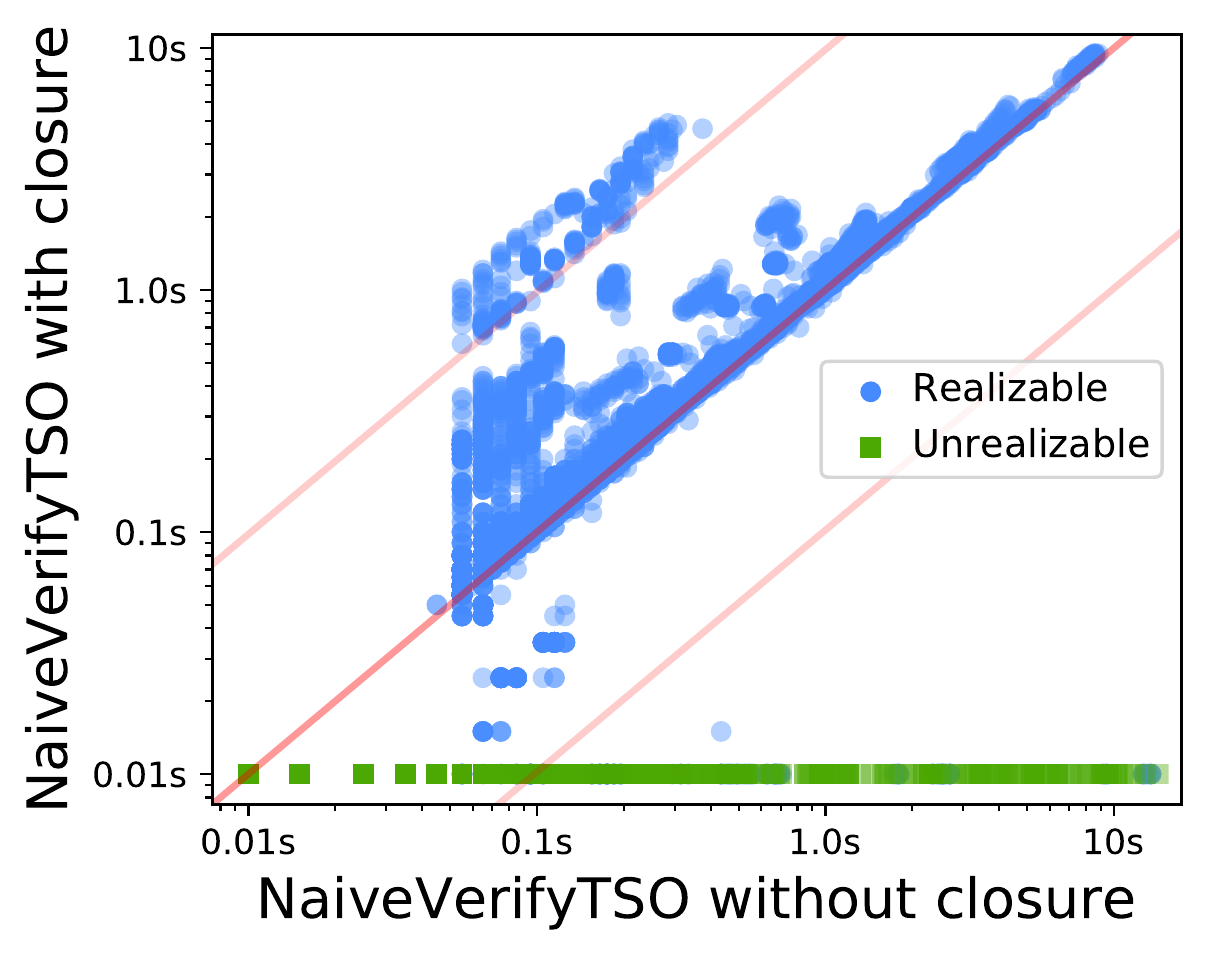}
\label{subfig:tso_them_clnocl_all}
\end{subfigure}
\qquad
\begin{subfigure}[b]{0.44\textwidth}
\centering
\includegraphics[height=5.4cm]{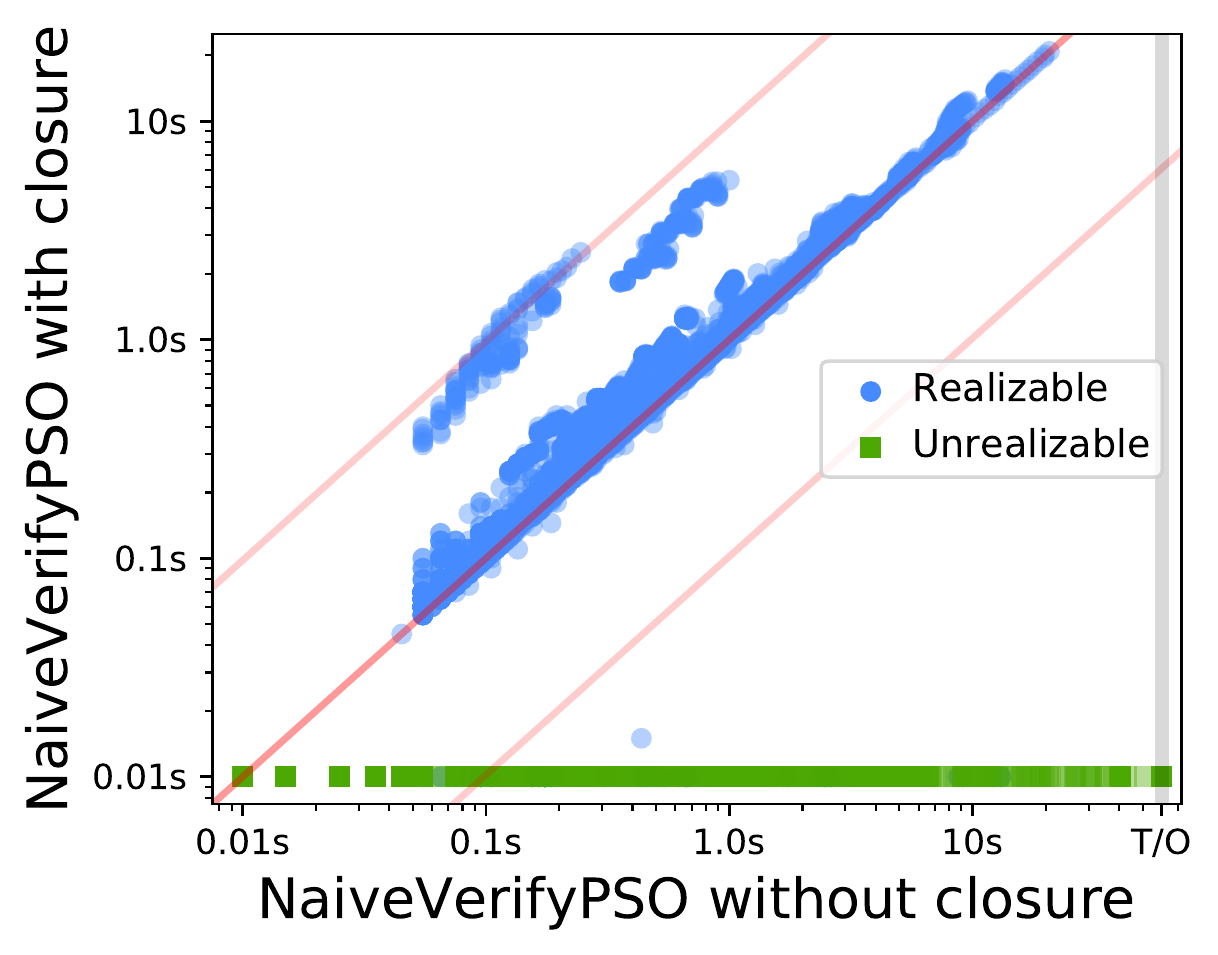}
\label{subfig:pso_them_clnocl_all}
\end{subfigure}
\qquad
\vspace{-5mm}
\caption{
Comparison of $\NaiveAlgoTSO$ (left) and $\NaiveAlgoPSO$ (right) with and without the closure.
}
\label{fig:them_clnocl_all}
\end{figure}

\cref{fig:them_clnocl_all} presents the comparison of
$\NaiveAlgoTSO$ (used for $\VTSOrm$) and $\NaiveAlgoPSO$ (used for $\VPSOrm$)
with and without closure. We observe trends similar to the paragraph
above. Specifically, both $\NaiveAlgoTSO$ and $\NaiveAlgoPSO$ are
mostly faster without closure on realizable instances, while they are
significantly faster with closure on unrealizable instances.

\begin{figure}[h]
\centering
\begin{subfigure}[b]{0.44\textwidth}
\centering
\includegraphics[height=5.4cm]{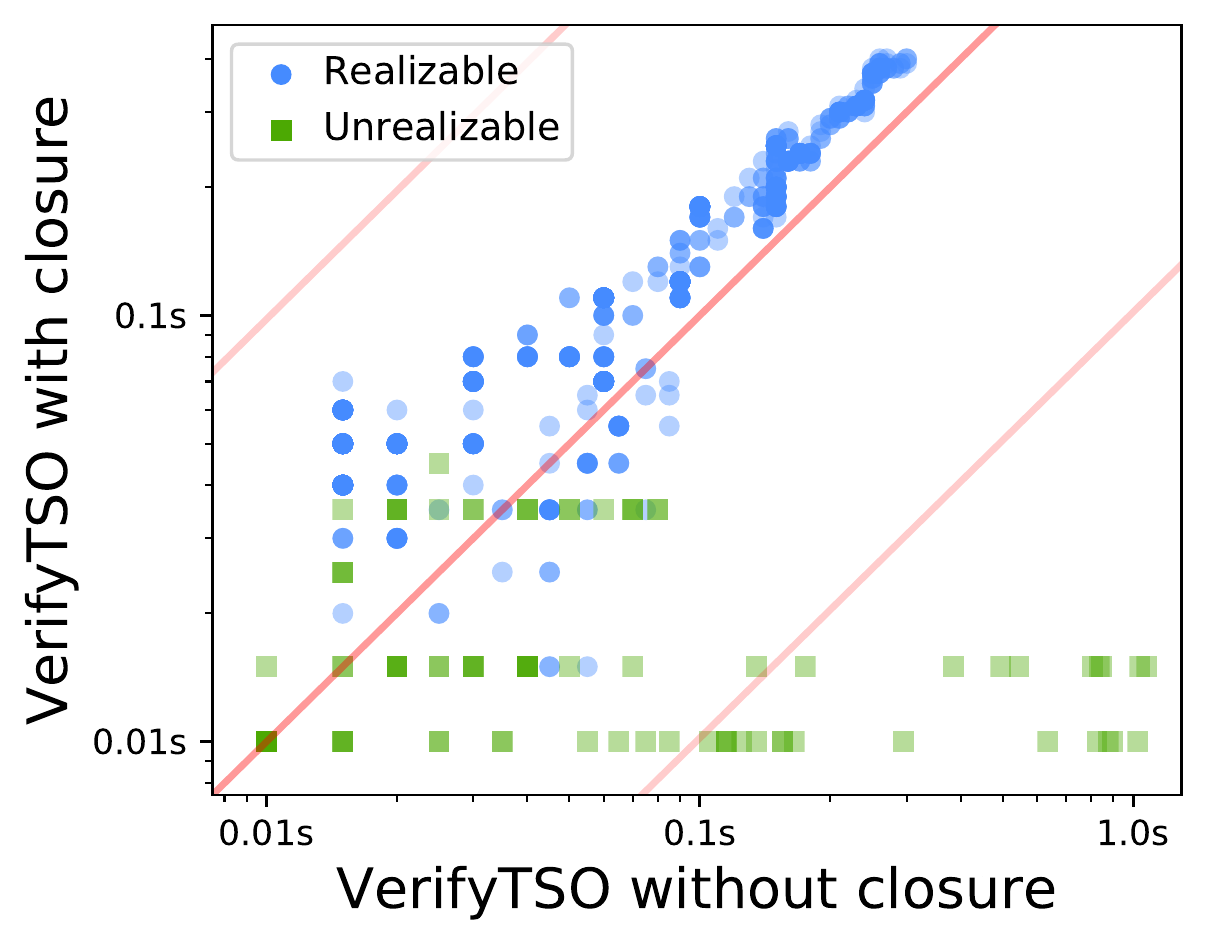}
\label{subfig:tso_cas_us_cleffect}
\end{subfigure}
\qquad
\begin{subfigure}[b]{0.44\textwidth}
\centering
\includegraphics[height=5.4cm]{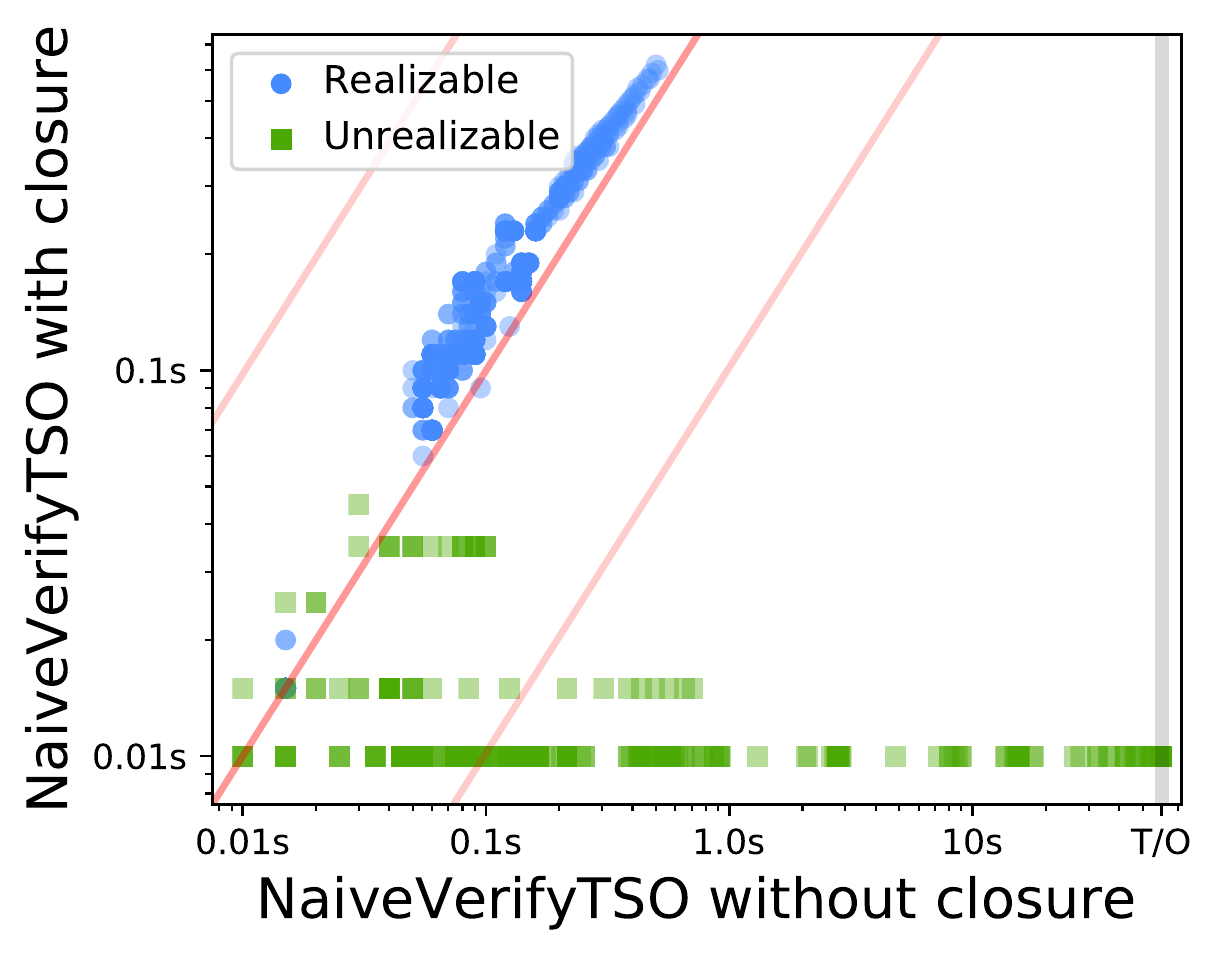}
\label{subfig:tso_cas_them_cleffect}
\end{subfigure}
\qquad
\vspace{-5mm}
\caption{
Comparison of $\AlgoTSO$ (left) and $\NaiveAlgoTSO$ (right) with and without closure
on verification instances with RMW and CAS instructions.
}
\label{fig:tso_cas_cleffect}
\end{figure}

Finally, \cref{fig:tso_cas_cleffect} presents the effect of closure
for $\AlgoTSO$ and $\NaiveAlgoTSO$
on verification instances that contain RMW and CAS instructions.
Similarly to the verification without RMW and CAS instructions,
both verification algorithms are somewhat slower when using closure
on the realizable instances, and they are significantly faster
when using closure on the unrealizable instances.

\subsection{SMC -- Experimental Setup Details}\label{subsec:app_expdetails}

Here we present further details regarding the setup for SMC experiments.

\smallskip\noindent{\bf Handling assertion violations.}
We note that not all benchmarks behave as intended under all memory models,
e.g., a benchmark might be correct under $\SC$, but contain bugs under $\TSO$.
However, this is not an issue, as our goal is to characterize the size
of the underlying partitionings, rather than detecting assertion violations.
We have disabled all assertions, in order to not have the measured
parameters be affected by how fast a violation is discovered, as the
latter is arbitrary.
As a sanity check, we have confirmed that for each memory model,
all algorithms considered for that model discover the same bugs when assertions are enabled.

\smallskip\noindent{\bf Identifying events.}
Our implementation extends the Nidhugg model checker and we rely on the
interpreter built inside Nidhugg to identify events. An event $\Event$ is
defined by a triple ($a_{\Event}, b_{\Event}, c_{\Event}$), where $a_{\Event}$
is the thread-id of $\Event$, $b_{\Event}$ is the id of either the buffer of
$a_{\Event}$ or the main-thread of $a_{\Event}$ that $\Event$ is a part of, and
$c_{\Event}$ is the sequential number of the last LLVM instruction
(of the corresponding thread/buffer) that is part of $\Event$. It can happen
that there exist two traces $\Trace_1$ and $\Trace_2$, and two different events
$\Event_1 \in \Trace_1$, $\Event_2 \in \Trace_2$, such that their identifiers are
equal, i.e., $a_{\Event_1}=a_{\Event_2}$, $b_{\Event_2}=b_{\Event_2}$, and
$c_{\Event_1}=c_{\Event_2}$. However, this means that the control-flow leading
to each event is different. In this case, $\Trace_1$ and $\Trace_2$ differ in
the reads-from of a common event that is ordered by the program
order $\TO$ both before $\Event_1$ in
$\Trace_1$ and before $\Event_2$ in $\Trace_2$, and hence $\Event_1$ and
$\Event_2$ are treated as inequivalent.

\smallskip\noindent{\bf Dynamic thread creation.}
For simplicity of presentation of our approach, we have neglected dynamic thread creation
and assumed a static set of threads for a given concurrent program. In practice, all our
benchmarks spawn threads dynamically. This situation is handled straightforwardly, by
including in the 
program order $\TO$
the orderings naturally induced by spawn and join events.

\smallskip\noindent{\bf Benchmark adaptations.}
We have made small changes to some of the SVCOMP benchmarks so they can be
processed by our prototype implementation in Nidhugg:
\begin{itemize}
\item Verifier calls to perform acquire and release are handled by a \texttt{pthread\_mutex}.
\item In order to eliminate intra-thread nondeterminism,
verifier calls to nondeterministically produce an arbitrary integer are replaced by a constant value.
\end{itemize}
Further, we have made steps to obtain scalable versions of benchmarks:
\begin{itemize}
\item In mutual exclusion benchmarks, the thread routines are put in a loop
with scalable size, so threads can reenter a critical section multiple times.
\item We manually perform loop unrolling, i.e., we limit the amount of times
each loop is executed by a scalable bound, instead of relying
on the loop bounding technique provided by Nidhugg.
\end{itemize}

\subsection{SMC -- Full Experimental Results}\label{subsec:app_expfull}

Here we provide full results of the SMC experiments.
We first provide several further scatter plots to compactly
illustrate the full experimental results.
For each fixed plot comparing two algorithms, 
we plot the execution times and
the numbers of explored maximal traces as follows. For each benchmark,
we consider the highest attempted unroll bound 
where both compared algorithms finish 
before the one-hour timeout.
Then we plot the time and the number of traces obtained
by the two algorithms on the benchmark scaled with the
above unroll bound.

In each plot, the opaque
(resp. semi-transparent) red line represents identity
(resp. order-of-magnitude improvement).
Green dots indicate that a trace reduction was achieved on the
underlying benchmark by the algorithm on the y-axis, as compared
to the algorithm on the x-axis. Benchmarks with no trace reduction
are represented by blue dots.

\begin{figure}[h]
\raggedright
\begin{subfigure}[b]{0.44\textwidth}
\raggedright
\includegraphics[height=5.4cm]{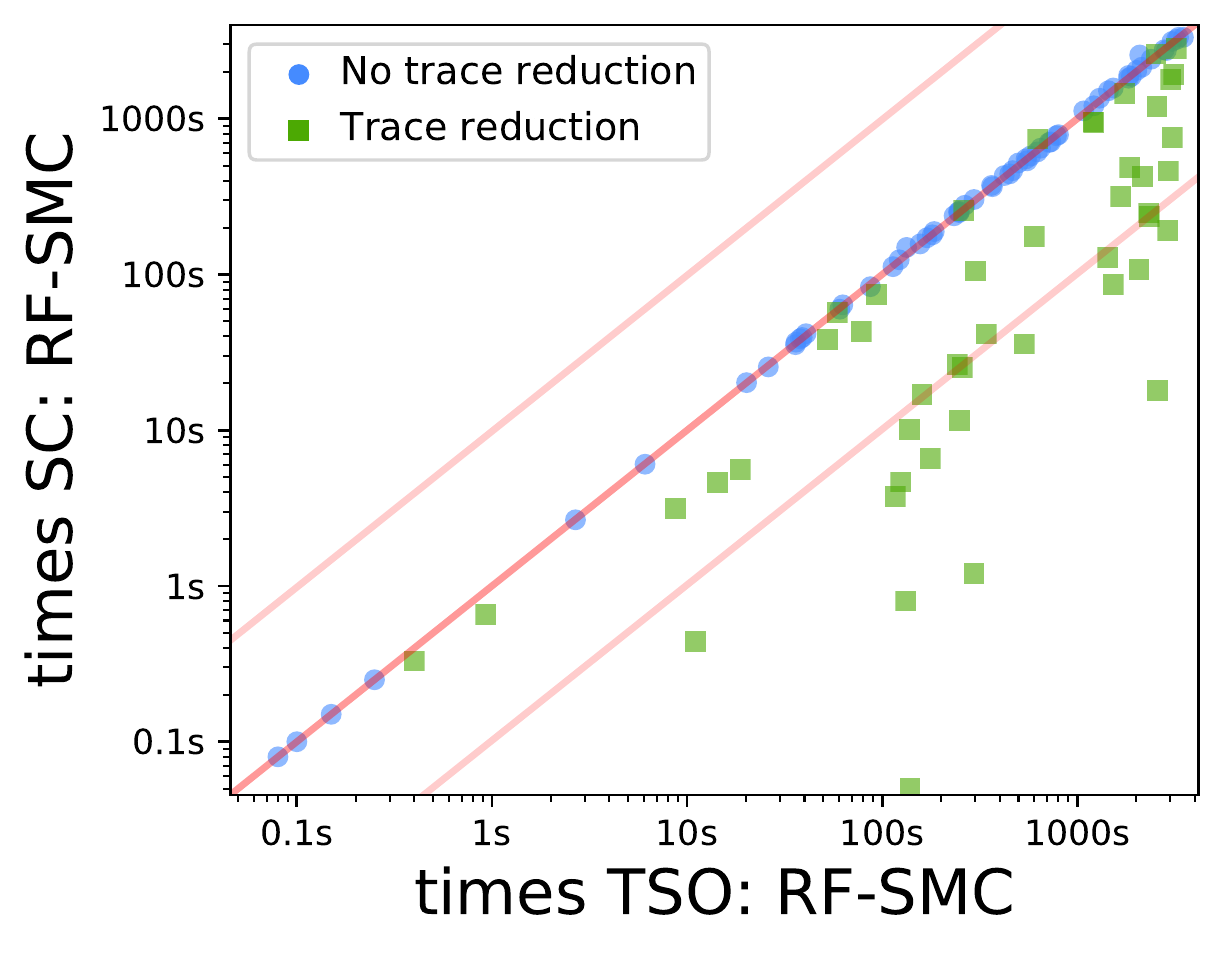}
\label{subfig:p_sctso_ti}
\end{subfigure}
\qquad
\begin{subfigure}[b]{0.44\textwidth}
\raggedright
\includegraphics[height=5.4cm]{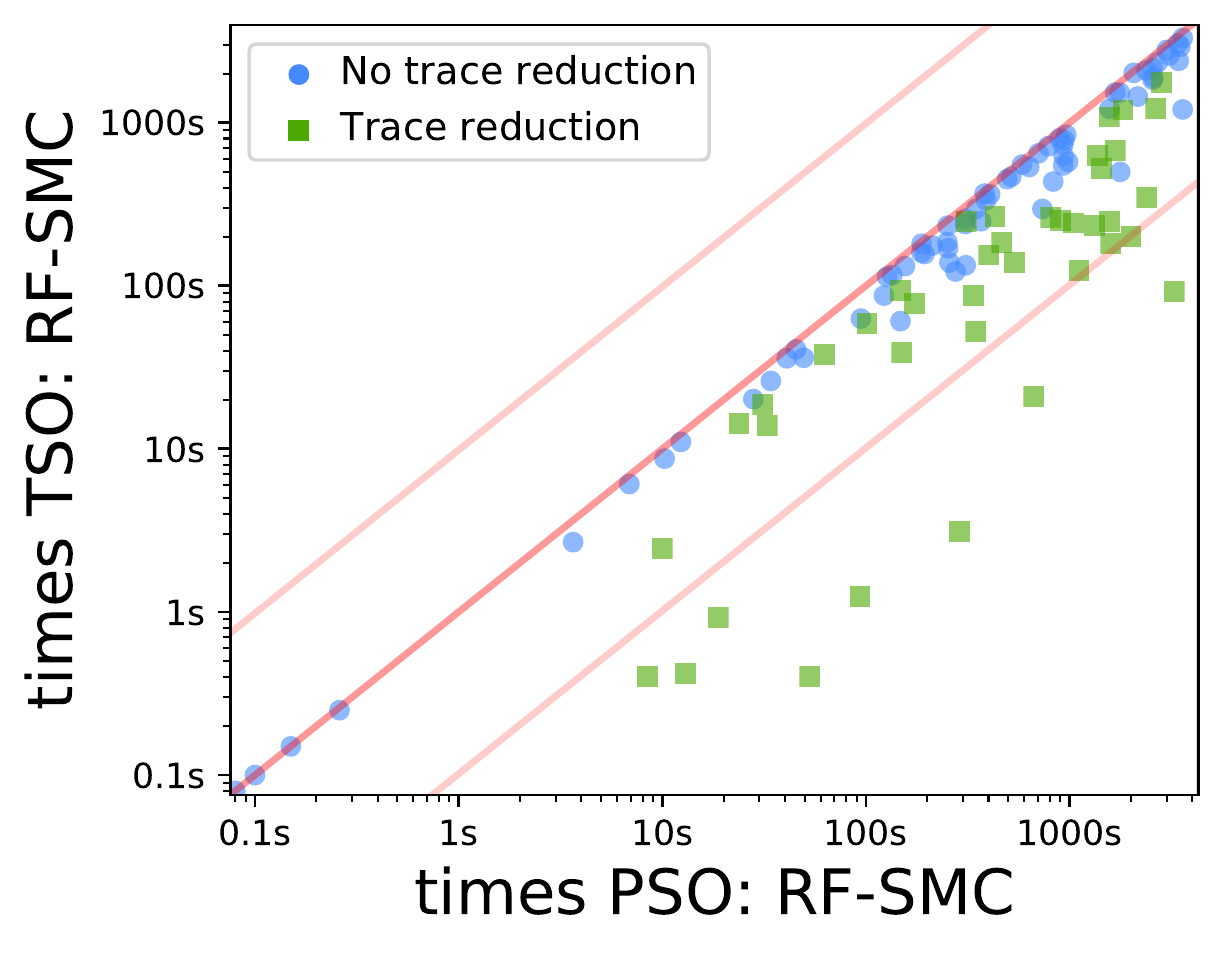}
\label{subfig:p_tsopso_ti}
\end{subfigure}
\qquad
\vspace{-5mm}
\caption{
Times comparison as $\DCTSOPSOM$ moves from $\SC$ to $\TSO$ (left) and from $\TSO$ to $\PSO$ (right).
}
\label{fig:models_times}
\end{figure}

\cref{fig:models_times} captures how analyzing a concurrent program
by $\DCTSOPSOM$
under more relaxed memory settings affects the execution time.
Unsurprisingly, when a program exhibits additional behavior under
a more relaxed model, more time is required to fully analyze it under
the more relaxed model. Green dots represent such programs.
On the other hand, for programs (represented by blue dots) where
the number of traces stays the same in the more relaxed model,
the time required for analysis is only minorly impacted.

\begin{figure}[h]
\raggedright
\begin{subfigure}[b]{0.44\textwidth}
\raggedright
\includegraphics[height=5.4cm]{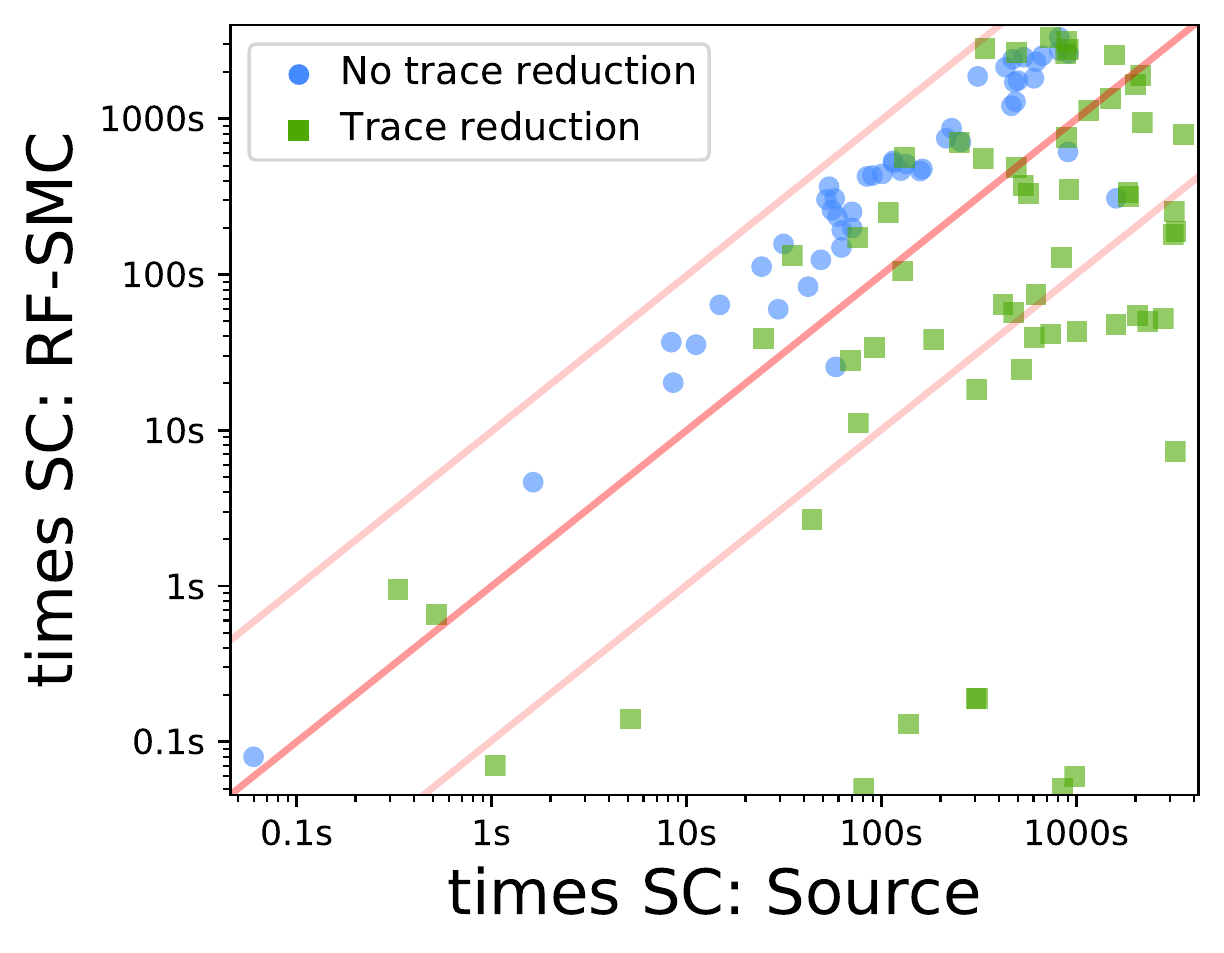}
\label{subfig:p_sc_exo_source_ti}
\end{subfigure}
\qquad
\begin{subfigure}[b]{0.44\textwidth}
\raggedright
\includegraphics[height=5.4cm]{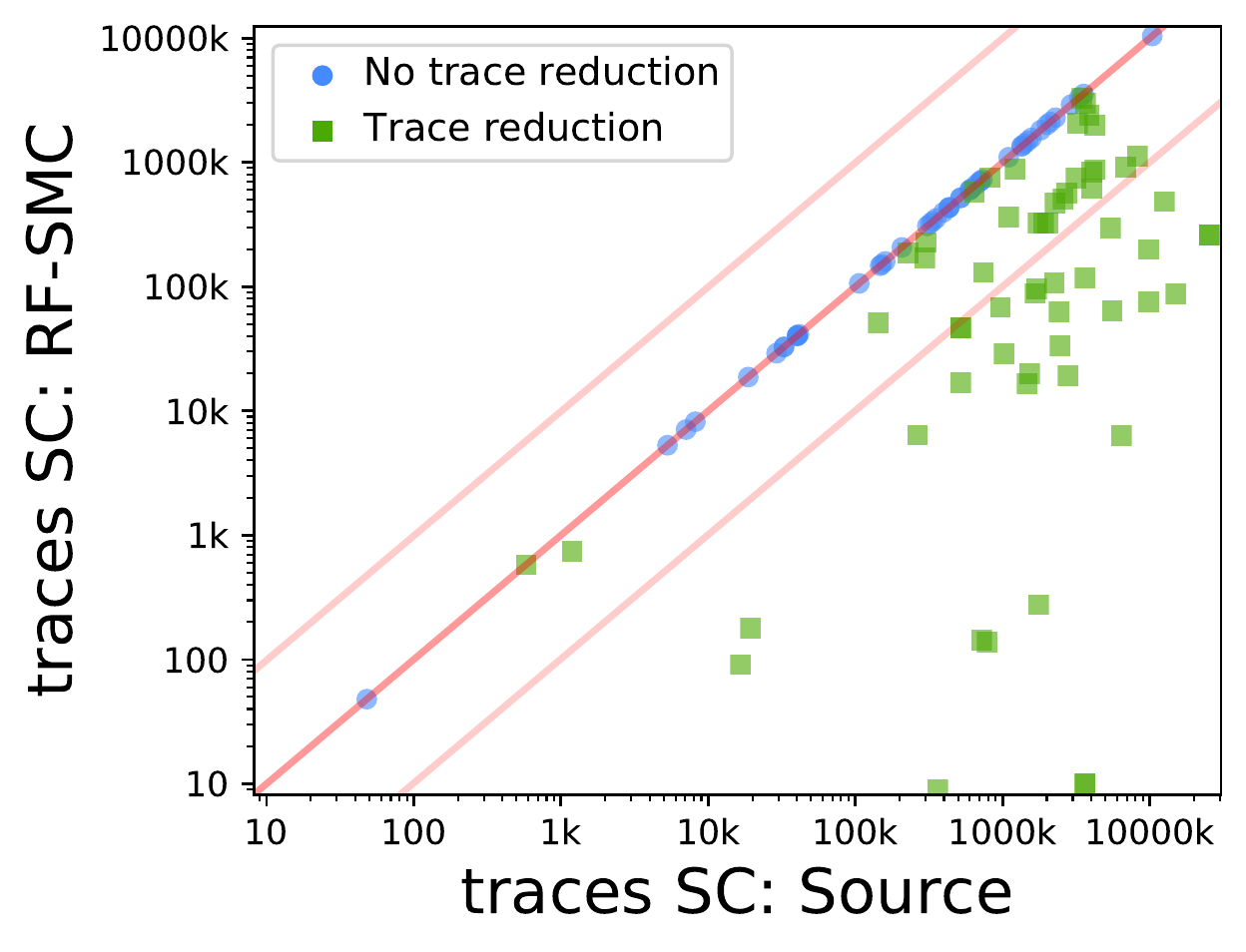}
\label{subfig:p_sc_exo_source_tr}
\end{subfigure}
\qquad
\vspace{-5mm}
\caption{
Times (left) and traces (right) comparison for $\DCTSOPSOM$ and $\Source$ on the $\SC$ memory model.
}
\label{fig:sc_exo_source}
\end{figure}

\begin{figure}[h]
\centering
\begin{subfigure}[b]{0.44\textwidth}
\centering
\includegraphics[height=5.4cm]{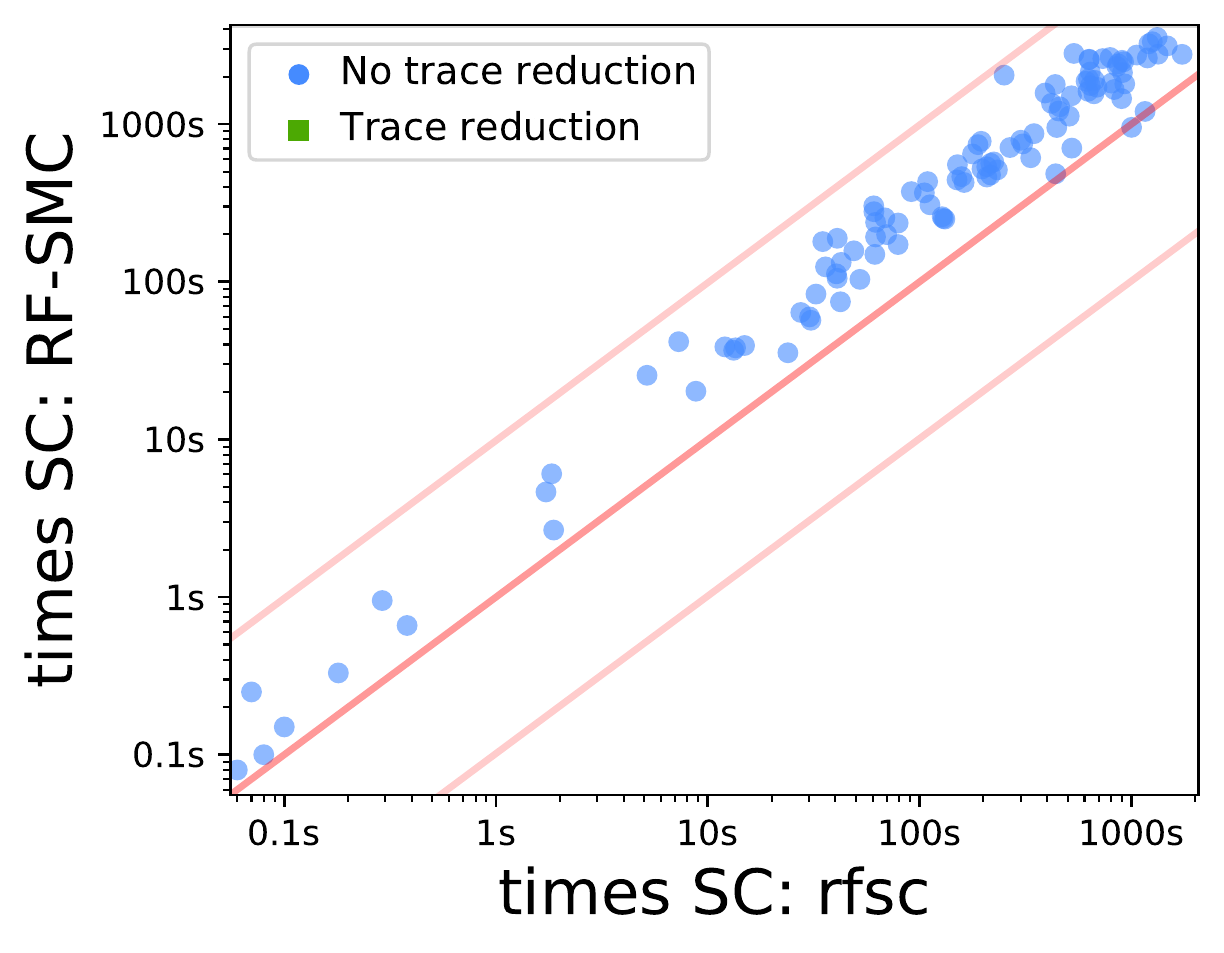}
\label{subfig:p_sc_exo_rfsc_ti}
\end{subfigure}
\qquad\quad
\vspace{-5mm}
\caption{
Times comparison for $\DCTSOPSOM$ and $\ReadsFrom$ on the $\SC$ memory model (traces coincide).
}
\label{fig:sc_exo_rfsc}
\end{figure}

\cref{fig:sc_exo_source} compares in $\SC$ the algorithm $\Source$
with our algorithm 
$\DCTSOPSOM$ that handles $\SC$ as $\TSO$ where a fence event is
inserted after every buffer-write event. Similar trends are observed
as when $\Source$ and $\DCTSOPSOM$ are compared in $\TSO$ and $\PSO$.
Specifically, there are cases where the RF partitioning offers
reduction of the trace space size to be explored (green dots),
and this often leads to significant speedup of the exploration.
On the other hand, $\Source$ dominates in cases where no RF-based
partitioning is induced (blue dots).

Further,
\cref{fig:sc_exo_rfsc} compares in $\SC$
the algorithm $\DCTSOPSOM$ with $\ReadsFrom$,
the reads-from SMC algorithm for $\SC$ presented by~\citet{Abdulla19}.
These two are essentially identical algorithms, thus unsurprisingly,
the number of explored traces coincides in all cases. However, the
well-engineered implementation of $\ReadsFrom$ is faster than our
implementation of $\DCTSOPSOM$. This comparison provides a rough
illustration of the effect of the optimizations and data-structures
recently employed by $\ReadsFrom$ in the work of~\citet{LangS20}.

\begin{figure}[h]
\raggedright
\begin{subfigure}[b]{0.445\textwidth}
\raggedright
\includegraphics[height=5.26cm]{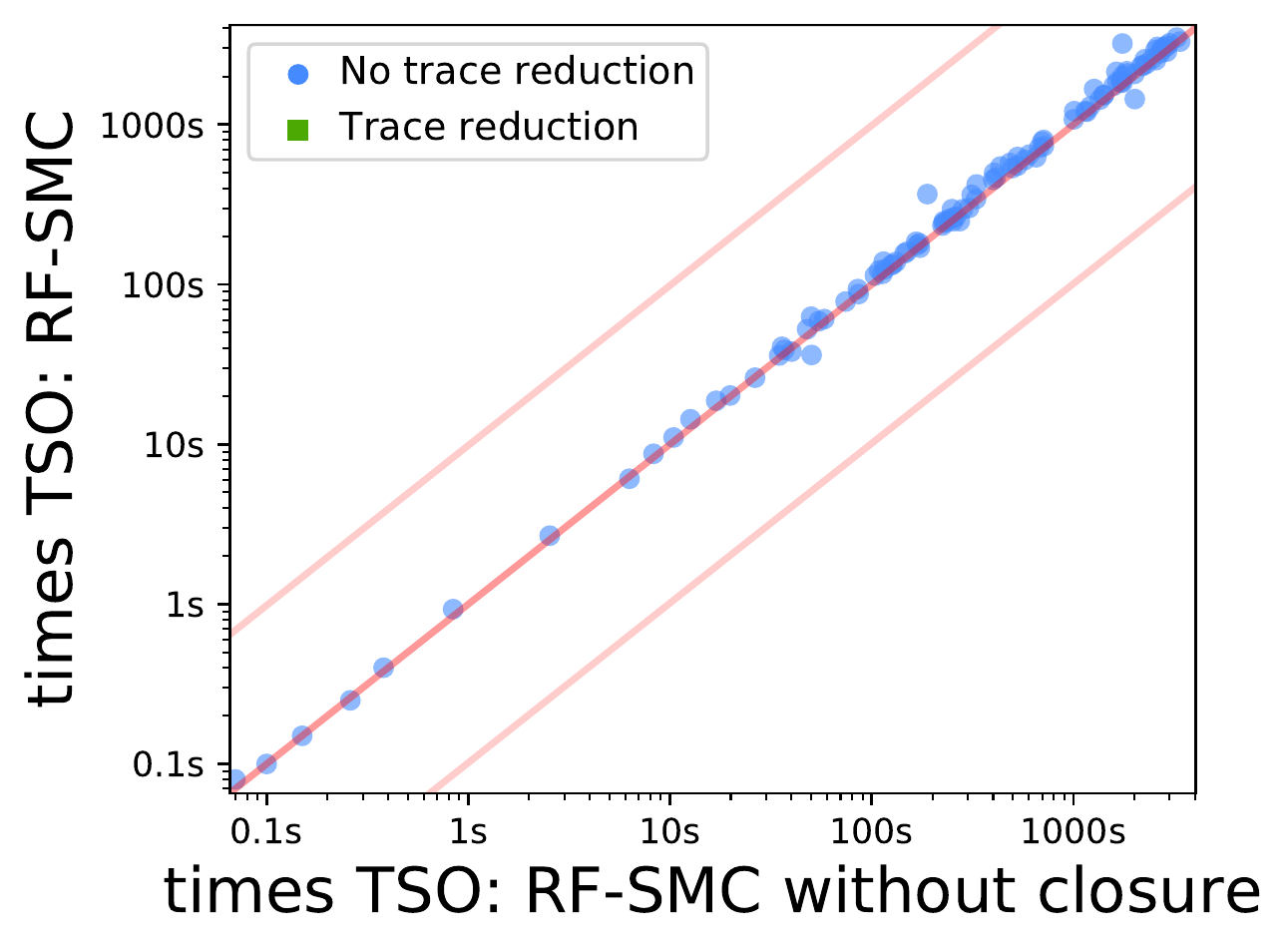}
\label{subfig:p_tso_exoclosure_ti}
\end{subfigure}
\qquad
\begin{subfigure}[b]{0.44\textwidth}
\raggedright
\includegraphics[height=5.26cm]{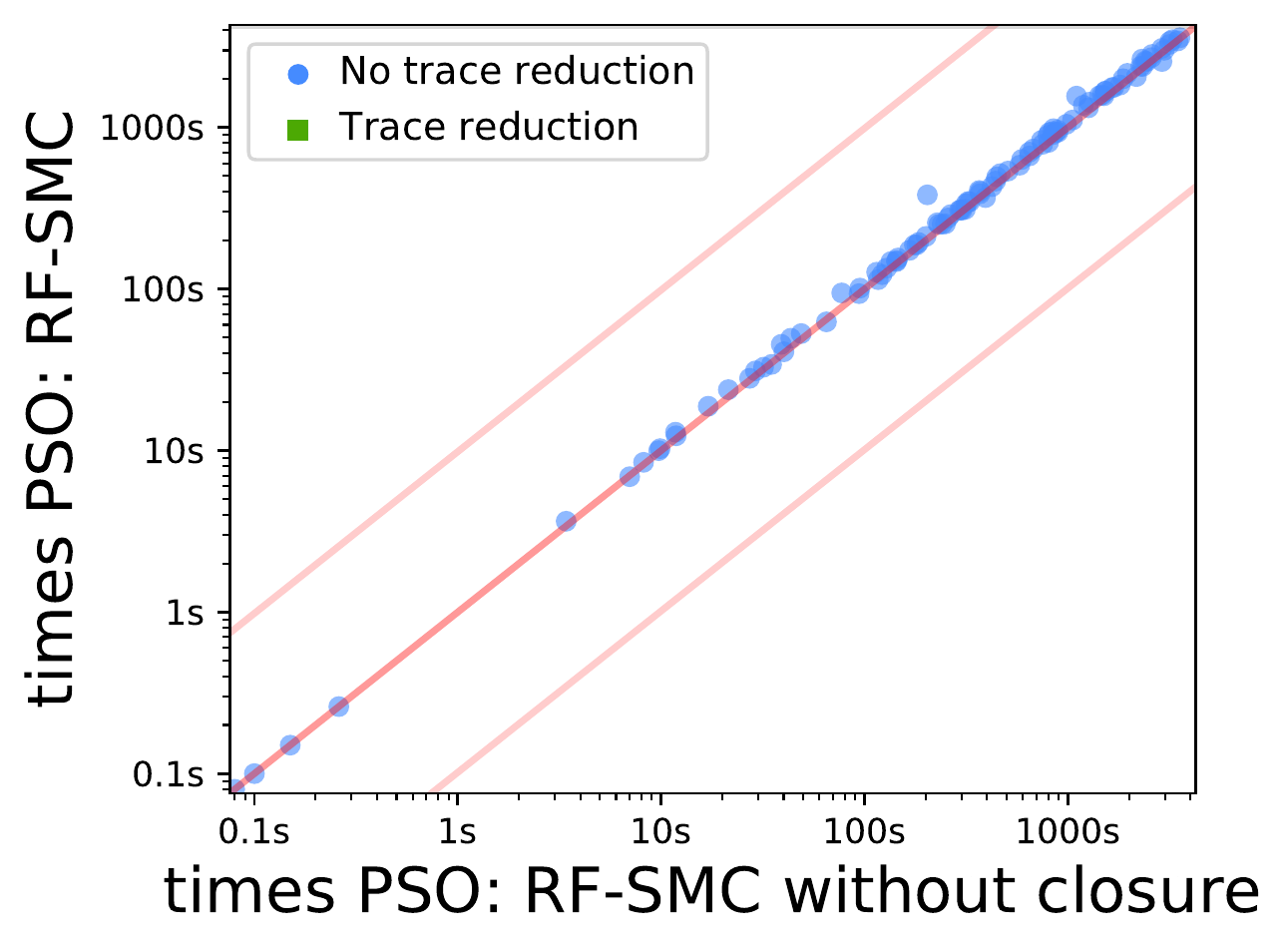}
\label{subfig:p_pso_exoclosure_ti}
\end{subfigure}
\qquad
\vspace{-5mm}
\caption{
Times comparison for $\DCTSOPSOM$ with and without closure on $\TSO$ (left) and $\PSO$ (right) memory models (traces coincide).
}
\label{fig:exo_closure}
\end{figure}

In \cref{subsec:app_verifyfurther} we have seen that utilizing closure
in consistency verification of realizable instances is mostly detrimental,
whereas in consistency verification of unrealizable cases it is
extremely helpful. This naturally begs a question whether it is
overall beneficial to use closure in SMC. The plots in
\cref{fig:exo_closure} present the results for such an experiment.
The plots demonstrate that the time differences are negligible.
The number of traces is, unsurprisingly, unaffected (it is also supposed
to be unaffected, since closure is sound and $\AlgoTSO$/$\AlgoPSO$ are sound and complete).

We have further considered an auxiliary-trace heuristic for guiding
$\AlgoTSO$ resp. $\AlgoPSO$, similar to the heuristic reported
by~\citet{Abdulla19}. Similar to the paragraph above, this heuristic
provided little-to-no time difference in the model checking task
in our experiments.

%
%


\begin{table}
\scriptsize
\newcolumntype{?}{!{\vrule width 1.5pt}}
\setlength{\extrarowheight}{.05em}
\begin{tabular}{? l | c | c ? r r r ? r r ? r r ? }
\specialrule{.15em}{0em}{0em}
\multicolumn{2}{?c|}{\multirow{2}{*}{\textbf{Benchmark}}} & \multirow{2}{*}{U} & \multicolumn{3}{c?}{\textbf{Sequential Consistency}} & \multicolumn{2}{c?}{\textbf{Total Store Order}} & \multicolumn{2}{c?}{\textbf{Partial Store Order}}\\
\cline{4-10}
\multicolumn{2}{?c|}{} & & $\ReadsFrom$ & $\DCTSOPSOM$ & $\Source$ & $\DCTSOPSOM$ & $\Source$ & $\DCTSOPSOM$ & $\Source$\\
\specialrule{.1em}{0em}{0em}
\multirow{4}{*}{\begin{tabular}{l}
\textbf{approxds\_append}\\
threads: U
\end{tabular}}
&
\multirow{2}{*}{Traces}
                                                                                                     &                                                5 &                                    \textbf{9945} &                                    \textbf{9945} &                                           127740 &                                    \textbf{9945} &                                           127740 &                                    \textbf{9945} &                                           127740 \\
                                                   &                                                  &                                                6 &                                  \textbf{198936} &                                  \textbf{198936} &                                          9847080 &                                  \textbf{198936} &                                          9847080 &                                  \textbf{198936} &                                          9847080 \\
\cline{2-10}
&
\multirow{2}{*}{Times}
                                                                                                     &                                                5 &                                   \textbf{3.00s} &                                            9.64s &                                              32s &                                   \textbf{9.55s} &                                              29s &                                     \textbf{14s} &                                              25s \\
                                                   &                                                  &                                                6 &                                     \textbf{68s} &                                             254s &                                            3161s &                                    \textbf{249s} &                                            3068s &                                    \textbf{368s} &                                            2871s \\
\specialrule{.1em}{0em}{0em}
\multirow{4}{*}{\begin{tabular}{l}
\textbf{check\_bad\_array}\\
threads: U
\end{tabular}}
&
\multirow{2}{*}{Traces}
                                                                                                     &                                                6 &                                   \textbf{75921} &                                   \textbf{75921} &                                           357368 &                                   \textbf{75921} &                                           357368 &                                   \textbf{75921} &                                           357368 \\
                                                   &                                                  &                                                7 &                                 \textbf{1115240} &                                 \textbf{1115240} &                                          8245810 &                                 \textbf{1115240} &                                          8245810 &                                 \textbf{1115240} &                                                - \\
\cline{2-10}
&
\multirow{2}{*}{Times}
                                                                                                     &                                                6 &                                     \textbf{20s} &                                              42s &                                             102s &                                     \textbf{43s} &                                              94s &                                     \textbf{48s} &                                             284s \\
                                                   &                                                  &                                                7 &                                    \textbf{300s} &                                             790s &                                            3524s &                                    \textbf{802s} &                                            2418s &                                    \textbf{893s} &                                                - \\
\specialrule{.1em}{0em}{0em}
\multirow{4}{*}{\begin{tabular}{l}
\textbf{circular\_buffer}\\
threads: 2
\end{tabular}}
&
\multirow{2}{*}{Traces}
                                                                                                     &                                               10 &                                  \textbf{184756} &                                  \textbf{184756} &                                  \textbf{184756} &                                  \textbf{184756} &                                  \textbf{184756} &                                  \textbf{184756} &                                  \textbf{184756} \\
                                                   &                                                  &                                               11 &                                  \textbf{705432} &                                  \textbf{705432} &                                  \textbf{705432} &                                  \textbf{705432} &                                  \textbf{705432} &                                                - &                                  \textbf{705432} \\
\cline{2-10}
&
\multirow{2}{*}{Times}
                                                                                                     &                                               10 &                                             140s &                                             442s &                                     \textbf{83s} &                                             435s &                                     \textbf{91s} &                                             831s &                                     \textbf{99s} \\
                                                   &                                                  &                                               11 &                                             639s &                                            2140s &                                    \textbf{431s} &                                            2140s &                                    \textbf{408s} &                                                - &                                    \textbf{383s} \\
\specialrule{.1em}{0em}{0em}
\multirow{4}{*}{\begin{tabular}{l}
\textbf{control\_flow}\\
threads: 2U+2
\end{tabular}}
&
\multirow{2}{*}{Traces}
                                                                                                     &                                                7 &                                     \textbf{143} &                                     \textbf{143} &                                           720720 &                                     \textbf{143} &                                           720720 &                                     \textbf{143} &                                           720720 \\
                                                   &                                                  &                                               14 &                                   \textbf{16413} &                                   \textbf{16413} &                                                - &                                   \textbf{16413} &                                                - &                                   \textbf{16413} &                                                - \\
\cline{2-10}
&
\multirow{2}{*}{Times}
                                                                                                     &                                                7 &                                   \textbf{0.08s} &                                            0.19s &                                             310s &                                   \textbf{0.19s} &                                             273s &                                   \textbf{0.21s} &                                             293s \\
                                                   &                                                  &                                               14 &                                   \textbf{7.28s} &                                              42s &                                                - &                                     \textbf{41s} &                                                - &                                     \textbf{45s} &                                                - \\
\specialrule{.1em}{0em}{0em}
\multirow{4}{*}{\begin{tabular}{l}
\textbf{dispatcher}\\
threads: U+1
\end{tabular}}
&
\multirow{2}{*}{Traces}
                                                                                                     &                                                4 &                                    \textbf{6854} &                                    \textbf{6854} &                                    \textbf{6854} &                                    \textbf{6854} &                                    \textbf{6854} &                                    \textbf{6854} &                                    \textbf{6854} \\
                                                   &                                                  &                                                5 &                                  \textbf{151032} &                                  \textbf{151032} &                                  \textbf{151032} &                                  \textbf{151032} &                                  \textbf{151032} &                                  \textbf{151032} &                                  \textbf{151032} \\
\cline{2-10}
&
\multirow{2}{*}{Times}
                                                                                                     &                                                4 &                                   \textbf{8.16s} &                                              16s &                                              22s &                                     \textbf{17s} &                                              23s &                                              23s &                                     \textbf{22s} \\
                                                   &                                                  &                                                5 &                                    \textbf{334s} &                                             612s &                                             901s &                                    \textbf{625s} &                                            1052s &                                    \textbf{935s} &                                            1064s \\
\specialrule{.1em}{0em}{0em}
\multirow{4}{*}{\begin{tabular}{l}
\textbf{electron\_microscope}\\
threads: 2
\end{tabular}}
&
\multirow{2}{*}{Traces}
                                                                                                     &                                                3 &                                     \textbf{984} &                                     \textbf{984} &                                     \textbf{984} &                                     \textbf{984} &                                     \textbf{984} &                                   \textbf{90866} &                                   \textbf{90866} \\
                                                   &                                                  &                                                4 &                                   \textbf{20347} &                                   \textbf{20347} &                                   \textbf{20347} &                                   \textbf{20347} &                                   \textbf{20347} &                                                - &                                \textbf{11613535} \\
\cline{2-10}
&
\multirow{2}{*}{Times}
                                                                                                     &                                                3 &                                            0.31s &                                            0.41s &                                   \textbf{0.12s} &                                            0.40s &                                   \textbf{0.14s} &                                              53s &                                     \textbf{22s} \\
                                                   &                                                  &                                                4 &                                            5.58s &                                            9.46s &                                   \textbf{1.82s} &                                            9.06s &                                   \textbf{3.29s} &                                                - &                                   \textbf{2479s} \\
\specialrule{.1em}{0em}{0em}
\multirow{4}{*}{\begin{tabular}{l}
\textbf{eratosthenes}\\
threads: 2
\end{tabular}}
&
\multirow{2}{*}{Traces}
                                                                                                      &                                               13 &                                     \textbf{535} &                                     \textbf{535} &                                             3358 &                                    \textbf{1467} &                                            32800 &                                    \textbf{3375} &                                           175616 \\
                                                   &                                                  &                                               17 &                                    \textbf{4667} &                                    \textbf{4667} &                                           100664 &                                   \textbf{29217} &                                          4719488 &                                  \textbf{253125} &                                                - \\
\cline{2-10}
&
\multirow{2}{*}{Times}
                                                                                                      &                                               13 &                                   \textbf{0.20s} &                                            0.49s &                                            0.69s &                                   \textbf{1.05s} &                                              19s &                                   \textbf{3.62s} &                                             277s \\
                                                   &                                                  &                                               17 &                                   \textbf{2.12s} &                                            6.70s &                                              46s &                                     \textbf{32s} &                                            2978s &                                    \textbf{475s} &                                                - \\
\specialrule{.1em}{0em}{0em}
\multirow{4}{*}{\begin{tabular}{l}
\textbf{exponential\_bug}\\
threads: 3
\end{tabular}}
&
\multirow{2}{*}{Traces}
                                                                                                     &                                               20 &                                  \textbf{502771} &                                  \textbf{502771} &                                           538332 &                                  \textbf{503714} &                                           538332 &                                  \textbf{581482} &                                                - \\
                                                   &                                                  &                                               25 &                                 \textbf{1063451} &                                 \textbf{1063451} &                                          1644109 &                                 \textbf{1064879} &                                          1644109 &                                 \textbf{1204567} &                                                - \\
\cline{2-10}
&
\multirow{2}{*}{Times}
                                                                                                     &                                               20 &                                             172s &                                             505s &                                     \textbf{91s} &                                             490s &                                     \textbf{87s} &                                    \textbf{770s} &                                                - \\
                                                   &                                                  &                                               25 &                                             412s &                                            1236s &                                    \textbf{297s} &                                            1195s &                                    \textbf{500s} &                                   \textbf{1831s} &                                                - \\
\specialrule{.1em}{0em}{0em}
\multirow{4}{*}{\begin{tabular}{l}
\textbf{filesystem}\\
threads: U
\end{tabular}}
&
\multirow{2}{*}{Traces}
                                                                                                     &                                               24 &                                    \textbf{2048} &                                    \textbf{2048} &                                    \textbf{2048} &                                    \textbf{2048} &                                    \textbf{2048} &                                    \textbf{2048} &                                    \textbf{2048} \\
                                                   &                                                  &                                               26 &                                    \textbf{8192} &                                    \textbf{8192} &                                    \textbf{8192} &                                    \textbf{8192} &                                    \textbf{8192} &                                    \textbf{8192} &                                    \textbf{8192} \\
\cline{2-10}
&
\multirow{2}{*}{Times}
                                                                                                     &                                               24 &                                            6.99s &                                              13s &                                   \textbf{6.71s} &                                              13s &                                   \textbf{3.92s} &                                              32s &                                   \textbf{6.16s} \\
                                                   &                                                  &                                               26 &                                              30s &                                              60s &                                     \textbf{29s} &                                              61s &                                     \textbf{17s} &                                             148s &                                     \textbf{20s} \\
\specialrule{.1em}{0em}{0em}
\multirow{4}{*}{\begin{tabular}{l}
\textbf{floating\_read}\\
threads: U
\end{tabular}}
&
\multirow{2}{*}{Traces}
                                                                                                     &                                                7 &                                       \textbf{8} &                                       \textbf{8} &                                            40320 &                                       \textbf{8} &                                            40320 &                                       \textbf{8} &                                            40320 \\
                                                   &                                                  &                                                8 &                                       \textbf{9} &                                       \textbf{9} &                                           362880 &                                       \textbf{9} &                                           362880 &                                       \textbf{9} &                                           362880 \\
\cline{2-10}
&
\multirow{2}{*}{Times}
                                                                                                     &                                                7 &                                   \textbf{0.05s} &                                   \textbf{0.05s} &                                            8.49s &                                   \textbf{0.05s} &                                              12s &                                   \textbf{0.05s} &                                              30s \\
                                                   &                                                  &                                                8 &                                            0.07s &                                   \textbf{0.05s} &                                              81s &                                   \textbf{0.05s} &                                             128s &                                   \textbf{0.05s} &                                             267s \\
\specialrule{.1em}{0em}{0em}
\multirow{4}{*}{\begin{tabular}{l}
\textbf{lastwrite}\\
threads: U
\end{tabular}}
&
\multirow{2}{*}{Traces}
                                                                                                     &                                                9 &                                       \textbf{9} &                                       \textbf{9} &                                           362880 &                                       \textbf{9} &                                           362880 &                                       \textbf{9} &                                           362880 \\
                                                   &                                                  &                                               10 &                                      \textbf{10} &                                      \textbf{10} &                                          3628800 &                                      \textbf{10} &                                          3628800 &                                      \textbf{10} &                                          3628800 \\
\cline{2-10}
&
\multirow{2}{*}{Times}
                                                                                                     &                                                9 &                                   \textbf{0.05s} &                                   \textbf{0.05s} &                                              77s &                                   \textbf{0.05s} &                                              61s &                                   \textbf{0.05s} &                                              64s \\
                                                   &                                                  &                                               10 &                                   \textbf{0.05s} &                                   \textbf{0.05s} &                                             846s &                                   \textbf{0.05s} &                                             868s &                                   \textbf{0.06s} &                                             757s \\
\specialrule{.1em}{0em}{0em}
\multirow{4}{*}{\begin{tabular}{l}
\textbf{lastzero}\\
threads: U+1
\end{tabular}}
&
\multirow{2}{*}{Traces}
                                                                                                     &                                               14 &                                   \textbf{69632} &                                   \textbf{69632} &                                   \textbf{69632} &                                   \textbf{69632} &                                   \textbf{69632} &                                   \textbf{69632} &                                   \textbf{69632} \\
                                                   &                                                  &                                               15 &                                  \textbf{147456} &                                  \textbf{147456} &                                  \textbf{147456} &                                  \textbf{147456} &                                  \textbf{147456} &                                  \textbf{147456} &                                  \textbf{147456} \\
\cline{2-10}
&
\multirow{2}{*}{Times}
                                                                                                     &                                               14 &                                     \textbf{20s} &                                             121s &                                             425s &                                    \textbf{119s} &                                             532s &                                    \textbf{127s} &                                             308s \\
                                                   &                                                  &                                               15 &                                     \textbf{53s} &                                             308s &                                            1597s &                                    \textbf{303s} &                                            1359s &                                    \textbf{320s} &                                            1143s \\
\specialrule{.1em}{0em}{0em}
\multirow{4}{*}{\begin{tabular}{l}
\textbf{account}\\
threads: U+1
\end{tabular}}
&
\multirow{2}{*}{Traces}
                                                                                                     &                                                4 &                                    \textbf{3125} &                                    \textbf{3125} &                                            14400 &                                    \textbf{3125} &                                            14400 &                                    \textbf{3125} &                                            14400 \\
                                                   &                                                  &                                                5 &                                   \textbf{46656} &                                   \textbf{46656} &                                           518400 &                                   \textbf{46656} &                                           518400 &                                   \textbf{46656} &                                           518400 \\
\cline{2-10}
&
\multirow{2}{*}{Times}
                                                                                                     &                                                4 &                                   \textbf{0.83s} &                                            1.57s &                                            2.17s &                                   \textbf{1.55s} &                                            3.33s &                                   \textbf{1.72s} &                                            2.92s \\
                                                   &                                                  &                                                5 &                                     \textbf{13s} &                                              28s &                                              69s &                                     \textbf{28s} &                                              73s &                                     \textbf{31s} &                                              77s \\
\specialrule{.1em}{0em}{0em}
\multirow{4}{*}{\begin{tabular}{l}
\textbf{airline}\\
threads: U
\end{tabular}}
&
\multirow{2}{*}{Traces}
                                                                                                     &                                                5 &                                    \textbf{3125} &                                    \textbf{3125} &                                            14400 &                                    \textbf{3125} &                                            14400 &                                    \textbf{3125} &                                            14400 \\
                                                   &                                                  &                                                6 &                                   \textbf{46656} &                                   \textbf{46656} &                                           518400 &                                   \textbf{46656} &                                           518400 &                                   \textbf{46656} &                                           518400 \\
\cline{2-10}
&
\multirow{2}{*}{Times}
                                                                                                     &                                                5 &                                   \textbf{0.91s} &                                            1.87s &                                            3.69s &                                   \textbf{1.86s} &                                            4.05s &                                   \textbf{2.26s} &                                            3.48s \\
                                                   &                                                  &                                                6 &                                   \textbf{8.82s} &                                              34s &                                              92s &                                     \textbf{34s} &                                             122s &                                     \textbf{41s} &                                             157s \\
\specialrule{.1em}{0em}{0em}
\multirow{4}{*}{\begin{tabular}{l}
\textbf{bubblesort}\\
threads: U
\end{tabular}}
&
\multirow{2}{*}{Traces}
                                                                                                      &                                                2 &                                      \textbf{10} &                                      \textbf{10} &                                      \textbf{10} &                                      \textbf{10} &                                      \textbf{10} &                                      \textbf{10} &                                      \textbf{10} \\
                                                   &                                                  &                                                3 &                                   \textbf{51216} &                                   \textbf{51216} &                                           143448 &                                   \textbf{51216} &                                           143448 &                                   \textbf{57702} &                                           143448 \\
\cline{2-10}
&
\multirow{2}{*}{Times}
                                                                                                      &                                                2 &                                   \textbf{0.05s} &                                   \textbf{0.05s} &                                   \textbf{0.05s} &                                   \textbf{0.05s} &                                   \textbf{0.05s} &                                   \textbf{0.05s} &                                   \textbf{0.05s} \\
                                                   &                                                  &                                                3 &                                     \textbf{12s} &                                              39s &                                              25s &                                              38s &                                     \textbf{34s} &                                              63s &                                     \textbf{29s} \\
\specialrule{.1em}{0em}{0em}
\multirow{4}{*}{\begin{tabular}{l}
\textbf{rv\_example}\\
threads: 3
\end{tabular}}
&
\multirow{2}{*}{Traces}
                                                                                                     &                                               25 &                                  \textbf{102028} &                                  \textbf{102028} &                                           180478 &                                  \textbf{102028} &                                           180478 &                                  \textbf{107303} &                                           180478 \\
                                                   &                                                  &                                               30 &                                  \textbf{169383} &                                  \textbf{169383} &                                           295848 &                                  \textbf{169383} &                                           295848 &                                  \textbf{176988} &                                           295848 \\
\cline{2-10}
&
\multirow{2}{*}{Times}
                                                                                                     &                                               25 &                                     \textbf{50s} &                                             124s &                                              61s &                                             125s &                                     \textbf{53s} &                                             157s &                                     \textbf{61s} \\
                                                   &                                                  &                                               30 &                                             131s &                                             250s &                                    \textbf{108s} &                                             248s &                                    \textbf{126s} &                                             311s &                                    \textbf{132s} \\
\specialrule{.1em}{0em}{0em}
\end{tabular}
\caption{Part1: Related papers and works, and synthetic benchmarks.}
\label{tab:all_other_1}
\end{table}

\begin{table}
\scriptsize
\newcolumntype{?}{!{\vrule width 1.5pt}}
\setlength{\extrarowheight}{.08em}
\begin{tabular}{? l | c | c ? r r r ? r r ? r r ? }
\specialrule{.15em}{0em}{0em}
\multicolumn{2}{?c|}{\multirow{2}{*}{\textbf{Benchmark}}} & \multirow{2}{*}{U} & \multicolumn{3}{c?}{\textbf{Sequential Consistency}} & \multicolumn{2}{c?}{\textbf{Total Store Order}} & \multicolumn{2}{c?}{\textbf{Partial Store Order}}\\
\cline{4-10}
\multicolumn{2}{?c|}{} & & $\ReadsFrom$ & $\DCTSOPSOM$ & $\Source$ & $\DCTSOPSOM$ & $\Source$ & $\DCTSOPSOM$ & $\Source$\\
\specialrule{.1em}{0em}{0em}
\multirow{4}{*}{\begin{tabular}{l}
\textbf{multiprodcon}\\
threads: 2U
\end{tabular}}
&
\multirow{2}{*}{Traces}
                                                                                                     &                                                4 &                                    \textbf{1944} &                                    \textbf{1944} &                                    \textbf{1944} &                                    \textbf{1944} &                                    \textbf{1944} &                                    \textbf{1944} &                                    \textbf{1944} \\
                                                   &                                                  &                                                5 &                                   \textbf{29160} &                                   \textbf{29160} &                                   \textbf{29160} &                                   \textbf{29160} &                                   \textbf{29160} &                                   \textbf{29160} &                                   \textbf{29160} \\
\cline{2-10}
&
\multirow{2}{*}{Times}
                                                                                                     &                                                4 &                                   \textbf{2.13s} &                                            4.17s &                                            2.48s &                                            4.26s &                                   \textbf{1.43s} &                                            5.63s &                                   \textbf{2.28s} \\
                                                   &                                                  &                                                5 &                                     \textbf{32s} &                                              84s &                                              42s &                                              87s &                                     \textbf{31s} &                                             123s &                                     \textbf{34s} \\
\specialrule{.1em}{0em}{0em}
\multirow{4}{*}{\begin{tabular}{l}
\textbf{opt\_lock2}\\
threads: 2
\end{tabular}}
&
\multirow{2}{*}{Traces}
                                                                                                     &                                                8 &                                      \textbf{91} &                                      \textbf{91} &                                            16714 &                                     \textbf{119} &                                            23098 &                                     \textbf{315} &                                           194470 \\
                                                   &                                                  &                                               12 &                                     \textbf{139} &                                     \textbf{139} &                                           785662 &                                     \textbf{183} &                                          1085758 &                                     \textbf{667} &                                                - \\
\cline{2-10}
&
\multirow{2}{*}{Times}
                                                                                                     &                                                8 &                                            0.10s &                                   \textbf{0.09s} &                                            3.56s &                                   \textbf{0.10s} &                                            8.33s &                                   \textbf{0.31s} &                                              54s \\
                                                   &                                                  &                                               12 &                                   \textbf{0.08s} &                                            0.13s &                                             137s &                                   \textbf{0.15s} &                                             386s &                                   \textbf{0.92s} &                                                - \\
\specialrule{.1em}{0em}{0em}
\multirow{4}{*}{\begin{tabular}{l}
\textbf{opt\_lock3}\\
threads: 3
\end{tabular}}
&
\multirow{2}{*}{Traces}
                                                                                                     &                                                2 &                                    \textbf{3103} &                                    \textbf{3103} &                                            69252 &                                    \textbf{5080} &                                           217992 &                                    \textbf{7852} &                                           435798 \\
                                                   &                                                  &                                                3 &                                   \textbf{87451} &                                   \textbf{87451} &                                         15036174 &                                  \textbf{151570} &                                                - &                                  \textbf{425260} &                                                - \\
\cline{2-10}
&
\multirow{2}{*}{Times}
                                                                                                     &                                                2 &                                   \textbf{0.55s} &                                            1.47s &                                            6.75s &                                   \textbf{2.23s} &                                              47s &                                   \textbf{4.27s} &                                             115s \\
                                                   &                                                  &                                                3 &                                     \textbf{28s} &                                              54s &                                            2049s &                                     \textbf{88s} &                                                - &                                    \textbf{337s} &                                                - \\
\specialrule{.1em}{0em}{0em}
\multirow{4}{*}{\begin{tabular}{l}
\textbf{overtake}\\
threads: U
\end{tabular}}
&
\multirow{2}{*}{Traces}
                                                                                                     &                                                3 &                                     \textbf{826} &                                     \textbf{826} &                                             6168 &                                     \textbf{826} &                                             6168 &                                    \textbf{2197} &                                            79092 \\
                                                   &                                                  &                                                4 &                                   \textbf{62893} &                                   \textbf{62893} &                                          2418000 &                                   \textbf{62893} &                                          2418000 &                                  \textbf{194481} &                                                - \\
\cline{2-10}
&
\multirow{2}{*}{Times}
                                                                                                     &                                                3 &                                   \textbf{0.25s} &                                            0.40s &                                            0.82s &                                   \textbf{0.40s} &                                            0.71s &                                   \textbf{1.37s} &                                              33s \\
                                                   &                                                  &                                                4 &                                     \textbf{15s} &                                              39s &                                             605s &                                     \textbf{39s} &                                             482s &                                    \textbf{150s} &                                                - \\
\specialrule{.1em}{0em}{0em}
\multirow{4}{*}{\begin{tabular}{l}
\textbf{parker}\\
threads: 2
\end{tabular}}
&
\multirow{2}{*}{Traces}
                                                                                                     &                                               40 &                                  \textbf{299347} &                                  \textbf{299347} &                                           334487 &                                  \textbf{323868} &                                           370448 &                                  \textbf{345931} &                                           403951 \\
                                                   &                                                  &                                               50 &                                  \textbf{576177} &                                  \textbf{576177} &                                           643602 &                                  \textbf{622828} &                                           712303 &                                                - &                                  \textbf{777181} \\
\cline{2-10}
&
\multirow{2}{*}{Times}
                                                                                                     &                                               40 &                                             285s &                                             989s &                                    \textbf{223s} &                                            1083s &                                    \textbf{180s} &                                            1563s &                                    \textbf{291s} \\
                                                   &                                                  &                                               50 &                                             535s &                                            2807s &                                    \textbf{338s} &                                            3216s &                                    \textbf{425s} &                                                - &                                    \textbf{703s} \\
\specialrule{.1em}{0em}{0em}
\multirow{4}{*}{\begin{tabular}{l}
\textbf{pgsql}\\
threads: 2
\end{tabular}}
&
\multirow{2}{*}{Traces}
                                                                                                     &                                                3 &                                    \textbf{3906} &                                    \textbf{3906} &                                    \textbf{3906} &                                    \textbf{3906} &                                    \textbf{3906} &                                  \textbf{166666} &                                           555546 \\
                                                   &                                                  &                                                4 &                                  \textbf{335923} &                                  \textbf{335923} &                                  \textbf{335923} &                                  \textbf{335923} &                                  \textbf{335923} &                                                - &                                                - \\
\cline{2-10}
&
\multirow{2}{*}{Times}
                                                                                                     &                                                3 &                                            1.01s &                                            3.21s &                                   \textbf{0.91s} &                                            3.10s &                                   \textbf{0.56s} &                                             289s &                                    \textbf{127s} \\
                                                   &                                                  &                                                4 &                                             109s &                                             432s &                                     \textbf{90s} &                                             422s &                                     \textbf{70s} &                                                - &                                                - \\
\specialrule{.1em}{0em}{0em}
\multirow{4}{*}{\begin{tabular}{l}
\textbf{poke}\\
threads: U+8
\end{tabular}}
&
\multirow{2}{*}{Traces}
                                                                                                     &                                                4 &                                    \textbf{1946} &                                             2636 &                                             2636 &                                    \textbf{2636} &                                    \textbf{2636} &                                    \textbf{2636} &                                    \textbf{2636} \\
                                                   &                                                  &                                                5 &                                    \textbf{4903} &                                             7079 &                                             7079 &                                    \textbf{7079} &                                    \textbf{7079} &                                    \textbf{7079} &                                    \textbf{7079} \\
\cline{2-10}
&
\multirow{2}{*}{Times}
                                                                                                     &                                                4 &                                   \textbf{2.98s} &                                            7.61s &                                              11s &                                   \textbf{7.77s} &                                            9.57s &                                            9.94s &                                   \textbf{6.96s} \\
                                                   &                                                  &                                                5 &                                   \textbf{5.16s} &                                              26s &                                              58s &                                     \textbf{26s} &                                              50s &                                     \textbf{34s} &                                     \textbf{34s} \\
\specialrule{.1em}{0em}{0em}
\multirow{4}{*}{\begin{tabular}{l}
\textbf{ra}\\
threads: U
\end{tabular}}
&
\multirow{2}{*}{Traces}
                                                                                                     &                                                5 &                                    \textbf{1296} &                                    \textbf{1296} &                                            14400 &                                    \textbf{1296} &                                            14400 &                                    \textbf{1296} &                                            14400 \\
                                                   &                                                  &                                                6 &                                   \textbf{16807} &                                   \textbf{16807} &                                           518400 &                                   \textbf{16807} &                                           518400 &                                   \textbf{16807} &                                           518400 \\
\cline{2-10}
&
\multirow{2}{*}{Times}
                                                                                                     &                                                5 &                                   \textbf{0.39s} &                                            0.72s &                                            3.25s &                                   \textbf{0.73s} &                                            5.82s &                                   \textbf{0.79s} &                                            3.34s \\
                                                   &                                                  &                                                6 &                                   \textbf{4.94s} &                                              11s &                                              76s &                                     \textbf{11s} &                                             123s &                                     \textbf{12s} &                                             236s \\
\specialrule{.1em}{0em}{0em}
\multirow{4}{*}{\begin{tabular}{l}
\textbf{race\_parametric}\\
threads: 2
\end{tabular}}
&
\multirow{2}{*}{Traces}
                                                                                                     &                                                6 &                                   \textbf{73789} &                                   \textbf{73789} &                                           372436 &                                   \textbf{73789} &                                           372436 &                                   \textbf{73789} &                                           372436 \\
                                                   &                                                  &                                                7 &                                  \textbf{616227} &                                  \textbf{616227} &                                          4027216 &                                  \textbf{616227} &                                          4027216 &                                  \textbf{616227} &                                          4027216 \\
\cline{2-10}
&
\multirow{2}{*}{Times}
                                                                                                     &                                                6 &                                     \textbf{47s} &                                             228s &                                             119s &                                    \textbf{187s} &                                             226s &                                             221s &                                    \textbf{115s} \\
                                                   &                                                  &                                                7 &                                    \textbf{630s} &                                            2566s &                                            1559s &                                            2090s &                                   \textbf{1931s} &                                            2388s &                                   \textbf{1395s} \\
\specialrule{.1em}{0em}{0em}
\multirow{4}{*}{\begin{tabular}{l}
\textbf{readers}\\
threads: U+1
\end{tabular}}
&
\multirow{2}{*}{Traces}
                                                                                                     &                                               14 &                                   \textbf{16384} &                                   \textbf{16384} &                                   \textbf{16384} &                                   \textbf{16384} &                                   \textbf{16384} &                                   \textbf{16384} &                                   \textbf{16384} \\
                                                   &                                                  &                                               15 &                                   \textbf{32768} &                                   \textbf{32768} &                                   \textbf{32768} &                                   \textbf{32768} &                                   \textbf{32768} &                                   \textbf{32768} &                                   \textbf{32768} \\
\cline{2-10}
&
\multirow{2}{*}{Times}
                                                                                                     &                                               14 &                                            8.00s &                                              28s &                                   \textbf{7.77s} &                                              28s &                                   \textbf{8.63s} &                                              42s &                                   \textbf{7.26s} \\
                                                   &                                                  &                                               15 &                                              27s &                                              64s &                                     \textbf{15s} &                                              63s &                                     \textbf{18s} &                                              95s &                                     \textbf{16s} \\
\specialrule{.1em}{0em}{0em}
\multirow{4}{*}{\begin{tabular}{l}
\textbf{redundant\_co}\\
threads: 2
\end{tabular}}
&
\multirow{2}{*}{Traces}
                                                                                                      &                                                5 &                                      \textbf{91} &                                      \textbf{91} &                                            16632 &                                      \textbf{91} &                                            16632 &                                      \textbf{91} &                                            16632 \\
                                                   &                                                  &                                               40 &                                    \textbf{4921} &                                    \textbf{4921} &                                                - &                                    \textbf{4921} &                                                - &                                    \textbf{4921} &                                                - \\
\cline{2-10}
&
\multirow{2}{*}{Times}
                                                                                                      &                                                5 &                                   \textbf{0.06s} &                                            0.07s &                                            1.04s &                                   \textbf{0.07s} &                                            2.17s &                                   \textbf{0.07s} &                                            2.49s \\
                                                   &                                                  &                                               40 &                                   \textbf{1.83s} &                                            6.05s &                                                - &                                   \textbf{6.09s} &                                                - &                                   \textbf{6.89s} &                                                - \\
\specialrule{.1em}{0em}{0em}
\multirow{4}{*}{\begin{tabular}{l}
\textbf{seqlock}\\
threads: U+1
\end{tabular}}
&
\multirow{2}{*}{Traces}
                                                                                                     &                                                7 &                                  \textbf{181440} &                                  \textbf{181440} &                                  \textbf{181440} &                                  \textbf{181440} &                                  \textbf{181440} &                                  \textbf{181440} &                                  \textbf{181440} \\
                                                   &                                                  &                                                8 &                                 \textbf{1814400} &                                 \textbf{1814400} &                                 \textbf{1814400} &                                 \textbf{1814400} &                                 \textbf{1814400} &                                                - &                                 \textbf{1814400} \\
\cline{2-10}
&
\multirow{2}{*}{Times}
                                                                                                     &                                                7 &                                             112s &                                             235s &                                     \textbf{77s} &                                             239s &                                     \textbf{88s} &                                             307s &                                     \textbf{88s} \\
                                                   &                                                  &                                                8 &                                            1058s &                                            2741s &                                    \textbf{813s} &                                            2862s &                                    \textbf{967s} &                                                - &                                   \textbf{1089s} \\
\specialrule{.1em}{0em}{0em}
\multirow{4}{*}{\begin{tabular}{l}
\textbf{seqlock\_atomic}\\
threads: U+1
\end{tabular}}
&
\multirow{2}{*}{Traces}
                                                                                                     &                                                4 &                                   \textbf{29804} &                                   \textbf{29804} &                                   \textbf{29804} &                                   \textbf{29804} &                                   \textbf{29804} &                                   \textbf{29804} &                                   \textbf{29804} \\
                                                   &                                                  &                                                5 &                                  \textbf{605070} &                                  \textbf{605070} &                                  \textbf{605070} &                                  \textbf{605070} &                                  \textbf{605070} &                                  \textbf{605070} &                                  \textbf{605070} \\
\cline{2-10}
&
\multirow{2}{*}{Times}
                                                                                                     &                                                4 &                                              12s &                                              26s &                                   \textbf{9.74s} &                                              27s &                                     \textbf{10s} &                                              32s &                                     \textbf{13s} \\
                                                   &                                                  &                                                5 &                                             267s &                                             710s &                                    \textbf{254s} &                                             732s &                                    \textbf{271s} &                                             930s &                                    \textbf{362s} \\
\specialrule{.1em}{0em}{0em}
\multirow{4}{*}{\begin{tabular}{l}
\textbf{spammer}\\
threads: U
\end{tabular}}
&
\multirow{2}{*}{Traces}
                                                                                                     &                                                4 &                                     \textbf{178} &                                     \textbf{178} &                                            19448 &                                     \textbf{256} &                                           331776 &                                     \textbf{256} &                                           331776 \\
                                                   &                                                  &                                                7 &                                  \textbf{569871} &                                  \textbf{569871} &                                                - &                                  \textbf{823543} &                                                - &                                  \textbf{823543} &                                                - \\
\cline{2-10}
&
\multirow{2}{*}{Times}
                                                                                                     &                                                4 &                                            0.15s &                                   \textbf{0.14s} &                                            5.14s &                                   \textbf{0.17s} &                                              69s &                                   \textbf{0.25s} &                                              80s \\
                                                   &                                                  &                                                7 &                                            1001s &                                    \textbf{954s} &                                                - &                                   \textbf{1204s} &                                                - &                                   \textbf{3592s} &                                                - \\
\specialrule{.1em}{0em}{0em}
\multirow{4}{*}{\begin{tabular}{l}
\textbf{writer\_reader}\\
threads: 2
\end{tabular}}
&
\multirow{2}{*}{Traces}
                                                                                                     &                                               11 &                                  \textbf{705432} &                                  \textbf{705432} &                                  \textbf{705432} &                                  \textbf{705432} &                                  \textbf{705432} &                                  \textbf{705432} &                                  \textbf{705432} \\
                                                    &                                                 &                                               12 &                                 \textbf{2704156} &                                 \textbf{2704156} &                                 \textbf{2704156} &                                 \textbf{2704156} &                                 \textbf{2704156} &                                 \textbf{2704156} &                                 \textbf{2704156} \\
\cline{2-10}
&
\multirow{2}{*}{Times}
                                                                                                     &                                               11 &                                              82s &                                             193s &                                     \textbf{48s} &                                             201s &                                     \textbf{56s} &                                             231s &                                     \textbf{93s} \\
                                                    &                                                 &                                               12 &                                             411s &                                             797s &                                    \textbf{258s} &                                             842s &                                    \textbf{206s} &                                             963s &                                    \textbf{410s} \\
\specialrule{.1em}{0em}{0em}
\end{tabular}
\caption{Part2: Related papers and works, and synthetic benchmarks.}
\label{tab:all_other_2}
\end{table}

\begin{table}
\scriptsize
\newcolumntype{?}{!{\vrule width 1.5pt}}
\setlength{\extrarowheight}{.05em}
\begin{tabular}{? l | c | c ? r r r ? r r ? r r ? }
\specialrule{.15em}{0em}{0em}
\multicolumn{2}{?c|}{\multirow{2}{*}{\textbf{Benchmark}}} & \multirow{2}{*}{U} & \multicolumn{3}{c?}{\textbf{Sequential Consistency}} & \multicolumn{2}{c?}{\textbf{Total Store Order}} & \multicolumn{2}{c?}{\textbf{Partial Store Order}}\\
\cline{4-10}
\multicolumn{2}{?c|}{} & & $\ReadsFrom$ & $\DCTSOPSOM$ & $\Source$ & $\DCTSOPSOM$ & $\Source$ & $\DCTSOPSOM$ & $\Source$\\
\specialrule{.1em}{0em}{0em}
\multirow{4}{*}{\begin{tabular}{l}
\textbf{01\_inc}\\
threads: U
\end{tabular}}
&
\multirow{2}{*}{Traces}
                                                                                                     &                                                5 &                                   \textbf{14400} &                                   \textbf{14400} &                                   \textbf{14400} &                                   \textbf{14400} &                                   \textbf{14400} &                                   \textbf{14400} &                                   \textbf{14400} \\
                                                   &                                                  &                                                6 &                                  \textbf{518400} &                                  \textbf{518400} &                                  \textbf{518400} &                                  \textbf{518400} &                                  \textbf{518400} &                                  \textbf{518400} &                                  \textbf{518400} \\
\cline{2-10}
&
\multirow{2}{*}{Times}
                                                                                                     &                                                5 &                                   \textbf{3.08s} &                                            8.85s &                                            3.68s &                                            8.93s &                                   \textbf{3.90s} &                                            9.83s &                                   \textbf{2.02s} \\
                                                   &                                                  &                                                6 &                                             150s &                                             443s &                                    \textbf{101s} &                                             451s &                                    \textbf{100s} &                                             494s &                                     \textbf{96s} \\
\specialrule{.1em}{0em}{0em}
\multirow{4}{*}{\begin{tabular}{l}
\textbf{02\_inc\_cas}\\
threads: U
\end{tabular}}
&
\multirow{2}{*}{Traces}
                                                                                                     &                                                3 &                                     \textbf{432} &                                     \textbf{432} &                                     \textbf{432} &                                     \textbf{432} &                                     \textbf{432} &                                     \textbf{432} &                                     \textbf{432} \\
                                                   &                                                  &                                                4 &                                  \textbf{159552} &                                  \textbf{159552} &                                  \textbf{159552} &                                  \textbf{159552} &                                  \textbf{159552} &                                  \textbf{159552} &                                  \textbf{159552} \\
\cline{2-10}
&
\multirow{2}{*}{Times}
                                                                                                     &                                                3 &                                            0.18s &                                            0.22s &                                   \textbf{0.09s} &                                            0.22s &                                   \textbf{0.09s} &                                            0.24s &                                   \textbf{0.13s} \\
                                                   &                                                  &                                                4 &                                              40s &                                             112s &                                     \textbf{24s} &                                             114s &                                     \textbf{37s} &                                             127s &                                     \textbf{25s} \\
\specialrule{.1em}{0em}{0em}
\multirow{4}{*}{\begin{tabular}{l}
\textbf{03\_incdec}\\
threads: U
\end{tabular}}
&
\multirow{2}{*}{Traces}
                                                                                                     &                                                5 &                                   \textbf{14568} &                                   \textbf{14568} &                                   \textbf{14568} &                                   \textbf{14568} &                                   \textbf{14568} &                                   \textbf{14568} &                                   \textbf{14568} \\
                                                   &                                                  &                                                6 &                                 \textbf{2289708} &                                 \textbf{2289708} &                                 \textbf{2289708} &                                 \textbf{2289708} &                                 \textbf{2289708} &                                 \textbf{2289708} &                                 \textbf{2289708} \\
\cline{2-10}
&
\multirow{2}{*}{Times}
                                                                                                     &                                                5 &                                            3.67s &                                            9.30s &                                   \textbf{2.74s} &                                            9.42s &                                   \textbf{3.48s} &                                              12s &                                   \textbf{5.11s} \\
                                                   &                                                  &                                                6 &                                             866s &                                            2410s &                                    \textbf{469s} &                                            2393s &                                    \textbf{747s} &                                            3434s &                                    \textbf{662s} \\
\specialrule{.1em}{0em}{0em}
\multirow{4}{*}{\begin{tabular}{l}
\textbf{13\_unverif}\\
threads: U
\end{tabular}}
&
\multirow{2}{*}{Traces}
                                                                                                     &                                                5 &                                   \textbf{14400} &                                   \textbf{14400} &                                   \textbf{14400} &                                   \textbf{14400} &                                   \textbf{14400} &                                   \textbf{14400} &                                   \textbf{14400} \\
                                                   &                                                  &                                                6 &                                  \textbf{518400} &                                  \textbf{518400} &                                  \textbf{518400} &                                  \textbf{518400} &                                  \textbf{518400} &                                  \textbf{518400} &                                  \textbf{518400} \\
\cline{2-10}
&
\multirow{2}{*}{Times}
                                                                                                     &                                                5 &                                            3.20s &                                            9.31s &                                   \textbf{2.99s} &                                            9.47s &                                   \textbf{2.29s} &                                              10s &                                   \textbf{3.03s} \\
                                                   &                                                  &                                                6 &                                             158s &                                             464s &                                    \textbf{125s} &                                             467s &                                    \textbf{100s} &                                             518s &                                    \textbf{178s} \\
\specialrule{.1em}{0em}{0em}
\multirow{4}{*}{\begin{tabular}{l}
\textbf{18\_read\_write\_lock}\\
threads: U+2
\end{tabular}}
&
\multirow{2}{*}{Traces}
                                                                                                     &                                                6 &                                  \textbf{149912} &                                  \textbf{149912} &                                  \textbf{149912} &                                  \textbf{149912} &                                  \textbf{149912} &                                  \textbf{149912} &                                  \textbf{149912} \\
                                                   &                                                  &                                                7 &                                 \textbf{1349280} &                                 \textbf{1349280} &                                 \textbf{1349280} &                                 \textbf{1349280} &                                 \textbf{1349280} &                                 \textbf{1349280} &                                 \textbf{1349280} \\
\cline{2-10}
&
\multirow{2}{*}{Times}
                                                                                                     &                                                6 &                                              69s &                                             160s &                                     \textbf{54s} &                                             159s &                                     \textbf{66s} &                                             218s &                                     \textbf{42s} \\
                                                   &                                                  &                                                7 &                                             805s &                                            1819s &                                    \textbf{603s} &                                            1834s &                                    \textbf{687s} &                                            2559s &                                    \textbf{536s} \\
\specialrule{.1em}{0em}{0em}
\multirow{4}{*}{\begin{tabular}{l}
\textbf{27\_Boop}\\
threads: U+1
\end{tabular}}
&
\multirow{2}{*}{Traces}
                                                                                                     &                                                2 &                                     \textbf{165} &                                     \textbf{165} &                                              364 &                                     \textbf{205} &                                              536 &                                     \textbf{713} &                                            14604 \\
                                                   &                                                  &                                                3 &                                   \textbf{68083} &                                   \textbf{68083} &                                           966834 &                                  \textbf{100897} &                                          2157426 &                                  \textbf{447739} &                                                - \\
\cline{2-10}
&
\multirow{2}{*}{Times}
                                                                                                     &                                                2 &                                            0.10s &                                            0.10s &                                   \textbf{0.08s} &                                            0.11s &                                   \textbf{0.10s} &                                   \textbf{0.36s} &                                            2.65s \\
                                                   &                                                  &                                                3 &                                     \textbf{14s} &                                              38s &                                             185s &                                     \textbf{52s} &                                             360s &                                    \textbf{347s} &                                                - \\
\specialrule{.1em}{0em}{0em}
\multirow{4}{*}{\begin{tabular}{l}
\textbf{27\_Boop4}\\
threads: 4
\end{tabular}}
&
\multirow{2}{*}{Traces}
                                                                                                     &                                                4 &                                  \textbf{197260} &                                  \textbf{197260} &                                          3873348 &                                  \textbf{313336} &                                          9412428 &                                 \textbf{1807408} &                                                - \\
                                                   &                                                  &                                                7 &                                 \textbf{2101147} &                                 \textbf{2101147} &                                                - &                                 \textbf{3933691} &                                                - &                                                - &                                                - \\
\cline{2-10}
&
\multirow{2}{*}{Times}
                                                                                                     &                                                4 &                                     \textbf{33s} &                                             124s &                                             550s &                                    \textbf{182s} &                                            2556s &                                   \textbf{1593s} &                                                - \\
                                                   &                                                  &                                                7 &                                    \textbf{437s} &                                            1783s &                                                - &                                   \textbf{3016s} &                                                - &                                                - &                                                - \\
\specialrule{.1em}{0em}{0em}
\multirow{4}{*}{\begin{tabular}{l}
\textbf{30\_Function\_Pointer}\\
threads: U+1
\end{tabular}}
&
\multirow{2}{*}{Traces}
                                                                                                     &                                                5 &                                   \textbf{30240} &                                   \textbf{30240} &                                   \textbf{30240} &                                   \textbf{30240} &                                   \textbf{30240} &                                   \textbf{30240} &                                   \textbf{30240} \\
                                                   &                                                  &                                                6 &                                  \textbf{665280} &                                  \textbf{665280} &                                  \textbf{665280} &                                  \textbf{665280} &                                  \textbf{665280} &                                  \textbf{665280} &                                  \textbf{665280} \\
\cline{2-10}
&
\multirow{2}{*}{Times}
                                                                                                     &                                                5 &                                            7.52s &                                              17s &                                   \textbf{5.74s} &                                              18s &                                   \textbf{7.23s} &                                              19s &                                   \textbf{7.62s} \\
                                                   &                                                  &                                                6 &                                             207s &                                             537s &                                    \textbf{114s} &                                             553s &                                    \textbf{157s} &                                             584s &                                    \textbf{174s} \\
\specialrule{.1em}{0em}{0em}
\multirow{4}{*}{\begin{tabular}{l}
\textbf{32\_pthread5}\\
threads: U+2
\end{tabular}}
&
\multirow{2}{*}{Traces}
                                                                                                     &                                                2 &                                    \textbf{1470} &                                    \textbf{1470} &                                             1890 &                                    \textbf{1470} &                                             1890 &                                    \textbf{1470} &                                             1890 \\
                                                   &                                                  &                                                3 &                                  \textbf{226800} &                                  \textbf{226800} &                                           302400 &                                  \textbf{226800} &                                           302400 &                                  \textbf{226800} &                                           302400 \\
\cline{2-10}
&
\multirow{2}{*}{Times}
                                                                                                     &                                                2 &                                            0.55s &                                            0.75s &                                   \textbf{0.50s} &                                            0.75s &                                   \textbf{0.28s} &                                            1.00s &                                   \textbf{0.31s} \\
                                                   &                                                  &                                                3 &                                              79s &                                             172s &                                     \textbf{75s} &                                             170s &                                     \textbf{72s} &                                             253s &                                     \textbf{59s} \\
\specialrule{.1em}{0em}{0em}
\multirow{4}{*}{\begin{tabular}{l}
\textbf{40\_barrier}\\
threads: U
\end{tabular}}
&
\multirow{2}{*}{Traces}
                                                                                                     &                                                6 &                                    \textbf{2670} &                                    \textbf{2670} &                                    \textbf{2670} &                                    \textbf{2670} &                                    \textbf{2670} &                                    \textbf{2670} &                                    \textbf{2670} \\
                                                   &                                                  &                                                7 &                                   \textbf{18732} &                                   \textbf{18732} &                                   \textbf{18732} &                                   \textbf{18732} &                                   \textbf{18732} &                                   \textbf{18732} &                                   \textbf{18732} \\
\cline{2-10}
&
\multirow{2}{*}{Times}
                                                                                                     &                                                6 &                                            1.12s &                                            2.23s &                                   \textbf{0.82s} &                                            2.27s &                                   \textbf{0.88s} &                                            3.09s &                                   \textbf{0.61s} \\
                                                   &                                                  &                                                7 &                                            8.78s &                                              20s &                                   \textbf{8.52s} &                                              20s &                                   \textbf{6.78s} &                                              28s &                                   \textbf{7.79s} \\
\specialrule{.1em}{0em}{0em}
\multirow{4}{*}{\begin{tabular}{l}
\textbf{45\_monabsex1}\\
threads: U
\end{tabular}}
&
\multirow{2}{*}{Traces}
                                                                                                     &                                                6 &                                   \textbf{16807} &                                   \textbf{16807} &                                           518400 &                                   \textbf{16807} &                                           518400 &                                   \textbf{16807} &                                           518400 \\
                                                   &                                                  &                                                7 &                                  \textbf{262144} &                                  \textbf{262144} &                                         25401600 &                                  \textbf{262144} &                                                - &                                  \textbf{262144} &                                                - \\
\cline{2-10}
&
\multirow{2}{*}{Times}
                                                                                                     &                                                6 &                                   \textbf{1.93s} &                                            9.02s &                                              49s &                                   \textbf{8.78s} &                                             130s &                                   \textbf{9.50s} &                                             129s \\
                                                   &                                                  &                                                7 &                                     \textbf{35s} &                                             180s &                                            3123s &                                    \textbf{181s} &                                                - &                                    \textbf{188s} &                                                - \\
\specialrule{.1em}{0em}{0em}
\multirow{4}{*}{\begin{tabular}{l}
\textbf{46\_monabsex2}\\
threads: U
\end{tabular}}
&
\multirow{2}{*}{Traces}
                                                                                                     &                                                6 &                                   \textbf{16807} &                                   \textbf{16807} &                                           518400 &                                   \textbf{16807} &                                           518400 &                                   \textbf{16807} &                                           518400 \\
                                                   &                                                  &                                                7 &                                  \textbf{262144} &                                  \textbf{262144} &                                         25401600 &                                  \textbf{262144} &                                         25401600 &                                  \textbf{262144} &                                                - \\
\cline{2-10}
&
\multirow{2}{*}{Times}
                                                                                                     &                                                6 &                                   \textbf{1.97s} &                                            9.00s &                                              53s &                                   \textbf{8.79s} &                                              60s &                                     \textbf{11s} &                                             161s \\
                                                   &                                                  &                                                7 &                                     \textbf{41s} &                                             189s &                                            3229s &                                    \textbf{185s} &                                            3245s &                                    \textbf{251s} &                                                - \\
\specialrule{.1em}{0em}{0em}
\multirow{4}{*}{\begin{tabular}{l}
\textbf{47\_ticket\_hc}\\
threads: U
\end{tabular}}
&
\multirow{2}{*}{Traces}
                                                                                                      &                                                1 &                                       \textbf{1} &                                       \textbf{1} &                                       \textbf{1} &                                       \textbf{1} &                                       \textbf{1} &                                       \textbf{1} &                                       \textbf{1} \\
                                                   &                                                  &                                                2 &                                      \textbf{48} &                                      \textbf{48} &                                      \textbf{48} &                                      \textbf{48} &                                      \textbf{48} &                                      \textbf{48} &                                      \textbf{48} \\
\cline{2-10}
&
\multirow{2}{*}{Times}
                                                                                                      &                                                1 &                                   \textbf{0.05s} &                                   \textbf{0.05s} &                                   \textbf{0.05s} &                                   \textbf{0.05s} &                                            0.07s &                                   \textbf{0.05s} &                                            0.07s \\
                                                   &                                                  &                                                2 &                                   \textbf{0.06s} &                                            0.08s &                                   \textbf{0.06s} &                                            0.08s &                                   \textbf{0.06s} &                                   \textbf{0.08s} &                                   \textbf{0.08s} \\
\specialrule{.1em}{0em}{0em}
\multirow{4}{*}{\begin{tabular}{l}
\textbf{48\_ticket\_low}\\
threads: U
\end{tabular}}
&
\multirow{2}{*}{Traces}
                                                                                                     &                                                3 &                                     \textbf{204} &                                     \textbf{204} &                                     \textbf{204} &                                     \textbf{204} &                                     \textbf{204} &                                     \textbf{204} &                                     \textbf{204} \\
                                                   &                                                  &                                                4 &                                   \textbf{41400} &                                   \textbf{41400} &                                   \textbf{41400} &                                   \textbf{41400} &                                   \textbf{41400} &                                   \textbf{41400} &                                   \textbf{41400} \\
\cline{2-10}
&
\multirow{2}{*}{Times}
                                                                                                     &                                                3 &                                            0.09s &                                            0.15s &                                   \textbf{0.08s} &                                            0.15s &                                   \textbf{0.08s} &                                            0.18s &                                   \textbf{0.08s} \\
                                                   &                                                  &                                                4 &                                              13s &                                              37s &                                   \textbf{8.33s} &                                              36s &                                   \textbf{9.85s} &                                              50s &                                     \textbf{11s} \\
\specialrule{.1em}{0em}{0em}
\multirow{4}{*}{\begin{tabular}{l}
\textbf{fib\_bench}\\
threads: 2
\end{tabular}}
&
\multirow{2}{*}{Traces}
                                                                                                     &                                                6 &                                   \textbf{73789} &                                   \textbf{73789} &                                   \textbf{73789} &                                  \textbf{226512} &                                  \textbf{226512} &                                  \textbf{226512} &                                  \textbf{226512} \\
                                                   &                                                  &                                                7 &                                  \textbf{616227} &                                  \textbf{616227} &                                  \textbf{616227} &                                 \textbf{2760615} &                                 \textbf{2760615} &                                 \textbf{2760615} &                                 \textbf{2760615} \\
\cline{2-10}
&
\multirow{2}{*}{Times}
                                                                                                     &                                                6 &                                              19s &                                              42s &                                     \textbf{15s} &                                             143s &                                     \textbf{44s} &                                             170s &                                     \textbf{43s} \\
                                                   &                                                  &                                                7 &                                             162s &                                             426s &                                     \textbf{84s} &                                            2157s &                                    \textbf{462s} &                                            2561s &                                    \textbf{489s} \\
\specialrule{.1em}{0em}{0em}
\multirow{4}{*}{\begin{tabular}{l}
\textbf{fillarray\_false}\\
threads: 2
\end{tabular}}
&
\multirow{2}{*}{Traces}
                                                                                                     &                                                3 &                                   \textbf{14625} &                                   \textbf{14625} &                                            47892 &                                   \textbf{14625} &                                            59404 &                                   \textbf{14625} &                                            63088 \\
                                                   &                                                  &                                                4 &                                  \textbf{471821} &                                  \textbf{471821} &                                          2278732 &                                  \textbf{471821} &                                          3023380 &                                  \textbf{471821} &                                          3329934 \\
\cline{2-10}
&
\multirow{2}{*}{Times}
                                                                                                     &                                                3 &                                   \textbf{5.73s} &                                              12s &                                            6.18s &                                     \textbf{12s} &                                     \textbf{12s} &                                     \textbf{18s} &                                              39s \\
                                                   &                                                  &                                                4 &                                    \textbf{151s} &                                             553s &                                             331s &                                    \textbf{547s} &                                             778s &                                    \textbf{930s} &                                            2844s \\
\specialrule{.1em}{0em}{0em}
\end{tabular}
\caption{Part1: SVCOMP benchmarks.}
\label{tab:all_svcomp_1}
\end{table}

\begin{table}
\scriptsize
\newcolumntype{?}{!{\vrule width 1.5pt}}
\setlength{\extrarowheight}{.05em}
\begin{tabular}{? l | c | c ? r r r ? r r ? r r ? }
\specialrule{.15em}{0em}{0em}
\multicolumn{2}{?c|}{\multirow{2}{*}{\textbf{Benchmark}}} & \multirow{2}{*}{U} & \multicolumn{3}{c?}{\textbf{Sequential Consistency}} & \multicolumn{2}{c?}{\textbf{Total Store Order}} & \multicolumn{2}{c?}{\textbf{Partial Store Order}}\\
\cline{4-10}
\multicolumn{2}{?c|}{} & & $\ReadsFrom$ & $\DCTSOPSOM$ & $\Source$ & $\DCTSOPSOM$ & $\Source$ & $\DCTSOPSOM$ & $\Source$\\
\specialrule{.1em}{0em}{0em}
\multirow{4}{*}{\begin{tabular}{l}
\textbf{fillarray\_true}\\
threads: 2
\end{tabular}}
&
\multirow{2}{*}{Traces}
                                                                                                      &                                                2 &                                       \textbf{9} &                                       \textbf{9} &                                               10 &                                    \textbf{3334} &                                             9820 &                                    \textbf{3334} &                                            10076 \\
                                                   &                                                  &                                                3 &                                      \textbf{11} &                                      \textbf{11} &                                               12 &                                  \textbf{131636} &                                           594054 &                                  \textbf{131636} &                                           630894 \\
\cline{2-10}
&
\multirow{2}{*}{Times}
                                                                                                      &                                                2 &                                   \textbf{0.05s} &                                   \textbf{0.05s} &                                   \textbf{0.05s} &                                            2.30s &                                   \textbf{1.86s} &                                   \textbf{3.86s} &                                            5.14s \\
                                                   &                                                  &                                                3 &                                   \textbf{0.05s} &                                   \textbf{0.05s} &                                   \textbf{0.05s} &                                    \textbf{139s} &                                             148s &                                    \textbf{257s} &                                             652s \\
\specialrule{.1em}{0em}{0em}
\multirow{4}{*}{\begin{tabular}{l}
\textbf{fk2012\_1p1c}\\
threads: 2
\end{tabular}}
&
\multirow{2}{*}{Traces}
                                                                                                     &                                               15 &                                 \textbf{1999336} &                                 \textbf{1999336} &                                          3236936 &                                 \textbf{1999336} &                                          3236936 &                                 \textbf{1999336} &                                          3236936 \\
                                                   &                                                  &                                               16 &                                 \textbf{2399193} &                                 \textbf{2399193} &                                          3884313 &                                 \textbf{2399193} &                                          3884313 &                                 \textbf{2399193} &                                          3884313 \\
\cline{2-10}
&
\multirow{2}{*}{Times}
                                                                                                     &                                               15 &                                            1032s &                                            2652s &                                    \textbf{632s} &                                            2474s &                                   \textbf{2208s} &                                            2780s &                                   \textbf{2437s} \\
                                                   &                                                  &                                               16 &                                            1476s &                                            3133s &                                    \textbf{888s} &                                            3059s &                                   \textbf{3006s} &                                            3410s &                                   \textbf{2792s} \\
\specialrule{.1em}{0em}{0em}
\multirow{4}{*}{\begin{tabular}{l}
\textbf{fk2012\_1p2c}\\
threads: 3
\end{tabular}}
&
\multirow{2}{*}{Traces}
                                                                                                     &                                                3 &                                   \textbf{33886} &                                   \textbf{33886} &                                            42144 &                                   \textbf{33886} &                                            42144 &                                   \textbf{33886} &                                            42144 \\
                                                   &                                                  &                                                4 &                                  \textbf{888404} &                                  \textbf{888404} &                                          1217826 &                                  \textbf{888404} &                                          1217826 &                                  \textbf{888404} &                                          1217826 \\
\cline{2-10}
&
\multirow{2}{*}{Times}
                                                                                                     &                                                3 &                                              15s &                                              19s &                                   \textbf{5.03s} &                                              19s &                                   \textbf{6.75s} &                                              21s &                                     \textbf{11s} \\
                                                   &                                                  &                                                4 &                                             523s &                                             704s &                                    \textbf{250s} &                                             718s &                                    \textbf{253s} &                                             787s &                                    \textbf{300s} \\
\specialrule{.1em}{0em}{0em}
\multirow{4}{*}{\begin{tabular}{l}
\textbf{fk2012\_2p1c}\\
threads: 3
\end{tabular}}
&
\multirow{2}{*}{Traces}
                                                                                                     &                                               12 &                                 \textbf{1250886} &                                 \textbf{1250886} &                                          1931566 &                                 \textbf{1250886} &                                          1931566 &                                 \textbf{1250886} &                                          1931566 \\
                                                   &                                                  &                                               13 &                                 \textbf{2059540} &                                 \textbf{2059540} &                                          3230710 &                                 \textbf{2059540} &                                          3230710 &                                 \textbf{2059540} &                                          3230710 \\
\cline{2-10}
&
\multirow{2}{*}{Times}
                                                                                                     &                                               12 &                                             913s &                                            1565s &                                    \textbf{406s} &                                            1542s &                                    \textbf{958s} &                                            1712s &                                   \textbf{1176s} \\
                                                   &                                                  &                                               13 &                                            1734s &                                            2770s &                                    \textbf{905s} &                                            2791s &                                   \textbf{1726s} &                                            3004s &                                   \textbf{1893s} \\
\specialrule{.1em}{0em}{0em}
\multirow{4}{*}{\begin{tabular}{l}
\textbf{fk2012\_2p2c}\\
threads: 4
\end{tabular}}
&
\multirow{2}{*}{Traces}
                                                                                                     &                                                2 &                                    \textbf{3556} &                                    \textbf{3556} &                                             3680 &                                    \textbf{3556} &                                             3680 &                                    \textbf{3556} &                                             3680 \\
                                                   &                                                  &                                                3 &                                  \textbf{129120} &                                  \textbf{129120} &                                           145068 &                                  \textbf{129120} &                                           145068 &                                  \textbf{129120} &                                           145068 \\
\cline{2-10}
&
\multirow{2}{*}{Times}
                                                                                                     &                                                2 &                                   \textbf{0.96s} &                                            2.14s &                                            1.04s &                                            2.16s &                                   \textbf{0.66s} &                                            2.39s &                                   \textbf{0.63s} \\
                                                   &                                                  &                                                3 &                                              52s &                                             104s &                                     \textbf{29s} &                                             104s &                                     \textbf{50s} &                                             114s &                                     \textbf{34s} \\
\specialrule{.1em}{0em}{0em}
\multirow{4}{*}{\begin{tabular}{l}
\textbf{fkp2013}\\
threads: U+1
\end{tabular}}
&
\multirow{2}{*}{Traces}
                                                                                                     &                                                6 &                                  \textbf{117649} &                                  \textbf{117649} &                                          3628800 &                                  \textbf{117649} &                                          3628800 &                                  \textbf{117649} &                                          3628800 \\
                                                   &                                                  &                                                7 &                                 \textbf{2097152} &                                 \textbf{2097152} &                                                - &                                 \textbf{2097152} &                                                - &                                 \textbf{2097152} &                                                - \\
\cline{2-10}
&
\multirow{2}{*}{Times}
                                                                                                     &                                                6 &                                     \textbf{13s} &                                              64s &                                             418s &                                     \textbf{63s} &                                             437s &                                     \textbf{69s} &                                            1200s \\
                                                   &                                                  &                                                7 &                                    \textbf{391s} &                                            1575s &                                                - &                                   \textbf{1530s} &                                                - &                                   \textbf{1677s} &                                                - \\
\specialrule{.1em}{0em}{0em}
\multirow{4}{*}{\begin{tabular}{l}
\textbf{fkp2014}\\
threads: U
\end{tabular}}
&
\multirow{2}{*}{Traces}
                                                                                                     &                                                3 &                                    \textbf{1098} &                                    \textbf{1098} &                                    \textbf{1098} &                                    \textbf{1098} &                                    \textbf{1098} &                                    \textbf{1098} &                                    \textbf{1098} \\
                                                   &                                                  &                                                4 &                                  \textbf{207024} &                                  \textbf{207024} &                                  \textbf{207024} &                                  \textbf{207024} &                                  \textbf{207024} &                                  \textbf{207024} &                                  \textbf{207024} \\
\cline{2-10}
&
\multirow{2}{*}{Times}
                                                                                                     &                                                3 &                                            0.24s &                                            0.54s &                                   \textbf{0.16s} &                                            0.55s &                                   \textbf{0.22s} &                                            1.05s &                                   \textbf{0.17s} \\
                                                   &                                                  &                                                4 &                                              49s &                                             157s &                                     \textbf{31s} &                                             157s &                                     \textbf{29s} &                                             194s &                                     \textbf{48s} \\
\specialrule{.1em}{0em}{0em}
\multirow{4}{*}{\begin{tabular}{l}
\textbf{gcd}\\
threads: 2
\end{tabular}}
&
\multirow{2}{*}{Traces}
                                                                                                     &                                               25 &                                   \textbf{61302} &                                   \textbf{61302} &                                   \textbf{61302} &                                   \textbf{61302} &                                   \textbf{61302} &                                   \textbf{61302} &                                   \textbf{61302} \\
                                                   &                                                  &                                               30 &                                  \textbf{106262} &                                  \textbf{106262} &                                  \textbf{106262} &                                  \textbf{106262} &                                  \textbf{106262} &                                  \textbf{106262} &                                  \textbf{106262} \\
\cline{2-10}
&
\multirow{2}{*}{Times}
                                                                                                     &                                               25 &                                              62s &                                             154s &                                     \textbf{26s} &                                             154s &                                     \textbf{33s} &                                             165s &                                     \textbf{42s} \\
                                                   &                                                  &                                               30 &                                             105s &                                             366s &                                     \textbf{54s} &                                             368s &                                     \textbf{57s} &                                             383s &                                     \textbf{47s} \\
\specialrule{.1em}{0em}{0em}
\multirow{4}{*}{\begin{tabular}{l}
\textbf{indexer}\\
threads: U
\end{tabular}}
&
\multirow{2}{*}{Traces}
                                                                                                     &                                               15 &                                    \textbf{4096} &                                    \textbf{4096} &                                    \textbf{4096} &                                    \textbf{4096} &                                    \textbf{4096} &                                    \textbf{4096} &                                    \textbf{4096} \\
                                                   &                                                  &                                               16 &                                   \textbf{32768} &                                   \textbf{32768} &                                   \textbf{32768} &                                   \textbf{32768} &                                   \textbf{32768} &                                   \textbf{32768} &                                   \textbf{32768} \\
\cline{2-10}
&
\multirow{2}{*}{Times}
                                                                                                     &                                               15 &                                              18s &                                              53s &                                     \textbf{16s} &                                              51s &                                     \textbf{16s} &                                             181s &                                     \textbf{15s} \\
                                                   &                                                  &                                               16 &                                             197s &                                             520s &                                    \textbf{114s} &                                             499s &                                    \textbf{131s} &                                            1772s &                                    \textbf{110s} \\
\specialrule{.1em}{0em}{0em}
\multirow{4}{*}{\begin{tabular}{l}
\textbf{nondet-array}\\
threads: U
\end{tabular}}
&
\multirow{2}{*}{Traces}
                                                                                                     &                                                6 &                                   \textbf{75486} &                                   \textbf{75486} &                                          9854640 &                                   \textbf{75486} &                                          9854640 &                                   \textbf{75486} &                                                - \\
                                                   &                                                  &                                                7 &                                 \textbf{1649221} &                                 \textbf{1649221} &                                                - &                                 \textbf{1649221} &                                                - &                                 \textbf{1649221} &                                                - \\
\cline{2-10}
&
\multirow{2}{*}{Times}
                                                                                                     &                                                6 &                                     \textbf{28s} &                                              52s &                                            2779s &                                     \textbf{50s} &                                            2532s &                                     \textbf{70s} &                                                - \\
                                                   &                                                  &                                                7 &                                    \textbf{521s} &                                            1512s &                                                - &                                   \textbf{1445s} &                                                - &                                   \textbf{2163s} &                                                - \\
\specialrule{.1em}{0em}{0em}
\multirow{4}{*}{\begin{tabular}{l}
\textbf{nondet-loop-variant}\\
threads: U+1
\end{tabular}}
&
\multirow{2}{*}{Traces}
                                                                                                     &                                                6 &                                    \textbf{5040} &                                    \textbf{5040} &                                    \textbf{5040} &                                    \textbf{5040} &                                    \textbf{5040} &                                    \textbf{5040} &                                    \textbf{5040} \\
                                                   &                                                  &                                                7 &                                   \textbf{40320} &                                   \textbf{40320} &                                   \textbf{40320} &                                   \textbf{40320} &                                   \textbf{40320} &                                   \textbf{40320} &                                   \textbf{40320} \\
\cline{2-10}
&
\multirow{2}{*}{Times}
                                                                                                     &                                                6 &                                            2.15s &                                            3.67s &                                   \textbf{1.18s} &                                            3.70s &                                   \textbf{1.16s} &                                            4.18s &                                   \textbf{1.33s} \\
                                                   &                                                  &                                                7 &                                              24s &                                              35s &                                     \textbf{11s} &                                              36s &                                   \textbf{8.78s} &                                              41s &                                   \textbf{7.09s} \\
\specialrule{.1em}{0em}{0em}
\multirow{4}{*}{\begin{tabular}{l}
\textbf{pthread-datarace}\\
threads: 2
\end{tabular}}
&
\multirow{2}{*}{Traces}
                                                                                                     &                                                6 &                                   \textbf{99442} &                                   \textbf{99442} &                                           372436 &                                   \textbf{99442} &                                           372436 &                                   \textbf{99442} &                                           372436 \\
                                                   &                                                  &                                                7 &                                  \textbf{829168} &                                  \textbf{829168} &                                          4027216 &                                  \textbf{829168} &                                          4027216 &                                  \textbf{829168} &                                          4027216 \\
\cline{2-10}
&
\multirow{2}{*}{Times}
                                                                                                     &                                                6 &                                     \textbf{19s} &                                              34s &                                              53s &                                     \textbf{34s} &                                              75s &                                     \textbf{40s} &                                              50s \\
                                                   &                                                  &                                                7 &                                    \textbf{115s} &                                             332s &                                             565s &                                    \textbf{334s} &                                             807s &                                    \textbf{389s} &                                             547s \\
\specialrule{.1em}{0em}{0em}
\multirow{4}{*}{\begin{tabular}{l}
\textbf{queue\_ok}\\
threads: U+1
\end{tabular}}
&
\multirow{2}{*}{Traces}
                                                                                                     &                                                6 &                                    \textbf{5040} &                                    \textbf{5040} &                                    \textbf{5040} &                                    \textbf{5040} &                                    \textbf{5040} &                                    \textbf{5040} &                                    \textbf{5040} \\
                                                   &                                                  &                                                7 &                                   \textbf{40320} &                                   \textbf{40320} &                                   \textbf{40320} &                                   \textbf{40320} &                                   \textbf{40320} &                                   \textbf{40320} &                                   \textbf{40320} \\
\cline{2-10}
&
\multirow{2}{*}{Times}
                                                                                                     &                                                6 &                                            8.29s &                                              30s &                                   \textbf{4.93s} &                                              29s &                                   \textbf{7.76s} &                                              72s &                                   \textbf{8.08s} \\
                                                   &                                                  &                                                7 &                                              61s &                                             303s &                                     \textbf{52s} &                                             296s &                                     \textbf{62s} &                                             736s &                                     \textbf{50s} \\
\specialrule{.1em}{0em}{0em}
\multirow{4}{*}{\begin{tabular}{l}
\textbf{qw2004}\\
threads: U+1
\end{tabular}}
&
\multirow{2}{*}{Traces}
                                                                                                     &                                                4 &                                   \textbf{28152} &                                   \textbf{28152} &                                   \textbf{28152} &                                   \textbf{28152} &                                   \textbf{28152} &                                   \textbf{28152} &                                                - \\
                                                   &                                                  &                                                5 &                                 \textbf{1354920} &                                 \textbf{1354920} &                                 \textbf{1354920} &                                 \textbf{1354920} &                                 \textbf{1354920} &                                 \textbf{1354920} &                                                - \\
\cline{2-10}
&
\multirow{2}{*}{Times}
                                                                                                     &                                                4 &                                              12s &                                              26s &                                   \textbf{9.15s} &                                              26s &                                   \textbf{5.44s} &                                     \textbf{36s} &                                                - \\
                                                   &                                                  &                                                5 &                                             610s &                                            1870s &                                    \textbf{310s} &                                            1893s &                                    \textbf{353s} &                                   \textbf{2578s} &                                                - \\
\specialrule{.1em}{0em}{0em}
\multirow{4}{*}{\begin{tabular}{l}
\textbf{reorder\_5}\\
threads: U+1
\end{tabular}}
&
\multirow{2}{*}{Traces}
                                                                                                     &                                                4 &                                     \textbf{145} &                                     \textbf{145} &                                            40032 &                                     \textbf{145} &                                            40032 &                                     \textbf{149} &                                            54720 \\
                                                   &                                                  &                                               30 &                                   \textbf{54901} &                                   \textbf{54901} &                                                - &                                   \textbf{54901} &                                                - &                                   \textbf{54931} &                                                - \\
\cline{2-10}
&
\multirow{2}{*}{Times}
                                                                                                     &                                                4 &                                   \textbf{0.07s} &                                            0.11s &                                            7.66s &                                   \textbf{0.11s} &                                            8.97s &                                   \textbf{0.13s} &                                              25s \\
                                                   &                                                  &                                               30 &                                     \textbf{61s} &                                             278s &                                                - &                                    \textbf{265s} &                                                - &                                    \textbf{430s} &                                                - \\
\specialrule{.1em}{0em}{0em}
\multirow{4}{*}{\begin{tabular}{l}
\textbf{scull\_Rloop}\\
threads: 3
\end{tabular}}
&
\multirow{2}{*}{Traces}
                                                                                                     &                                                3 &                                  \textbf{148684} &                                  \textbf{148684} &                                           617706 &                                  \textbf{148684} &                                           617706 &                                  \textbf{478024} &                                          5397158 \\
                                                   &                                                  &                                                4 &                                  \textbf{569409} &                                  \textbf{569409} &                                          2732933 &                                  \textbf{569409} &                                          2732933 &                                                - &                                                - \\
\cline{2-10}
&
\multirow{2}{*}{Times}
                                                                                                     &                                                3 &                                    \textbf{104s} &                                             242s &                                             338s &                                    \textbf{235s} &                                             350s &                                   \textbf{1324s} &                                            2574s \\
                                                   &                                                  &                                                4 &                                    \textbf{508s} &                                            1123s &                                            1152s &                                   \textbf{1080s} &                                            1550s &                                                - &                                                - \\
\specialrule{.1em}{0em}{0em}
\end{tabular}
\caption{Part2: SVCOMP benchmarks.}
\label{tab:all_svcomp_2}
\end{table}

\begin{table}
\scriptsize
\newcolumntype{?}{!{\vrule width 1.5pt}}
\setlength{\extrarowheight}{.01em}
\begin{tabular}{? l | c | c ? r r r ? r r ? r r ? }
\specialrule{.15em}{0em}{0em}
\multicolumn{2}{?c|}{\multirow{2}{*}{\textbf{Benchmark}}} & \multirow{2}{*}{U} & \multicolumn{3}{c?}{\textbf{Sequential Consistency}} & \multicolumn{2}{c?}{\textbf{Total Store Order}} & \multicolumn{2}{c?}{\textbf{Partial Store Order}}\\
\cline{4-10}
\multicolumn{2}{?c|}{} & & $\ReadsFrom$ & $\DCTSOPSOM$ & $\Source$ & $\DCTSOPSOM$ & $\Source$ & $\DCTSOPSOM$ & $\Source$\\
\specialrule{.1em}{0em}{0em}
\multirow{4}{*}{\begin{tabular}{l}
\textbf{scull\_Wloop}\\
threads: 3
\end{tabular}}
&
\multirow{2}{*}{Traces}
                                                                                                     &                                                2 &                                   \textbf{32305} &                                   \textbf{32305} &                                           117149 &                                   \textbf{32305} &                                           117149 &                                  \textbf{106220} &                                           929708 \\
                                                   &                                                  &                                                4 &                                  \textbf{874282} &                                  \textbf{874282} &                                          4241286 &                                  \textbf{874282} &                                          4241286 &                                                - &                                                - \\
\cline{2-10}
&
\multirow{2}{*}{Times}
                                                                                                     &                                                2 &                                     \textbf{16s} &                                              44s &                                              62s &                                              43s &                                     \textbf{38s} &                                    \textbf{244s} &                                             534s \\
                                                   &                                                  &                                                4 &                                    \textbf{669s} &                                            1896s &                                            2130s &                                   \textbf{1835s} &                                            2070s &                                                - &                                                - \\
\specialrule{.1em}{0em}{0em}
\multirow{4}{*}{\begin{tabular}{l}
\textbf{scull\_loop}\\
threads: 3
\end{tabular}}
&
\multirow{2}{*}{Traces}
                                                                                                      &                                                1 &                                    \textbf{2491} &                                    \textbf{2491} &                                             6992 &                                    \textbf{2491} &                                             6992 &                                    \textbf{6188} &                                            42636 \\
                                                   &                                                  &                                                2 &                                  \textbf{749811} &                                  \textbf{749811} &                                          3157281 &                                  \textbf{749811} &                                          3157281 &                                                - &                                                - \\
\cline{2-10}
&
\multirow{2}{*}{Times}
                                                                                                      &                                                1 &                                   \textbf{0.89s} &                                            2.52s &                                            2.66s &                                   \textbf{2.46s} &                                            3.13s &                                   \textbf{9.98s} &                                              22s \\
                                                   &                                                  &                                                2 &                                    \textbf{419s} &                                            1354s &                                            1489s &                                   \textbf{1301s} &                                            1499s &                                                - &                                                - \\
\specialrule{.1em}{0em}{0em}
\multirow{4}{*}{\begin{tabular}{l}
\textbf{sigma}\\
threads: U
\end{tabular}}
&
\multirow{2}{*}{Traces}
                                                                                                     &                                                7 &                                   \textbf{30952} &                                   \textbf{30952} &                                           135135 &                                   \textbf{30952} &                                           135135 &                                   \textbf{30952} &                                           135135 \\
                                                   &                                                  &                                                8 &                                  \textbf{325488} &                                  \textbf{325488} &                                          2027025 &                                  \textbf{325488} &                                          2027025 &                                  \textbf{325488} &                                          2027025 \\
\cline{2-10}
&
\multirow{2}{*}{Times}
                                                                                                     &                                                7 &                                   \textbf{7.14s} &                                              27s &                                              29s &                                     \textbf{27s} &                                              44s &                                     \textbf{30s} &                                              35s \\
                                                   &                                                  &                                                8 &                                     \textbf{91s} &                                             373s &                                             532s &                                    \textbf{364s} &                                             658s &                                    \textbf{407s} &                                             722s \\
\specialrule{.1em}{0em}{0em}
\multirow{4}{*}{\begin{tabular}{l}
\textbf{singleton}\\
threads: U+1
\end{tabular}}
&
\multirow{2}{*}{Traces}
                                                                                                     &                                                9 &                                       \textbf{9} &                                       \textbf{9} &                                           362880 &                                       \textbf{9} &                                           362880 &                                       \textbf{9} &                                           362880 \\
                                                   &                                                  &                                               10 &                                      \textbf{10} &                                      \textbf{10} &                                          3628800 &                                      \textbf{10} &                                          3628800 &                                      \textbf{10} &                                                - \\
\cline{2-10}
&
\multirow{2}{*}{Times}
                                                                                                     &                                                9 &                                   \textbf{0.06s} &                                   \textbf{0.06s} &                                             142s &                                   \textbf{0.06s} &                                              88s &                                   \textbf{0.06s} &                                             649s \\
                                                   &                                                  &                                               10 &                                            0.08s &                                   \textbf{0.06s} &                                             977s &                                   \textbf{0.06s} &                                            1083s &                                   \textbf{0.06s} &                                                - \\
\specialrule{.1em}{0em}{0em}
\multirow{4}{*}{\begin{tabular}{l}
\textbf{sssc12}\\
threads: U
\end{tabular}}
&
\multirow{2}{*}{Traces}
                                                                                                     &                                                7 &                                    \textbf{5040} &                                    \textbf{5040} &                                    \textbf{5040} &                                    \textbf{5040} &                                    \textbf{5040} &                                    \textbf{5040} &                                    \textbf{5040} \\
                                                   &                                                  &                                                8 &                                   \textbf{40320} &                                   \textbf{40320} &                                   \textbf{40320} &                                   \textbf{40320} &                                   \textbf{40320} &                                   \textbf{40320} &                                   \textbf{40320} \\
\cline{2-10}
&
\multirow{2}{*}{Times}
                                                                                                     &                                                7 &                                   \textbf{3.46s} &                                              11s &                                            4.08s &                                              11s &                                   \textbf{5.31s} &                                              22s &                                   \textbf{3.31s} \\
                                                   &                                                  &                                                8 &                                     \textbf{36s} &                                             124s &                                              49s &                                             122s &                                     \textbf{35s} &                                             275s &                                     \textbf{36s} \\
\specialrule{.1em}{0em}{0em}
\multirow{4}{*}{\begin{tabular}{l}
\textbf{sssc12\_variant}\\
threads: U
\end{tabular}}
&
\multirow{2}{*}{Traces}
                                                                                                     &                                                7 &                                    \textbf{5040} &                                    \textbf{5040} &                                    \textbf{5040} &                                    \textbf{5040} &                                    \textbf{5040} &                                    \textbf{5040} &                                    \textbf{5040} \\
                                                   &                                                  &                                                8 &                                   \textbf{40320} &                                   \textbf{40320} &                                   \textbf{40320} &                                   \textbf{40320} &                                   \textbf{40320} &                                   \textbf{40320} &                                   \textbf{40320} \\
\cline{2-10}
&
\multirow{2}{*}{Times}
                                                                                                     &                                                7 &                                            4.47s &                                              13s &                                   \textbf{3.87s} &                                              12s &                                   \textbf{6.89s} &                                              24s &                                   \textbf{8.07s} \\
                                                   &                                                  &                                                8 &                                     \textbf{61s} &                                             149s &                                              62s &                                             134s &                                     \textbf{59s} &                                             309s &                                     \textbf{46s} \\
\specialrule{.1em}{0em}{0em}
\multirow{4}{*}{\begin{tabular}{l}
\textbf{stack}\\
threads: 2
\end{tabular}}
&
\multirow{2}{*}{Traces}
                                                                                                     &                                               10 &                                  \textbf{184756} &                                  \textbf{184756} &                                  \textbf{184756} &                                  \textbf{184756} &                                  \textbf{184756} &                                  \textbf{184756} &                                  \textbf{184756} \\
                                                   &                                                  &                                               11 &                                  \textbf{705432} &                                  \textbf{705432} &                                  \textbf{705432} &                                  \textbf{705432} &                                  \textbf{705432} &                                  \textbf{705432} &                                  \textbf{705432} \\
\cline{2-10}
&
\multirow{2}{*}{Times}
                                                                                                     &                                               10 &                                             117s &                                             275s &                                     \textbf{86s} &                                             281s &                                     \textbf{87s} &                                             371s &                                     \textbf{78s} \\
                                                   &                                                  &                                               11 &                                    \textbf{453s} &                                            1211s &                                             462s &                                            1216s &                                    \textbf{357s} &                                            1573s &                                    \textbf{340s} \\
\specialrule{.1em}{0em}{0em}
\end{tabular}
\caption{Part3: SVCOMP benchmarks.}
\label{tab:all_svcomp_3}
\end{table}

\begin{table}[h]
\scriptsize
\newcolumntype{?}{!{\vrule width 1.5pt}}
\setlength{\extrarowheight}{.02em}
\begin{tabular}{? l | c | c ? r r r ? r r ? r r ? }
\specialrule{.15em}{0em}{0em}
\multicolumn{2}{?c|}{\multirow{2}{*}{\textbf{Benchmark}}} & \multirow{2}{*}{U} & \multicolumn{3}{c?}{\textbf{Sequential Consistency}} & \multicolumn{2}{c?}{\textbf{Total Store Order}} & \multicolumn{2}{c?}{\textbf{Partial Store Order}}\\
\cline{4-10}
\multicolumn{2}{?c|}{} & & $\ReadsFrom$ & $\DCTSOPSOM$ & $\Source$ & $\DCTSOPSOM$ & $\Source$ & $\DCTSOPSOM$ & $\Source$\\
\specialrule{.1em}{0em}{0em}
\multirow{4}{*}{\begin{tabular}{l}
\textbf{X2Tv1}\\
threads: 2
\end{tabular}}
&
\multirow{2}{*}{Traces}
                                                                                                     &                                                4 &                                    \textbf{3851} &                                    \textbf{3851} &                                    \textbf{3851} &                                  \textbf{129946} &                                  \textbf{129946} &                                  \textbf{129946} &                                  \textbf{129946} \\
                                                   &                                                  &                                                5 &                                   \textbf{30737} &                                   \textbf{30737} &                                   \textbf{30737} &                                 \textbf{3976753} &                                 \textbf{3976753} &                                 \textbf{3976753} &                                 \textbf{3976753} \\
\cline{2-10}
&
\multirow{2}{*}{Times}
                                                                                                     &                                                4 &                                            1.18s &                                            1.89s &                                   \textbf{0.94s} &                                              64s &                                     \textbf{24s} &                                              75s &                                     \textbf{36s} \\
                                                   &                                                  &                                                5 &                                              10s &                                              18s &                                   \textbf{8.17s} &                                            2581s &                                    \textbf{812s} &                                            3086s &                                    \textbf{891s} \\
\specialrule{.1em}{0em}{0em}
\multirow{4}{*}{\begin{tabular}{l}
\textbf{X2Tv10}\\
threads: 2
\end{tabular}}
&
\multirow{2}{*}{Traces}
                                                                                                     &                                                2 &                                    \textbf{5079} &                                    \textbf{5079} &                                    \textbf{5079} &                                   \textbf{16282} &                                   \textbf{16282} &                                   \textbf{16282} &                                   \textbf{16282} \\
                                                   &                                                  &                                                3 &                                  \textbf{308433} &                                  \textbf{308433} &                                  \textbf{308433} &                                 \textbf{4225551} &                                 \textbf{4225551} &                                 \textbf{4225551} &                                 \textbf{4225551} \\
\cline{2-10}
&
\multirow{2}{*}{Times}
                                                                                                     &                                                2 &                                   \textbf{0.87s} &                                            2.22s &                                            1.11s &                                            7.09s &                                   \textbf{2.77s} &                                            8.13s &                                   \textbf{4.35s} \\
                                                   &                                                  &                                                3 &                                     \textbf{62s} &                                             192s &                                     \textbf{62s} &                                            2915s &                                   \textbf{1156s} &                                            3496s &                                   \textbf{1047s} \\
\specialrule{.1em}{0em}{0em}
\multirow{4}{*}{\begin{tabular}{l}
\textbf{X2Tv2}\\
threads: 2
\end{tabular}}
&
\multirow{2}{*}{Traces}
                                                                                                     &                                                2 &                                    \textbf{1293} &                                    \textbf{1293} &                                    \textbf{1293} &                                    \textbf{4338} &                                    \textbf{4338} &                                    \textbf{4338} &                                    \textbf{4338} \\
                                                   &                                                  &                                                3 &                                   \textbf{69316} &                                   \textbf{69316} &                                   \textbf{69316} &                                  \textbf{931349} &                                  \textbf{931349} &                                  \textbf{931349} &                                  \textbf{931349} \\
\cline{2-10}
&
\multirow{2}{*}{Times}
                                                                                                     &                                                2 &                                            0.36s &                                            0.52s &                                   \textbf{0.19s} &                                            1.68s &                                   \textbf{1.09s} &                                            1.92s &                                   \textbf{1.16s} \\
                                                   &                                                  &                                                3 &                                              16s &                                              36s &                                   \textbf{9.48s} &                                             535s &                                    \textbf{197s} &                                             634s &                                    \textbf{194s} \\
\specialrule{.1em}{0em}{0em}
\multirow{4}{*}{\begin{tabular}{l}
\textbf{X2Tv3}\\
threads: 2
\end{tabular}}
&
\multirow{2}{*}{Traces}
                                                                                                     &                                                2 &                                    \textbf{1030} &                                    \textbf{1030} &                                    \textbf{1030} &                                    \textbf{2486} &                                    \textbf{2486} &                                    \textbf{2486} &                                    \textbf{2486} \\
                                                   &                                                  &                                                3 &                                   \textbf{33866} &                                   \textbf{33866} &                                   \textbf{33866} &                                  \textbf{290984} &                                  \textbf{290984} &                                  \textbf{290984} &                                  \textbf{290984} \\
\cline{2-10}
&
\multirow{2}{*}{Times}
                                                                                                     &                                                2 &                                            0.32s &                                            0.42s &                                   \textbf{0.19s} &                                            0.97s &                                   \textbf{0.34s} &                                            1.12s &                                   \textbf{0.67s} \\
                                                   &                                                  &                                                3 &                                              10s &                                              17s &                                   \textbf{5.56s} &                                             160s &                                     \textbf{79s} &                                             187s &                                     \textbf{71s} \\
\specialrule{.1em}{0em}{0em}
\multirow{4}{*}{\begin{tabular}{l}
\textbf{X2Tv4}\\
threads: 2
\end{tabular}}
&
\multirow{2}{*}{Traces}
                                                                                                     &                                                3 &                                    \textbf{2907} &                                    \textbf{2907} &                                             3164 &                                   \textbf{11210} &                                            16127 &                                   \textbf{20804} &                                            29165 \\
                                                   &                                                  &                                                4 &                                   \textbf{46275} &                                   \textbf{46275} &                                            50340 &                                  \textbf{425612} &                                           724832 &                                 \textbf{1292944} &                                          2121092 \\
\cline{2-10}
&
\multirow{2}{*}{Times}
                                                                                                     &                                                3 &                                            0.82s &                                            1.35s &                                   \textbf{0.37s} &                                            4.77s &                                   \textbf{2.50s} &                                              12s &                                   \textbf{7.11s} \\
                                                   &                                                  &                                                4 &                                              15s &                                              27s &                                   \textbf{6.65s} &                                             242s &                                    \textbf{172s} &                                            1047s &                                    \textbf{754s} \\
\specialrule{.1em}{0em}{0em}
\multirow{4}{*}{\begin{tabular}{l}
\textbf{X2Tv5}\\
threads: 2
\end{tabular}}
&
\multirow{2}{*}{Traces}
                                                                                                     &                                                3 &                                    \textbf{2985} &                                    \textbf{2985} &                                    \textbf{2985} &                                   \textbf{12145} &                                   \textbf{12145} &                                   \textbf{12145} &                                   \textbf{12145} \\
                                                   &                                                  &                                                4 &                                   \textbf{46161} &                                   \textbf{46161} &                                   \textbf{46161} &                                  \textbf{459543} &                                  \textbf{459543} &                                  \textbf{459543} &                                  \textbf{459543} \\
\cline{2-10}
&
\multirow{2}{*}{Times}
                                                                                                     &                                                3 &                                            0.89s &                                            1.35s &                                   \textbf{0.41s} &                                            5.11s &                                   \textbf{1.72s} &                                            6.02s &                                   \textbf{1.85s} \\
                                                   &                                                  &                                                4 &                                              16s &                                              25s &                                   \textbf{9.83s} &                                             257s &                                    \textbf{112s} &                                             307s &                                    \textbf{103s} \\
\specialrule{.1em}{0em}{0em}
\multirow{4}{*}{\begin{tabular}{l}
\textbf{X2Tv6}\\
threads: 2
\end{tabular}}
&
\multirow{2}{*}{Traces}
                                                                                                     &                                                2 &                                     \textbf{718} &                                     \textbf{718} &                                     \textbf{718} &                                    \textbf{4568} &                                    \textbf{4568} &                                    \textbf{9361} &                                    \textbf{9361} \\
                                                   &                                                  &                                                3 &                                   \textbf{20371} &                                   \textbf{20371} &                                   \textbf{20371} &                                  \textbf{417726} &                                  \textbf{417726} &                                 \textbf{1927527} &                                 \textbf{1927527} \\
\cline{2-10}
&
\multirow{2}{*}{Times}
                                                                                                     &                                                2 &                                            0.16s &                                            0.34s &                                   \textbf{0.12s} &                                            1.85s &                                   \textbf{0.57s} &                                            4.74s &                                   \textbf{1.37s} \\
                                                   &                                                  &                                                3 &                                   \textbf{4.04s} &                                              12s &                                            4.77s &                                             249s &                                     \textbf{74s} &                                            1573s &                                    \textbf{630s} \\
\specialrule{.1em}{0em}{0em}
\end{tabular}
\caption{Part1: Mutual exclusion benchmarks.}
\label{tab:all_crit_1}
\end{table}

\begin{table}
\scriptsize
\newcolumntype{?}{!{\vrule width 1.5pt}}
\setlength{\extrarowheight}{.056em}
\begin{tabular}{? l | c | c ? r r r ? r r ? r r ? }
\specialrule{.15em}{0em}{0em}
\multicolumn{2}{?c|}{\multirow{2}{*}{\textbf{Benchmark}}} & \multirow{2}{*}{U} & \multicolumn{3}{c?}{\textbf{Sequential Consistency}} & \multicolumn{2}{c?}{\textbf{Total Store Order}} & \multicolumn{2}{c?}{\textbf{Partial Store Order}}\\
\cline{4-10}
\multicolumn{2}{?c|}{} & & $\ReadsFrom$ & $\DCTSOPSOM$ & $\Source$ & $\DCTSOPSOM$ & $\Source$ & $\DCTSOPSOM$ & $\Source$\\
\specialrule{.1em}{0em}{0em}
\multirow{4}{*}{\begin{tabular}{l}
\textbf{X2Tv7}\\
threads: 2
\end{tabular}}
&
\multirow{2}{*}{Traces}
                                                                                                      &                                                3 &                                     \textbf{573} &                                     \textbf{573} &                                     \textbf{573} &                                   \textbf{17803} &                                   \textbf{17803} &                                   \textbf{17803} &                                   \textbf{17803} \\
                                                   &                                                  &                                                4 &                                    \textbf{2383} &                                    \textbf{2383} &                                    \textbf{2383} &                                  \textbf{514323} &                                  \textbf{514323} &                                  \textbf{514323} &                                  \textbf{514323} \\
\cline{2-10}
&
\multirow{2}{*}{Times}
                                                                                                      &                                                3 &                                            0.14s &                                            0.27s &                                   \textbf{0.11s} &                                            7.68s &                                   \textbf{2.44s} &                                            8.83s &                                   \textbf{5.29s} \\
                                                   &                                                  &                                                4 &                                            0.47s &                                            1.20s &                                   \textbf{0.36s} &                                             296s &                                     \textbf{86s} &                                             348s &                                    \textbf{130s} \\
\specialrule{.1em}{0em}{0em}
\multirow{4}{*}{\begin{tabular}{l}
\textbf{X2Tv8}\\
threads: 2
\end{tabular}}
&
\multirow{2}{*}{Traces}
                                                                                                     &                                                2 &                                     \textbf{394} &                                     \textbf{394} &                                              441 &                                    \textbf{2031} &                                             2987 &                                    \textbf{5785} &                                             9234 \\
                                                   &                                                  &                                                3 &                                    \textbf{8434} &                                    \textbf{8434} &                                             9894 &                                  \textbf{220505} &                                           411677 &                                 \textbf{1449109} &                                          3012426 \\
\cline{2-10}
&
\multirow{2}{*}{Times}
                                                                                                     &                                                2 &                                            0.13s &                                            0.20s &                                   \textbf{0.09s} &                                            0.80s &                                   \textbf{0.39s} &                                            2.83s &                                   \textbf{1.58s} \\
                                                   &                                                  &                                                3 &                                            2.54s &                                            4.65s &                                   \textbf{1.16s} &                                             124s &                                     \textbf{92s} &                                            1112s &                                    \textbf{806s} \\
\specialrule{.1em}{0em}{0em}
\multirow{4}{*}{\begin{tabular}{l}
\textbf{X2Tv9}\\
threads: 2
\end{tabular}}
&
\multirow{2}{*}{Traces}
                                                                                                     &                                                3 &                                    \textbf{7304} &                                    \textbf{7304} &                                    \textbf{7304} &                                   \textbf{38778} &                                   \textbf{38778} &                                   \textbf{38778} &                                   \textbf{38778} \\
                                                   &                                                  &                                                4 &                                  \textbf{153725} &                                  \textbf{153725} &                                  \textbf{153725} &                                 \textbf{2340172} &                                 \textbf{2340172} &                                 \textbf{2340172} &                                 \textbf{2340172} \\
\cline{2-10}
&
\multirow{2}{*}{Times}
                                                                                                     &                                                3 &                                            1.23s &                                            3.21s &                                   \textbf{0.79s} &                                              17s &                                   \textbf{9.01s} &                                              20s &                                   \textbf{5.16s} \\
                                                   &                                                  &                                                4 &                                     \textbf{29s} &                                              87s &                                     \textbf{29s} &                                            1528s &                                    \textbf{399s} &                                            1765s &                                    \textbf{486s} \\
\specialrule{.1em}{0em}{0em}
\multirow{4}{*}{\begin{tabular}{l}
\textbf{bakery}\\
threads: 2
\end{tabular}}
&
\multirow{2}{*}{Traces}
                                                                                                      &                                                1 &                                      \textbf{41} &                                      \textbf{41} &                                      \textbf{41} &                                      \textbf{73} &                                      \textbf{73} &                                      \textbf{77} &                                      \textbf{77} \\
                                                   &                                                  &                                                2 &                                    \textbf{7795} &                                    \textbf{7795} &                                    \textbf{7795} &                                   \textbf{25127} &                                   \textbf{25127} &                                   \textbf{28749} &                                   \textbf{28749} \\
\cline{2-10}
&
\multirow{2}{*}{Times}
                                                                                                      &                                                1 &                                   \textbf{0.06s} &                                            0.07s &                                   \textbf{0.06s} &                                            0.08s &                                   \textbf{0.06s} &                                            0.10s &                                   \textbf{0.07s} \\
                                                   &                                                  &                                                2 &                                            2.16s &                                            5.63s &                                   \textbf{1.72s} &                                              19s &                                     \textbf{11s} &                                              31s &                                     \textbf{13s} \\
\specialrule{.1em}{0em}{0em}
\multirow{4}{*}{\begin{tabular}{l}
\textbf{bakery3}\\
threads: 3
\end{tabular}}
&
\multirow{2}{*}{Traces}
                                                                                                      &                                                1 &                                    \textbf{5296} &                                    \textbf{5296} &                                    \textbf{5296} &                                   \textbf{16315} &                                   \textbf{16315} &                                   \textbf{17588} &                                   \textbf{17588} \\
                                                   &                                                  &                                                2 &                                                - &                                                - &                                                - &                                                - &                                                - &                                                - &                                                - \\
\cline{2-10}
&
\multirow{2}{*}{Times}
                                                                                                      &                                                1 &                                            1.72s &                                            4.64s &                                   \textbf{1.63s} &                                              14s &                                   \textbf{6.33s} &                                              24s &                                   \textbf{7.37s} \\
                                                   &                                                  &                                                2 &                                                - &                                                - &                                                - &                                                - &                                                - &                                                - &                                                - \\
\specialrule{.1em}{0em}{0em}
\multirow{4}{*}{\begin{tabular}{l}
\textbf{burns}\\
threads: 2
\end{tabular}}
&
\multirow{2}{*}{Traces}
                                                                                                      &                                                2 &                                     \textbf{342} &                                     \textbf{342} &                                     \textbf{342} &                                    \textbf{2989} &                                    \textbf{2989} &                                    \textbf{2989} &                                    \textbf{2989} \\
                                                   &                                                  &                                                3 &                                    \textbf{6887} &                                    \textbf{6887} &                                    \textbf{6887} &                                  \textbf{213915} &                                  \textbf{213915} &                                  \textbf{213915} &                                  \textbf{213915} \\
\cline{2-10}
&
\multirow{2}{*}{Times}
                                                                                                      &                                                2 &                                   \textbf{0.11s} &                                            0.19s &                                            0.16s &                                            1.23s &                                   \textbf{0.51s} &                                            1.38s &                                   \textbf{0.47s} \\
                                                   &                                                  &                                                3 &                                   \textbf{1.64s} &                                            3.74s &                                            2.23s &                                             117s &                                     \textbf{43s} &                                             135s &                                     \textbf{59s} \\
\specialrule{.1em}{0em}{0em}
\multirow{4}{*}{\begin{tabular}{l}
\textbf{burns3}\\
threads: 3
\end{tabular}}
&
\multirow{2}{*}{Traces}
                                                                                                      &                                                1 &                                     \textbf{849} &                                     \textbf{849} &                                     \textbf{849} &                                   \textbf{23502} &                                   \textbf{23502} &                                   \textbf{23502} &                                   \textbf{23502} \\
                                                   &                                                  &                                                2 &                                 \textbf{1490331} &                                 \textbf{1490331} &                                 \textbf{1490331} &                                                - &                                                - &                                                - &                                                - \\
\cline{2-10}
&
\multirow{2}{*}{Times}
                                                                                                      &                                                1 &                                            0.21s &                                            0.44s &                                   \textbf{0.18s} &                                              11s &                                   \textbf{7.74s} &                                              12s &                                   \textbf{8.83s} \\
                                                   &                                                  &                                                2 &                                    \textbf{458s} &                                            1287s &                                             485s &                                                - &                                                - &                                                - &                                                - \\
\specialrule{.1em}{0em}{0em}
\multirow{4}{*}{\begin{tabular}{l}
\textbf{dekker}\\
threads: 2
\end{tabular}}
&
\multirow{2}{*}{Traces}
                                                                                                     &                                                4 &                                   \textbf{28595} &                                   \textbf{28595} &                                   \textbf{28595} &                                   \textbf{29044} &                                   \textbf{29044} &                                   \textbf{53349} &                                            83333 \\
                                                   &                                                  &                                                5 &                                  \textbf{435245} &                                  \textbf{435245} &                                  \textbf{435245} &                                  \textbf{441810} &                                  \textbf{441810} &                                  \textbf{947754} &                                          1636946 \\
\cline{2-10}
&
\multirow{2}{*}{Times}
                                                                                                     &                                                4 &                                            6.18s &                                              13s &                                   \textbf{3.88s} &                                              13s &                                   \textbf{6.03s} &                                              35s &                                     \textbf{17s} \\
                                                   &                                                  &                                                5 &                                             128s &                                             259s &                                     \textbf{56s} &                                             262s &                                     \textbf{80s} &                                             808s &                                    \textbf{289s} \\
\specialrule{.1em}{0em}{0em}
\multirow{4}{*}{\begin{tabular}{l}
\textbf{dijkstra}\\
threads: 2
\end{tabular}}
&
\multirow{2}{*}{Traces}
                                                                                                      &                                                2 &                                     \textbf{319} &                                     \textbf{319} &                                     \textbf{319} &                                     \textbf{540} &                                     \textbf{540} &                                    \textbf{9961} &                                    \textbf{9961} \\
                                                   &                                                  &                                                4 &                                  \textbf{353859} &                                  \textbf{353859} &                                  \textbf{353859} &                                 \textbf{2196640} &                                 \textbf{2196640} &                                                - &                                                - \\
\cline{2-10}
&
\multirow{2}{*}{Times}
                                                                                                      &                                                2 &                                   \textbf{0.15s} &                                            0.27s &                                            0.25s &                                            0.42s &                                   \textbf{0.22s} &                                              13s &                                   \textbf{6.29s} \\
                                                   &                                                  &                                                4 &                                             207s &                                             462s &                                    \textbf{157s} &                                            2927s &                                   \textbf{1234s} &                                                - &                                                - \\
\specialrule{.1em}{0em}{0em}
\multirow{4}{*}{\begin{tabular}{l}
\textbf{dijkstra3}\\
threads: 3
\end{tabular}}
&
\multirow{2}{*}{Traces}
                                                                                                      &                                                1 &                                     \textbf{741} &                                     \textbf{741} &                                             1192 &                                    \textbf{1120} &                                             1934 &                                   \textbf{12328} &                                            19330 \\
                                                   &                                                  &                                                2 &                                                - &                                                - &                                                - &                                                - &                                                - &                                                - &                                                - \\
\cline{2-10}
&
\multirow{2}{*}{Times}
                                                                                                      &                                                1 &                                   \textbf{0.38s} &                                            0.66s &                                            0.52s &                                            0.93s &                                   \textbf{0.78s} &                                     \textbf{19s} &                                              36s \\
                                                   &                                                  &                                                2 &                                                - &                                                - &                                                - &                                                - &                                                - &                                                - &                                                - \\
\specialrule{.1em}{0em}{0em}
\multirow{4}{*}{\begin{tabular}{l}
\textbf{kessels}\\
threads: 2
\end{tabular}}
&
\multirow{2}{*}{Traces}
                                                                                                      &                                                2 &                                     \textbf{624} &                                     \textbf{624} &                                     \textbf{624} &                                    \textbf{3779} &                                    \textbf{3779} &                                    \textbf{3779} &                                    \textbf{3779} \\
                                                   &                                                  &                                                3 &                                   \textbf{13856} &                                   \textbf{13856} &                                   \textbf{13856} &                                  \textbf{356844} &                                  \textbf{356844} &                                  \textbf{356844} &                                  \textbf{356844} \\
\cline{2-10}
&
\multirow{2}{*}{Times}
                                                                                                      &                                                2 &                                            0.15s &                                            0.27s &                                   \textbf{0.11s} &                                            1.36s &                                   \textbf{0.84s} &                                            1.58s &                                   \textbf{0.51s} \\
                                                   &                                                  &                                                3 &                                            2.96s &                                            6.61s &                                   \textbf{1.71s} &                                             177s &                                     \textbf{48s} &                                             212s &                                     \textbf{83s} \\
\specialrule{.1em}{0em}{0em}
\multirow{4}{*}{\begin{tabular}{l}
\textbf{lamport}\\
threads: 2
\end{tabular}}
&
\multirow{2}{*}{Traces}
                                                                                                     &                                                2 &                                    \textbf{1456} &                                    \textbf{1456} &                                             3940 &                                    \textbf{2449} &                                            10652 &                                  \textbf{100521} &                                           473670 \\
                                                   &                                                  &                                                3 &                                  \textbf{130024} &                                  \textbf{130024} &                                           741370 &                                  \textbf{367113} &                                          3887642 &                                                - &                                                - \\
\cline{2-10}
&
\multirow{2}{*}{Times}
                                                                                                     &                                                2 &                                   \textbf{0.51s} &                                            0.77s &                                            0.86s &                                   \textbf{1.25s} &                                            1.86s &                                     \textbf{94s} &                                             456s \\
                                                   &                                                  &                                                3 &                                     \textbf{41s} &                                             105s &                                             128s &                                    \textbf{301s} &                                             766s &                                                - &                                                - \\
\specialrule{.1em}{0em}{0em}
\multirow{4}{*}{\begin{tabular}{l}
\textbf{peterson}\\
threads: 2
\end{tabular}}
&
\multirow{2}{*}{Traces}
                                                                                                     &                                                3 &                                    \textbf{1609} &                                    \textbf{1609} &                                             1686 &                                    \textbf{9251} &                                            29546 &                                   \textbf{15925} &                                            48500 \\
                                                   &                                                  &                                                4 &                                   \textbf{20161} &                                   \textbf{20161} &                                            21120 &                                  \textbf{263916} &                                          1359890 &                                  \textbf{744888} &                                          3286070 \\
\cline{2-10}
&
\multirow{2}{*}{Times}
                                                                                                     &                                                3 &                                            0.27s &                                            0.68s &                                   \textbf{0.19s} &                                   \textbf{3.55s} &                                            6.73s &                                   \textbf{8.13s} &                                            9.87s \\
                                                   &                                                  &                                                4 &                                            3.48s &                                              10s &                                   \textbf{2.99s} &                                    \textbf{138s} &                                             314s &                                    \textbf{537s} &                                             710s \\
\specialrule{.1em}{0em}{0em}
\multirow{4}{*}{\begin{tabular}{l}
\textbf{pet\_fischer}\\
threads: 2
\end{tabular}}
&
\multirow{2}{*}{Traces}
                                                                                                     &                                                2 &                                    \textbf{4386} &                                    \textbf{4386} &                                    \textbf{4386} &                                   \textbf{13895} &                                   \textbf{13895} &                                   \textbf{13895} &                                   \textbf{13895} \\
                                                   &                                                  &                                                3 &                                  \textbf{430004} &                                  \textbf{430004} &                                  \textbf{430004} &                                 \textbf{3786571} &                                 \textbf{3786571} &                                 \textbf{3786571} &                                 \textbf{3786571} \\
\cline{2-10}
&
\multirow{2}{*}{Times}
                                                                                                     &                                                2 &                                   \textbf{0.66s} &                                            1.74s &                                            0.85s &                                            5.49s &                                   \textbf{1.70s} &                                            6.35s &                                   \textbf{3.60s} \\
                                                   &                                                  &                                                3 &                                              79s &                                             236s &                                     \textbf{59s} &                                            2342s &                                    \textbf{618s} &                                            2707s &                                    \textbf{719s} \\
\specialrule{.1em}{0em}{0em}
\multirow{4}{*}{\begin{tabular}{l}
\textbf{szymanski}\\
threads: 2
\end{tabular}}
&
\multirow{2}{*}{Traces}
                                                                                                      &                                                1 &                                     \textbf{103} &                                     \textbf{103} &                                     \textbf{103} &                                     \textbf{497} &                                     \textbf{497} &                                     \textbf{497} &                                     \textbf{497} \\
                                                   &                                                  &                                                2 &                                    \textbf{1991} &                                    \textbf{1991} &                                    \textbf{1991} &                                  \textbf{284015} &                                  \textbf{284015} &                                  \textbf{284015} &                                  \textbf{284015} \\
\cline{2-10}
&
\multirow{2}{*}{Times}
                                                                                                      &                                                1 &                                            0.08s &                                            0.08s &                                   \textbf{0.05s} &                                            0.17s &                                   \textbf{0.12s} &                                            0.19s &                                   \textbf{0.09s} \\
                                                   &                                                  &                                                2 &                                            0.53s &                                            0.80s &                                   \textbf{0.20s} &                                             132s &                                     \textbf{52s} &                                             155s &                                     \textbf{63s} \\
\specialrule{.1em}{0em}{0em}
\multirow{4}{*}{\begin{tabular}{l}
\textbf{tsay}\\
threads: 2
\end{tabular}}
&
\multirow{2}{*}{Traces}
                                                                                                      &                                                1 &                                      \textbf{45} &                                      \textbf{45} &                                      \textbf{45} &                                      \textbf{63} &                                      \textbf{63} &                                      \textbf{63} &                                      \textbf{63} \\
                                                   &                                                  &                                                2 &                                    \textbf{7469} &                                    \textbf{7469} &                                    \textbf{7469} &                                   \textbf{21597} &                                   \textbf{21597} &                                   \textbf{21597} &                                   \textbf{21597} \\
\cline{2-10}
&
\multirow{2}{*}{Times}
                                                                                                      &                                                1 &                                            0.06s &                                            0.06s &                                   \textbf{0.05s} &                                            0.06s &                                   \textbf{0.05s} &                                   \textbf{0.07s} &                                            0.08s \\
                                                   &                                                  &                                                2 &                                            1.94s &                                            3.13s &                                   \textbf{0.87s} &                                            8.72s &                                   \textbf{2.54s} &                                              10s &                                   \textbf{3.52s} \\
\specialrule{.1em}{0em}{0em}
\end{tabular}
\caption{Part2: Mutual exclusion benchmarks.}
\label{tab:all_crit_2}
\end{table}

\begin{table}
\scriptsize
\newcolumntype{?}{!{\vrule width 1.5pt}}
\setlength{\extrarowheight}{.058em}
\begin{tabular}{? l | c | c ? r r r ? r r ? r r ? }
\specialrule{.15em}{0em}{0em}
\multicolumn{2}{?c|}{\multirow{2}{*}{\textbf{Benchmark}}} & \multirow{2}{*}{U} & \multicolumn{3}{c?}{\textbf{Sequential Consistency}} & \multicolumn{2}{c?}{\textbf{Total Store Order}} & \multicolumn{2}{c?}{\textbf{Partial Store Order}}\\
\cline{4-10}
\multicolumn{2}{?c|}{} & & $\ReadsFrom$ & $\DCTSOPSOM$ & $\Source$ & $\DCTSOPSOM$ & $\Source$ & $\DCTSOPSOM$ & $\Source$\\
\specialrule{.1em}{0em}{0em}
\multirow{4}{*}{\begin{tabular}{l}
\textbf{bin\_noconsec\_bu3}\\
threads: 3
\end{tabular}}
&
\multirow{2}{*}{Traces}
                                                                                                     &                                                6 &                                   \textbf{11875} &                                   \textbf{11875} &                                           110446 &                                   \textbf{36288} &                                           710000 &                                  \textbf{155648} &                                          5467500 \\
                                                   &                                                  &                                                7 &                                   \textbf{59375} &                                   \textbf{59375} &                                           773122 &                                  \textbf{217728} &                                          7100000 &                                 \textbf{1245184} &                                                - \\
\cline{2-10}
&
\multirow{2}{*}{Times}
                                                                                                     &                                                6 &                                   \textbf{5.15s} &                                            9.49s &                                              38s &                                     \textbf{29s} &                                             192s &                                    \textbf{201s} &                                            2276s \\
                                                   &                                                  &                                                7 &                                     \textbf{41s} &                                              56s &                                             207s &                                    \textbf{201s} &                                            2323s &                                   \textbf{2002s} &                                                - \\
\specialrule{.1em}{0em}{0em}
\multirow{4}{*}{\begin{tabular}{l}
\textbf{bin\_noconsec\_td3}\\
threads: 3
\end{tabular}}
&
\multirow{2}{*}{Traces}
                                                                                                     &                                                6 &                                    \textbf{5308} &                                    \textbf{5308} &                                            50960 &                                   \textbf{29794} &                                           661015 &                                   \textbf{69741} &                                          3141122 \\
                                                   &                                                  &                                                8 &                                   \textbf{88294} &                                   \textbf{88294} &                                          1664672 &                                  \textbf{939466} &                                                - &                                                - &                                                - \\
\cline{2-10}
&
\multirow{2}{*}{Times}
                                                                                                     &                                                6 &                                   \textbf{2.30s} &                                            5.62s &                                              17s &                                     \textbf{33s} &                                             398s &                                    \textbf{124s} &                                            2456s \\
                                                   &                                                  &                                                8 &                                     \textbf{53s} &                                             129s &                                             833s &                                   \textbf{1435s} &                                                - &                                                - &                                                - \\
\specialrule{.1em}{0em}{0em}
\multirow{4}{*}{\begin{tabular}{l}
\textbf{coin\_all\_bu3}\\
threads: 3
\end{tabular}}
&
\multirow{2}{*}{Traces}
                                                                                                     &                                                3 &                                    \textbf{2673} &                                    \textbf{2673} &                                            26624 &                                    \textbf{2673} &                                            26624 &                                    \textbf{2673} &                                            26624 \\
                                                   &                                                  &                                                4 &                                   \textbf{96294} &                                   \textbf{96294} &                                          1704560 &                                   \textbf{98307} &                                          1855568 &                                  \textbf{104247} &                                          2342912 \\
\cline{2-10}
&
\multirow{2}{*}{Times}
                                                                                                     &                                                3 &                                   \textbf{0.85s} &                                            1.18s &                                            4.13s &                                   \textbf{1.19s} &                                            4.62s &                                   \textbf{1.75s} &                                            7.32s \\
                                                   &                                                  &                                                4 &                                     \textbf{31s} &                                              57s &                                             476s &                                     \textbf{59s} &                                             423s &                                    \textbf{101s} &                                             888s \\
\specialrule{.1em}{0em}{0em}
\multirow{4}{*}{\begin{tabular}{l}
\textbf{coin\_all\_bu4}\\
threads: 4
\end{tabular}}
&
\multirow{2}{*}{Traces}
                                                                                                      &                                                1 &                                       \textbf{4} &                                       \textbf{4} &                                               24 &                                       \textbf{4} &                                               24 &                                       \textbf{4} &                                               24 \\
                                                   &                                                  &                                                2 &                                    \textbf{6400} &                                    \textbf{6400} &                                           264600 &                                    \textbf{6400} &                                           264600 &                                    \textbf{6400} &                                           264600 \\
\cline{2-10}
&
\multirow{2}{*}{Times}
                                                                                                      &                                                1 &                                   \textbf{0.05s} &                                   \textbf{0.05s} &                                            0.08s &                                   \textbf{0.05s} &                                   \textbf{0.05s} &                                   \textbf{0.05s} &                                   \textbf{0.05s} \\
                                                   &                                                  &                                                2 &                                   \textbf{1.87s} &                                            2.66s &                                              44s &                                   \textbf{2.68s} &                                              43s &                                   \textbf{3.65s} &                                              73s \\
\specialrule{.1em}{0em}{0em}
\multirow{4}{*}{\begin{tabular}{l}
\textbf{coin\_all\_td3}\\
threads: 3
\end{tabular}}
&
\multirow{2}{*}{Traces}
                                                                                                     &                                               11 &                                    \textbf{1771} &                                    \textbf{1771} &                                            94484 &                                   \textbf{10216} &                                          1898566 &                                  \textbf{184561} &                                                - \\
                                                   &                                                  &                                               16 &                                   \textbf{37171} &                                   \textbf{37171} &                                                - &                                  \textbf{590116} &                                                - &                                                - &                                                - \\
\cline{2-10}
&
\multirow{2}{*}{Times}
                                                                                                     &                                               11 &                                   \textbf{2.04s} &                                            3.06s &                                              50s &                                     \textbf{21s} &                                            1277s &                                    \textbf{667s} &                                                - \\
                                                   &                                                  &                                               16 &                                     \textbf{46s} &                                             107s &                                                - &                                   \textbf{2073s} &                                                - &                                                - &                                                - \\
\specialrule{.1em}{0em}{0em}
\multirow{4}{*}{\begin{tabular}{l}
\textbf{coin\_all\_td4}\\
threads: 4
\end{tabular}}
&
\multirow{2}{*}{Traces}
                                                                                                      &                                                4 &                                     \textbf{946} &                                     \textbf{946} &                                           234984 &                                     \textbf{946} &                                           234984 &                                     \textbf{946} &                                           234984 \\
                                                   &                                                  &                                                7 &                                  \textbf{110182} &                                  \textbf{110182} &                                                - &                                  \textbf{321298} &                                                - &                                                - &                                                - \\
\cline{2-10}
&
\multirow{2}{*}{Times}
                                                                                                      &                                                4 &                                   \textbf{0.59s} &                                            0.74s &                                             121s &                                   \textbf{0.76s} &                                              78s &                                   \textbf{0.99s} &                                             117s \\
                                                   &                                                  &                                                7 &                                     \textbf{75s} &                                             176s &                                                - &                                    \textbf{601s} &                                                - &                                                - &                                                - \\
\specialrule{.1em}{0em}{0em}
\multirow{4}{*}{\begin{tabular}{l}
\textbf{coin\_min\_bu3}\\
threads: 3
\end{tabular}}
&
\multirow{2}{*}{Traces}
                                                                                                     &                                               14 &                                  \textbf{229550} &                                  \textbf{229550} &                                          1713984 &                                  \textbf{271319} &                                          2930352 &                                  \textbf{322959} &                                          6133248 \\
                                                   &                                                  &                                               15 &                                  \textbf{918794} &                                  \textbf{918794} &                                          6865920 &                                 \textbf{1146047} &                                                - &                                 \textbf{1399489} &                                                - \\
\cline{2-10}
&
\multirow{2}{*}{Times}
                                                                                                     &                                               14 &                                    \textbf{136s} &                                             220s &                                             481s &                                    \textbf{261s} &                                             839s &                                    \textbf{546s} &                                            3501s \\
                                                   &                                                  &                                               15 &                                    \textbf{444s} &                                             951s &                                            2168s &                                   \textbf{1214s} &                                                - &                                   \textbf{2647s} &                                                - \\
\specialrule{.1em}{0em}{0em}
\multirow{4}{*}{\begin{tabular}{l}
\textbf{coin\_min\_bu4}\\
threads: 4
\end{tabular}}
&
\multirow{2}{*}{Traces}
                                                                                                     &                                                9 &                                   \textbf{64000} &                                   \textbf{64000} &                                          5556600 &                                   \textbf{64000} &                                          5556600 &                                   \textbf{64000} &                                          5556600 \\
                                                   &                                                  &                                               10 &                                  \textbf{640000} &                                  \textbf{640000} &                                                - &                                  \textbf{640000} &                                                - &                                  \textbf{640000} &                                                - \\
\cline{2-10}
&
\multirow{2}{*}{Times}
                                                                                                     &                                                9 &                                     \textbf{19s} &                                              48s &                                            1589s &                                     \textbf{47s} &                                            1328s &                                     \textbf{78s} &                                            2050s \\
                                                   &                                                  &                                               10 &                                    \textbf{224s} &                                             575s &                                                - &                                    \textbf{575s} &                                                - &                                    \textbf{983s} &                                                - \\
\specialrule{.1em}{0em}{0em}
\multirow{4}{*}{\begin{tabular}{l}
\textbf{coin\_min\_td3}\\
threads: 3
\end{tabular}}
&
\multirow{2}{*}{Traces}
                                                                                                     &                                               14 &                                   \textbf{86091} &                                   \textbf{86091} &                                           500260 &                                  \textbf{252661} &                                          3589906 &                                  \textbf{458256} &                                                - \\
                                                   &                                                  &                                               15 &                                  \textbf{326976} &                                  \textbf{326976} &                                          1902262 &                                 \textbf{1328496} &                                                - &                                                - &                                                - \\
\cline{2-10}
&
\multirow{2}{*}{Times}
                                                                                                     &                                               14 &                                     \textbf{60s} &                                             177s &                                             211s &                                    \textbf{523s} &                                            1897s &                                   \textbf{1433s} &                                                - \\
                                                   &                                                  &                                               15 &                                    \textbf{379s} &                                             754s &                                             887s &                                   \textbf{3066s} &                                                - &                                                - &                                                - \\
\specialrule{.1em}{0em}{0em}
\multirow{4}{*}{\begin{tabular}{l}
\textbf{coin\_min\_td4}\\
threads: 4
\end{tabular}}
&
\multirow{2}{*}{Traces}
                                                                                                     &                                                9 &                                   \textbf{16682} &                                   \textbf{16682} &                                          1470312 &                                   \textbf{19502} &                                          4482536 &                                   \textbf{30736} &                                                - \\
                                                   &                                                  &                                               10 &                                  \textbf{230402} &                                  \textbf{230402} &                                                - &                                  \textbf{332182} &                                                - &                                  \textbf{478292} &                                                - \\
\cline{2-10}
&
\multirow{2}{*}{Times}
                                                                                                     &                                                9 &                                     \textbf{13s} &                                              24s &                                             521s &                                     \textbf{28s} &                                            1869s &                                     \textbf{66s} &                                                - \\
                                                   &                                                  &                                               10 &                                    \textbf{188s} &                                             741s &                                                - &                                    \textbf{627s} &                                                - &                                   \textbf{1372s} &                                                - \\
\specialrule{.1em}{0em}{0em}
\multirow{4}{*}{\begin{tabular}{l}
\textbf{rod\_cut\_bu3}\\
threads: 3
\end{tabular}}
&
\multirow{2}{*}{Traces}
                                                                                                     &                                                6 &                                   \textbf{60396} &                                   \textbf{60396} &                                           183516 &                                  \textbf{143259} &                                           518676 &                                  \textbf{259857} &                                          1302112 \\
                                                   &                                                  &                                                7 &                                  \textbf{362364} &                                  \textbf{362364} &                                          1101084 &                                 \textbf{1289881} &                                          4765876 &                             - &                                                - \\  
\cline{2-10}
&
\multirow{2}{*}{Times}
                                                                                                     &                                                6 &                                     \textbf{38s} &                                              62s &                                              71s &                                    \textbf{154s} &                                             235s &                                    \textbf{401s} &                                             996s \\
                                                   &                                                  &                                                7 &                                    \textbf{439s} &                                             484s &                                             488s &                                   \textbf{1858s} &                                            2668s &                            - &                                                - \\  
\specialrule{.1em}{0em}{0em}
\multirow{4}{*}{\begin{tabular}{l}
\textbf{rod\_cut\_bu4}\\
threads: 4
\end{tabular}}
&
\multirow{2}{*}{Traces}
                                                                                                     &                                                2 &                                    \textbf{2008} &                                    \textbf{2008} &                                            33912 &                                    \textbf{2008} &                                            33912 &                                    \textbf{2008} &                                            33912 \\
                                                   &                                                  &                                                3 &                                  \textbf{106500} &                                  \textbf{106500} &                                          2246424 &                                  \textbf{135988} &                                          3354504 &                                  \textbf{151720} &                                          4080168 \\
\cline{2-10}
&
\multirow{2}{*}{Times}
                                                                                                     &                                                2 &                                   \textbf{0.65s} &                                            0.94s &                                            6.95s &                                   \textbf{0.97s} &                                            7.32s &                                   \textbf{1.26s} &                                              12s \\
                                                   &                                                  &                                                3 &                                     \textbf{42s} &                                              75s &                                             618s &                                     \textbf{94s} &                                             985s &                                    \textbf{148s} &                                            1975s \\
\specialrule{.1em}{0em}{0em}
\multirow{4}{*}{\begin{tabular}{l}
\textbf{rod\_cut\_td3}\\
threads: 3
\end{tabular}}
&
\multirow{2}{*}{Traces}
                                                                                                     &                                                7 &                                   \textbf{20336} &                                   \textbf{20336} &                                           102128 &                                   \textbf{99281} &                                           762942 &                                  \textbf{184701} &                                          1837610 \\
                                                   &                                                  &                                                8 &                                  \textbf{101001} &                                  \textbf{101001} &                                           508646 &                                  \textbf{938731} &                                                - &                                                - &                                                - \\
\cline{2-10}
&
\multirow{2}{*}{Times}
                                                                                                     &                                                7 &                                     \textbf{17s} &                                              37s &                                              55s &                                    \textbf{185s} &                                             528s &                                    \textbf{465s} &                                            2271s \\
                                                   &                                                  &                                                8 &                                    \textbf{128s} &                                             248s &                                             324s &                                   \textbf{2330s} &                                                - &                                                - &                                                - \\
\specialrule{.1em}{0em}{0em}
\multirow{4}{*}{\begin{tabular}{l}
\textbf{rod\_cut\_td4}\\
threads: 4
\end{tabular}}
&
\multirow{2}{*}{Traces}
                                                                                                     &                                                3 &                                    \textbf{1790} &                                    \textbf{1790} &                                            91592 &                                    \textbf{1890} &                                           144488 &                                    \textbf{2504} &                                           184008 \\
                                                   &                                                  &                                                4 &                                   \textbf{33550} &                                   \textbf{33550} &                                          2459640 &                                   \textbf{62748} &                                                - &                                  \textbf{103622} &                                                - \\
\cline{2-10}
&
\multirow{2}{*}{Times}
                                                                                                     &                                                3 &                                   \textbf{1.26s} &                                            1.36s &                                              29s &                                   \textbf{1.49s} &                                              46s &                                   \textbf{2.52s} &                                              90s \\
                                                   &                                                  &                                                4 &                                     \textbf{19s} &                                              43s &                                            1003s &                                     \textbf{78s} &                                                - &                                    \textbf{174s} &                                                - \\
\specialrule{.1em}{0em}{0em}
\multirow{4}{*}{\begin{tabular}{l}
\textbf{lis\_bu3}\\
threads: 3
\end{tabular}}
&
\multirow{2}{*}{Traces}
                                                                                                     &                                                7 &                                  \textbf{103260} &                                  \textbf{103260} &                                           429632 &                                  \textbf{165105} &                                           975040 &                                  \textbf{229965} &                                          1862144 \\
                                                   &                                                  &                                                8 &                                  \textbf{325740} &                                  \textbf{325740} &                                          1744064 &                                  \textbf{596475} &                                          4678656 &                                  \textbf{977685} &                                                - \\
\cline{2-10}
&
\multirow{2}{*}{Times}
                                                                                                     &                                                7 &                                     \textbf{76s} &                                              99s &                                             191s &                                    \textbf{168s} &                                             511s &                                    \textbf{338s} &                                            2140s \\
                                                   &                                                  &                                                8 &                                             366s &                                    \textbf{352s} &                                             913s &                                    \textbf{672s} &                                            2782s &                                   \textbf{1675s} &                                                - \\
\specialrule{.1em}{0em}{0em}
\multirow{4}{*}{\begin{tabular}{l}
\textbf{lis\_bu4}\\
threads: 4
\end{tabular}}
&
\multirow{2}{*}{Traces}
                                                                                                     &                                                3 &                                   \textbf{28900} &                                   \textbf{28900} &                                          1024002 &                                   \textbf{28900} &                                          1024002 &                                   \textbf{28900} &                                          1024002 \\
                                                   &                                                  &                                                4 &                                 \textbf{1504200} &                                 \textbf{1504200} &                                                - &                                 \textbf{1863700} &                                                - &                                 \textbf{2059000} &                                                - \\
\cline{2-10}
&
\multirow{2}{*}{Times}
                                                                                                     &                                                3 &                                     \textbf{16s} &                                              18s &                                             307s &                                     \textbf{19s} &                                             335s &                                     \textbf{25s} &                                             545s \\
                                                   &                                                  &                                                4 &                                    \textbf{898s} &                                            1451s &                                                - &                                   \textbf{1755s} &                                                - &                                   \textbf{2831s} &                                                - \\
\specialrule{.1em}{0em}{0em}
\end{tabular}
\caption{Dynamic programming benchmarks.}
\label{tab:all_dyn}
\end{table}


\Paragraph{Related-work benchmarks and synthetic benchmarks.}
In Tables~\ref{tab:all_other_1}~and~\ref{tab:all_other_2} we present benchmarks collected from
previous SMC works, namely~\citet{Abdulla19,Abdulla2015,Huang16,Chatterjee19}.
The benchmarks contain several examples originating from industrial code,
such as \texttt{parker} and \texttt{pgsql}. Further there are
several synthetic benchmarks, such as \texttt{spammer} and \texttt{overtake}.

\Paragraph{SVCOMP.}
In Tables~\ref{tab:all_svcomp_1},~\ref{tab:all_svcomp_2}~and~\ref{tab:all_svcomp_3} we present our
results on SVCOMP concurrency benchmarks.

\Paragraph{Mutual exclusion benchmarks.}
In Tables~\ref{tab:all_crit_1}~and~\ref{tab:all_crit_2} we present
our results for mutual-exclusion algorithms from the literature.
We include the classical solutions, and novel solutions presented
by~\citet{Correia16} (prefixed with \texttt{X2Tv}).

\Paragraph{Benchmarks on dynamic programming.}
Finally, we present
benchmarks that perform parallel dynamic programming tasks,
introduced by~\citet{Chatterjee19}.
\cref{tab:all_dyn} presents the results.

\end{document}